\tikzset{
  photon/.style={decorate, decoration={snake}, draw=black},
  fermion/.style={draw=black, postaction={decorate},decoration={markings,mark=at position .55 with {\arrow{>}}}},
  ghost/.style={dashed, postaction={decorate},decoration={markings,mark=at position .55 with {\arrow{>}}}},
    antighost/.style={dashed, postaction={decorate},decoration={markings,mark=at position .55 with {\arrow{<}}}},
  vertex/.style={draw,shape=circle,fill=black,minimum size=5pt,inner sep=0pt},
particle/.style={thick,draw=black},
particle2/.style={thick,draw=blue},
avector/.style={thick,draw=black, postaction={decorate},
    decoration={markings,mark=at position 1 with {\arrow[black]{triangle 45}}}},
gluon/.style={ draw=black,decorate,
    decoration={coil,aspect=0.9,amplitude=0.2cm}}
 }
\theoremstyle{plain} % 'default
\newtheorem{thm}{Theorem}[section]
\newtheorem{lem}[thm]{Lemma}
\newtheorem{cor}[thm]{Corollary}
\newtheorem{prop}[thm]{Proposition}
\theoremstyle{definition}
\newtheorem{defn}[thm]{Definition}
\newtheorem{expl}[thm]{Example}
\newtheorem*{defn*}{Definition}
\theoremstyle{remark}
\newtheorem{rem}[thm]{Remark}
\newtheorem{ass}[thm]{Assumption}
\newtheorem{exc}{Exercise}
\newtheorem{prob}{Problem}
\newcommand{\ii}{\mathrm{i}}
\newcommand{\N}{\mathbb{N}}
\newcommand{\C}{\mathbb{C}}
\newcommand{\Z}{\mathbb{Z}}
\newcommand{\R}{\mathbb{R}}
\newcommand{\g}{\mathfrak{g}}
\newcommand{\m}{\mathfrak{m}}
\newcommand{\de}{\partial}
\newcommand{\Hom}{\operatorname{Hom}}
\newcommand{\End}{\operatorname{End}}
\newcommand{\Spin}{\mathrm{Spin}}
\newcommand{\calP}{\mathcal{P}}
\newcommand{\calV}{\mathcal{V}}
\newcommand{\calF}{\mathcal{F}}
\newcommand{\calA}{\mathcal{A}}
\newcommand{\calY}{\mathcal{Y}}
\newcommand{\calL}{\mathcal{L}}
\newcommand{\sfa}{\mathsf{a}}
\newcommand\restr[2]{{% we make the whole thing an ordinary symbol
  \left.\kern-\nulldelimiterspace % automatically resize the bar with \right
  #1 % the function
  \vphantom{\big|} % pretend it's a little taller at normal size
  \right|_{#2} % this is the delimiter
  }}
\title{Lecture Notes on Chern-Simons Perturbation Theory}
\author{Konstantin Wernli}
\date{\today}
\begin{document}
\maketitle
\tableofcontents
\chapter*{Introduction}
\addcontentsline{toc}{chapter}{Introduction}
The goal of this study of the perturbative quantization of Chern-Simons theory is a better mathematical understanding of the Feynman path integral \cite{Feynman1942} in the domain of quantum field theory. Let us briefly introduce this notion. \\
\section*{The Feynman path integral}
Let us take the simplifying viewpoint that physics is about the prediction of values of observables, i.e. numbers attached to outcome of an experiment. The \emph{principle of least action} provides the following recipe for this computation in classical physics. By a \emph{$d$-dimensional classical field theory}, we mean the following set of data: 
\begin{itemize}
\item A $d$-dimensional manifold $M$, called ``spacetime''
\item A ``space of fields'' $F_M$ 
\item An ``action functional'' $S_M \colon F_M \to \R$. 
\end{itemize}
It is a basic requirement from physics that the assignment $M \mapsto (F_M,S_M)$ be \emph{local}.\footnote{There are different mathematical manifestations of this physical concept. One possibility is to ask that is that $F_M$ be the space of sections of a bundle $E$ over $M$, and that $S_M$ is the integral over $M$ of a \emph{local Lagrangian density}: A density-valued function $L$ on the $k$-th jet bundle $J^k(E)$of $E$ (the minimal such $k$ is called order of the theory, remarkably, most physical theories are either of first or second order).}
The principle of least action then states that the physical field configuration satisfies 
\begin{equation}
\delta S[\phi_0] = 0,
\end{equation}
the \emph{Euler-Lagrange equation}\footnote{Of course, this equation is only necessary for $\phi_0$ to be an actual minimizer of the action, but nowhere near sufficient: usual questions about critical points apply, with the added complication of $F_M$ usually being infinite-dimensional.} (here $\delta$ denotes the variational derivative). Supposing for a moment a unique solution $\phi_0$ to this equation, the value of any observable $O \colon F_M \to \R$ can be computed as $O(\phi_0)$ (that is, if we were to measure the observable $O$, the outcome would be $O(\phi_0)$.) Let us briefly look at two easy examples. \\
A common example is \emph{free scalar field theory}. Here $M$ is a Riemannian (or Lorentzian) manifold, for example $M=\R^4$ with standard Euclidean (or Minkowski) metric. The space of fields is $F_M = C^\infty(M, \R)$ and the action functional is given by 
\begin{equation}
S_M = \frac12\int_M (||d\phi||^2 + m^2\phi^2) dvol_g = \frac12\int_M (\phi \Delta \phi + m^2\phi^2)dvol_g,
\end{equation}
where $||\cdot||$ is the norm induced by the metric on the cotangent bundle, and $\Delta$ the Laplacian induced by the metric. Variation of this action functional leads to the Helmholtz equation 
$$\Delta \phi + m^2\phi = 0$$
and the analysis proceeds by looking for eigenfunctions of the Laplace operator. Possible observables are for instance $O_{x_0}(\phi) = \phi(x_0)$ for a point $x_0 \in M$ or $O_\eta(\phi) = \int \eta \phi dvol_g$ for a test function $\eta \in C_c^\infty(M)$. \\
Another example is classical mechanics. Here, the ``space-time'' is just a (time) interval $I$ and the space of fields is $C^\infty(I,N)$ where $N$ is a Riemannian manifold modeling space. The action functional is \begin{equation}
S_I = \int_I \frac12 m ||\dot{\gamma}(t)||^2 - V(\gamma(t))dt,
\end{equation}
where $V \colon N \to \R$ is called the potential. The Euler-Lagrange equation is Newton's Law 
$$ m\ddot{\gamma} = (\nabla V)(\gamma).$$
Possible observables include the position or velocity at some time $t_0$: $O_{t_0}(\gamma) = \gamma(t_0), O'_{t_0}(\gamma) = ||\gamma'(t_0)||$ and so on. \\
We now want to pass from classical to quantum physics. This is a major conceptual jump that we cannot possibly do justice here\footnote{The precise nature of the relation between classical and quantum physics has been the subject of research for over a century, and still not been fully understood.}. A main feature of quantum physics is that one can no longer predict values of observables with certainty, but only with certain probabilities. We are taking here (a very much simplified version of) Feynman's approach to quantum field theory \cite{Feynman1942,Feynman1949,Feynman1949a,Feynman1950}. It dictates that the expectation value of an observable can be computed as 
\begin{equation}
\langle O \rangle_M = \frac{1}{Z_M}\int_{F_M}O(\phi)e^{\frac{\ii}{\hbar}S_M[\phi]}D\phi,\label{eq:ExpValueO}
\end{equation}
where the \emph{partition function} $Z$ is given by 
\begin{equation}
Z_M = \int_{F_M}e^{\frac{\ii}{\hbar}S_M[\phi]}D\phi.\label{eq:DefZ}
\end{equation}
Integrals \eqref{eq:ExpValueO} and \eqref{eq:DefZ} are examples of what is known as \emph{Feynman path integrals} (or also \emph{functional integrals}). We will not delve further into their physical origins and interpretations (which are very elegant and interesting), but rather investigate the mathematical nature of these integrals. For a rigorous definition of an integral, a measure is required. But, the definition of sensible\footnote{For instance, countably additive Borel} measures on spaces of field seems not possible  in general, with the remarkable exception of the two examples above, see \cite{Glimm1987} for a deeper discussion of this fact and further references. \\
Let us for the moment focus our discussion on the partition function $Z_M$ and the assignment \begin{equation}
Z \colon M \mapsto Z_M. \label{eq:Zassignment}
\end{equation}
If this cannot be defined via an actual integral, the question is how else it can be rigorously mathematically defined? The continued success of the use of functional integration techniques suggests to mimick some properties of integrals in assignment \eqref{eq:Zassignment}. Two ideas as to which properties to use are prominent in the mathematical community: 
\begin{enumerate}[i)]
\item Try to implement Fubini's theorem in $M \mapsto Z_M$, 
\item Try to implement the asymptotics\footnote{The physical constant $\hbar$ is very small and it makes sense to ask about the properties of the semiclassical limit $\hbar \to 0$.} of oscillatory integrals in $M \mapsto Z_M$.
\end{enumerate}
The first approach leads to the idea of functorial quantum field theory (or FQFT). The second approach leads to what we call perturbative quantization in this lecture\footnote{Perturbation theory in physics is usually performed in the coupling constant, and not in $\hbar$ (which is usually set to 1). So one could argue that we are speaking about the semiclassical approximation, rather than perturbative quantization. However, the two approaches differ only by an overall rescaling, and are mathematically equivalent.}. We briefly illuminate both approaches (the second one will be discussed in detail later in the text). 
\section*{Functorial Quantum Field Theory}
Let us briefly explain how supposing that the path integral satisfies Fubini's theorem can be interpreted as Functorial QFT. The basic idea is that the locality assumption of field theories allows to cut the spacetime into pieces. Namely, for manifolds with boundary $M$ one can define a space of boundary fields $F_{\de M}$ with a surjective map $\pi \colon F_M \to F_{\de M}$ which in the simplest case is just restriction of the fields to the boundary\footnote{Typically this is only the case in first order theories, in theories of higher order one needs to also consider normal derivatives.}. If $M$ is a manifold that can be represented as two manifolds $M_1$,$M_2$ joined along a common boundary $\Sigma$: $M = M_1 \cup_{\Sigma} M_2$ (for instance, $S^2 = D^2 \cup_{S^1} D^2$ can be glued from two disks along their common boundary circle $S^1$) then, locality dictates that \begin{equation}F_M = F_{M_1} \times_{F_{\Sigma}} F_{M_2} = \{(\phi_1,\phi_2)|\pi_1(\phi_1) = \pi_2(\phi_2)\}
\end{equation}
and 
\begin{equation}
S_M[(\phi_1,\phi_2)] = S_{M_1}[\phi_1] + S_{M_2}[\phi_2].
\end{equation} In other words, $F_M$ is a fiber bundles over $F_{\Sigma}$ with fiber $(F_M)_b= \pi^{-1}_1(b) \times \pi^{-1}_2(b)$. Now, applying a formal Fubini theorem\footnote{Together with its generalized cousin for integration along fibers. } implies that we can factorize the integral \eqref{eq:DefZ} as 
\begin{align}
Z_M &= \int_{F_M}e^{\frac{\ii}{\hbar}S_M[\phi]}D\phi = \int_{F_\Sigma}\left(\int_{(F_M)_b}e^{\frac{\ii}{\hbar}S_M[\phi]}D\phi\right)Db \notag \\
&= \int_{F_\Sigma}\left(\int_{\pi^{-1}_1(b) \subset(F_{M_1})}e^{\frac{i}{\hbar}S_{M_1}[\phi_1]}D\phi_1 \right)\left(\int_{\pi_2^{-1}(b)F_{M_2}}e^{\frac{\ii}{\hbar}S_{M_2}[\phi_2]}D\phi_2\right)Db \label{eq:FQFT1}
\end{align}
(here we use integration along fibers in the first line and Fubini theorem in the second). Introduce the vector space\footnote{Here $\mathrm{Fun}$ denotes some space of functions whose precise nature is irrelevant to the present heuristic discussion.} $\mathcal{H}_\Sigma = \mathrm{Fun}(F_{\Sigma})$ with the (formal) pairing
$\langle f_1,f_2\rangle_{\mathcal{H}_\Sigma} = \int_{F_{\Sigma}}f_1(b)f_2(b)Db$, then we can rewrite Equation \eqref{eq:FQFT1} as 
\begin{equation}
Z_M = \langle Z_{M_1},Z_{M_2}\rangle_{\mathcal{H}_\Sigma}.\label{FQFT2}
\end{equation} From this formal considerations, one extracts the following definition (usually attributed to Atiyah \cite{Atiyah1988} and Segal \cite{Segal1988}). 
\begin{defn*}
A \emph{functorial QFT} $Z$ associates 
\begin{itemize}
\item
To every $d-1$-dimensional manifold $\Sigma$ a vector space $\mathcal{H}$ with an inner product $\langle\cdot,\cdot\rangle_\mathcal{H}$
\item To a $d$-dimensional manifold $M$ with boundary a vector $M \in \mathcal{H}_{\de M}$
\end{itemize}
such that 
\begin{itemize}
\item $H_\varnothing = \R$
\item If $M = M_1 \cup_\Sigma M_2$, then 
\begin{equation}
Z_M = \langle Z_{M_1}, Z_{M_2}\rangle_{\Sigma}.
\end{equation}
\end{itemize}
\end{defn*}
The adjective functorial stems from the fact that  such an assignment can be made made into functor\footnote{Mathematicians love functors.} from cobordism categories to vector spaces. Functors out of cobordism categories became of great interest to the mathematical community (and, partly, also physicists) thanks to this interpretation of the path integral, and over the last thirty years a considerable amount of research has gone into this area. 
\section*{Perturbative Quantization}
We will very brief here since this issue is discussed in great detail later.
Another observation about Feynman integrals is that since $S_M$ is real, the function $\exp(\ii/\hbar S_M)$ oscillates very wildly as $\hbar \to 0$, with exception of the critical points\footnote{Sometimes this is rephrased in Euclidean setting where one considers $\exp(-S/\hbar)$ as using method of steepest descent. The two approaches essentially produce equivalent answers to our question of understanding the path integral.}. For finite-dimensional integrals, this leads to the well-known principle of stationary phase. Suppose $F$ is a finite-dimensional manifold (with a reference density $\mu)$) and $S \colon F \to \R$ has isolated non-degenerate critical points.  Then as $\hbar \to 0$, the integral 
\begin{equation}
I[\hbar] = \int_M \exp(\ii/\hbar S) \mu 
\end{equation}
is concentrated in a neighbourhood of the critical points of $S$. Around a critical point one can expand $S$ in a Taylor series and then compute integrals of terms explicitly using Fresnel integrals\footnote{The equivalent of Gaussian integrals for complex exponents.}. In particular, there is an explicit expression in form of a formal power series for the asymptotic behaviour of $I[\hbar]$ 
\begin{equation}
I[\hbar]\simeq_{\hbar \to 0}
\frac{1}{2\pi \hbar^{\dim F/2}}\sum_{x_0\in Crit(S)}e^{\ii/\hbar S(x_0)}
 \frac{e^{\frac{\ii\pi}{4} \mathrm{sign} S''(x_0)}}{\det{}^{1/2}S''(x_0)}(1 + O(\hbar))
\end{equation}
 which uses only the Taylor expansion of $S$ at the critical point and the inverse of its quadratic part at the critical point (and its determinant). The terms in this formal power series of order\footnote{Ignoring the overall constant $\frac{1}{2\pi \hbar^{\dim F/2}}$} $O(\hbar)$ can be conveniently labeled by diagrams - later we will identify them as Feynman diagrams. This has the major advantage that it can be generalized to infinite dimensions, if we can make sense of the Taylor expansion, the detereminant of the inverse, and so on (but this has proven to be a lot simpler than finding appropriate measures). We will discuss all of these issues in Chapter \ref{ch:PertQuant}.
\section*{Why study Chern-Simons Theory?}
To make a long story short, the answer to that question is that Chern-Simons theory has been studied using a variety of approaches and viewpoints. In this sense, it is one of the best opportunities to understand the Feynman path integral, because so many answers are available that one can compare to and use in the task. However,  the question about the precise relationships between the different results still remains wide open. A better understanding of these relationships will deepen our understanding of the concept of quantization itself. \\
To be slightly more precise, after the seminal paper of Witten \cite{Witten1989}, interest in Chern-Simons theory in the mathematical physics community exploded, making it one of the most well-studied field theories at both classical and quantum level (in some sense the \emph{drosophila melanogaster} of quantum field theory). An exhaustive review of the literature is next to impossible, and we restrict ourselves to mentioning a few results. \\
It was Witten who argued that Chern-Simons theory was linked closely to knot and 3-manifold invariants in \cite{Witten1989}. To be more precise, he argued - using holomorphic quantization of the reduced phase space - that expectation values of Wilson loop observables were given by the Jones polynomial. Around the same time, Fr\"ohlich and King \cite{Froehlich1989} that also the perturbative quantization of Chern-Simons theory on $\R^3$ with Wilson lines leads to knot invariants, via the Khnizhnik-Zamolodchikov connection (these invariants later became known as the Kontsevich integral \cite{Kontsevich1993a}). From this moment on it was clear that Chern-Simons theory was intimately connected with the vast subjects of knot and 3-manifold invariants, and conformal field theory. \\
Shortly after Witten, Reshetikhin and Turaev \cite{Reshetikhin1991} defined a TQFT that led to answers similar to Witten's, and this TQFT is widely considered as the correct non-perturbative quantization of Chern-Simons theory, even though - to the best of the author's knowledge - there is no conclusive proof of a mathematical formulation of this statement. \\
The perturbative quantization of Chern-Simons was considered in various formulations and guises, starting in the more physics-oriented literature with \cite{Guadagnini1989a}. Shortly thereafter, Axelrod-Singer \cite{Axelrod1991a},\cite{Axelrod1994} and - in a different way - Kontsevich \cite{Kontsevich1994} showed that one can obtain 3-manifold invariants from the perturbative quantization of Chern-Simons theory on arbitrary 3-manifolds. The precise link between these invariants and the ``non-perturbative'' ones defined by Reshetikhin and Turaev is still unclear (and one of the main motivations for this lecture). The perturbative approach after Axelrod-Singer was developed further by Bott, Cattaneo, and Mn\"ev in the papers \cite{Bott1998},\cite{Bott1999},\cite{Cattaneo2000},\cite{Cattaneo2008}, the main sources for these lecture notes. \\
Let us also mention that Chern-Simons theory has been studied from the viewpoint of geometric quantization \cite{Axelrod1991} and conformal field theory (see e.g. \cite{Andersen2015}). It is the author's belief that a thorough understanding of the relationships between these different approaches to the quantum Chern-Simons theory will elucidate the evasive mathematical underpinnings behind the Feynman path integral. 
\chapter{Classical Chern-Simons Theory}\label{ch:ClassicalCS}
In this chapter we introduce the classical Chern-Simons action functional on closed manifolds, as well as its symmetries and critical points. There are some slightly subtle geometric effects in the definition of that action functional, related to the trivializability of bundles. For that reason we review the concepts of vector bundles, principal bundles and connections. More details can be found in any textbook on Gauge Theory, e.g. the book by Taubes \cite{Taubes2011} or the lecture notes by Baum \cite{Baum2014}. 
\section{Preliminaries}
\subsection{Vector bundles}
We start with the definition of a vector bundle. 
\begin{defn}[Vector bundle]
Let $M$ be a manifold and $k \in \{\R,\C\}$.
A \emph{rank $n$ $k$-vector bundle over $M$} is  
a pair $(E,\pi)$, where $E$ is a manifold and $\pi \colon E \to M$ is a surjective submersion, such that there is a cover $\mathfrak{U}=\{U_\alpha\}_{\alpha \in A}$ of $M$ satisfying
\begin{enumerate}[i)]
\item{The cover $\mathfrak{U}$ \emph{trivializes} $E$, that is, for every $\alpha \in A$ there exists a diffeomorphism $\Psi_\alpha\colon \pi^{-1}(U_\alpha) \to U_\alpha \times k^n$ such that 
\begin{equation} \begin{tikzcd}
\pi^{-1}(U_\alpha) \arrow[rd, "\pi"] \arrow[rr, "\psi_{\alpha}"] &          & U_\alpha \times k^n \arrow[ld, "\pi_1"'] \\
                                                              & U_\alpha &                                       
\end{tikzcd}
\label{eq:defVB1}
\end{equation}
commutes, 
}
\item For all $\alpha \in A$ and $u \in U_\alpha$, $\pi^{-1}(u)$ is a $k$-vector space and the map \begin{equation}
\restr{\psi_\alpha}{\pi^{-1}(u)}\colon \pi^{-1}(u) \to \{u\} \times k^n \label{eq:defVB2}
\end{equation}
is an isomorphism of vector spaces. 
\end{enumerate}
\end{defn}
Let us introduce some terminology. $M$ is called the \emph{base} (or \emph{base space}) of the vector bundle. $E$ is called the \emph{total space}, and $\pi$ the projection. For $u\in M$, $\pi^{-1}(u)$ is called the \emph{fiber (of $E$) over $u$}, and denoted $E_u$. $(U_\alpha,\psi_\alpha)$ is called a \emph{local trivialization} and $\mathfrak{U}$ is called a \emph{trivializing cover}. For $\alpha,\beta \in A$, let $U_{\alpha\beta} = U_\alpha \cap U_\beta.$ By diagram \eqref{eq:defVB1} and \eqref{eq:defVB2}, the maps\footnote{Often, in the literature one finds opposite convention for the indices. However, we find this intuitive because it is the transition map \emph{from $\alpha$ to $\beta$}.}
\begin{equation}
\tilde{g}_{\alpha\beta}= \psi_\beta\circ\psi^{-1}_\alpha \colon U_{\alpha\beta}\times k^n \to U_{\alpha\beta} \times k^n
\end{equation}
satisfy $\tilde{g}_{\alpha\beta}(u,v) = (u,g_{\alpha\beta}(u)v)$, where $g_{\alpha\beta}(u) \in GL_n(k)$. The corresponding maps 
\begin{equation}
g_{\alpha\beta}\colon U_{\alpha\beta} \to GL_n(k)
\end{equation}
are called the \emph{gluing maps}. By construction, they satisfy, for all $\alpha,\beta,\gamma\in A$  and 
\begin{subequations}\label{eqs:gluingmaps}
\begin{align}
g_{\alpha\alpha}(u) &= \mathrm{id}_{k^n} \forall, \quad u \in U_{\alpha} \\
g_{\alpha\beta}(u) &= g_{\beta\alpha}(u)^{-1} ,\quad \forall u \in U_{\alpha\beta} \\
g_{\beta\gamma}(u)g_{\alpha\beta}(u) &= g_{\alpha\gamma(u)},\quad\forall u \in U_{\alpha\beta\gamma} = U_{\alpha} \cap U_{\beta} \cap U_{\gamma}.
\end{align}
\end{subequations}
\begin{defn}[Vector bundle morphisms]
If $E,F$ are vector bundles over $M$ then a \emph{vector bundle morphism} is a smooth map $\Psi\colon E \to F$ such that the diagram 
\begin{equation}
\begin{tikzcd}
E \arrow[rd, "\pi"] \arrow[rr, "\Psi"] &   & F \arrow[ld, "\pi'"'] \\
                                       & M &                      
\end{tikzcd}
\end{equation}
commutes and $\restr{\Psi}{E_u}=:\Psi_u \colon E_u \to F_u$ is linear. A \emph{vector bundle isomorphism} is a vector bundle morphism which is also a diffemorphism. The set of vector bundle morphisms from $E$ to $F$ is denoted $\underline{\Hom}(E,F)$.
\end{defn}
\begin{rem}
 Given two vector bundles $E$ and $F$ over $M$, we can always find a cover of $M$ that trivializes both. Namely, given a trivializing cover $\{U_\alpha\}_{\alpha \in A}$ of $E$ and $\{V_\beta\}_{\beta \in B}$ of $F$, the cover $\{U_\alpha \cap V_\beta\}_{(\alpha,\beta) \in A \times B}$ is a trivializing cover of both $E$ and $F$. 
\end{rem}
\begin{exc} 
Suppose the rank of $E$ is $n$ and the rank of $F$ is $m$. Prove that a vector bundle morphism $\Psi \colon E \to F$ is given by a collection  of maps $\Psi_{\alpha}\colon U_\alpha \to \Hom(k^n,k^m)$ such that 
\begin{equation}
\Psi_\beta = g^F_{\alpha\beta}\Psi_\alpha g^E_{\beta\alpha}.
\end{equation}
\end{exc}
\begin{expl}
\begin{enumerate}[i)]
\item A vector space is a vector bundle over a point. 
\item For every manifold $M$, the tangent bundle $TM$ is a vector bundle over $M$. The transition maps of the tangent bundle $TM$ can be computed in the following way. Let $(U_\alpha,\varphi_\alpha)$ be an atlas of $M$. Then $d\varphi_{\alpha\beta}(u)\colon U_{\alpha\beta} \to GL_n(\R)$ are the transition maps of $TM$.
\item For any manifold $M$ and natural number $n$ there is the \emph{trivial rank $n$ vector bundle over $k$}, simply given by the direct product $M \times k^n$ with the canonical projection to $M$. This bundle is often denoted $\underline{k}^n$.

\item Recall that $\C\mathbb{P}^n$ is the space of lines in $\C^n$, i.e. $\C\mathbb{P}^n = \C\mathbb{P}^{n+1}/\C^\times$, where $\C^\times$ acts diagonally. The quotient map $\pi\colon\C^{n+1}\to \C\mathbb{P}^n$ is a rank 1 complex vector bundle over $\C\mathbb{P}^n$. Working out the details of this is a marvelous exercise.  
\end{enumerate}
\end{expl}
It is an important fact that the vector bundle is entirely determined up to isomorphism by its trivializing cover and the gluing maps. 
\begin{prop}
Given a cover $\mathfrak{U} = \{U_{\alpha}\}_\alpha$ of a manifold $M$ and a family of snooth maps $g_{\alpha\beta}\colon U_{\alpha\beta} \to GL_n(k)$ satisfying \eqref{eqs:gluingmaps}, there exists a unique (up to isomorphism) vector bundle $\pi\colon E \to M$ with trivializing cover $\mathfrak{U}$ and gluing maps $g_{\alpha\beta}$. 
\end{prop}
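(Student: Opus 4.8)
The plan is to produce $E$ by the standard ``clutching'' construction and then read off the two axioms of a vector bundle directly from the cocycle conditions \eqref{eqs:gluingmaps}, and to obtain uniqueness by gluing together local comparison maps. \emph{Existence.} First I would form the disjoint union $\tilde E := \bigsqcup_{\alpha \in A}(U_\alpha \times k^n)$, writing its points as triples $(\alpha,u,v)$ with $u \in U_\alpha$, $v \in k^n$, and impose the relation $(\alpha,u,v) \sim (\beta,u',v')$ if and only if $u = u' \in U_{\alpha\beta}$ and $v' = g_{\alpha\beta}(u)\,v$. The three identities of \eqref{eqs:gluingmaps} are exactly what is needed for $\sim$ to be an equivalence relation: $g_{\alpha\alpha} = \mathrm{id}$ gives reflexivity, $g_{\alpha\beta} = g_{\beta\alpha}^{-1}$ gives symmetry, and the cocycle identity gives transitivity. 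Set $E := \tilde E/\!\sim$ with the quotient topology and define $\pi([\alpha,u,v]) := u$; this is well defined and continuous, with $\pi^{-1}(U_\alpha)$ equal to the image of $U_\alpha \times k^n$. The composite $U_\alpha \times k^n \hookrightarrow \tilde E \to E$ is a continuous injection onto the open set $\pi^{-1}(U_\alpha)$, and I call its inverse $\psi_\alpha$; by the very choice of $\sim$ one checks that $\psi_\beta \circ \psi_\alpha^{-1}(u,v) = (u, g_{\alpha\beta}(u)v)$ on $U_{\alpha\beta} \times k^n$, i.e.\ these are the maps $\tilde g_{\alpha\beta}$ appearing just before the statement.

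\emph{Smooth and linear structure.} Next I would make $E$ into a smooth manifold by declaring the $\psi_\alpha$, post-composed with charts of $M$, to be charts; the change-of-chart maps are $(u,v) \mapsto (u, g_{\alpha\beta}(u)v)$, which are smooth because the $g_{\alpha\beta}$ are smooth and matrix multiplication is. With this structure $\pi$ is a surjective submersion and each $\psi_\alpha$ a diffeomorphism over $U_\alpha$, so the trivialization diagram \eqref{eq:defVB1} commutes: axiom i) holds. For axiom ii), transport the vector space structure of $k^n$ onto $E_u$ through any $\psi_\alpha$ with $u \in U_\alpha$; since $g_{\alpha\beta}(u) \in GL_n(k)$ is a linear isomorphism, the result is independent of $\alpha$, and $\restr{\psi_\alpha}{E_u}$ is by construction a linear isomorphism onto $\{u\}\times k^n$. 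Thus $(E,\pi)$ is a rank $n$ $k$-vector bundle, and $\mathfrak{U}$ together with the $g_{\alpha\beta}$ are, by construction, a trivializing cover and the associated gluing maps.

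\emph{Uniqueness.} Suppose $\pi'\colon E' \to M$ is another vector bundle with the same trivializing cover $\mathfrak{U}$ and the same gluing maps $g_{\alpha\beta}$, with trivializations $\psi'_\alpha$. I would set $\Phi_\alpha := (\psi'_\alpha)^{-1}\circ\psi_\alpha\colon \pi^{-1}(U_\alpha) \to (\pi')^{-1}(U_\alpha)$. On an overlap $\pi^{-1}(U_{\alpha\beta})$ one has $\psi_\beta = \tilde g_{\alpha\beta}\circ\psi_\alpha$ and $\psi'_\beta = \tilde g_{\alpha\beta}\circ\psi'_\alpha$ with the \emph{same} $\tilde g_{\alpha\beta}$, whence a one-line cancellation gives $\Phi_\beta = (\psi'_\alpha)^{-1}\tilde g_{\alpha\beta}^{-1}\tilde g_{\alpha\beta}\psi_\alpha = \Phi_\alpha$. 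Therefore the $\Phi_\alpha$ glue to a globally defined smooth map $\Phi\colon E \to E'$ over $M$ which is fiberwise a linear isomorphism, with inverse glued from the maps $\psi_\alpha^{-1}\circ\psi'_\alpha$. Hence $\Phi$ is a vector bundle isomorphism, establishing uniqueness up to isomorphism.

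\emph{Main obstacle.} The only genuinely non-formal step is the point-set topology of the quotient: one must verify that $E$ is Hausdorff and second countable (hence a manifold). Second countability follows by first passing to a countable subcover of $\mathfrak{U}$, which exists because $M$ is second countable. For Hausdorffness, two distinct points in a common fiber $E_u$ are separated already inside one chart domain $\pi^{-1}(U_\alpha)$, while two points lying over distinct $u \ne u'$ are separated by the $\pi$-preimages of disjoint neighbourhoods of $u$ and $u'$ in the Hausdorff manifold $M$; checking this cleanly uses continuity of the $g_{\alpha\beta}$ to ensure $\sim$ does not identify points across small base neighbourhoods. Everything else in the argument is bookkeeping with the cocycle relations \eqref{eqs:gluingmaps}.
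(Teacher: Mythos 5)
Your proposal is correct and follows essentially the same route as the paper: existence via the quotient/clutching construction, with the three cocycle identities \eqref{eqs:gluingmaps} supplying reflexivity, symmetry and transitivity of the gluing relation, and uniqueness by gluing the local comparison maps $(\psi'_\alpha)^{-1}\circ\psi_\alpha$, which agree on overlaps by the same cancellation of $\tilde g_{\alpha\beta}$. Your additional attention to the point-set issues (Hausdorffness, second countability, the smooth atlas) fills in details the paper leaves to the reader but does not change the argument.
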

\begin{proof}
\emph{Existence:}
We can assume that each $U_{\alpha}$ is a contained in a domain of a chart of $M$ (otherwise, cover each $U_{\alpha}$ by charts $V_{\alpha\beta}$ and note that the transition restricted to each $V_{\alpha\beta}$ still satisfy \eqref{eqs:gluingmaps}.) First, construct the fiber over $u$ by \begin{equation} E_u:= \left(\coprod_{\alpha\in A,u\in U_{\alpha}}k^n\right)/\sim = \left(\{\alpha\in A,u \in U_{\alpha}\} \times k^n\right)/\sim
\end{equation}
where $(\alpha,v) \sim (\beta,w)$ if $g_{\alpha\beta}(u)v = w$. This is an equivalence relation \emph{since $g_{\alpha\beta}$ satisfy \eqref{eqs:gluingmaps}.} Then, let $E := \coprod_{u\in M} E_u$ and $\psi_{\alpha}[(u,\alpha,v)] = (u,v).$ Since $U_\alpha$ is contained in a chart, composition with this chart yields a chart of $E$. It is easily checked that this is indeed a smooth atlas.  \\
\emph{Uniqueness:} It is enough to show that two vector bundle with the same trivializing cover and gluing maps are isomorphic. Let $E$,$F$ be such vector bundles. Then, we construct the isomorphism over $U_\alpha$ by the diagram 
\[\begin{tikzcd}
F_\alpha\colon\pi^{-1}(U_\alpha) \arrow[rd, "\pi"] \arrow[r, "\psi^E_{\alpha}"] & U_\alpha \times k^n \arrow[r, "(\psi_\alpha^F)^{-1}"] & (\pi')^{-1}(U_\alpha) \arrow[ld, "\pi'"'] \\
                                                                  & U_\alpha                                              &                                          
\end{tikzcd}\]
Since the gluing maps are the same, the maps $F_\alpha$ and $F_\beta$ agree on $U_{\alpha\beta}$: We have 

\begin{align} F_\beta &= (\psi_\beta^F)^{-1} \circ \psi_\beta^E  \\
&= ( \tilde{g}_{\alpha\beta}^F)\circ \psi_\alpha^F)^{-1} \circ (\tilde{g}_{\alpha\beta}^E\circ \psi_\alpha^E) \\
&= (\psi_{\alpha}^F)^{-1}\circ (\tilde{g}^F_{\alpha\beta})^{-1}\circ \tilde{g}_{\alpha\beta}^E \circ \psi_\alpha^E = F_\alpha, 
\end{align}
since $(\tilde{g}^F_{\alpha\beta})^{-1}\circ \tilde{g}_{\alpha\beta}^E = \mathrm{id}_{U_{\alpha} \times k^n}$.
\end{proof}

This central fact will often help us define vector bundles via trivializing covers and gluing maps. Given a manifold $M$, we can define a rank $n$ vector bundle $E$ over $k$ by specifying a trivializing cover $\mathfrak{U}$ and transition maps $g_{\alpha\beta}\colon U_{\alpha\beta} \to GL_n(k)$, and we write $E = (\mathfrak{U},g_{\alpha\beta})$ for this vector bundle.
\begin{expl}
Consider the circle $S^1 = \R/\Z$ with the open cover $U_1=(0,1), U_2=(1/2,3/2)$. Then the M\"obius band is the vector bundle with transition function $g_{12}\colon U_{12} \to GL_1(\R) = \R^\times$ given by 
$$g_{12}(x) = \begin{cases} 1 & 
x \in (1/2,1) \\
-1 &x \in (1,3/2)
\end{cases}
$$
\end{expl}
\begin{rem}
One can show this is the only non-trivial vector bundle over $S^1$. In fact, the tangent bundle of $S^1$ is trivial $TS^1 \cong S^1 \times \R$. 
\end{rem}
\begin{defn}[Section]
A \emph{section} of a vector bundle $\pi \colon E \to M$ is a smooth map $\sigma \colon M \to E$ such that $\pi \circ \sigma = \mathrm{id}_M$. 
\end{defn}
The set of sections of $E$ is denoted $\Gamma(M,E)$ or simply $\Gamma(E)$ when no confusion is possible. Note that since $E_x$ is a vector space for all $x \in M$, we can naturally add sections and multiply them by scalars: 
\begin{equation}
(\sigma_1 + \sigma_2)(x) = \sigma_1(x) + \sigma_2(x), \qquad, (\lambda\sigma)(x) = \lambda\sigma(x).
\end{equation}
Thus, $\Gamma(E)$ is a $k$-vector space. 
\begin{expl}
\begin{enumerate}[a)]
\item A section of a trivial bundle $M \times k^n$ is given by $\sigma(x)= (x,f(x))$, where $f\colon M \to k^n$ is a smooth map. Thus, $\Gamma(M,\underline{k}^n) \cong C^\infty(M,k^n)$.
\item 
Over a trivializing cover $\mathfrak{U} = \{U_\alpha\}_{\alpha_\in A}$, a section is given by smooth functions $\sigma_\alpha\colon U_\alpha \to k^n$ satisfying 
\begin{equation}
\sigma_\beta(x) = g_{\alpha\beta}(x)\sigma_\alpha(x).
\end{equation}
\item  A section of the tangent bundle $TM$ is called a \emph{vector field}. 
\end{enumerate}
\end{expl}
The natural constructions on vector spaces, such as dualizing, direct sums and tensor products, carry over to vector bundles. Here the description in terms of transition functions comes in handy. 
\begin{defn} Let $E = (\mathfrak{U},g_{\alpha\beta})$ and $F=(\mathfrak{U},h_{\alpha\beta})$ be two vector bundles over the same trivializing cover. Then we define the following bundles:
\begin{enumerate}[i)]
\item The \emph{dual bundle} $E^*$ by
\begin{equation}
E^* = (\mathfrak{U},(g_{\alpha\beta}^*)^{-1})
\end{equation}
\item The \emph{direct sum} $E \oplus F$ by 
\begin{equation}
E \oplus F = (\mathfrak{U},g_{\alpha\beta}\oplus h_{\alpha\beta})
\end{equation}
\item The \emph{tensor product} $E \otimes F$ by 
\begin{equation}
E \otimes F = (\mathfrak{U},g_{\alpha\beta} \otimes h_{\alpha\beta})
\end{equation}
\item The \emph{symmetric} and \emph{exterior powers} $\mathrm{Sym}^kE$ and $\bigwedge^kE$by 
\begin{align}
\mathrm{Sym}^kE &= (\mathfrak{U},\mathrm{Sym}^kg_{\alpha\beta})\\
\bigwedge^kE &= \left(\mathfrak{U},\bigwedge^k g_{\alpha\beta}\right).
\end{align}
\item The \emph{determinant line} 
$\det E$ by 
\begin{equation}
\det E = \bigwedge^{rk(E)}E.
\end{equation}
\end{enumerate}
\end{defn}
\begin{expl}
\begin{enumerate}[a)]
\item The dual of the tangent bundle $TM$ is called the cotangent bundle and denoted $(TM)^* = T^*M$. 
\item Sections of $\bigwedge^k T^*M$ are called differential $k$-forms on $M$. 
\item If $E$ is a vector bundle, then sections of $\bigwedge^k T^*M \otimes E$ are called 
\emph{differential k-forms with values in $E$}.
\item Sections of $E^*\otimes F$ are the same as vector bundle morphisms $E \to F$: $$\Gamma(E^*\otimes F) \cong \underline{\Hom}(E,F)$$
(Exercise!)
\end{enumerate}
\end{expl}

\subsection{Principal bundles}
We begin with a definition. 
\begin{defn}[Principal bundle]
Let $G$ be a Lie group and $M$ be a manifold. A principal $G$-bundle is a triple $(P,\pi,M)$ such that
\begin{enumerate}[i)]
\item $\pi\colon P \to M$ is a smooth submersion, 
\item There is a free and transitive right action $P \times G \to P$ such that $\pi$ is $G$-invariant (that is, $\pi(pg) = \pi(p)$), 
\item There exists a \emph{trivializing cover} $\mathfrak{U} = \{U_\alpha\}_{\alpha \in A}$, that is, a cover of $M$ with the property that for every $\alpha \in A$ there exists a diffeomorphism $\Psi_{\alpha}\colon \pi^{-1}(U_\alpha) \to U_\alpha \times G$ such that 
\begin{equation}
\begin{tikzcd}
\pi^{-1}(U_\alpha) \arrow[rd, "\pi"] \arrow[rr, "\Psi_\alpha"] &          & U_\alpha \times G \arrow[ld, "\pi_1"'] \\
                                                               & U_\alpha &                                       
\end{tikzcd}
\end{equation}
commutes and $\Psi_\alpha(pg) = \Psi(p)g$. 
\end{enumerate}
\end{defn}
Again, we call $M$ the \emph{base} and $P$ the \emph{total space} of the bundle. The Lie group $G$ is called the structure group of the bundle. 
\begin{defn}[Morphism of principal bundles]
A \emph{morphism} of principal $G$-bundles $P$ and $P'$ is a smooth map $f \colon (P,\pi) \to (P',\pi')$ that commutes both with the right $G$-action and the projections, that is 
\begin{align}
f(pg) &= f(p)g \\
\pi(f(p)) &= \pi(p),
\end{align}
and an \emph{isomorphism} of principal $G$-bundles is a morphism which is also a diffeomorphism. 
\end{defn}

Again, we can define  
\begin{equation}
\tilde{g}_{\alpha\beta} =\Psi_\beta \circ \Psi_{\alpha}^{-1} \colon U_{\alpha\beta} \times G \to U_{\alpha\beta} \times G
\end{equation}
which are given by \begin{equation}
\tilde{g}_{\alpha\beta}(u,h) = (u,g_{\alpha\beta}(u)h)
\end{equation} 
since the trivializations commute with the projections. Note that the transition functions act from the left, since they commute with the right $G$-action.  The maps 
\begin{equation}
g_{\alpha\beta}\colon U_{\alpha\beta}\to G
\end{equation}
are called the transition or gluing maps.  They satisfy 
\begin{subequations}\label{eqs:gluingMapsPrincipal}
\begin{align}
g_{\alpha\alpha}(u) &= 1_G \\
g_{\alpha\beta}(u) &= g_{\beta\alpha}^{-1} \\
g_{\alpha\gamma}(u) &= g_{\beta\gamma}(u)g_{\alpha\beta}(u)
\end{align}
\end{subequations} We also have a similar proposition: 
\begin{prop}
Let $\mathfrak{U} = \{U_\alpha\}_{\alpha \in A}$ be a cover of $M$ and suppose $g_{\alpha\beta} \colon U_{\alpha\beta} \to G$ satisfy \eqref{eqs:gluingMapsPrincipal}. Then there exists a principal $G$-bundle $P$ over $M$ with trivializing cover $\mathfrak{U}$ and gluing maps $g_{\alpha\beta}$, and this bundle is unique up to isomorphim. 
\end{prop}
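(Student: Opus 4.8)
The statement is the exact principal-bundle analogue of the proposition just proved for vector bundles, and the plan is to repeat that construction essentially verbatim, replacing the model fibre $k^n$ by the Lie group $G$, left multiplication by $GL_n(k)$ by left multiplication in $G$, and carrying the right $G$-action along throughout. So I will only indicate the points where $G$ (rather than $k^n$) enters.

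\emph{Existence.} As in the vector-bundle case, I first shrink the cover so that each $U_\alpha$ lies in a chart domain of $M$; passing to a refinement does not destroy the cocycle relations \eqref{eqs:gluingMapsPrincipal}. Then set
\[
P := \left( \coprod_{\alpha \in A} U_\alpha \times G \right) \Big/ \sim, \qquad (\alpha,u,h) \sim (\beta,u',h') \iff u = u' \in U_{\alpha\beta} \text{ and } h' = g_{\alpha\beta}(u)h .
\]
The relations \eqref{eqs:gluingMapsPrincipal} give reflexivity, symmetry and transitivity of $\sim$. Let $\pi\colon P \to M$ be $[(\alpha,u,h)] \mapsto u$ and let $\Psi_\alpha\colon \pi^{-1}(U_\alpha) \to U_\alpha \times G$ be induced by $(\alpha,u,h) \mapsto (u,h)$; these are bijections, and their transition maps are $(u,h) \mapsto (u, g_{\alpha\beta}(u)h)$ by construction. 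Composing the $\Psi_\alpha$ with the chosen charts on $U_\alpha$ and a chart on $G$ turns $P$ into a smooth manifold (the change-of-chart maps are smooth because the $g_{\alpha\beta}$ are), with $\pi$ a submersion. Define the right action by $[(\alpha,u,h)] \cdot a := [(\alpha,u,ha)]$; this is well defined since $h' = g_{\alpha\beta}(u)h$ forces $h'a = g_{\alpha\beta}(u)(ha)$ by associativity, it is smooth in the charts $\Psi_\alpha$, it preserves fibres, and on each fibre $\pi^{-1}(u)\cong G$ it is free and transitive because right translation on $G$ is. Finally $\Psi_\alpha(p\cdot a) = \Psi_\alpha(p)\cdot a$ by construction, so $(P,\pi,M)$ is a principal $G$-bundle with trivializing cover $\mathfrak{U}$ whose gluing maps are exactly the given $g_{\alpha\beta}$.

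\emph{Uniqueness.} Suppose $(P,\pi)$ and $(P',\pi')$ both have trivializing cover $\mathfrak{U}$ and gluing maps $g_{\alpha\beta}$, with trivializations $\Psi_\alpha$ and $\Psi'_\alpha$. Over each $U_\alpha$ put
\[
f_\alpha := (\Psi'_\alpha)^{-1}\circ \Psi_\alpha \colon \pi^{-1}(U_\alpha) \to (\pi')^{-1}(U_\alpha),
\]
a diffeomorphism commuting with the projections and, since both $\Psi_\alpha$ and $\Psi'_\alpha$ are $G$-equivariant, with the right $G$-action. On $U_{\alpha\beta}$ the same chain of equalities as in the vector-bundle proof — writing $\Psi_\beta = \tilde g_{\alpha\beta}\circ \Psi_\alpha$ on $P$ and $\Psi'_\beta = \tilde g_{\alpha\beta}\circ \Psi'_\alpha$ on $P'$ with the \emph{same} $\tilde g_{\alpha\beta}$ — yields $f_\beta = f_\alpha$. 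Hence the $f_\alpha$ glue to a global smooth map $f\colon P \to P'$ which is a morphism of principal bundles and, being a local diffeomorphism, an isomorphism.

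\emph{Main obstacle.} There is no genuinely new difficulty: the only nontrivial point, that the quotient $P$ carries a manifold structure, is handled exactly as for vector bundles by shrinking the $U_\alpha$ into chart domains. The single thing to keep an eye on — and it is purely mechanical — is that the right $G$-action is well defined on the quotient and that everything in sight ($\pi$, the $\Psi_\alpha$, the action, $f$) is equivariant, which works because left and right multiplication of $G$ on itself commute. One could honestly abbreviate the whole argument to: repeat the proof of the previous proposition with $k^n$ replaced by $G$ and left $GL_n(k)$-multiplication by left $G$-multiplication, tracking the right $G$-action.
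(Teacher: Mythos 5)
Your proof is correct and follows exactly the route the paper intends: the paper states this proposition without proof, as the "similar" analogue of the vector-bundle case, and your argument is precisely that adaptation (quotient of $\coprod_\alpha U_\alpha \times G$ by the cocycle relation, right action by right multiplication, gluing the maps $(\Psi'_\alpha)^{-1}\circ\Psi_\alpha$ for uniqueness). The only cosmetic remark is that $G$ generally needs an atlas rather than a single chart when building the smooth structure, which does not affect the argument.
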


Even though vector bundles and principal bundles are different objects, in some sense they are like two sides of the same coin. This is explained by the following fundamental example: 
\begin{expl}
If $V$ is a vector spaces, then a \emph{frame} of $V$ is an ordered basis $\underline{e}= (e_1,\ldots,e_r)$. The set of frames is denoted by $\mathrm{Fr}(V)$. Let $\pi\colon E \to M$ be a rank $n$ $k$-vector bundle. Then, the \emph{frame bundle} $\mathrm{Fr}(E)$ of $E$ is the smooth manifold $\mathrm{Fr}(E) = \sqcup_{x \in M}\mathrm{Fr}(E_x)$. This manifold has a natural projection $\overline{\pi}\colon \mathrm{Fr(E)} \to M$. We can define a right $GL_n(k)$ action on $\mathrm{Fr}(E)$ in the following way. Let $\{U_\alpha\}_\alpha$ be a trivializing cover for $E$. Over a trivializing chart $\psi_\alpha \colon \pi^{-1}(U_\alpha) \to U_\alpha \times k^n$, the right action is given by \begin{equation}\underline{e}\cdot g = \psi_\alpha^{-1}(g^{-1}\psi_\alpha(e_1),\ldots,g^{-1}\psi(e_n)).\end{equation}
 This gives $\mathrm{Fr}(E)$ the structure of a principal $GL_n(k)$-bundle. If $\psi_\alpha = (\pi, A_\alpha$, then a trivialization of $\mathrm{Fr}(E)$ is given by 
 \begin{align*}
 \overline{\psi}_\alpha(\underline{e}) = (\overline{\pi}(\underline{e}), A_\alpha(\underline{e})).
 \end{align*}
 Here $A_\alpha(\underline{e}) = (A_\alpha e_1,\ldots A_\alpha e_n) \in GL_n(k)$
 Now, one can check that this principal $GL_n(k)$-bundle has \emph{the same} gluing maps $g_{\alpha\beta} = A_\beta A_\alpha^{-1}$: 
 $${A}_\beta(\underline{e}) = A_\beta (A_\alpha A_\alpha^{-1})(\underline{e}) = g_{\alpha\beta}A_\alpha(\underline{e}),$$
 hence $g_{\alpha\beta}$ are the transition functions of $P$ since they satisfy the defining equation
 
$$ \overline{\Psi}_\beta \circ \overline{\Psi}_\alpha(u,h) = (u, g_{\alpha\beta}h).$$
\end{expl}
This fact is important so we record it again: 
\par
\begin{center}
\fbox{\parbox{0.7\textwidth}{Let $E=(\mathfrak{U},g_{\alpha\beta})$ be a vector bundle. Then, its frame bundle is the principal  $GL_n(k)$-bundle $P=(\mathfrak{U},g_{\alpha\beta})$.}}
\end{center}
Thus, we can use the same data to define either  vector bundles or principal $GL_n(k)$ bundles. It is in this sense that we mean they are two sides of the same coin. However, we can construct vector bundles also from principal bundles with other structure groups.
\begin{defn}
Given a principal $G$-bundle $P = (\mathfrak{U},g_{\alpha\beta})$, and a representation $\rho \to GL_n(k)$, we define the \emph{associated vector bundle} $P \times_\rho k^n$ by 
\begin{equation}
E = (\mathfrak{U},\rho(g_{\alpha\beta}))
\end{equation}
\end{defn}
We say that a vector bundle $E$ has a $(G,\rho)$-structure if $E = P\times_\rho k^n$ for a principal $G$-bundle $P$. We denote this bundle by $\mathrm{Ad} P$. 
\begin{expl}
If $G \subset GL_n(k)$ is a subgroup, it has the trivial representation $\iota \colon G \hookrightarrow GL_n(k)$. We say that $E$ has a $G$-structure if it has a $(G,\iota)$ structure. This means that we can find $\mathfrak{U}$ and $g_{\alpha\beta}$ such that $E= (\mathfrak{U},g_{\alpha\beta})$, where the $g_{\alpha\beta}$ take values in $G \subset GL_n(k)$. For example, an orientation of $M$ is the same as an $SL_n(k)$-structure on $TM$.
\end{expl}
\begin{expl}[Adjoint bundle]
Let $G$ be a Lie group and $\mathfrak{g}$ be its Lie algebra. Then $G$ acts on $\mathfrak{g}$ via the adjoint action (if $G$ is a matrix group then this action is given by conjugation $g \cdot X = g X g^{-1}$). Hence, for every principal $G$-bundle $P$ we have the \emph{adjoint bundle} $P \times_{\rho}\mathfrak{g}$. 
\end{expl}
\subsection{Connections on vector bundles}
Very roughly, a \emph{connection} on a fiber bundle is a consistent way to move from one fiber in the bundle to the other. The concept of connection exists over both vector bundles and principal bundles.  We start with the concept of a connection on vector bundles. 

\begin{defn}[Connection on vector bundle]
Let $\pi\colon E \to M$. Then a \emph{connection on $E$} is a linear map
\[\nabla \colon \Gamma(E) \to \Gamma(T^*M \otimes E)\]
such that, for $f \in C^\infty(M)$ and $\sigma \in \Gamma(E)$,  the Leibniz rule
\begin{equation}
\nabla(f\sigma) = df \otimes \sigma + f \nabla \sigma
\end{equation}
holds. 
\end{defn}
The connection $\nabla$ induces a \emph{covariant derivative} along vector fields on sections on $E$.
\begin{defn}[Covariant derivative]
Let $\nabla$ be a connection on the vector bundle $E$ over $M$. Let $X$ be a vector field on $M$. Then, the map $\nabla_X \colon \Gamma(E) \to \Gamma(E)$ given by 
\begin{equation}
\nabla_X\sigma = \iota_X\nabla\sigma
\end{equation}
is called the \emph{covariant derivative of $\sigma$ along $X$ (with respect to $\nabla$)}.
\end{defn}
The Leibniz rule for the covariant derivative is 
\begin{equation}
\nabla_X(f\sigma) = (L_Xf)\sigma + f \nabla_X\sigma. 
\end{equation}
\begin{expl}
On the trivial bundle $M \times k^n$ we have a connection given by the de Rham differential $(f_1,\ldots f_n) \mapsto (df_1, \ldots df_n)$. This connection is called the \emph{trivial connection}.  
\end{expl}
\begin{prop}
If it is not empty, the space of connections $\mathcal{A}_E$ is an affine space modeled on the vector space $\Gamma(M,T^*M\otimes E^* \otimes E) = \Gamma(M, T^*M \otimes \End E) =  \Omega^1(\End E)$.
\end{prop}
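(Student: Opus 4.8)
The plan is to exhibit $\mathcal{A}_E$ as a torsor over the vector space $\Omega^1(\End E)$, i.e. to show that the difference of any two connections lies in that space and that adding an element of that space to a connection produces again a connection; freeness and transitivity of the resulting action are then immediate. First I would take two connections $\nabla_1,\nabla_2 \in \mathcal{A}_E$ (here is where the nonemptiness hypothesis enters — if $\mathcal{A}_E = \varnothing$ there is nothing to prove) and set $A := \nabla_1 - \nabla_2 \colon \Gamma(E) \to \Gamma(T^*M \otimes E)$. This map is $\R$- (or $\C$-)linear by construction, and applying the Leibniz rule for each $\nabla_i$ to $f\sigma$ with $f \in C^\infty(M)$ gives
\[
A(f\sigma) = \nabla_1(f\sigma) - \nabla_2(f\sigma) = \big(df\otimes\sigma + f\nabla_1\sigma\big) - \big(df\otimes\sigma + f\nabla_2\sigma\big) = f\, A(\sigma),
\]
so the $df$-terms cancel and $A$ is $C^\infty(M)$-linear.

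The next step is the identification of $C^\infty(M)$-linear maps $\Gamma(E)\to\Gamma(T^*M\otimes E)$ with sections of $T^*M\otimes E^*\otimes E$. This is the standard tensoriality lemma, and I expect it to be the only real point of substance: one shows that for a $C^\infty(M)$-linear map $A$, the value $(A\sigma)(x)$ depends only on the value $\sigma(x)\in E_x$, not on the germ or the global section. Concretely, if $\sigma(x) = 0$, write $\sigma = \sum_i f_i \sigma_i$ near $x$ in a local frame with $f_i(x) = 0$ (extending by a bump function $\chi$ with $\chi \equiv 1$ near $x$ so that $\chi\sigma = \sum_i (\chi f_i)(\chi\sigma_i)$ is defined globally), and then $(A(\chi\sigma))(x) = \sum_i f_i(x)(A(\chi\sigma_i))(x) = 0$; since $\chi \equiv 1$ near $x$ also $A(\chi\sigma)$ agrees with $A\sigma$ near $x$. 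Hence $A$ is induced by a bundle morphism $E \to T^*M\otimes E$, equivalently a section of $\Hom(E, T^*M\otimes E) = T^*M\otimes E^*\otimes E = T^*M \otimes \End E$, which is by definition $\Omega^1(\End E)$.

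Conversely, given $\nabla \in \mathcal{A}_E$ and $A\in\Omega^1(\End E)$, viewed as a $C^\infty(M)$-linear map $\Gamma(E)\to\Gamma(T^*M\otimes E)$, the sum $\nabla + A$ is $\R$-linear and satisfies $(\nabla + A)(f\sigma) = df\otimes\sigma + f\nabla\sigma + fA\sigma = df\otimes\sigma + f(\nabla + A)\sigma$, so it is again a connection. This shows $\Omega^1(\End E)$ acts on $\mathcal{A}_E$; the action is free since $\nabla + A = \nabla$ forces $A = 0$, and transitive by the first two paragraphs. Therefore $\mathcal{A}_E$ is an affine space modeled on $\Omega^1(\End E) = \Gamma(M,T^*M\otimes E^*\otimes E)$, as claimed. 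Finally I would note the chain of identifications $\Gamma(T^*M\otimes E^*\otimes E) = \Gamma(T^*M\otimes\End E) = \Omega^1(\End E)$ used above follows from the tensor-product and $\Hom$ constructions on bundles recorded earlier (in particular $\Gamma(E^*\otimes F)\cong\underline{\Hom}(E,F)$), so no extra work is needed there.
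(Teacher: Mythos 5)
Your proof is correct and follows essentially the same route as the paper's: take the difference of two connections, observe that the Leibniz terms cancel so the difference is $C^\infty(M)$-linear, and identify such maps with sections of $T^*M\otimes E^*\otimes E = T^*M\otimes\End E$. You in fact supply details the paper leaves implicit (the tensoriality/localization argument and the converse check that $\nabla + A$ is again a connection, giving a free and transitive action), so nothing is missing.
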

\begin{proof}
First observe that if $\nabla^1$ and $\nabla^2$ are connections on $E$, then their difference $A = \nabla^1 - \nabla^0$ satisfies $A(f\sigma) = fA(\sigma)$. Hence $A$ defines a vector bundle morphism $E \to T^*M \otimes E$. It follows that $A \in \Gamma(\Hom(E,T^*M \otimes E)) \cong \Omega^1(\End(E))$.
\end{proof}
In particular, every connection on the trivial bundle $M \times k^n$ is of the form 
$\nabla = d + A$, where $A \in \Omega^1(\End(k^n))$. In a basis of $k^n$ we write\footnote{We follow the Einstein summation convention that repeated indices are summed over. This does not apply to indices labeling covers (usually $\alpha,\beta,\gamma$).} $A(e_j) = A^i_je_i$, then, 
$$\nabla(f_1,\ldots, f_n) = (df_1,\ldots df_n) + (A^i_1f_i,\ldots A^i_nf_i).$$ \ Thus one can think of $A$ as a 1-form with values in matrices, or, equivalenty, as a matrix of 1-forms $A^i_j$. Both viewpoints are sometimes helpful. \par
If $\Psi\colon E \to F$ is an isomorphism of vector bundles, and $\nabla$ is a connection on $E$, then the map $\check{\nabla} = (\mathrm{id} \otimes \Psi)\circ \nabla \circ \Psi^{-1}$ is a connection on $F$. It is the unique map that makes 
\begin{equation}
\begin{tikzcd}
\Gamma(E) \arrow[rr, "\nabla"]                                       &  & \Gamma(T^*M \otimes E) \arrow[d, "\mathrm{id}\otimes\Psi"] \\
\Gamma(F) \arrow[u, "\Psi^{-1}"'] \arrow[rr, "\check\nabla", dashed] &  & \Gamma(T^*M\otimes F)                                     
\end{tikzcd}
\end{equation}
commute. In particular, consider a local trivialization $\Psi_\alpha\colon \restr{E}{U_\alpha} \to U_\alpha\times k^n$ of $E$. Then a connection $\nabla$ on $E$ induces a connection on $U_\alpha \times k^n$, hence an element $A_\alpha\in \Omega^1(U_\alpha,\End k^n)$ and we have 
\begin{equation}\label{eq:LocalConnAction}
(\nabla\sigma)_\alpha = d\sigma_\alpha + A_\alpha\sigma_\alpha
\end{equation} (in the second term there is matrix-vector multiplication). This is called the \emph{connection 1-form} of $\nabla$ in $U_\alpha$. If $U_\beta$ is another local trivialization, one can ask how $A_\alpha$ and $A_\beta$ are related. 
\begin{prop}
Let $\nabla$ be a connection on the vector bundle $E$ and $A_\alpha,A_\beta$ be the connecction 1-forms on two local trivializations $U_\alpha$, $U_\beta$. Then 
\begin{equation}\label{eq:ConnTransformVector}
A_\beta = g_{\alpha\beta}A_{\alpha}g_{\alpha\beta}^{-1} - (dg_{\alpha\beta})g_{\alpha\beta}^{-1}.
\end{equation}
\end{prop}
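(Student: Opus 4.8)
The plan is to work directly with local representatives of sections and to exploit the defining relation \eqref{eq:LocalConnAction}. First I would recall that for a section $\sigma \in \Gamma(E)$ with local representatives $\sigma_\alpha \colon U_\alpha \to k^n$ satisfying $\sigma_\beta = g_{\alpha\beta}\sigma_\alpha$, the section $\nabla\sigma \in \Gamma(T^*M \otimes E)$ has local representatives $(\nabla\sigma)_\alpha = d\sigma_\alpha + A_\alpha\sigma_\alpha \in \Omega^1(U_\alpha,k^n)$. Since $\nabla\sigma$ is a genuine section of $T^*M \otimes E$ and, under a change of trivialization of $E$ alone, only the $E$-factor is reglued by $g_{\alpha\beta}$ (the cotangent factor being untouched), these representatives transform by the same rule, $(\nabla\sigma)_\beta = g_{\alpha\beta}(\nabla\sigma)_\alpha$.

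Second, I would substitute the local formula into both sides of this identity and expand using the Leibniz rule for the de Rham differential, $d(g_{\alpha\beta}\sigma_\alpha) = (dg_{\alpha\beta})\sigma_\alpha + g_{\alpha\beta}\,d\sigma_\alpha$ (here $g_{\alpha\beta}$ is a matrix of functions acting on a vector of functions, so no wedge-product signs appear). This gives
\begin{equation*}
(dg_{\alpha\beta})\sigma_\alpha + g_{\alpha\beta}\,d\sigma_\alpha + A_\beta g_{\alpha\beta}\sigma_\alpha = g_{\alpha\beta}\,d\sigma_\alpha + g_{\alpha\beta}A_\alpha\sigma_\alpha,
\end{equation*}
and after cancelling $g_{\alpha\beta}\,d\sigma_\alpha$ one is left with $(dg_{\alpha\beta} + A_\beta g_{\alpha\beta})\sigma_\alpha = g_{\alpha\beta}A_\alpha\sigma_\alpha$.

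Third, I would observe that this holds for every local section; in particular $\sigma_\alpha$ may be taken to be an arbitrary constant $k^n$-valued function, and evaluating at a point $u \in U_{\alpha\beta}$ while letting the constant run over a basis of $k^n$ yields the matrix identity $dg_{\alpha\beta} + A_\beta g_{\alpha\beta} = g_{\alpha\beta}A_\alpha$ on $U_{\alpha\beta}$. Right-multiplying by $g_{\alpha\beta}^{-1}$, which exists since $g_{\alpha\beta}$ takes values in $GL_n(k)$, gives $A_\beta = g_{\alpha\beta}A_\alpha g_{\alpha\beta}^{-1} - (dg_{\alpha\beta})g_{\alpha\beta}^{-1}$, as claimed.

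I do not expect a serious obstacle here: the one point deserving care is the justification that the local 1-forms $(\nabla\sigma)_\alpha$ transform as the local representatives of a section of $T^*M \otimes E$ under a pure change of $E$-trivialization — this is immediate once one views $\Omega^1(U_\alpha,k^n) \cong \Omega^1(U_\alpha) \otimes_{C^\infty(U_\alpha)} C^\infty(U_\alpha,k^n)$ with the gluing acting only on the second factor — together with keeping the order of matrix multiplication straight throughout (writing $A_\beta g_{\alpha\beta}$, not $g_{\alpha\beta}A_\beta$). Alternatively, the same formula follows by applying the earlier description $\check{\nabla} = (\mathrm{id}\otimes\Psi)\circ\nabla\circ\Psi^{-1}$ to the transition diffeomorphism $\tilde{g}_{\alpha\beta}$, but the section-wise computation above is the most transparent.
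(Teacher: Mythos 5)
Your proposal is correct and follows essentially the same route as the paper's proof: expand $(\nabla\sigma)_\beta = g_{\alpha\beta}(\nabla\sigma)_\alpha$ via the local formula \eqref{eq:LocalConnAction} and the Leibniz rule, cancel the $g_{\alpha\beta}\,d\sigma_\alpha$ terms, and deduce the matrix identity from arbitrariness of the section before right-multiplying by $g_{\alpha\beta}^{-1}$. Your extra care about why arbitrariness of $\sigma_\alpha$ (e.g.\ locally constant representatives) yields the matrix identity is a welcome refinement of the paper's brief ``since this holds for all $s$'' step, but it is not a different argument.
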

\begin{proof}
Let $\sigma \in \Gamma(E)$. Then, we know that 
\begin{equation}\label{eq:ProofConnTransform1}
\sigma_\beta = g_{\alpha\beta}\sigma_\alpha
\end{equation}
and 
\begin{equation}\label{eq:ProofConnTransform2}
(\nabla\sigma)_\beta = g_{\alpha\beta}(\nabla\sigma)_\alpha.
\end{equation}
Expanding \eqref{eq:ProofConnTransform2} using \eqref{eq:LocalConnAction}, we obtain 
\begin{align*}
d\sigma_\beta + A_\beta\sigma_\beta &= g_{\alpha\beta}(d\sigma_\alpha + A_\alpha \sigma_\alpha) 
\end{align*}
On the other hand, using \eqref{eq:ProofConnTransform1} we obtain 
\[(\nabla\sigma)_\beta = d\sigma_\beta + A_\beta\sigma_\beta = d(g_{\alpha\beta}\sigma_\alpha) + A_\beta g_{\alpha\beta}\sigma_\alpha  = dg_{\alpha\beta}\sigma_{\alpha} + g_{\alpha\beta}d\sigma_\alpha + A_\beta g_{\alpha\beta}\sigma_{\alpha}.\]
We conclude that 
$$A_\beta g_{\alpha\beta} \sigma_\alpha = g_{\alpha\beta} A_{\alpha} \sigma_\alpha - dg_{\alpha\beta}s_\alpha.$$
Since this holds for all $s$, we see that 
$$ A_\beta g_{\alpha\beta} = g_{\alpha\beta}A_\alpha - dg_{\alpha\beta}$$
from where the claim follows.
\end{proof}
Hence, we can characterize a connection on a bundle $E = (\mathfrak{U},g_{\alpha\beta})$ as a collection of 1-forms $A_\alpha \in \Omega^1(U_\alpha,\End E)$. Notice that $\End E  = \End( k^n)$ is the Lie algebra of $GL_n(k)$.  This suggest a natural generalization of the concept of connections to principal bundles, as discussed in the next subsection. 
\subsection{Connections on principal bundles}
We start with some definitions. Suppose $G$ is a Lie group with Lie algebra $\mathfrak{g}$ that acts on a manifold $P$ from the right. For fixed $p \in P$, there is a map 
\begin{align*}
\mu_p \colon G &\to P \\
 g &\mapsto pg.
\end{align*}
The differential of this map at the identity element $e\in G$ is a map
\begin{align*}
(d\mu_p)_e\colon \mathfrak{g} \cong T_eG &\to T_pP \\
X \mapsto (d\mu_p)_eX
\end{align*}

\begin{defn}[Fundamental vector field]
 Let the Lie group $G$ act on the manifold $P$ from the right and let $X \in \mathfrak{g}$. Then, the fundamental vector field $X^\sharp$ on $P$ is the section of $TP$ defined by 
 \begin{equation}
 X^\sharp_p = (d\mu_p)_eX.
 \end{equation}
\end{defn}
In particular, $G$ acts on itself by right multiplication. For fixed $g \in G$, this action reads 
\begin{align*}
\mu_g \colon G &\to G \\
h \mapsto gh
\end{align*}
Hence we have $\mu_g = L_g$ (left multiplication by $g$). Let $X \in \mathfrak{X}$. The fundamental vector field of the right action of $G$ on itself is given by $X \mapsto (dL_g)_eX$. 
\begin{defn}[Maurer-Cartan Form]\label{def:MCform}
The Maurer-Cartan Form $\phi \in \Omega^1(G,\mathfrak{g})$ is defined by $\phi_g(X^\sharp_g) \equiv X \in \mathfrak{g}$, where $X^\sharp$ is the fundamental vector field of the right action of $G$ on itself.
\end{defn}
\begin{rem}
From the discussion above it follows that 
\begin{equation}
\phi_g = (dL_{g^{-1}})_g \colon T_gG \to T_eG \cong \mathfrak{g}.
\end{equation}
In particular, for matrix groups it is given by $\phi_g = g^{-1}dg$. 
\end{rem}
We can now define a connection on a principal bundle.
\begin{defn}[Connection on a principal bundle]\label{def:connprinc}
Let $\pi\colon P \to M$ be a principal $G$-bundle, and let $\mathfrak{g}$ be the Lie algebra of $G$. A \emph{connection on $P$} is a 
1-form $\Omega \in \Omega^1(P,\mathfrak{g})$ satisfying\footnote{Here, $\mathrm{Ad}_g \colon \mathfrak{g} \to \mathfrak{g}$ is given by differentiating the map $\mathrm{Ad}_g \colon  G \to G$ at the identity. For matrix groups $G \subset GL_n(k)$ we have $\mathfrak{g} \subset gl_n(k)$ and the adjoint action is $\mathrm{Ad}_{g}X = gXg^{-1}$. }
\begin{enumerate}[i)]
\item For all $g \in G$, 
\begin{equation}
R_g^*\Omega = \mathrm{Ad}_{g^{-1}}\Omega\label{eq:defConn1}\end{equation}
\item For all $X \in \mathfrak{g}$, 
\begin{equation}
\Omega(X^\sharp) = X \in \mathfrak{g} \label{eq:defConn2}
\end{equation}
\end{enumerate}
\end{defn}
Notice that here, the 1-form is on the total space $P$. Recall that a local section $\sigma\colon U \to \restr{P}{U}$ defines a local trivialization $\pi^{-1}(U) \to U \times G$ of $P$ via 
\begin{equation}
p \mapsto (\pi(p),\sigma(\pi(p))).
\end{equation}
Let $\sigma'$ be another section over $U$. Then there exists a map $g\colon U \to G$ such that $\sigma'(x) = \sigma(x)g(x)$. The following lemma describes the behaviour of a connection under such a change of trivialization.
\begin{lem}\label{lem:connTrivChange}
If $\sigma,\sigma'$ are as above, then 
\begin{equation}
(\sigma')^*\Omega = \mathrm{Ad}_{g^{-1}}\sigma^*\Omega + g^*\phi, 
\end{equation}
where $\phi \in \Omega^1(G, \mathfrak{g})$ is the Maurer-Cartan Form introduced above.
\end{lem}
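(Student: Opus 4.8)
The plan is to realize $\sigma'$ as a composition involving the right $G$-action on $P$ and to differentiate, using the two defining properties \eqref{eq:defConn1}, \eqref{eq:defConn2} of a connection together with the formula $\phi_g = (dL_{g^{-1}})_g$ for the Maurer--Cartan form recorded after Definition~\ref{def:MCform}. Write $\sigma' = \mathsf{A}\circ(\sigma,g)$, where $\mathsf{A}\colon P\times G\to P$, $\mathsf{A}(p,h)=ph$, is the action map and $(\sigma,g)\colon U\to P\times G$ sends $x$ to $(\sigma(x),g(x))$. For $v\in T_xU$ we then have $d\sigma'_x(v)=d\mathsf{A}_{(\sigma(x),g(x))}\bigl(d\sigma_x(v),\,dg_x(v)\bigr)$, so everything reduces to the differential of $\mathsf{A}$, for which the ``Leibniz rule''
\[ d\mathsf{A}_{(p,h)}(w,Y) = d(R_h)_p(w) + d(\mu_p)_h(Y), \qquad w\in T_pP,\ Y\in T_hG, \]
holds; it follows by differentiating $\mathsf{A}$ in each argument separately and using $\mathsf{A}(-,h)=R_h$, $\mathsf{A}(p,-)=\mu_p$.

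Applying $\Omega$ at the point $pg=\sigma'(x)$ and splitting along the two summands: the first gives $\Omega_{pg}\bigl(d(R_g)_p w\bigr)=(R_g^*\Omega)_p(w)=\mathrm{Ad}_{g^{-1}}\bigl(\Omega_p(w)\bigr)$ by \eqref{eq:defConn1}, which with $p=\sigma(x)$, $g=g(x)$, $w=d\sigma_x(v)$ is precisely $\mathrm{Ad}_{g(x)^{-1}}\bigl((\sigma^*\Omega)_x(v)\bigr)$. For the second summand I would use the identity $\mu_p=\mu_{pg}\circ L_{g^{-1}}$ (both sides send $h$ to $ph$), so that $d(\mu_p)_g=d(\mu_{pg})_e\circ d(L_{g^{-1}})_g$; then $d(L_{g^{-1}})_g(Y)=\phi_g(Y)=:X\in\mathfrak{g}$ and $d(\mu_{pg})_e(X)=X^\sharp_{pg}$ by the definition of the fundamental vector field, so \eqref{eq:defConn2} yields $\Omega_{pg}\bigl(d(\mu_p)_g Y\bigr)=\Omega_{pg}(X^\sharp_{pg})=X=\phi_g(Y)$, which with $h=g(x)$, $Y=dg_x(v)$ is $(g^*\phi)_x(v)$. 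Adding the two contributions gives $(\sigma')^*\Omega=\mathrm{Ad}_{g^{-1}}\sigma^*\Omega+g^*\phi$.

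The only real care needed is the bookkeeping of base points and keeping $\mu_p$, $R_h$ and $L_{g^{-1}}$ straight; the substep carrying the content is the Leibniz rule for $\mathsf{A}$, and once that is in place the identification of the two terms via \eqref{eq:defConn1} and \eqref{eq:defConn2} respectively is essentially automatic, so I do not expect a serious obstacle. One could alternatively avoid $\mathsf{A}$ entirely by working in a fixed local trivialization and the induced trivialization of $TP$, but the invariant argument above seems cleaner.
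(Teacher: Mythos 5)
Your proof is correct and follows essentially the same route as the paper's: factoring $\sigma'$ through the action map $P\times G\to P$, splitting its differential into the $R_g$- and $\mu_p$-directions, and invoking properties \eqref{eq:defConn1} and \eqref{eq:defConn2} together with $\phi_g=(dL_{g^{-1}})_g$ via the identity $\mu_p=\mu_{pg}\circ L_{g^{-1}}$. Your write-up in fact makes explicit the chain-rule step that the paper only asserts, but the argument is the same.
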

\begin{proof}
Let $x \in U$, and $v\in T_xM$. Then $((\sigma')*\Omega)_xv = \Omega_{\sigma'(x)}d\sigma'_x v$. On the other hand we can write $\sigma'$ as the composition 
\[\begin{tikzcd}
U \arrow[r,"{(s,g)}"]           & P\times G \arrow[r, "{\mu}"]              & P        \\
x  \arrow[r, maps to] & {(s(x),g(x))} \arrow[r, maps to] & s(x)g(x)
\end{tikzcd} \]
We first compute the pullback $\mu^*\Omega$ to $P\times G$. For this, note that $$(d\mu_p)_gw = (d\mu_{pg})_e(dL_{g^{-1}})w$$ 
(this follows from the chain rule). Then, we have 
\begin{align*}
(\mu^*\Omega)_{(p,g)}(v,w) &= \Omega_{pg}(d\mu_p)_g w + \Omega_{pg}(dR_g)_pv \\
&= \Omega_{pg}\underbrace{(d\mu_{pg})_e(dL_{g^{-1}})_gw}_{((dL_{g^{-1}})_gw)^\sharp} + ( R_g^*\Omega)_p v \\
&= (dL_{g^{-1}})_gw + \mathrm{Ad}_{g^{-1}}\Omega_pv = \phi_gw + \mathrm{Ad}_{g^{-1}}v
\end{align*} 
where in the last equality we used the two properties of a connection. Now, the first term is exactly $\phi_g$. Pulling back to $U$ with $(s,g)$, we obtain the result since pulling back with $\sigma$ commutes with the adjoint action of $g$, which acts only on the Lie algebra factor of $\Omega$.
%We compute the differential $(d\mu)_{(p,g)}(v,w) \in T_{pg}P$. It satisfies 
%$$(d\mu)_{(p,g)}(v,w)It can be computed by fixing one argument and differentiating in the other. The differential in the first argument is simply the differential $(dR_g)_p$ of right multiplication by $G$. 
%Hence, by the chain rule we can write 
%\begin{align*}d\sigma'_xv &= (d\mu)_{(\sigma(x),g(x))}(d\sigma_x v, dg_xv) \\
%&= (dR_{g(x)})_{\sigma(x)}d\sigma_xv + (d\mu_{\sigma(x)})_{g(x)}dg_xv \\
%&= 
%\end{align*}

\end{proof}
The next proposition establishes the relationship of this definition with the one of a connection on a vector bundle.  
\begin{prop}
Let $P=(\mathfrak{U},g_{\alpha\beta}) $ be a principal bundle with structure group $G\subset GL_n(k)$, i.e. $G$ is a matrix group\footnote{Notice that the spin group is also a matrix group since the Clifford algebra is isomorphic to a matrix algebra. However, $\Spin_n \nsubseteq Gl_n(k)$, rather, $\Spin_n \subset GL_{2^n}(k)$.}. Then a connection on $P$ is equivalent to a collection of 1-forms $A_\alpha\in \Omega^1(M,\mathfrak{g})$ such that 
\begin{equation}
A_\beta = g_{\alpha\beta}A_\alpha g^{-1}_{\alpha\beta} - dg_{\alpha\beta}g^{-1}_{\alpha\beta}.
\end{equation}
\end{prop}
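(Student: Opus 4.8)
The plan is to establish a bijection between connection $1$-forms $\Omega \in \Omega^1(P,\mathfrak{g})$ and families $\{A_\alpha\}$ satisfying the displayed transformation law, using the local sections associated to the trivializing cover together with Lemma~\ref{lem:connTrivChange}. Recall that each trivialization $\Psi_\alpha \colon \pi^{-1}(U_\alpha) \to U_\alpha \times G$ determines a local section $\sigma_\alpha(x) = \Psi_\alpha^{-1}(x,1_G)$, and a short computation with $\tilde g_{\alpha\beta} = \Psi_\beta \circ \Psi_\alpha^{-1}$ gives $\sigma_\alpha = \sigma_\beta\, g_{\alpha\beta}$ on $U_{\alpha\beta}$.

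For the first direction, given a connection $\Omega$ in the sense of Definition~\ref{def:connprinc}, I set $A_\alpha := \sigma_\alpha^*\Omega \in \Omega^1(U_\alpha,\mathfrak{g})$. Applying Lemma~\ref{lem:connTrivChange} to the pair $\sigma_\alpha = \sigma_\beta\, g_{\alpha\beta}$ yields $A_\alpha = \mathrm{Ad}_{g_{\alpha\beta}^{-1}}A_\beta + g_{\alpha\beta}^*\phi$, where $\phi$ is the Maurer--Cartan form of Definition~\ref{def:MCform}. Since $G$ is a matrix group, $\phi_g = g^{-1}dg$, so $g_{\alpha\beta}^*\phi = g_{\alpha\beta}^{-1}dg_{\alpha\beta}$; solving for $A_\beta$ and conjugating gives $A_\beta = g_{\alpha\beta}A_\alpha g_{\alpha\beta}^{-1} - dg_{\alpha\beta}\,g_{\alpha\beta}^{-1}$, which is exactly the claimed rule.

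For the converse I reconstruct $\Omega$ from a family $\{A_\alpha\}$ obeying the law. On each chart I transport $A_\alpha$ to $U_\alpha\times G$ by the explicit formula $\Omega_\alpha|_{(x,g)} = \mathrm{Ad}_{g^{-1}}(\mathrm{pr}_1^*A_\alpha) + \mathrm{pr}_2^*\phi$. Two checks are needed: (a) each $\Omega_\alpha$ satisfies conditions (i) and (ii) of Definition~\ref{def:connprinc} on the trivial bundle — condition (ii) is immediate since $\mathrm{pr}_1^*A_\alpha$ kills vertical vectors and $\mathrm{pr}_2^*\phi$ reproduces the generator of any fundamental vector field, while condition (i) follows from $R_h^*\phi = \mathrm{Ad}_{h^{-1}}\phi$ together with $\mathrm{Ad}_{(gh)^{-1}} = \mathrm{Ad}_{h^{-1}}\mathrm{Ad}_{g^{-1}}$; and (b) on $\pi^{-1}(U_{\alpha\beta})$ the pullbacks $\Psi_\alpha^*\Omega_\alpha$ and $\Psi_\beta^*\Omega_\beta$ coincide, i.e. $\tilde g_{\alpha\beta}^*\Omega_\beta = \Omega_\alpha$ on $U_{\alpha\beta}\times G$. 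Expanding the differential of $\tilde g_{\alpha\beta}(x,g) = (x, g_{\alpha\beta}(x)g)$ and using the standard identity $\phi_{hg}\circ d(L_h) = \mathrm{Ad}_{g^{-1}}(h^{-1}dh) + \phi_g$ collapses (b) precisely to the transformation law for $\{A_\alpha\}$, so the $\Omega_\alpha$ glue to a global $\Omega\in\Omega^1(P,\mathfrak{g})$.

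Finally I would record that the two constructions are mutually inverse: since $\sigma_\alpha$ corresponds to the section $x\mapsto(x,1_G)$, one has $\sigma_\alpha^*\Omega_\alpha = A_\alpha$; conversely, writing an arbitrary tangent vector at $(x,g)$ as $(v,0)$ plus a vertical vector, axiom (ii) pins $\Omega$ down on the vertical part and $A_\alpha$ together with axiom (i) ($R_g^*$-equivariance) pins it down on the image of $d\sigma_\alpha$, forcing $\Omega$ to agree with the formula above on $\pi^{-1}(U_\alpha)$. The main obstacle is the bookkeeping in step (b): keeping all the left/right-translation and $\mathrm{Ad}$ conventions consistent so that the cocycle relation among the $\tilde g_{\alpha\beta}$-pullbacks reduces cleanly to the given transformation rule. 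Everything else is a direct consequence of Lemma~\ref{lem:connTrivChange} and routine identities for the Maurer--Cartan form.
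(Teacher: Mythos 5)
Your proof is correct, and the forward direction is exactly the paper's: pull back $\Omega$ along the sections $\sigma_\alpha$ determined by the trivializations, apply Lemma \ref{lem:connTrivChange} to $\sigma_\alpha=\sigma_\beta\,g_{\alpha\beta}$ (equivalently $\sigma_\beta=\sigma_\alpha g_{\alpha\beta}^{-1}$, as the paper phrases it), and use $\phi_g=g^{-1}dg$ for matrix groups to land on the stated transformation law. Where you diverge is the converse. The paper also sets $\Omega_\alpha=\mathrm{Ad}_{g^{-1}}A_\alpha+\phi_g$ on $U_\alpha\times G$, but then assembles a global connection by a partition of unity, invoking the fact that conditions \eqref{eq:defConn1} and \eqref{eq:defConn2} are affine/convex; you instead check directly that $\tilde g_{\alpha\beta}^{\,*}\Omega_\beta=\Omega_\alpha$ on overlaps, which is precisely equivalent to the transformation law, so the local forms glue as they stand. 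Your route is slightly longer in bookkeeping but buys more: it shows the two constructions are mutually inverse, so the claimed \emph{equivalence} (not merely "given the $A_\alpha$ there exists some connection") is established, whereas the paper's partition-of-unity argument produces a connection but leaves implicit that its pullbacks along the $\sigma_\alpha$ recover the given $A_\alpha$ --- a fact that again rests on the overlap compatibility you verified. The paper's route, in turn, is the quicker existence argument and illustrates the useful general principle that the space of connections is affine, so convex combinations of local connections are connections. Both are sound; your version is the tighter proof of the proposition as stated.
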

\begin{proof}
Suppose we are given a connection $\Omega$ on $P$ and let $U_\alpha \in \mathfrak{U}$. Consider the constant section $\sigma_\alpha \colon U_\alpha \to  U_\alpha \times G, u\mapsto(u,1).$ Then, we define $A_\alpha := (\Psi_\alpha^{-1}\circ \sigma_{\alpha})^*\Omega$. Now, notice that the section $\sigma_\alpha$ over the $U_\beta$ is given by $g_{\alpha\beta}$. Hence 
$$\sigma_\beta = \sigma_\alpha g_{\alpha\beta}^{-1}.$$
Now we can apply proposition \ref{lem:connTrivChange} for $g^{-1}$.
 Now, notice that we have 
 $0 = d(gg^{-1}) = dg g^{-1} + gd(g^{-1})$ and hence $d(g^{-1}) = - g^{-1}dg g^{-1}$. This implies that for a matrix group, we have $\phi_{g^{-1}} = gd(g^{-1}) = - dg g^{-1}$. This proves that a connection $\Omega$ is described by such 1-forms in a local trivialization. \\
Conversely, assume that we are given a family of such 1-forms. Then we set $\Omega_\alpha(u,g) := \mathrm{Ad}_{g^{-1}}A_\alpha + \phi_g$ on $U_{\alpha} \times G$ and define $\Omega$ on $\pi^{-1}(U_{\alpha})$ as $\Psi_{\alpha}^*\Omega_{\alpha}$. We then glue together the connection using a partition of unity. The resulting 1-form $\Omega$ which is a connection since the local pieces are, and the conditions \eqref{eq:defConn1} and \eqref{eq:defConn2} are convex. 
\end{proof} 
We have the following corollary: 
\begin{cor}
A connection $\nabla$ on a vector bundle $E$ induces a connection $\Omega$ on the bundle of frames $\mathrm{Fr}(E)$ and vice versa. 
\end{cor}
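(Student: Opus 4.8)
The plan is to reduce everything to the two structure theorems already established: that a connection on a vector bundle $E = (\mathfrak{U},g_{\alpha\beta})$ is the same datum as a collection of local connection $1$-forms $A_\alpha \in \Omega^1(U_\alpha,\End E)$ transforming according to \eqref{eq:ConnTransformVector}, and that a connection on a principal $G$-bundle $P = (\mathfrak{U},g_{\alpha\beta})$ with $G \subset GL_n(k)$ is the same datum as a collection of $1$-forms $A_\alpha \in \Omega^1(U_\alpha,\g)$ obeying the identical transformation rule. The bridge between the two is the boxed fact that $\mathrm{Fr}(E)$ is the principal $GL_n(k)$-bundle with exactly the same trivializing cover and gluing maps as $E$, together with the canonical identification of $\mathfrak{gl}_n(k) = \operatorname{Lie}(GL_n(k))$ with (the fibre of) $\End E$.

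Concretely, I would first fix a trivializing cover $\mathfrak{U}$ for $E$ and record that it is also a trivializing cover for $P := \mathrm{Fr}(E)$, with the same transition maps $g_{\alpha\beta}\colon U_{\alpha\beta} \to GL_n(k)$. Given $\nabla$, apply the vector-bundle proposition to extract local $1$-forms $A_\alpha \in \Omega^1(U_\alpha,\End k^n)$ satisfying $A_\beta = g_{\alpha\beta}A_\alpha g_{\alpha\beta}^{-1} - (dg_{\alpha\beta})g_{\alpha\beta}^{-1}$. Since $\End k^n = \mathfrak{gl}_n(k)$, these are precisely a collection of $\mathfrak{gl}_n(k)$-valued $1$-forms, and the transformation law is exactly the one demanded by the principal-bundle proposition; feeding them into that proposition produces a connection $\Omega$ on $\mathrm{Fr}(E)$. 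For the converse one runs the argument backwards: a connection $\Omega$ on $\mathrm{Fr}(E)$ yields $1$-forms $A_\alpha \in \Omega^1(U_\alpha,\mathfrak{gl}_n(k))$, which, viewed in $\Omega^1(U_\alpha,\End E)$ and plugged into $(\nabla\sigma)_\alpha = d\sigma_\alpha + A_\alpha\sigma_\alpha$, glue to a connection on $E$ by the same transformation law. These two assignments are mutually inverse because both are nothing but ``read off / reassemble the local connection forms,'' so one in fact gets a canonical bijection $\calA_E \cong \calA_{\mathrm{Fr}(E)}$; I would also note that it does not depend on the chosen cover, since refining the cover changes neither side of the dictionary (implicit in the uniqueness parts of the two propositions).

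The one point that needs genuine care — really the only content beyond bookkeeping — is checking that the adjoint action appearing in the principal-bundle transformation law agrees with the matrix conjugation appearing in the vector-bundle one. Here I would invoke that $GL_n(k)$ is a matrix group, so $\mathrm{Ad}_g X = gXg^{-1}$ for $X \in \mathfrak{gl}_n(k)$, and likewise that the Maurer--Cartan contribution $g^*\phi = g^{-1}dg$ (equivalently the $-(dg)g^{-1}$ term) has the form used in the proof of the earlier proposition. Once this identification is in place the two transformation laws are literally the same equation and the corollary is immediate. As an aside, the same statement admits a coordinate-free proof — a connection on $E$ is a $GL_n(k)$-invariant horizontal distribution on $\mathrm{Fr}(E)$, obtained by declaring a path of frames parallel iff each frame vector is $\nabla$-parallel — but the local-data route above is the shortest given what has already been proved.
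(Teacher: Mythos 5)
Your proposal is correct and is exactly the argument the paper intends: the corollary is stated as an immediate consequence of the two local-data characterizations (connection forms $A_\alpha$ with the transformation law \eqref{eq:ConnTransformVector} for $E$, and the identical law for the principal bundle with the same gluing maps), together with the identification $\End(k^n)\cong\mathfrak{gl}_n(k)$ and $\mathrm{Ad}_g X = gXg^{-1}$ for the matrix group $GL_n(k)$. Nothing in your write-up deviates from, or adds a gap to, the paper's route.
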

Hence, one can study connections on vector bundles by studying connections on principal bundles. This will be our approach in this course. 
\subsubsection{Curvature}
An important notion associated to a connection is the concept of \emph{curvature}. 
For this, we need the concept of Lie Bracket on Lie algebra-valued forms, which is defined on elements of the form $A=\alpha \otimes \xi, B=\beta \otimes \xi'$, where $\alpha\in\Omega^k(M),\beta\in\Omega^l(M),\xi,\xi'\in\mathfrak{g}$, by
\begin{align}
[\cdot,\cdot]\colon\Omega^k(M,\mathfrak{g}) \times \Omega^l(M,\mathfrak{g}) \to \Omega^{k+l}(M,\mathfrak{g}) \notag \\
[\alpha\otimes \xi, \beta \otimes \xi'] := \alpha \wedge \beta \otimes [\xi,\xi']
\end{align}
and extended bilinearly. In particular, for matrix groups we have $[\xi,\xi'] = \xi\xi' -\xi'\xi$ and then 
\begin{equation}
[A,B] = A \wedge B - (-1)^{|A||B|} B\wedge A = -(-1)^{|A||B|}[B,A]
\end{equation}
where the wedge product operation is defined by matrix multiplication: If $A=\alpha \otimes \xi, B=\beta \otimes \xi'$ as above, then
\begin{equation}
A \wedge B = \alpha \wedge \beta \otimes \xi\xi'
\end{equation}
The Lie bracket satisfies 
\begin{align}
d[A,B] &= [dA,B] + (-1)^{|A|}[A,dB] \label{eq:LieBracketLeibniz} \\ 
[A,[B,C]] &= [[A,B],C] + (-1)^{|A||B|}[B,[A,C]]\label{eq:GradedJacobi}
\end{align}
Now, the curvature is easily defined from the abstract viewpoint on connections:
\begin{defn}[Curvature]
Let $\Omega \in \Omega^1(P,\mathfrak{g})$ be a connection on a principal $G$-bundle $\pi\colon P \to M$. Then, the \emph{curvature} of $\Omega$ is the 2-form $F \in \Omega^2(P,\mathfrak{g})$ defined by 
\begin{equation}
\mathbf{F} = d\Omega + \frac{1}{2}[\Omega,\Omega]\label{eq:defCurv1}
\end{equation}
\end{defn}
We summarize some properties of the curvature as exercises. 
\begin{exc}\label{exc:curv}
Let $\mathfrak{U}$ be a local trivialization of $P$. Denote $F_{\alpha} := (s_\alpha)^*\mathbf{F} = dA_\alpha + \frac12 [A_\alpha,A_\alpha].$
Then 
\begin{equation}
F_\beta = g_{\alpha\beta}F_\alpha g_{\alpha\beta}^{-1}
\end{equation}
\end{exc}
It follows that the $F_\alpha$ define a section $F \in \Omega^2(M,\mathrm{Ad} P)$. 
\begin{exc}
Let $E \to M$ be a vector bundle and let $\nabla$ be a connection on $E$. Define the two-form $F^\nabla \in \Omega^2(M,\End E)$ by 
\begin{equation}
F^\nabla(X,Y) = \nabla_X \nabla_Y - \nabla_Y\nabla_X - \nabla_{[X,Y]}.
\end{equation}
Show that this is the curvature 2-form of the associated connection on $\mathrm{Fr}(E)$. \\
\emph{Hint:} Work over a trivializing chart and remember the formula for the de Rham differential of a 1-form: 
$$d\omega(X,Y) = X\omega(Y) - Y\omega(X) - \omega([X,Y]).$$
\end{exc}
\subsubsection{Exterior Derivative}
A connection on a principal bundle $P \to M$ induces an \emph{exterior derivative} on $\mathrm{ad} P$-valued differential forms. In a trivializing chart $U_\alpha$, it is defined by 
\begin{equation}
(d_\Omega\omega)_\alpha = d\omega_\alpha + [A_\alpha,\omega_\alpha]\label{eq:defExtDerivative}
\end{equation}
\begin{prop}
\begin{enumerate}[i)]
\item The exterior derivative in local trivializations by \eqref{eq:defExtDerivative} defines a map 
\begin{align*}
d_\Omega \colon \Omega^k(M,\mathrm{ad} P) &\to \Omega^{k+1}(M,\mathrm{ad} P) \\
\omega \mapsto d_\Omega \omega
\end{align*}
\item We have 
\begin{equation}
d_\Omega d_\Omega \omega = [F_\Omega,\omega]
\end{equation}
\item The curvature satisfies 
\begin{equation}
d_\Omega F_\Omega = 0,
\end{equation}
the Bianchi identity. 
\end{enumerate}
\end{prop}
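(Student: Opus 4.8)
The plan is to reduce everything to a computation in a fixed local trivialization $U_\alpha$, where \eqref{eq:defExtDerivative} is an explicit formula in the connection $1$-form $A_\alpha$, and to handle the single genuinely global statement — well-definedness — separately. Throughout I will write $A=A_\alpha$, $\omega=\omega_\alpha$, $F=F_\alpha$ inside a chart, and use only the graded Leibniz rule \eqref{eq:LieBracketLeibniz}, the graded Jacobi identity \eqref{eq:GradedJacobi}, $d^2=0$, and the auxiliary fact $d(g^{-1})=-g^{-1}(dg)g^{-1}$ for matrix groups.

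For (i) the only thing to verify is that the locally defined forms $(d_\Omega\omega)_\alpha=d\omega_\alpha+[A_\alpha,\omega_\alpha]$ glue into a global section of $\Omega^{k+1}(M,\mathrm{ad} P)$, i.e. that $(d_\Omega\omega)_\beta=g_{\alpha\beta}(d_\Omega\omega)_\alpha g_{\alpha\beta}^{-1}$ on overlaps. First I would substitute $\omega_\beta=g_{\alpha\beta}\omega_\alpha g_{\alpha\beta}^{-1}$ together with the transformation law \eqref{eq:ConnTransformVector} for $A_\beta$, then expand $d\omega_\beta$ by the Leibniz rule and $[A_\beta,\omega_\beta]$ bilinearly. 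The terms containing $dg_{\alpha\beta}$ must cancel in pairs, exactly as in Exercise \ref{exc:curv}. I expect this to be routine but slightly lengthy index/sign bookkeeping rather than a conceptual point; it is the main bookkeeping nuisance of the whole proposition.

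For (ii), inside a chart I compute
\[
d_\Omega d_\Omega \omega \;=\; d\big(d\omega+[A,\omega]\big)\;+\;\big[A,\,d\omega+[A,\omega]\big].
\]
Since $d^2=0$ and, by \eqref{eq:LieBracketLeibniz} with $|A|=1$, $d[A,\omega]=[dA,\omega]-[A,d\omega]$, the two $[A,d\omega]$ terms cancel and I am left with $[dA,\omega]+[A,[A,\omega]]$. The one real algebraic step is $[A,[A,\omega]]=\tfrac12[[A,A],\omega]$, which follows from \eqref{eq:GradedJacobi} with $|A|=1$ (it gives $[A,[A,\omega]]=[[A,A],\omega]-[A,[A,\omega]]$). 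Hence $d_\Omega d_\Omega\omega=[dA+\tfrac12[A,A],\omega]=[F_\alpha,\omega_\alpha]$, and since both sides are the chart expressions of globally defined forms — using Exercise \ref{exc:curv} for $F$ and part (i) for $d_\Omega d_\Omega\omega$ — this yields (ii). This is where the (short) content of the proposition really sits.

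For (iii) I would offer two routes. The slick one uses (ii): once one establishes, by the same chart computation that proves \eqref{eq:LieBracketLeibniz}, the Leibniz rule $d_\Omega[\eta,\omega]=[d_\Omega\eta,\omega]+(-1)^{|\eta|}[\eta,d_\Omega\omega]$, apply $d_\Omega$ to the identity of (ii). On one side, $d_\Omega\big(d_\Omega d_\Omega\omega\big)=d_\Omega d_\Omega(d_\Omega\omega)=[F_\Omega,d_\Omega\omega]$ by (ii) again; on the other, $d_\Omega[F_\Omega,\omega]=[d_\Omega F_\Omega,\omega]+[F_\Omega,d_\Omega\omega]$. Comparing gives $[d_\Omega F_\Omega,\omega]=0$ for all $\omega$, hence $d_\Omega F_\Omega=0$. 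The more transparent alternative is a direct chart computation: $d_\Omega F=dF+[A,F]$ with $F=dA+\tfrac12[A,A]$; here $dF=-[A,dA]$ (by \eqref{eq:LieBracketLeibniz}) while $[A,F]=[A,dA]+\tfrac12[A,[A,A]]$ and $[A,[A,A]]=0$ (again \eqref{eq:GradedJacobi} with $|A|=1$, which forces $3[A,[A,A]]=0$), so the two surviving terms cancel. Neither route presents a serious obstacle; the only thing to be careful about is the sign bookkeeping coming from the form degrees in \eqref{eq:LieBracketLeibniz} and \eqref{eq:GradedJacobi}.
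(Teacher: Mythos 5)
Your chart computations for (i), (ii) and the ``transparent alternative'' for (iii) are exactly the paper's proof: the paper likewise dismisses (i) as a direct check of the transformation law, proves (ii) by the same cancellation of the $[A_\alpha,d\omega_\alpha]$ terms plus $[A,[A,\omega]]=\tfrac12[[A,A],\omega]$ from \eqref{eq:GradedJacobi}, and proves (iii) by the same direct computation $dF_\alpha=-[A_\alpha,dA_\alpha]$, $[A_\alpha,[A_\alpha,A_\alpha]]=0$.

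One caveat on your ``slick'' route for (iii): from $[d_\Omega F_\Omega,\omega]=0$ for all $\omega\in\Omega^\bullet(M,\mathrm{ad}P)$ you cannot in general conclude $d_\Omega F_\Omega=0$, because the adjoint action need not be faithful --- if $\g$ has non-trivial center (the extreme case being $\g$ abelian, where every bracket vanishes identically), the bracket with a central-valued form gives no information. So that argument needs either the hypothesis $Z(\g)=0$ or a separate treatment of the central part; since you also give the direct computation, the proposal as a whole is complete, but the first route should not be presented as a proof without that qualification.
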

\begin{proof}
\begin{enumerate}[i)]
\item One simply checks by direct computation that $$(d_\Omega\omega)_\beta = g_{\alpha\beta}(d_\Omega\omega)_\alpha g_{\alpha\beta}^{-1}.$$
\item By the first point, it is enough to check this in a trivializing chart. Here, again the proof is a simple computation: 
\begin{align*}
(d_\Omega d_\Omega \omega)_\alpha &= d_\Omega (d\omega_\alpha + [A_\alpha,\omega_\alpha]) \\
&= d(d\omega_\alpha) + [A_\alpha,d\omega_\alpha] + d[A_\alpha,\omega_\alpha] + [A_\alpha,[A_\alpha,\omega_\alpha]]\\
&= [A_\alpha,d\omega_\alpha] + [dA_\alpha,\omega_\alpha] - [A_\alpha,d\omega_\alpha] + \frac12[[A_\alpha,A_\alpha],\omega_\alpha]\\
&= [F_\alpha,\omega_\alpha]
\end{align*}
where we have used \eqref{eq:LieBracketLeibniz} and \eqref{eq:GradedJacobi}.

\item Again one can check this in a trivializing chart. Here we simply compute 
\begin{align*}
(d_\Omega F)_\alpha &= dF_\alpha + [A_\alpha,F_\alpha]  \\
&= d(dA_\alpha) + \frac12 d[A_\alpha,A_\alpha] + [A_\alpha,dA_\alpha]+\frac12[A_\alpha,[A_\alpha,A_\alpha]]
\end{align*}
The last term vanishes due to \eqref{eq:GradedJacobi} and the other terms cancel due to \eqref{eq:LieBracketLeibniz}.
\end{enumerate}
\end{proof}
%\subsubsection{Parallel transport}
%We introduce another important concept associated to a connection, namely the one of \emph{parallel transport}. 
%\begin{defn}
%Consider a curve $\gamma\colon [0,1]  \to M$.  we say that a section of $E$ is \emph{parallel} over $\gamma$ if $\nabla_{\dot{gamma}} s \equiv 0$.
%\end{defn}
In particular, if $\Omega$ is flat, we have that $d_\Omega \colon \Omega^\bullet(M,\mathrm{ad}P) \to \Omega^\bullet(M,\mathrm{ad}P)$ squares to zero and we can define the cohomology  
\begin{equation}
H^\bullet(M,\mathrm{ad}P) := \ker(d_{\Omega})/\mathrm{im}(d_\Omega).
\end{equation} 
\section{Chern-Simons action functional}
After establishing the necessary preliminaries, let us turn to the definition of the Chern-Simons action functional. The original reference is \cite{Chern1974}. These notes closely follow the review \cite{Freed1995}.\\
\subsection{The Chern-Simons 3-form}
 We first fix a compact, connected and simply connected matrix group $G \subset GL(n)$ (the prime example being $SU(n), n\geq 2)$) with Lie algebra $\g$. Next, fix an $\mathrm{ad}$-invariant non-degenerate symmetric bilinear form $\langle\cdot,\cdot\rangle$ on $\g$ (the prime example being the Killing form on $SU(n)$, which is a multiple of the trace). Here $\mathrm{ad}$-invariant means that 
\begin{equation}
\langle \mathrm{ad}_xy,z\rangle = - \langle y,\mathrm{ad}_xz\rangle 
\end{equation}
(equivalently, $\langle\cdot,\cdot\rangle$ is invariant under the adjoint action of $G$ on $\g$). It follows that 
$$\langle[\cdot,\cdot],\cdot\rangle \colon \wedge^3 \g \to \R$$
is a Lie-algebra 3-cocycle (i.e. it is completely antisymmetric in all 3 arguments and closed under the Chevalley-Eilenberg differential).  
\begin{defn}
Let $G \hookrightarrow P \twoheadrightarrow M$ be a principal $G$-bundle and $\Omega$ a connection on $P$. Then we we define the \emph{ first  Pontryagin form} of $P$ to be 
\begin{equation}
p_1(\Theta) = \frac12 \langle F_\Theta, F_\Theta\rangle \in \Omega^4(P).
\end{equation}
\end{defn}
\begin{prop}
For any trivializing cover $\{U_\alpha\}$, the pullbacks $s^*_\alpha p_1(\Theta)$ piece together into a global 4-form $p_1^M(\Theta) \in \Omega^4(M)$ which is closed. 
\end{prop}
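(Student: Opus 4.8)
The plan is to handle the two assertions — that the pullbacks $s_\alpha^* p_1(\Theta)$ agree on overlaps, and that the glued form is closed — separately, both being purely local computations on $M$; no global argument is needed.

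First I would set up notation. Let $s_\alpha\colon U_\alpha\to P$ be the section attached to the trivialisation over $U_\alpha$. By Exercise~\ref{exc:curv}, $s_\alpha^* F_\Theta = F_\alpha := dA_\alpha + \frac12[A_\alpha,A_\alpha]\in\Omega^2(U_\alpha,\g)$, so $s_\alpha^* p_1(\Theta) = \frac12\langle F_\alpha,F_\alpha\rangle =: P_\alpha\in\Omega^4(U_\alpha)$, where I extend $\langle\cdot,\cdot\rangle$ to $\g$-valued forms in the usual way, $\langle\omega\otimes\xi,\eta\otimes\xi'\rangle = \omega\wedge\eta\,\langle\xi,\xi'\rangle$. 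For the gluing, Exercise~\ref{exc:curv} also gives $F_\beta = g_{\alpha\beta}F_\alpha g_{\alpha\beta}^{-1}$ on $U_{\alpha\beta}$, i.e. the $\g$-valued form $F_\alpha$ is conjugated fibrewise by $g_{\alpha\beta}$. Since $\langle\cdot,\cdot\rangle$ is $\mathrm{ad}$-invariant it is invariant under the adjoint action of $G$, and this invariance is inherited fibrewise by the extended pairing; hence $P_\beta = \frac12\langle F_\beta,F_\beta\rangle = \frac12\langle F_\alpha,F_\alpha\rangle = P_\alpha$ on $U_{\alpha\beta}$. Differential forms form a sheaf, so the $P_\alpha$ patch to a unique $p_1^M(\Theta)\in\Omega^4(M)$ with $p_1^M(\Theta)|_{U_\alpha} = P_\alpha$.

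For closedness, since $d$ is a local operator it suffices to show $dP_\alpha = 0$ on each $U_\alpha$. Applying the Leibniz rule for $d$ against $\langle\cdot,\cdot\rangle$ (of the same shape as \eqref{eq:LieBracketLeibniz}) and the graded symmetry $\langle a,b\rangle = (-1)^{|a||b|}\langle b,a\rangle$ of the extended pairing collapses $dP_\alpha$ to $\langle dF_\alpha,F_\alpha\rangle$. Now I would invoke the Bianchi identity $d_\Theta F_\Theta = 0$, which in the chart reads $dF_\alpha = -[A_\alpha,F_\alpha]$, so $dP_\alpha = -\langle[A_\alpha,F_\alpha],F_\alpha\rangle$. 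Finally, $\mathrm{ad}$-invariance once more: $\langle[\cdot,\cdot],\cdot\rangle$ is totally antisymmetric on $\g$, and since $F_\alpha$ has even form-degree, interchanging the two $F_\alpha$ slots of $\langle[A_\alpha,F_\alpha],F_\alpha\rangle$ introduces only the $(-1)$ coming from that antisymmetry; therefore $\langle[A_\alpha,F_\alpha],F_\alpha\rangle = -\langle[A_\alpha,F_\alpha],F_\alpha\rangle = 0$, and $dP_\alpha = 0$.

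The substance of the argument is entirely sign bookkeeping, so the one real obstacle is to be careful with the Koszul signs in extending $\langle\cdot,\cdot\rangle$ and $[\cdot,\cdot]$ to $\g$-valued forms — in particular, to check $\langle dF_\alpha,F_\alpha\rangle = \langle F_\alpha,dF_\alpha\rangle$ and that the vanishing of the cubic term genuinely uses the \emph{even} degree of $F_\alpha$ (so that no extra form-sign appears when the two copies are swapped). One could equally well run the whole closedness computation on $P$ using $F_\Theta$, $d$, and $dF_\Theta = -[\Theta,F_\Theta]$, and then restrict along the $s_\alpha$; the signs are the same either way.
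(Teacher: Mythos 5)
Your proposal is correct and follows essentially the same route as the paper's (much terser) proof: conjugation of the pulled-back curvature under change of trivialization plus $\mathrm{ad}$-invariance gives agreement on overlaps, and the Bianchi identity together with total antisymmetry of $\langle[\cdot,\cdot],\cdot\rangle$ (and the even degree of $F_\alpha$) gives closedness. Your version simply makes the sign bookkeeping explicit, which the paper leaves to the reader.
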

\begin{proof}
If one changes the trivializing chart the pullback of the curvature gets conjugated, hence by $\mathrm{ad}$-invariance we have $p_1^M(\Theta)_\alpha = p_1^M(\Theta)_\beta$. Hence the pullbacks piece together into a global 4-form. The closedness follows from the Bianchi identity and $\mathrm{ad}$-invariance.
\end{proof}
\begin{defn}
The cohomology class $[p_1^M(\Theta)] \in H^4(M)$ is called \emph{first Pontryagin class of $P$}. 
\end{defn}

On $P$, the form $p_1(\Theta)$ is not only closed, but also exact:
\begin{defn}
The \emph{Chern-Simons 3-form} $cs(\Theta) \in \Omega^3(P)$ is defined by  
\begin{align}
cs(\Theta) &:= \frac12\langle \Theta,d\Theta\rangle + \frac16\langle\Theta,[\Theta,\Theta]\rangle \\
&= \frac12\langle \Theta, F_\Theta\rangle - \frac{1}{12}\langle\Theta,[\Theta,\Theta]\rangle.
\end{align}
\end{defn}
This Chern-Simons form is a primitive of the Pontryagin form:
\begin{prop}
We have 
\begin{equation}
d cs(\Theta) = p_1(\Theta).
\end{equation}
\end{prop}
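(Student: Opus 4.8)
The plan is to prove $d\,cs(\Theta) = p_1(\Theta)$ by a direct computation on the total space $P$: expand $d\,cs(\Theta)$ using the first formula $cs(\Theta) = \tfrac12\langle\Theta, d\Theta\rangle + \tfrac16\langle\Theta,[\Theta,\Theta]\rangle$, expand $p_1(\Theta) = \tfrac12\langle F_\Theta, F_\Theta\rangle$ via $F_\Theta = d\Theta + \tfrac12[\Theta,\Theta]$, and match the two. No geometry enters; only the algebraic properties of $d$, $[\cdot,\cdot]$ and $\langle\cdot,\cdot\rangle$ on $\mathfrak g$-valued forms are used.

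First I would set up the bookkeeping. Since $d$ is a derivation and $\langle\cdot,\cdot\rangle$ is pointwise bilinear, it obeys a graded Leibniz rule $d\langle\alpha,\beta\rangle = \langle d\alpha,\beta\rangle + (-1)^{|\alpha|}\langle\alpha, d\beta\rangle$; together with $d^2 = 0$ this gives $d\langle\Theta,d\Theta\rangle = \langle d\Theta, d\Theta\rangle$. For the cubic term I would use \eqref{eq:LieBracketLeibniz} to compute $d[\Theta,\Theta] = 2[d\Theta,\Theta]$, hence $d\langle\Theta,[\Theta,\Theta]\rangle = \langle d\Theta,[\Theta,\Theta]\rangle - 2\langle\Theta,[d\Theta,\Theta]\rangle$. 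Thus $d\,cs(\Theta) = \tfrac12\langle d\Theta,d\Theta\rangle + \tfrac16\langle d\Theta,[\Theta,\Theta]\rangle - \tfrac13\langle\Theta,[d\Theta,\Theta]\rangle$.

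Second I would use $\mathrm{ad}$-invariance to collapse the two cubic-in-$\Theta$ terms into one. As already noted in the text, $\langle[\cdot,\cdot],\cdot\rangle$ is totally antisymmetric on $\mathfrak g$, so on $\mathfrak g$-valued forms it is cyclic up to Koszul signs; applied to $\langle\Theta,[d\Theta,\Theta]\rangle$ this rewrites it as $-\langle d\Theta,[\Theta,\Theta]\rangle$, and substituting back yields $d\,cs(\Theta) = \tfrac12\langle d\Theta,d\Theta\rangle + \tfrac12\langle d\Theta,[\Theta,\Theta]\rangle$. On the other side, expanding $\tfrac12\langle F_\Theta, F_\Theta\rangle$ gives $\tfrac12\langle d\Theta,d\Theta\rangle + \tfrac12\langle d\Theta,[\Theta,\Theta]\rangle + \tfrac18\langle[\Theta,\Theta],[\Theta,\Theta]\rangle$, where the two cross terms are identified because $\langle\cdot,\cdot\rangle$ is symmetric on degree-$2$ forms. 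The quartic term vanishes: moving one bracket over by $\mathrm{ad}$-invariance makes it proportional to $\langle\Theta,[\Theta,[\Theta,\Theta]]\rangle$, and $[\Theta,[\Theta,\Theta]] = 0$ by the graded Jacobi identity \eqref{eq:GradedJacobi} applied to the $1$-form $\Theta$. The two expressions then agree, which proves the claim.

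I expect the only real obstacle to be sign bookkeeping — correctly tracking the Koszul signs in the graded Leibniz rule and in the graded cyclicity of $\langle[\cdot,\cdot],\cdot\rangle$, and checking that the coefficients $\tfrac12$ and $\tfrac16$ in $cs(\Theta)$ reproduce exactly the coefficients of $\tfrac12\langle F_\Theta, F_\Theta\rangle$. There is no conceptual difficulty beyond this. (An essentially equivalent alternative is to start from the second expression $cs(\Theta) = \tfrac12\langle\Theta, F_\Theta\rangle - \tfrac1{12}\langle\Theta,[\Theta,\Theta]\rangle$ and feed in the Bianchi identity $d_\Theta F_\Theta = 0$, but the brute-force expansion above is the most transparent route.)
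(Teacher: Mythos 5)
Your computation is correct: the graded Leibniz rule gives $d\langle\Theta,d\Theta\rangle=\langle d\Theta,d\Theta\rangle$ and $d[\Theta,\Theta]=2[d\Theta,\Theta]$, the graded cyclicity of $\langle\cdot,[\cdot,\cdot]\rangle$ indeed yields $\langle\Theta,[d\Theta,\Theta]\rangle=-\langle d\Theta,[\Theta,\Theta]\rangle$ so the cubic terms combine with coefficient $\tfrac16+\tfrac13=\tfrac12$, and the quartic term $\langle[\Theta,\Theta],[\Theta,\Theta]\rangle=\langle\Theta,[\Theta,[\Theta,\Theta]]\rangle$ vanishes by the graded Jacobi identity, so both sides equal $\tfrac12\langle d\Theta,d\Theta\rangle+\tfrac12\langle d\Theta,[\Theta,\Theta]\rangle$. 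The paper does not actually write out a proof: it leaves the statement as an exercise with the hint ``use the Bianchi identity,'' which points to the route you mention only in passing, namely starting from $cs(\Theta)=\tfrac12\langle\Theta,F_\Theta\rangle-\tfrac1{12}\langle\Theta,[\Theta,\Theta]\rangle$ and differentiating with $d_\Theta F_\Theta=0$; that version organizes the cancellation more structurally through the covariant derivative, while your brute-force expansion uses only $d^2=0$, the Leibniz rule \eqref{eq:LieBracketLeibniz}, $\mathrm{ad}$-invariance, and \eqref{eq:GradedJacobi}, and has the advantage of being completely self-contained. Either way the content is the same elementary sign bookkeeping, and yours is carried out correctly.
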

\begin{proof}
Exercise (use the Bianchi identity!)
\end{proof}
It is interesting to observe how the Chern-Simons form transforms under gauge transformations. If $\varphi \colon P \to P$ is a gauge transformation, i.e. an automorphism of $P$, there is an associated map $g_\varphi\colon P \to G$ defined by the requirement 
\begin{equation}
\varphi(p) = p \cdot g_\varphi(p). 
\end{equation}
\begin{prop}\label{prop:cstrafoP}
We have 
\begin{equation}
\varphi^*cs(\Theta) = cs(\Theta) + \frac12 d(g_\varphi\Theta g_\varphi^{-1} \wedge g_\varphi^{-1}dg_\varphi) - \frac12 g^*_\varphi(\langle\phi,[\phi,\phi]\rangle), \label{eq:cstrafoP}
\end{equation}
where $\phi \in \Omega^1(G,\g)$ is the Maurer-Cartan form defined in \ref{def:MCform}.
\end{prop}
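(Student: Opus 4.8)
The plan is to reduce everything to a purely algebraic identity on $P$. Since $\varphi^{*}$ commutes with $d$, with wedge products, and with the fibrewise pairing $\langle\cdot,\cdot\rangle$, we have $\varphi^{*}cs(\Theta)=cs(\varphi^{*}\Theta)$, so the first task is to compute $\varphi^{*}\Theta$. Writing $\varphi$ as the composition $P\xrightarrow{(\mathrm{id},g_{\varphi})}P\times G\xrightarrow{\mu}P$ and pulling back the formula for $\mu^{*}\Theta$ obtained in the proof of Lemma~\ref{lem:connTrivChange}, one gets
\[
\varphi^{*}\Theta \;=\; \mathrm{Ad}_{g_{\varphi}^{-1}}\Theta + g_{\varphi}^{*}\phi \;=\; g_{\varphi}^{-1}\Theta g_{\varphi} + g_{\varphi}^{-1}dg_{\varphi},
\]
the last equality because $G$ is a matrix group. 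Abbreviate $g:=g_{\varphi}$, $\theta:=g^{-1}dg=g^{*}\phi$, $\Theta^{g}:=\varphi^{*}\Theta$. Since $F_{\Theta}=d\Theta+\tfrac12[\Theta,\Theta]$ involves only $d$ and $[\cdot,\cdot]$, the same kind of computation (using the Maurer--Cartan equation $d\phi+\tfrac12[\phi,\phi]=0$ and $dg^{-1}=-g^{-1}dg\,g^{-1}$) gives $F_{\Theta^{g}}=\mathrm{Ad}_{g^{-1}}F_{\Theta}=g^{-1}F_{\Theta}g$.

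Next I would substitute into the curvature form of the Chern--Simons $3$-form, $cs(\Theta)=\tfrac12\langle\Theta,F_{\Theta}\rangle-\tfrac1{12}\langle\Theta,[\Theta,\Theta]\rangle$. Inserting $\Theta^{g}=g^{-1}\Theta g+\theta$ and $F_{\Theta^{g}}=g^{-1}F_{\Theta}g$ and expanding, each term is handled by two facts: ad-invariance of $\langle\cdot,\cdot\rangle$, used to strip off conjugations by $g$ (so e.g.\ $\langle g^{-1}\Theta g,[g^{-1}\Theta g,\theta]\rangle=\langle\Theta,[\Theta,\tilde\theta]\rangle$ with $\tilde\theta:=g\theta g^{-1}=dg\,g^{-1}$), and the complete antisymmetry of $\langle[\cdot,\cdot],\cdot\rangle$ noted above, which collapses the many cross terms of the cubic expansion into a few representatives. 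After collecting, one is left with $cs(\Theta)$, a ``mixed'' piece built from $\langle d\Theta,\tilde\theta\rangle$ and $\langle\Theta,[\tilde\theta,\tilde\theta]\rangle$, and the pure term $\langle\tilde\theta,[\tilde\theta,\tilde\theta]\rangle=\langle\theta,[\theta,\theta]\rangle=g^{*}\langle\phi,[\phi,\phi]\rangle$.

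The last step is to recognize the mixed piece as an exact form. Here I would compute $d\langle\Theta,\tilde\theta\rangle$ using the Leibniz rule \eqref{eq:LieBracketLeibniz} together with $d\tilde\theta=\tfrac12[\tilde\theta,\tilde\theta]$ (equivalent to the Maurer--Cartan equation $d\theta=-\tfrac12[\theta,\theta]$ via $\tilde\theta=g\theta g^{-1}$); this produces exactly the combination that appears in the mixed piece, so that piece equals $\tfrac12\,d\langle\Theta,\tilde\theta\rangle$ up to the surviving Maurer--Cartan cubic term. Rewriting $\langle\Theta,\tilde\theta\rangle$ by ad-invariance so that it matches the form displayed in \eqref{eq:cstrafoP} then yields the stated identity.

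The main obstacle is the bookkeeping in the cubic expansion: getting the numerical coefficients right requires consistent use of the sign conventions for the bracket $[\cdot,\cdot]$ of Lie-algebra-valued forms, for wedge products of forms of mixed degree, and for the graded-cyclic symmetry of $\langle[\cdot,\cdot],\cdot\rangle$. A convenient sanity check along the way is the special case $\Theta=0$, where the identity must reduce to the value of $cs$ on the flat connection $g^{-1}dg$. A secondary, minor point is that the exact term admits several superficially different presentations related by $dg^{-1}=-g^{-1}dg\,g^{-1}$ and ad-invariance, so some care is needed to land on the precise expression in \eqref{eq:cstrafoP}.
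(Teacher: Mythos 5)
Your plan is the standard direct computation, and it is essentially what the paper itself intends: its ``proof'' merely declares the computation straightforward and leaves it as an exercise, and your outline contains exactly the right ingredients --- $\varphi^*\Theta = \mathrm{Ad}_{g_\varphi^{-1}}\Theta + g_\varphi^*\phi$ obtained from the $\mu^*\Theta$ formula in Lemma \ref{lem:connTrivChange}, conjugation-equivariance of the curvature, ad-invariance and the cyclic antisymmetry of $\langle[\cdot,\cdot],\cdot\rangle$ to strip conjugations, and the Maurer--Cartan equation to exhibit the mixed terms as an exact form. One caution: you defer the coefficient bookkeeping and then assert it ``yields the stated identity,'' but if you actually run your own sanity check at $\Theta=0$ you get $cs(g^{-1}dg) = -\tfrac{1}{12}\langle g_\varphi^*\phi,[g_\varphi^*\phi,g_\varphi^*\phi]\rangle$, so the honest outcome of the expansion is a cubic term with coefficient $-\tfrac{1}{12}$ (which is what Proposition \ref{prop:Scstrafo} uses), not the $-\tfrac12$ printed in \eqref{eq:cstrafoP}; this is a discrepancy in the displayed statement rather than in your method, but a complete write-up should carry the expansion through and flag it rather than claim agreement with the printed coefficient.
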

\begin{proof}
The proof is a straightforward computation and left to the reader as an exercise.
\end{proof}
This transformation behaviour is very different from the one of the Pontryagin 4-form.  In particular, usually there is no globally defined Chern-Simons 3-form on $M$.  However, we can ask ourselves if the cohomology class of the Pontryagin 4-form 
$[p_1^M(\Theta)] \in H^4(M)$ (called the Pontryagin class) is trivial.  A particular case when this happens is when the bundle admits a global section (i.e. it is trivial). In that case we conclude that the class of the Pontryagin 4-form is trivial in cohomology. It follows that if the cohomology class of the Pontryagin 4-form is non-trivial, then the bundle cannot be trivial - this is an example of the use of \emph{characteristic classes}. 
\subsection{The action functional}
The following is a relatively simple but crucial fact in low-dimensional Gauge Theory. 
\begin{lem}\label{lem:PBtriv}
Let $G$ be a connected and simply connected Lie group. Then any principal $G$-bundle over a manifold $M$ of dimension less than 3 is trivializable.
\end{lem}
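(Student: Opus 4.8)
The plan is to prove this by obstruction theory, building a section of $P$ cell by cell over a CW structure on $M$. Since $M$ has dimension $\leq 2$, I would first reduce to the case $\dim M = 2$ (the cases $\dim M = 0, 1$ being easy: over a point there is nothing to do, and over a $1$-manifold one can always extend a section along an interval since $G$ is connected, and around a circle the monodromy lies in the identity component, hence is connected to the identity). Fix a CW structure on $M$. A section of $P$ exists over the $0$-skeleton trivially (the fibers are nonempty).

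Next I would extend the section over the $1$-skeleton: attaching a $1$-cell means extending a map defined on $S^0$ (two points) to the $1$-cell $D^1$; the obstruction to this lies in $\pi_0(G)$, which vanishes since $G$ is connected. Then comes the crucial step: extending over the $2$-skeleton. Attaching a $2$-cell requires extending a map $S^1 \to G$ (the "difference" of the section around the boundary of the cell, measured in a local trivialization) to $D^2 \to G$; the obstruction to this extension is the class of that map in $\pi_1(G)$, which vanishes since $G$ is simply connected. Hence the section extends over all of $M$, so $P$ is trivializable.

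The main obstacle — or rather the point requiring the most care — is making the obstruction-theoretic language precise with honest principal-bundle data rather than invoking the classifying space $BG$ wholesale, and in particular pinning down that the relevant obstruction to extending a section over a $k$-cell genuinely lives in $\pi_{k-1}(G)$. This is standard, but to keep the argument self-contained I would instead phrase it concretely: over each cell $P$ is trivial (cells are contractible), so a section is a map to $G$ in each chart, and extending across the attaching maps of $1$- and $2$-cells is exactly an extension problem for $G$-valued maps from $S^0$, resp.\ $S^1$, governed by $\pi_0(G) = \pi_1(G) = 0$. One should also note that a manifold of dimension $\leq 2$ admits a CW structure with cells only in dimensions $0,1,2$ (e.g.\ via a triangulation), so no higher obstructions appear. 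An alternative, if one prefers to avoid obstruction theory, is to use that $BG$ is $2$-connected when $G$ is connected and simply connected (since $\pi_k(BG) = \pi_{k-1}(G)$), so every map $M \to BG$ from a $2$-complex is null-homotopic, hence $P$, being the pullback of the universal bundle along its classifying map, is trivial; I would mention this as a remark but carry out the hands-on version as the main proof.
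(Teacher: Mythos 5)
Your proof is correct for the statement as written, but it takes a genuinely different route from the paper. The paper argues via classifying spaces: a principal $G$-bundle corresponds to a homotopy class of maps $M \to BG$, and since $\pi_0(G)=\pi_1(G)=0$ (and $\pi_2(G)=0$, a general fact for connected Lie groups) one gets $\pi_1(BG)=\pi_2(BG)=\pi_3(BG)=0$, so every classifying map from a low-dimensional manifold is null-homotopic and the bundle is trivial. You instead build a global section cell by cell over a CW structure, with the obstruction to extending over a $k$-cell living in $\pi_{k-1}(G)$; this is more elementary and self-contained (it avoids importing $BG$ and the bijection between bundles and homotopy classes of classifying maps, which the paper simply cites), at the cost of having to justify the standard obstruction-theoretic bookkeeping and the existence of a CW/triangulated structure, which you do address. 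One point worth flagging: the paper's proof deliberately invokes $\pi_2(G)=0$ and hence $\pi_3(BG)=0$, because the lemma is later applied to \emph{3}-manifolds (the Chern-Simons action is defined via a global section of $P$ over a closed oriented 3-manifold), so the result actually needed and effectively proved there is triviality over manifolds of dimension $\leq 3$. Your skeleton-by-skeleton argument stops at 2-cells, which matches the literal statement ``dimension less than 3,'' but to cover the case used later you would need one more step: extending the section over 3-cells, whose obstruction lies in $\pi_2(G)$, and then quoting the (nontrivial) fact that $\pi_2$ of any connected Lie group vanishes. With that addendum your hands-on approach recovers the full strength of the paper's argument.
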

\begin{proof}
We give a proof for the interested reader  using a little algebraic topology and obstruction theory. Every principal $G$-bundle on a manifold $M$ corresponds to a map $M \to BG$, where $BG$ denotes the classifying space of $G$. Homotopic maps correspond to isomorphic bundles. But the fact that $\pi_0(G) = \pi_1(G) = \pi_2(G) = 0$ (vanishing of first two homotopy groups follows from assumptions on $G$, vanishing of $\pi_2(G)$ is a general fact for connected Lie groups\footnote{See e.g. \cite[Chapter V]{Brocker2013}.}) implies that $\pi_1(BG) = \pi_2(BG) = \pi_3(BG) = 0$. Since $BG$ is connected, always $\pi_0(BG) = 0$ and hence\footnote{All the obstructions for two maps not to be homotopic vanish. For a deeper discussion of obstruction theory see \cite{Husemoeller1993}.} for any manifold of dimension less than 3 we have 
$$[M \to BG] \cong *,$$
i.e. all maps are homotopic to the constant map, which corresponds to the trivial bundle. Hence all principal bundles are isomorphic to the trivial one. 
\end{proof}
This fact is very much not true if the Lie group is not simply connected (e.g. $G = U(1)$). In that case, Chern-Simons theory becomes a lot more complicated. See \cite{Freed2002} for a discussion. 
\begin{defn}
Let $M$ be a compact oriented 3-manifold and $G \hookrightarrow P \twoheadrightarrow M$. Let $s$ be a global section of $P$ (guaranteed to exist by the previous Lemma). Then we define the \emph{Chern-Simons action functional} by  
\begin{equation}
S_{CS}[s,\Theta] = \int_M s^*cs(\Theta) = \int_M\frac12\langle A,dA\rangle + \frac16 \langle A,[A,A]\rangle
\end{equation}
where we defined $A := s^*\Theta$. 
\end{defn}
As a consequence of Proposition \ref{prop:cstrafoP}, this action functional is \emph{almost} invariant under gauge transformations (automorphisms of $P$). 
\begin{prop}\label{prop:Scstrafo}
Let $M$ be a closed manifold and $\varphi\colon P \to P$ be a gauge transformation. Let $s$ be a global section of $P$ and define $g = g_\varphi \circ s$. Then we have 
\begin{equation}
S_{CS}[\varphi \circ s,\Theta] = S_{CS}[s,\varphi^*\Theta] = S_{CS}[s,\Theta] - \frac{1}{12}\int_M g^*\langle \phi, [\phi,\phi]\rangle \label{eq:Scstrafo}
\end{equation}
\end{prop}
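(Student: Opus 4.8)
The plan is to obtain the statement almost immediately from Proposition \ref{prop:cstrafoP}. The first equality $S_{CS}[\varphi\circ s,\Theta]=S_{CS}[s,\varphi^*\Theta]$ is pure naturality: $\varphi^*\Theta$ is again a connection on $P$ (it transforms correctly under $R_g$ and restricts to the identity on fundamental vector fields, because $\varphi$ is an automorphism of $P$ and hence commutes with the right action and preserves $X^\sharp$), and the Chern-Simons form $cs(\Theta)=\tfrac12\langle\Theta,d\Theta\rangle+\tfrac16\langle\Theta,[\Theta,\Theta]\rangle$ is assembled algebraically out of $\Theta$, $d\Theta$, the graded bracket and the pairing $\langle\cdot,\cdot\rangle$, all of which commute with pullback along a bundle map. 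Thus $cs(\varphi^*\Theta)=\varphi^*cs(\Theta)$, so $s^*cs(\varphi^*\Theta)=s^*\varphi^*cs(\Theta)=(\varphi\circ s)^*cs(\Theta)$, and integrating over $M$ gives the first equality by definition of $S_{CS}$.

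For the second equality I would pull back the identity of Proposition \ref{prop:cstrafoP} along the global section $s$ and integrate over $M$. Since pullback commutes with $d$, the middle term $\tfrac12\,d(\,g_\varphi\Theta g_\varphi^{-1}\wedge g_\varphi^{-1}dg_\varphi\,)$ becomes an exact $3$-form $d(s^*(\cdots))$ on $M$, whose integral vanishes by Stokes' theorem because $M$ is closed. In the last term, $s^*g_\varphi^*=(g_\varphi\circ s)^*=g^*$ by the definition $g=g_\varphi\circ s$, so $\int_M s^*g_\varphi^*\langle\phi,[\phi,\phi]\rangle=\int_M g^*\langle\phi,[\phi,\phi]\rangle$. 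What survives is exactly $S_{CS}[s,\Theta]$ plus a numerical multiple of $\int_M g^*\langle\phi,[\phi,\phi]\rangle$.

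The only point that needs care is that numerical coefficient. It is cleanest to track it using the second form of the Chern-Simons $3$-form, $cs(\Theta)=\tfrac12\langle\Theta,F_\Theta\rangle-\tfrac1{12}\langle\Theta,[\Theta,\Theta]\rangle$: since the Maurer-Cartan form satisfies the structure equation $d\phi+\tfrac12[\phi,\phi]=0$, i.e.\ $F_\phi=0$, the ``winding'' piece of the transformation is naturally $cs(\phi)=-\tfrac1{12}\langle\phi,[\phi,\phi]\rangle$, which is how the coefficient $-\tfrac1{12}$ of \eqref{eq:Scstrafo} appears after pulling back by $g$ and integrating. I do not expect a genuine obstacle beyond this bookkeeping; once Proposition \ref{prop:cstrafoP} and Stokes' theorem are available the argument is a two-line computation. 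A more self-contained alternative would be to insert the locally gauge-transformed connection $g^{-1}Ag+g^{-1}dg$ directly into $\int_M\tfrac12\langle A,dA\rangle+\tfrac16\langle A,[A,A]\rangle$ and expand, using $\mathrm{ad}$-invariance of $\langle\cdot,\cdot\rangle$ and $\int_M d(\cdots)=0$, but this essentially re-derives Proposition \ref{prop:cstrafoP} and is more laborious.
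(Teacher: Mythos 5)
Your proof follows the paper's route exactly: the paper's own proof is simply the remark that the statement follows from Proposition \ref{prop:cstrafoP}, i.e.\ pull back \eqref{eq:cstrafoP} along $s$, integrate over the closed $M$, kill the exact term by Stokes, and use $s^*g_\varphi^* = g^*$ — precisely what you spell out (together with the naturality $cs(\varphi^*\Theta)=\varphi^*cs(\Theta)$ for the first equality). Your care with the numerical coefficient is well placed: as printed, the last term of \eqref{eq:cstrafoP} carries $-\tfrac12$ rather than $-\tfrac1{12}$, and your Maurer–Cartan bookkeeping via $cs(\phi)=-\tfrac1{12}\langle\phi,[\phi,\phi]\rangle$ is the right way to recover the coefficient appearing in \eqref{eq:Scstrafo}.
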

\begin{proof}
This follows directly from Equation \eqref{eq:cstrafoP}.
\end{proof}
Note that if $s$ and $s'$ are any two section of $P$, then there exists a global gauge transformation $\varphi$ with $\varphi \circ s = s'$: Concretely, $\varphi = \Phi_{s'} \circ \Phi_s^{-1}$ where $\Phi_s\colon M \times G \to P$ is the trivialization given by $\Phi_s(x,g) = s(x)g$. Hence \eqref{eq:Scstrafo} tells us two things: How to relate the action functionals in different trivializations, and how the action functional transforms under gauge transformations. \\
Proposition \ref{prop:Scstrafo} motivates the following assumption: 
\begin{ass}\label{ass:integralclass}
The bilinear form $\langle\cdot,\cdot\rangle$ is such that 
\begin{equation}
\frac{1}{12}\langle\phi,[\phi,\phi]\rangle \in H^3(G,\Z) \subset H^3(G,\R) (\cong H^4(BG,\Z))
\end{equation}

\end{ass}
This assumption implies in particular that 
\begin{equation}
\int_M g^*\langle \phi,[\phi,\phi]\rangle \in 12\Z
\end{equation}
and hence that for every integer $k$ the exponentiated Chern-Simons action 
$$e^{2\pi \ii k S_{CS}[s,\Theta]} \equiv e^{2\pi \ii k S_{CS}[
\Theta]} $$
is independent of the choice of trivialization and invariant under gauge transformations. 
\section{Critical Points}\label{sec:crit_pts}
Now that we know the space of fields of the theory (the space of connections on the (unique up to isomorphism) principal $G$-bundle on $M$) and the action functional (the integral of the Chern-Simons 3-form) the next step to understand the theory is to understand the critical points of the action functional. We do not enter into the technical details of derivatives in infinite dimensions here, but rather just define a critical point to be a connection $A$ such that for all $B \in \Omega^1(M,\mathrm{ad} P)$\footnote{The space of connections is an affine space with \emph{espace vectoriel directeur}  $\Omega^1(M,\mathrm{ad} P)$, hence a tangent vector to a connection $A$ is an element $B \in \Omega^1(M,\mathrm{ad} P)$, and a curve with this tangent vector at $0$ is simply $A+tB$.}
\begin{equation}
\frac{d}{dt}\bigg|_{t=0}S_{CS}[A + tB]  = 0.
\end{equation}
A quick computation shows that 
\begin{equation}
S_{CS}[A + tB] = S_{CS}[A] + t\int_M\langle B,dA + \frac12[A,A]\rangle + O(t^2),
\end{equation}
whence we conclude that
\begin{equation}
\frac{d}{dt}\bigg|_{t=0}S_{CS}[A + tB] = \int_M\langle B,dA + \frac12[A,A]\rangle  = \int_M \langle B, F_A\rangle.
\end{equation}
This is often rewritten as 
\begin{equation}
\delta S[A] = \int_M\langle \delta A, F_A\rangle.
\end{equation}
It follows that the critical points of the action functional are precisely the flat connections. Since the curvature $F_A$ is a two-form with values in the adjoint bundle (Exercise \ref{exc:curv}), flat connections are sent to flat connections under gauge transformations and it makes sense to ask about the quotient of the space of flat connections under gauge transformations\footnote{To be precise in what follows, here we relax the notion of gauge transformation to the one of bundle \emph{iso}morphism (instead of automorphism).}. 
\begin{defn}
For a manifold $M$ and a Lie group $G$, we define\footnote{We refer to \cite{Freed1993} for a deeper discussion on the technicalities of this quotient.}the \emph{moduli space of flat $G$-connections} by 
\begin{equation}
MFC(M,G) = \frac{\{(P,\Theta) |P \text{ principal $G$-bundle, $\Theta$ flat connection on $P$ } \}}{ \{\text{isom. of principal $G$-bundles}\}}
\end{equation}
\end{defn}
This is space of considerable interest in topology and geometry, but in this guise completely inaccessible. We will give another characterization in the next section. 
\section{The representation variety}
The moduli space of flat connections is a complicated and intriguing object. In this section we sketch a proof of the often used fact that it has an equivalent characterization that  shows how it is determined by the topology of $M$ and the algebra of the group $G$: 
\begin{thm}\label{thm:repvar}
There is a bijection 
\begin{equation}
MFC(M,G) \cong \frac{\mathrm{Hom}(\pi_1(M),G)}{G}
\end{equation}
where the group $G$ acts on group homomorphisms with codomain $G$ by conjugation. 
\end{thm}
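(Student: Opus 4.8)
The plan is to construct maps in both directions between $MFC(M,G)$ and $\Hom(\pi_1(M),G)/G$ and show they are mutually inverse. The key tool is the \emph{holonomy} of a flat connection. First I would fix a basepoint $x_0 \in M$ (assume $M$ connected) and a point $p_0$ in the fiber $P_{x_0}$. Given a flat connection $\Theta$ on a principal $G$-bundle $P$, and a loop $\gamma$ based at $x_0$, parallel transport along $\gamma$ with respect to $\Theta$ sends $p_0$ to $p_0 \cdot h$ for a unique $h \in G$; call this $\mathrm{hol}_\Theta(\gamma)$. The two things to check here are: (i) parallel transport is well-defined (solve the horizontal lift ODE $\dot{\tilde\gamma} = $ horizontal lift of $\dot\gamma$, using that $\Theta$ is a connection), and (ii) flatness implies $\mathrm{hol}_\Theta(\gamma)$ depends only on the homotopy class $[\gamma] \in \pi_1(M,x_0)$ — this is the crucial point and follows because the curvature $F_\Theta$ measures the infinitesimal change of holonomy around a small loop, so $F_\Theta = 0$ makes holonomy invariant under homotopies (one fills the homotopy by a family of loops and differentiates). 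Concatenation of loops corresponds to multiplication in $G$, so $\mathrm{hol}_\Theta \colon \pi_1(M,x_0) \to G$ is a group homomorphism.

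Next I would check the dependence on choices. Changing $p_0$ to $p_0 \cdot g$ within the fiber conjugates the holonomy homomorphism by $g$; an isomorphism of bundles $(P,\Theta) \to (P',\Theta')$ carries $p_0$ to some $p_0'$ and identifies the holonomies, again up to conjugation by the element of $G$ relating $p_0'$ to a chosen reference point in $P'_{x_0}$. Hence the assignment $(P,\Theta) \mapsto [\mathrm{hol}_\Theta]$ is a well-defined map $MFC(M,G) \to \Hom(\pi_1(M),G)/G$.

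For the inverse, given a homomorphism $\rho \colon \pi_1(M,x_0) \to G$, I would form the flat bundle $P_\rho = \widetilde{M} \times_\rho G$, where $\widetilde M$ is the universal cover of $M$ (on which $\pi_1(M)$ acts by deck transformations) and $\pi_1(M)$ acts on $\widetilde M \times G$ by $\sigma \cdot (\tilde x, g) = (\sigma \tilde x, \rho(\sigma) g)$. The canonical flat connection on $\widetilde M \times G$ (pullback of the trivial connection, i.e. the one whose horizontal spaces are the tangents to $\widetilde M \times \{g\}$) is $\pi_1$-invariant and descends to a flat connection $\Theta_\rho$ on $P_\rho$. Conjugate homomorphisms give isomorphic bundles-with-connection, so this yields a well-defined map $\Hom(\pi_1(M),G)/G \to MFC(M,G)$. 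One then verifies the two composites are the identity: the holonomy of $(P_\rho,\Theta_\rho)$ recovers $\rho$ (horizontal lifts in $\widetilde M \times G$ are literally paths in $\widetilde M$, and a loop in $M$ lifts to a path in $\widetilde M$ from $\tilde x_0$ to $\sigma \tilde x_0$, producing $\rho(\sigma)$); conversely, for a flat $(P,\Theta)$, pulling back to $\widetilde M$ trivializes it equivariantly (the horizontal foliation of $\pi^*P$ over the simply connected $\widetilde M$ has global leaves, giving a flat trivialization), and the gluing data over deck transformations is exactly $\mathrm{hol}_\Theta$, so $(P,\Theta) \cong (P_{\mathrm{hol}_\Theta}, \Theta_{\mathrm{hol}_\Theta})$.

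The main obstacle is the homotopy-invariance of holonomy under flatness — this is where the geometric content lives, and it requires care: one needs that a homotopy of loops can be covered by a horizontal lift argument and that the curvature vanishing kills the obstruction to closing up the lifted square. Everything else is bookkeeping about equivariance and the universal cover, modulo the standard point-set technicalities of the quotient (which, as the excerpt notes, are deferred to \cite{Freed1993}). I would also remark that if $M$ is not connected one works component by component, and that the statement as phrased implicitly uses $G$ connected so that every flat bundle is (after choosing the basepoint fiber) of the form $P_\rho$.
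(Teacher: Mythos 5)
Your proposal follows essentially the same route as the paper's (sketched) proof: the forward map via the holonomy representation of a flat connection, well-defined up to conjugation by Lemma \ref{lem:holonomy}, and the inverse map via the associated flat bundle $\hat{X}\times_\rho G$ built from the universal cover with deck-transformation action. You in fact supply somewhat more detail than the paper, which defers the homotopy-invariance of parallel transport and the verification that the two constructions are mutually inverse to \cite{Taubes2011} and \cite{Kobayashi1996}; your argument is correct modulo those same standard verifications.
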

This result is nothing less than astonishing: On the left-hand we are identifying solutions to complicated non-linear differential equation via the action of an infinite-dimensional group\footnote{Group\emph{oid}, to be completely precise.}, while the right-hand side depends on $M$ only through the first fundamental group - a rather crude topological invariant of $M$! Of course, the quotient on the right-hand side often turns out to be quite complicated too, but the theorem certainly provides a vast improvement in understanding $MFC(M,G)$. The rest of this section is devoted to a sketch of the proof of Theorem \ref{thm:repvar}, which is highly instructive in itself, mainly following \cite{Taubes2011}. 
\subsection{The horizontal distribution of a connection}
An important tool to study a connection is its horizontal distribution. 
\begin{defn}
Let $\Theta \in \Omega^1(M,\g)$ be a connection on $P$, then its  \emph{horizontal distribution} is the subbundle of $TP$ given by 
\begin{equation}
H_\Theta = \ker \Theta \subset TP
\end{equation}
\end{defn}
\begin{rem}
The dimension of this kernel at every point is $\dim M$, hence this is indeed a subbundle.
\end{rem}
\begin{prop}
\begin{enumerate}
\item Denote the kernel of $d\pi \colon TP \to TM$ by $VP$ \footnote{This is called the \emph{vertical tangent bundle} or \emph{tangent bundle along the fibers} of $P$.}. Then we have 
\begin{equation}
TP = VP \oplus H_\Theta.\label{eq:hordisti}
\end{equation} 
\item The horizontal distribution is equivariant with respect to the right $G$-action, 
\begin{equation}
(H_{\Theta})_{pg} = (H_{\Theta})_p \cdot g\label{eq:hordistii}
\end{equation}
(here the action on the right hand side is the derivative of the right $G$-action on $P$). 
\end{enumerate}
\end{prop}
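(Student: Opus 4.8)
The plan is to prove the two claims separately, each essentially by a dimension count combined with one of the two defining properties of a connection from Definition \ref{def:connprinc}.

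For part (1), I would first observe that $VP = \ker(d\pi)$ is a subbundle of rank $\dim G$, since $\pi$ is a submersion whose fibres are the $G$-orbits, whereas $H_\Theta = \ker\Theta$ has rank $\dim M$ by the Remark. As $\dim P = \dim M + \dim G$, it suffices to check $V_pP \cap (H_\Theta)_p = 0$ at each $p \in P$. The key input is that the fundamental vector fields trivialize the vertical bundle: because the right $G$-action is free and its orbits are exactly the fibres of $\pi$, the map $\mathfrak{g} \to V_pP$, $X \mapsto X^\sharp_p$, is a linear isomorphism. Hence any $v \in V_pP$ equals $X^\sharp_p$ for a unique $X \in \mathfrak{g}$; if in addition $v \in \ker\Theta_p$, then property \eqref{eq:defConn2}, namely $\Theta(X^\sharp) = X$, forces $X = \Theta_p(v) = 0$, so $v = 0$. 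This yields the direct sum decomposition \eqref{eq:hordisti}.

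For part (2), I would argue by a short direct computation using property \eqref{eq:defConn1}, $R_g^*\Theta = \mathrm{Ad}_{g^{-1}}\Theta$. Given $v \in (H_\Theta)_p$, i.e. $\Theta_p(v) = 0$, we get $\Theta_{pg}\big((dR_g)_p v\big) = (R_g^*\Theta)_p(v) = \mathrm{Ad}_{g^{-1}}\big(\Theta_p(v)\big) = 0$, so $(dR_g)_p$ maps $(H_\Theta)_p$ into $(H_\Theta)_{pg}$. Since $R_g$ is a diffeomorphism, $(dR_g)_p$ is injective, and both subspaces have dimension $\dim M$, so the inclusion is an equality; unwinding the notation ``$\cdot g$'' as the derivative of the right action then gives precisely \eqref{eq:hordistii}.

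The only step needing genuine care — and the one I would isolate as a small lemma — is the claim that $X \mapsto X^\sharp_p$ is an isomorphism onto $V_pP$: surjectivity holds because the orbit map $\mu_p\colon G \to P$ has image the fibre through $p$, so its differential at $e$ lands in and (by a dimension count) spans $V_pP$, the tangent space to that fibre; injectivity is exactly freeness of the $G$-action, which makes $\mu_p$ an immersion. Everything else is bookkeeping with the two axioms for $\Theta$, so I do not expect any serious obstacle.
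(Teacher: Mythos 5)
Your proof is correct and takes essentially the same route as the paper: part (1) rests on the isomorphism $\g \cong V_pP$, $X \mapsto X^\sharp_p$ (injectivity from freeness plus a dimension count) combined with the axiom $\Theta(X^\sharp)=X$, and part (2) on the axiom $R_g^*\Theta = \mathrm{Ad}_{g^{-1}}\Theta$. The only cosmetic difference is that you conclude via trivial intersection, respectively one inclusion, plus dimension counts, whereas the paper exhibits $\iota\circ\Theta_p$ as a projection with kernel $(H_\Theta)_p$ and gets the reverse inclusion in (2) directly from invertibility of $(dR_g)_p$ and $\mathrm{Ad}_{g^{-1}}$.
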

\begin{proof}
\begin{enumerate}
\item Let $p \in P$, the map $\iota\colon \g \to V_pP$, $X \mapsto (X^\sharp)_p$ (the fundamental vector field of $X$ evaluated at $p$) is an isomorphism (it is injective because the $G$-action is free, and they have the same dimension). Since $\Theta_p(X^\sharp) = X$, we have that $\iota \circ \Theta_p$ is a projection to $V_pP \subset T_pP$, and $\ker(\iota \circ \Theta_p) = \ker(\Theta_p) =(H_\Theta)_p$ is a complement of $V_pP$. 
\item For $v \in T_pP$, we have $(dR_g)_pv \in T_{pg}P$ and 
\begin{equation}
\Theta_{pg}((dR_g)_pv) = (R_g^*\Theta)_pv = \mathrm{Ad}_{g^{-1}}(\Theta_p(v)).
\end{equation}
Since $(dR_g)_p, \mathrm{Ad}_{g^{-1}}$ are linear isomorphisms, we conclude that $v \in \ker \Theta_p \Leftrightarrow (dR_g)_pv \in \ker \Theta_{pg}$. This proves the claim. 
\end{enumerate}
\end{proof}
\begin{rem}
On any smooth fiber bundle $P \twoheadrightarrow M$, one can define a connection as a subbundle $H$ of $TP$ satisfying \eqref{eq:hordisti} (and define the connection one-form by the corresponding projection). Over principal bundles one asks \eqref{eq:hordistii} in addition\footnote{Some sources do not ask for this, and call connections satisfying \eqref{eq:hordistii} \emph{principal connections}, but since we only care for this type of connections, we do not need to make this distinction.}. This definition is equivalent to the Definition \ref{def:connprinc}. Working out the details is an instructive exercise.
\end{rem}
Flatness of connections corresponds to an important property of the distribution:
\begin{prop}
The horizontal distribution is integrable if and only if $\Theta$ is flat. 
\end{prop}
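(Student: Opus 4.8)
The plan is to invoke the Frobenius theorem: the constant-rank subbundle $H_\Theta = \ker\Theta \subset TP$ is integrable if and only if it is \emph{involutive}, i.e. $[X,Y]$ takes values in $H_\Theta$ whenever $X,Y$ are locally defined vector fields with values in $H_\Theta$. So the statement to prove reduces to: $\Theta([X,Y]) = 0$ for all such \emph{horizontal} $X,Y$ if and only if $F_\Theta = 0$.

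First I would express the involutivity obstruction through the curvature. Using the formula for the de Rham differential of a 1-form recalled in the hint above, for horizontal $X,Y$ (so $\Theta(X)=\Theta(Y)=0$) all but one term drops out and $d\Theta(X,Y) = -\Theta([X,Y])$. Since $\tfrac12[\Theta,\Theta](X,Y) = [\Theta(X),\Theta(Y)] = 0$ on horizontal vectors, the definition $\mathbf{F}_\Theta = d\Theta + \tfrac12[\Theta,\Theta]$ yields
\[
\mathbf{F}_\Theta(X,Y) = -\,\Theta([X,Y]) \qquad \text{for } X,Y \in \Gamma(H_\Theta).
\]
Along the way one checks that $(X,Y)\mapsto \Theta([X,Y])$ is $C^\infty(P)$-bilinear on horizontal fields — using $\Theta(X)=0$ in $[fX,Y] = f[X,Y]-(Yf)X$ — so this is a genuine pointwise-defined $2$-form (the Ehresmann curvature of the distribution), and involutivity is exactly the vanishing of this tensor. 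Hence $H_\Theta$ is involutive $\iff$ the restriction of $\mathbf{F}_\Theta$ to $H_\Theta \otimes H_\Theta$ vanishes.

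It remains to upgrade ``$\mathbf{F}_\Theta$ vanishes on horizontal pairs'' to ``$\mathbf{F}_\Theta = 0$''. The key input is that the curvature form is \emph{horizontal}: $\iota_{X^\sharp}\mathbf{F}_\Theta = 0$ for every $X\in\g$. I would prove this by direct computation: $\iota_{X^\sharp}d\Theta = L_{X^\sharp}\Theta - d(\iota_{X^\sharp}\Theta) = L_{X^\sharp}\Theta$ because $\iota_{X^\sharp}\Theta = X$ is constant, and differentiating the equivariance \eqref{eq:defConn1} at $g=e$ (the flow of $X^\sharp$ being $p\mapsto p\exp(tX)$) gives $L_{X^\sharp}\Theta = -[X,\Theta]$; on the other hand the interior-product Leibniz rule for the graded bracket gives $\iota_{X^\sharp}\bigl(\tfrac12[\Theta,\Theta]\bigr) = [X,\Theta]$, so the two contributions cancel. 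Combined with the splitting $T_pP = V_pP \oplus (H_\Theta)_p$ from the proposition above, a $2$-form annihilated by every vertical vector is determined by its restriction to $H_\Theta$; hence $\mathbf{F}_\Theta|_{H_\Theta}=0$ forces $\mathbf{F}_\Theta=0$, i.e. flatness. The converse is immediate: $\mathbf{F}_\Theta = 0$ makes $H_\Theta$ involutive, hence integrable by Frobenius.

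The step I expect to require the most care is the horizontality of the curvature, since it is the only place where the defining equivariance property \eqref{eq:defConn1} of a connection genuinely enters — everything else is formal manipulation of $d$, $\iota$ and the graded bracket. (If one takes ``flat'' to mean merely that the curvature vanishes on horizontal vectors, this step can be omitted, but proving the full identity $\mathbf{F}_\Theta = 0$ needs it.)
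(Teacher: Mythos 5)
Your proof is correct and takes essentially the same route as the paper: both hinge on the identity $\Theta([X,Y]) = -F_\Theta(X,Y)$ for horizontal $X,Y$, combined with the Frobenius theorem. The only difference is that you also establish horizontality of the curvature, $\iota_{X^\sharp}F_\Theta = 0$, to upgrade vanishing on horizontal pairs to full flatness --- a step the paper's one-line proof leaves implicit, and which you carry out correctly.
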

\begin{proof}
Let $X,Y \in \Gamma(H_\Theta)$ (i.e. $X,Y$ are vector fields tangent to $H_\Theta$). By definition, this means that $\Theta(X) = \Theta(Y) \equiv 0$. An easy computation then shows that $\Theta([X,Y]) = -F_\Theta(X,Y)$, which proves the statement. 
\end{proof}
\subsection{Parallel transport}
Any connection defines a notion of parallel transport as follows. Again let $P$ be a principal $G$-bundle over $M$. 
\begin{defn}
Let $p \in P$ and $\gamma\colon [0,1] \to M$ and path with $p\in P_{\gamma(0)}$. Then a curve $\tilde{\gamma}(t)$ is called a horizontal lift of $\gamma$ if $\pi(\tilde{\gamma}(t)) = \gamma(t)$ and $\dot{\tilde{\gamma}}(t) \in (H_\Theta)_{\tilde{\gamma}(t)}$ for all $t \in [0,1]$. 
\end{defn}
It follows from the basic theory of differential equations that horizontal lifts always exist and are unique (see e.g. \cite{Kobayashi1996}). 
\begin{defn}
Let $p \in P$ and $\gamma\colon [0,1] \to M$ with $p \in P_{\gamma(0)}$. Then we define the \emph{parallel transport along $\gamma$} by \begin{equation}
Pt(\Theta,\gamma)(p) = \tilde{\gamma}(1)
\end{equation}
\end{defn}
The parallel transport along $\gamma$ defines a map 
$$Pt(\Theta,\gamma) \colon P_{\gamma(0)} \to P_{\gamma(1)}.$$
A crucial result is the following. 
\begin{thm}
If $\Theta$ is flat, the parallel transport along a path $\gamma$ depends only on its homotopy class.
\end{thm}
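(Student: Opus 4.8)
The plan is to reduce the statement to a two‑parameter version of the horizontal lift and then exploit flatness through the vanishing of the pulled‑back curvature, in the spirit of the infinitesimal holonomy (Ambrose--Singer) computation. First I would fix two paths $\gamma_0,\gamma_1\colon[0,1]\to M$ with common endpoints $x=\gamma_i(0)$, $y=\gamma_i(1)$, together with a homotopy rel endpoints $H\colon[0,1]\times[0,1]\to M$ satisfying $H(s,0)=\gamma_0(s)$, $H(s,1)=\gamma_1(s)$, $H(0,t)=x$, $H(1,t)=y$; this is exactly what ``same homotopy class'' means for a map between the fixed fibers $P_x$ and $P_y$. Fix $p\in P_x$. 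For each $t$ let $\tilde\gamma_t$ be the horizontal lift of $s\mapsto H(s,t)$ starting at $p$; as noted above these lifts exist and are unique, and by smooth dependence of solutions of ODEs on parameters the map $\widetilde{H}(s,t):=\tilde\gamma_t(s)$ is a smooth lift of $H$ with $\widetilde{H}(0,t)\equiv p$. The goal is then to show that $t\mapsto\widetilde{H}(1,t)$ is constant, for then $Pt(\Theta,\gamma_0)(p)=\widetilde{H}(1,0)=\widetilde{H}(1,1)=Pt(\Theta,\gamma_1)(p)$, and since $p$ was arbitrary this proves the theorem.

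The key step is as follows. Pull $\Theta$ back along $\widetilde{H}$ and write $\omega:=\widetilde{H}^*\Theta=a\,ds+b\,dt\in\Omega^1([0,1]^2,\g)$, where $a=\Theta(\partial_s\widetilde{H})$ and $b=\Theta(\partial_t\widetilde{H})$. By construction each $\tilde\gamma_t$ is a horizontal lift, so $a\equiv0$, and $\widetilde{H}(0,t)\equiv p$ gives $b(0,t)=0$. Since $\Theta$ is flat, $0=\widetilde{H}^*F_\Theta=d\omega+\frac12[\omega,\omega]$; evaluating this $2$-form on $(\partial_s,\partial_t)$ gives $\partial_s b-\partial_t a+[a,b]=0$, which, using $a\equiv0$, collapses to $\partial_s b=0$. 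Together with the initial condition $b(0,t)=0$ this forces $b\equiv0$, i.e. $\partial_t\widetilde{H}(s,t)\in H_\Theta$ for all $(s,t)$.

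Finally I would specialize to $s=1$: there $\pi(\widetilde{H}(1,t))=H(1,t)=y$ for all $t$, so $d\pi\bigl(\partial_t\widetilde{H}(1,t)\bigr)=0$, i.e. $\partial_t\widetilde{H}(1,t)$ is vertical; by the splitting $TP=VP\oplus H_\Theta$ of Equation \eqref{eq:hordisti} it must vanish, so $t\mapsto\widetilde{H}(1,t)$ is constant, as desired. Conceptually this is the manifestation of the Proposition that flatness makes $H_\Theta$ integrable: $\widetilde{H}$ then lands in a single leaf $L$, on which $\pi|_L$ restricts to a covering of $M$, and horizontal lifting becomes ordinary path lifting through a covering, which is homotopy invariant; I would also remark that this construction yields the holonomy homomorphism $\pi_1(M,x)\to G$ that feeds into Theorem \ref{thm:repvar}.

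I expect the only genuine obstacle to be bookkeeping rather than ideas: obtaining joint smoothness of $\widetilde{H}$ from parameter dependence of the horizontal-lift ODE, and keeping the bracket conventions from the excerpt straight so that the curvature identity reads $\partial_s b-\partial_t a+[a,b]=0$; once that is set up, the fact that flatness is exactly what produces $\partial_s b=0$ is immediate. A minor point worth flagging in the write-up is that the statement requires the homotopy to fix the endpoints (not a free homotopy), since $Pt(\Theta,\gamma)$ is a map between the two endpoint fibers.
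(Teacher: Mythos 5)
Your argument is correct, and it is a genuinely different (and more complete) route than what the notes do: the text deliberately omits a proof, referring to Kobayashi--Nomizu and Taubes, and instead illustrates the phenomenon only in the abelian example $G=\R$, $P=M\times\R$, where homotopy invariance reduces to Stokes' theorem, $\int_\gamma A=\int_D F_A$. Your proof is the standard general argument: lift the whole homotopy using smooth dependence of the horizontal-lift ODE on the parameter, pull back $\Theta$ to the square, and use flatness in the form $d\omega+\frac12[\omega,\omega]=0$ to get $\partial_s b-\partial_t a+[a,b]=0$, which with $a\equiv 0$ and $b(0,t)=0$ forces $b\equiv 0$; the endpoint variation is then both horizontal and vertical, hence zero by the splitting $TP=VP\oplus H_\Theta$. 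The sign and bracket bookkeeping matches the conventions of the notes ($\frac12[\omega,\omega](\partial_s,\partial_t)=[a,b]$), and your observation that the homotopy must be rel endpoints is exactly the right caveat. Conceptually, your computation is the non-abelian version of the Stokes argument in the example: the pulled-back curvature on the homotopy square is what would obstruct $b\equiv 0$, just as $\int_D F_A$ obstructs triviality of the holonomy there; your closing remark about integrability of $H_\Theta$ and path lifting in a leaf is the same mechanism phrased via the Frobenius picture. The only point worth adding in a careful write-up is that a continuous homotopy class contains a smooth representative (Whitney approximation), so that pulling back forms along $H$ is legitimate; this is routine and not a gap in the idea.
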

Instead of giving a detailed proof, for which we refer to \cite{Kobayashi1996} or \cite{Taubes2011}, we give an easy example that gives an idea of the corresponding phenomenon.
\begin{expl} Take $G = \R$ and consider the trivial $G$-bundle $P = M \times \R \to M$. Any connection on $P$ is of the form $\Theta = dx - A$, where $x$ is the coordinate on $\R$ and $A\in \Omega^1(M)$. If $\gamma\colon[0,1]\to M$ is a path we claim that the horizontal lift starting at $p = (\gamma(0),x_0)$ is $\tilde{\gamma}(t)=(\gamma(t),x_0 + \int_{\gamma|_{[0,t]}}A)$. Indeed, we have $\dot{\tilde{\gamma}}(t) = (\dot{\gamma}(t), A_{\gamma(t)}(\dot{\gamma}(t))\frac{\partial}{\partial x})$. It follows that $\Theta(\dot{\tilde{\gamma}}(t)) =- A(\dot{\gamma}(t)) + A(\dot{\gamma(t)}) = 0$. Since the horizontal lift is unique, we conclude that parallel transport along $\gamma$ is given by 
$$(\gamma(0),x_0) \mapsto (\gamma(1), x_0 + \int_\gamma A).$$
Proving that the parallel transport depends only on the homotopy class is equivalent to proving that the parallel transport around contractible loops is trivial. If $\gamma$ is a contractible loop then $\gamma$ bounds a disk $D \subset M$ and we have $\int_\gamma A = \int_D dA = \int_D F_A$, since for abelian groups we have $F_A = dA$. T his shows that the parallel transport along contractible loops is trivial if and only if $F_A =0$. The same basic idea also applies in the non-abelian case, but some more involved concepts are needed. 
\end{expl}
\begin{defn}
Let $\gamma$ be a closed loop and $p \in P_{\gamma(0)}$. Then we define the \emph{holonomy} of $Hol_p(\Theta,\gamma) \in G$ by 
\begin{equation}
Pt(\Theta,\gamma)(p) = p \cdot Hol_p(\Theta,\gamma).  
\end{equation}
\end{defn}
\begin{lem}\label{lem:holonomy}
If $q = p \cdot g$, then $Hol_q(\Theta,\gamma) = g^{-1}Hol_p(\Theta,\gamma)g$
\end{lem}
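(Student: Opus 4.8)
The plan is to leverage two facts already available: the $G$-equivariance of the horizontal distribution, equation \eqref{eq:hordistii}, and the uniqueness of horizontal lifts. First I would let $\tilde\gamma$ be the horizontal lift of $\gamma$ with $\tilde\gamma(0) = p$, and consider the translated path $\tilde\gamma_g(t) := \tilde\gamma(t)\cdot g$. This is again a lift of $\gamma$ because $\pi$ is $G$-invariant, and it is \emph{horizontal}: its velocity is $(dR_g)_{\tilde\gamma(t)}\dot{\tilde\gamma}(t)$, which lies in $(H_\Theta)_{\tilde\gamma(t)g}$ precisely by the equivariance \eqref{eq:hordistii}, since $\dot{\tilde\gamma}(t)\in(H_\Theta)_{\tilde\gamma(t)}$. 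As $\tilde\gamma_g(0) = p\cdot g = q$, uniqueness of horizontal lifts forces $\tilde\gamma_g$ to be \emph{the} horizontal lift of $\gamma$ starting at $q$.

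Next I would evaluate at the endpoint. From the previous step, $Pt(\Theta,\gamma)(q) = \tilde\gamma_g(1) = \tilde\gamma(1)\cdot g$, and by the definition of the holonomy at $p$ we have $\tilde\gamma(1) = p\cdot Hol_p(\Theta,\gamma)$, so $Pt(\Theta,\gamma)(q) = p\cdot\big(Hol_p(\Theta,\gamma)\,g\big)$. On the other hand, the defining relation for the holonomy at $q$ gives $Pt(\Theta,\gamma)(q) = q\cdot Hol_q(\Theta,\gamma) = p\cdot\big(g\,Hol_q(\Theta,\gamma)\big)$. Equating the two and using that the $G$-action on $P$ is free (so the element acting on $p$ is uniquely determined) yields $Hol_p(\Theta,\gamma)\,g = g\,Hol_q(\Theta,\gamma)$, which rearranges to the claim $Hol_q(\Theta,\gamma) = g^{-1}Hol_p(\Theta,\gamma)\,g$.

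I do not anticipate a genuine obstacle here; the argument is short. The one place that requires care is the verification that the translated path $\tilde\gamma_g$ is horizontal, which is exactly the point where equivariance of the horizontal distribution is used, and the final cancellation, which relies on freeness of the $G$-action. Everything else is bookkeeping with the definitions of parallel transport and holonomy.
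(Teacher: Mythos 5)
Your argument is correct and follows essentially the same route as the paper: the paper's proof also rests on the fact that parallel transport commutes with the right $G$-action (as a consequence of the equivariance \eqref{eq:hordistii}), which you merely spell out via the translated horizontal lift and uniqueness. The concluding algebra (you invoke freeness to cancel, the paper inserts $g\,g^{-1}$) is the same computation in slightly different dress.
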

\begin{proof}
It is a consequence of equivariance of the horizontal distribution (\eqref{eq:hordistii} ) that the parallel transport commutes with the right $G$-action. Therefore 
\begin{align*} Pt(\Theta,\gamma)(q) &= Pt(\Theta,\gamma)(p\cdot g)  =  Pt(\Theta,\gamma)(p)\cdot g \\
&=p\cdot Hol_p(\Theta,\gamma)\cdot g = p \cdot (g \cdot g^{-1}) \cdot  Hol_p(\Theta,\gamma)\cdot g = q \cdot g^{-1}Hol_p(\Theta,\gamma)g.
\end{align*}
\end{proof}
\subsection{The isomorphism}
Fix a point $p \in P$ and let $x = \pi(p) \in M$. Let $\Theta$ be  a flat connection on $P$. Then, one can define a representation 
\begin{align}\rho_{P,\Theta,p} \colon \pi_1(X,x) &\to G \notag \\
[\gamma] &\mapsto Hol_p(\Theta,\gamma)
\end{align}
\begin{lem}
The conjugacy class of $\rho_{P,\Theta,p}$ does not depend on $p$ (in particular, not on $x$). 
\end{lem}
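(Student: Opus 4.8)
The plan is to show that for two base points $p, q \in P$, the representations $\rho_{P,\Theta,p}$ and $\rho_{P,\Theta,q}$ differ by conjugation in $G$. I will split this into two cases, according to whether $p$ and $q$ lie in the same fiber or not, and then combine them. First, suppose $q = p \cdot g$ for some $g \in G$, so that $\pi(p) = \pi(q) = x$. Then for any loop $\gamma$ based at $x$, Lemma~\ref{lem:holonomy} gives $Hol_q(\Theta,\gamma) = g^{-1}Hol_p(\Theta,\gamma)g$, so $\rho_{P,\Theta,q} = g^{-1}\rho_{P,\Theta,p}\, g$ as maps $\pi_1(X,x) \to G$; in particular they have the same conjugacy class.

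Next I would handle the case where $p$ and $q$ lie over different points $x = \pi(p)$ and $y = \pi(q)$. Since $M$ is connected, choose a path $\delta \colon [0,1] \to M$ from $x$ to $y$, and let $\tilde\delta$ be the horizontal lift of $\delta$ starting at $p$; set $p' = \tilde\delta(1) = Pt(\Theta,\delta)(p) \in P_y$. The change-of-basepoint isomorphism $\pi_1(M,x) \xrightarrow{\sim} \pi_1(M,y)$ sends $[\gamma] \mapsto [\delta^{-1}\gamma\delta]$ (with the appropriate composition convention). The key point is that parallel transport is functorial with respect to concatenation of paths, so $Pt(\Theta, \delta^{-1}\gamma\delta) = Pt(\Theta,\delta)^{-1}\circ Pt(\Theta,\gamma)\circ Pt(\Theta,\delta)$ as maps of fibers; applying this to $p'$ and unwinding the definition of holonomy yields $Hol_{p'}(\Theta,\delta^{-1}\gamma\delta) = Hol_p(\Theta,\gamma)$. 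Hence $\rho_{P,\Theta,p'}$, viewed through the identification $\pi_1(M,x) \cong \pi_1(M,y)$, equals $\rho_{P,\Theta,p}$ on the nose. Finally $p'$ and $q$ both lie in $P_y$, so $q = p'\cdot h$ for some $h \in G$, and the first case gives that $\rho_{P,\Theta,q}$ is conjugate to $\rho_{P,\Theta,p'}$, hence conjugate to $\rho_{P,\Theta,p}$.

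I expect the main subtlety to be bookkeeping rather than anything deep: one must be careful about the composition convention for loops (whether $\gamma_1\gamma_2$ means "first $\gamma_1$" or "first $\gamma_2$") and make sure the same convention is used for $\pi_1$, for holonomy, and for the change-of-basepoint map, so that the conjugating elements appear on the correct side. The geometric input — that parallel transport of a flat connection is homotopy-invariant and composes functorially under concatenation — is already available from the preceding theorem and the definition of $Pt$, so no new analysis is needed. One should also remark that the statement is exactly what is required for the map $(P,\Theta) \mapsto [\rho_{P,\Theta,p}]$ into $\mathrm{Hom}(\pi_1(M),G)/G$ to be well-defined, which is presumably the next step toward Theorem~\ref{thm:repvar}.
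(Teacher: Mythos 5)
Your proof is correct, and it goes further than the paper does: the paper's own proof only treats the case where $p$ moves within a fixed fiber (exactly as in your first paragraph, via Lemma \ref{lem:holonomy}) and then, for the dependence on the basepoint $x$, simply remarks that this is subtler and refers to Taubes. You supply the missing half explicitly: transport $p$ along a connecting path $\delta$ to get $p' \in P_y$, use functoriality of parallel transport under concatenation together with its $G$-equivariance to get $Hol_{p'}(\Theta,\delta^{-1}\gamma\delta) = Hol_p(\Theta,\gamma)$, identify $\pi_1(M,x)\cong\pi_1(M,y)$ by the corresponding change-of-basepoint isomorphism, and then reduce to the fiber case. This is the standard argument and it buys a self-contained proof; it also makes transparent why only the conjugacy class is well defined, since a different choice of $\delta$ alters the change-of-basepoint isomorphism by an inner automorphism of $\pi_1$, which only conjugates $\rho$ by an element of $G$. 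One small bookkeeping point, which you yourself flag: with the convention that $\delta^{-1}\gamma\delta$ is traversed as $\delta^{-1}$, then $\gamma$, then $\delta$, the induced map of fibers $P_y\to P_y$ is $Pt(\Theta,\delta)\circ Pt(\Theta,\gamma)\circ Pt(\Theta,\delta)^{-1}$ (composition read right to left), not the order you wrote; with that fixed, the computation $Pt$ of the conjugated loop applied to $p'$ equals $p'\cdot Hol_p(\Theta,\gamma)$ goes through exactly as you intend.
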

\begin{proof}
If we change $p$ in the fiber over $x$ this follows from Lemma \ref{lem:holonomy}. Dependence on $x$ is slightly more subtle since in principle also the fundamental group changes (up to inner isomorphism). We refer to \cite{Taubes2011} for a proof. 
\end{proof}
$\rho_{P,\Theta,p}$ is called the \emph{holonomy representation of $\Theta$ at $p$\footnote{Sometimes dependence on the point $p$ is dropped and one then understands the quotient by conjugation. If $\Theta$ is not flat one can still define the holonomy map,  but it does not descend to $\pi_1$: One then usually speaks of the image of the holonomy map at $p$, which is called the \emph{holonomy group} of $\Theta$ at $p$ (the same remark about the basepoint applies)}.}
We then have the following theorem, which is the more precise version of Theorem \ref{thm:repvar}. Denote $$\mathcal{A}_{flat} = \{(P,\Theta)|\text{$\Theta$ flat connection on } P \}.$$
\begin{thm}
The map 
\begin{align*}T \colon \mathcal{A}_{flat} &\to \mathrm{Hom}(\pi_1(X,x),G)/G \\(P,\Theta) &\mapsto [\rho_{P,\Theta,p}]_G 
\end{align*}
is independent of $x$ and $p$, and the gauge equivalence class of $(P,\Theta)$. It descends to an isomorphism on the quotient: 
\begin{equation}
T\colon MFC(M,G) \longrightarrow \mathrm{Hom}(\pi_1(X,x),G)/G
\end{equation}
\end{thm}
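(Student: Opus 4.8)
The plan is to establish that the map $T$ is well-defined, surjective, and injective, in that order. For well-definedness I must check two things beyond what the preceding lemma already gives: first, that $T$ does not depend on the choice of basepoint $x \in M$ and lift $p \in P_x$ (handled by the lemma above, modulo the subtlety about $\pi_1$ changing under change of basepoint, for which we cite \cite{Taubes2011}); second, that $T$ is constant on gauge-equivalence classes. For the latter, suppose $\varphi \colon (P,\Theta) \to (P',\Theta')$ is an isomorphism of principal bundles with $\varphi^*\Theta' = \Theta$. Since $\varphi$ is $G$-equivariant and covers the identity, it maps horizontal lifts of $\gamma$ (with respect to $\Theta$) to horizontal lifts (with respect to $\Theta'$), hence intertwines parallel transport: $Pt(\Theta',\gamma)\circ\varphi = \varphi\circ Pt(\Theta,\gamma)$. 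Evaluating at $p$ and writing $p' = \varphi(p)$ gives $p' \cdot Hol_{p}(\Theta,\gamma) = Pt(\Theta',\gamma)(p')= p'\cdot Hol_{p'}(\Theta',\gamma)$, so the two holonomy representations literally agree (not merely up to conjugacy), and a fortiori $[\rho_{P,\Theta,p}]_G = [\rho_{P',\Theta',p'}]_G$.

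For surjectivity, given a homomorphism $\rho\colon\pi_1(M,x)\to G$ I would build a flat bundle by the standard associated-bundle construction: let $\widetilde{M}$ be the universal cover, on which $\pi_1(M,x)$ acts by deck transformations, and set $P_\rho := (\widetilde{M}\times G)/\pi_1(M,x)$, where $\pi_1$ acts by $\gamma\cdot(\tilde y, g) = (\gamma\cdot\tilde y,\ \rho(\gamma)g)$. The right $G$-action on the second factor descends to $P_\rho$, making it a principal $G$-bundle over $M$; and the trivial (flat) connection $dx$ on $\widetilde M\times G$ is $\pi_1$-invariant since $\rho(\gamma)$ is constant, hence descends to a flat connection $\Theta_\rho$ on $P_\rho$. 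One then checks, by lifting a loop $\gamma$ representing $[\gamma]\in\pi_1(M,x)$ to a path in $\widetilde M$ from $\tilde x$ to $\gamma\cdot\tilde x$ and horizontally lifting into $\widetilde M\times G$, that the holonomy of $\Theta_\rho$ around $[\gamma]$ is exactly $\rho([\gamma])$ (up to the conjugation ambiguity from the choice of basepoint in the fiber). Thus $T(P_\rho,\Theta_\rho) = [\rho]_G$.

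For injectivity, suppose $(P,\Theta)$ and $(P',\Theta')$ are flat bundles with conjugate holonomy representations; after adjusting the lift $p'$ in its fiber (using Lemma \ref{lem:holonomy}) we may assume $\rho_{P,\Theta,p} = \rho_{P',\Theta',p'} =: \rho$. Since $\Theta$ is flat, its horizontal distribution is integrable, so through $p$ there is a maximal integral leaf $L\subset P$; the restriction $\pi|_L\colon L\to M$ is a covering map whose deck group is precisely the image of the holonomy, and in fact $L$ together with $p$ exhibits $P$ as the bundle associated to $\widetilde M$ via $\rho$ — i.e. $P\cong P_\rho$ as flat bundles, with the isomorphism sending $[\tilde y, g]$ to the point obtained by horizontally transporting $p$ along the image of a path from $\tilde x$ to $\tilde y$ and then acting by $g$. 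The same construction from $p'$ identifies $(P',\Theta')\cong P_\rho$, and composing gives a gauge equivalence $(P,\Theta)\cong(P',\Theta')$. I expect the main obstacle to be the bookkeeping in this last step: verifying that the map $[\tilde y,g]\mapsto (\text{horizontal transport of }p)\cdot g$ is well-defined on the quotient by $\pi_1$ (this is exactly where $\rho$ being the holonomy is used), smooth, $G$-equivariant, and connection-preserving — essentially the statement that a flat bundle is recovered from its holonomy, for which I would ultimately appeal to \cite{Taubes2011} or \cite{Kobayashi1996} rather than writing out every diagram. The final step, checking that $T$ descends to the quotient $MFC(M,G)$ and is a bijection there, is then immediate from well-definedness plus the surjectivity and injectivity just established.
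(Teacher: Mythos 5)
Your proposal is correct and follows essentially the same route as the paper, whose own treatment is just a sketch: the key step in both is the inverse construction assigning to $\rho$ the flat bundle $\hat{X}\times_\rho G$ built from the universal cover with the descended trivial connection, with the remaining verifications (basepoint independence, recovery of a flat bundle from its holonomy) deferred to \cite{Taubes2011} or \cite{Kobayashi1996}. Only be careful to match conventions in the quotient defining $P_\rho$ — the paper identifies $(\hat{x},g)\sim(\pi\cdot\hat{x},\rho(\pi)^{-1}g)$ whereas you use $\rho(\gamma)g$, and which of the two yields holonomy $\rho$ (rather than $\rho^{-1}$ or a conjugate) depends on your conventions for the deck action and path composition.
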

The proof, for which we again refer to the literature, depends on the construction of an inverse map. This is given by assigning to a representation $\rho\colon \pi_1(X) \to G$ the principal $G$-bundle $P$ defined by $P = \hat{X} \times_\rho G = \hat{X} \times G /\sim$. Here $\hat{X}$ is the universal cover of $X$ on which $\pi_1(X)$ acts by deck transformations, and $(\hat{x},g) \sim (\pi \cdot \hat{x}, \rho(\pi)^{-1}g$).
\subsection{Topology and smooth structure}
To put a topology on $MFC(M,G)$ we use that the fundamental group $\pi_1(M)$ is finitely generated if $M$ is compact. If $x_1, \ldots, x_n$ are the generators of $\pi_1(M)$, then any map $\rho\colon \pi_1(M) \to G$ is completely determined by the images $(\rho(x_1),\ldots,\rho(x_n))\in G^n$. Thus we can identify $Hom(\pi(M),G)$ with a subset of $G^n$, and equip it with the subspace topology. Consequently, one can endow $MFC(M,G) \cong Hom(\pi_1(M),G)/G$ with the quotient topology\footnote{One could also endow it with the quotient topology of the defining quotient. It has been shown that these topologies agree (i.e. the map $T$ is a homeomorphism if one endows the domain with the quotient topology and the codomain with the topology discussed in this section).}. One can check that this topology is independent of the choice of generators. \\
Certain points in the moduli space of flat connections have neighbourhoods that admit a smooth neighbourhood, where ``smoothness'' means the following: A map $$f\colon U \subset \Hom(\pi_1(X),G) \to \R $$ is smooth if there is a smooth map $\tilde{f}\colon V \to \R$ such that $U \subset V \subset G^n$ and $f = \tilde{f}|_U$.   To this point, we recall that a flat connection $\Theta$ gives a differential $d_\Theta$ on $\Omega^\bullet(M,\mathrm{Ad} P)$ given in a trivializing chart by $\omega \mapsto d\omega + [A,\omega]$. A map $f \colon U \to \R$ on open subset $U \subset MFC(M,G)$ is smooth if and only if $f \circ \pi\colon \pi^{-1}(U) \to \R$ is smooth. Finally, a point $x \in MFC(M,G)$ is smooth if and only if it there is a neighbourhood $U$ of $x$ and a homeomorphism $\phi\colon U \to \phi(U) \subset \R^N$ such that $f\colon U \to \R$ is smooth if and only if $f \circ \phi^{-1}$ is smooth. We refer to 
\cite{Walker1992} for a proof of the following.
\begin{thm}
\begin{enumerate}
\item A point $[\Theta] \in MFC(M,G)$ is smooth iff $H_\Theta^0(M,\mathrm{ad}P) = H^0(M) \otimes Z(\g)$, where $Z(\g)$ is the center of $\g$.
\item The Zariski tangent space to $MFC(M,G)$ at $\Theta$ is $H_\Theta^1(M,\mathrm{ad}P)$. In particular, at smooth points, $[\Theta] \in MFC(M,G)$ has a neighbourhood diffeomorphic to an open subset of $H_\Theta^1(M,\mathrm{ad}P)$. 
\end{enumerate}
\end{thm}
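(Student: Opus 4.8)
The plan is to transport the statement to the representation variety via Theorem~\ref{thm:repvar} and then to read off the local structure of $MFC(M,G)$ at $[\Theta]$ from the twisted de Rham cohomology $H^\bullet_\Theta(M,\mathrm{ad}P)$, which is the natural home of the deformation theory of a flat connection. First I would use compactness of $M$ to present $\pi_1(M)$ by finitely many generators $x_1,\dots,x_n$ and relations $r_1,\dots,r_m$, so that $\mathrm{Hom}(\pi_1(M),G)$ is the fibre over $(1,\dots,1)$ of the word map $R\colon G^n\to G^m$, $\underline g\mapsto(r_1(\underline g),\dots,r_m(\underline g))$, and $MFC(M,G)$ is the quotient of this fibre by conjugation. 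Writing $\rho=\rho_{P,\Theta,p}$ for the holonomy representation and $\g_{\mathrm{Ad}\rho}$ for $\g$ with the $\pi_1(M)$-module structure $\mathrm{Ad}\circ\rho$, a Fox-calculus computation (differentiate $R$ at $\rho$, trivialise tangent spaces by right translation) identifies $\ker dR_\rho$ with the cocycles $Z^1(\pi_1(M);\g_{\mathrm{Ad}\rho})$, the tangent space to the conjugation orbit through $\rho$ with the coboundaries $B^1(\pi_1(M);\g_{\mathrm{Ad}\rho})$, and the Lie algebra of the stabiliser $Z_G(\rho)$ with $H^0(\pi_1(M);\g_{\mathrm{Ad}\rho})$. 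Invoking Weil's theorem that the group cohomology of $\pi_1(M)$ with coefficients in the flat bundle agrees with twisted de Rham cohomology, I would replace $H^\bullet(\pi_1(M);\g_{\mathrm{Ad}\rho})$ by $H^\bullet_\Theta(M,\mathrm{ad}P)$ throughout; in particular the stabiliser Lie algebra becomes $H^0_\Theta(M,\mathrm{ad}P)$ and the naive tangent space $Z^1/B^1$ to the quotient becomes $H^1_\Theta(M,\mathrm{ad}P)$, which already gives the Zariski tangent space claim of part~(2).

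For the smoothness criterion, note that central elements of $\g$ are fixed by every $\mathrm{Ad}_{\rho(x)}$, so, since $M$ is connected and hence $H^0(M)=\R$, one always has $H^0(M)\otimes Z(\g)=Z(\g)\subseteq H^0_\Theta(M,\mathrm{ad}P)$, and correspondingly $Z(G)\subseteq Z_G(\rho)$ acts trivially on all of $\mathrm{Hom}(\pi_1(M),G)$. Thus the hypothesis $H^0_\Theta(M,\mathrm{ad}P)=H^0(M)\otimes Z(\g)$ says precisely that the effective stabiliser $Z_G(\rho)/Z(G)$ is discrete, i.e. the residual $G/Z(G)$-action is locally free at $\rho$. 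To upgrade this to a chart I would build a Kuranishi model: slicing transversally to the orbit, a neighbourhood of $\rho$ in $\mathrm{Hom}(\pi_1(M),G)$ is cut out inside $H^1_\Theta(M,\mathrm{ad}P)$ by an obstruction map $\kappa\colon H^1_\Theta\to H^2_\Theta$ whose leading term is the quadratic $a\mapsto[a\wedge a]$ coming from the Maurer--Cartan equation $d_A a+\tfrac12[a,a]=0$ satisfied by a deformed flat connection $A+a$. Because $M$ is a closed oriented $3$-manifold and $\langle\cdot,\cdot\rangle$ is nondegenerate, Poincar\'e duality yields $H^2_\Theta\cong(H^1_\Theta)^\ast$ and identifies $\kappa$ with the differential of the cubic functional $a\mapsto\tfrac16\langle a,[a,a]\rangle$ on $H^1_\Theta$, i.e. the Chern--Simons action restricted to $H^1_\Theta$. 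At a point where the effective stabiliser is discrete one then shows this critical-locus description is unobstructed, so the slice provides a homeomorphism from a neighbourhood $U$ of $[\Theta]$ onto an open subset of $H^1_\Theta(M,\mathrm{ad}P)$ realising the smooth structure of $MFC(M,G)$ in the sense defined above; conversely a positive-dimensional stabiliser forces the conjugation orbit to drop dimension and produces a genuine singularity. This yields both directions of part~(1), and the chart just constructed is the local diffeomorphism asserted in part~(2).

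The main obstacle is the final step above: controlling the higher-order terms of the Kuranishi map $\kappa$ and proving that the cubic Chern--Simons potential on $H^1_\Theta(M,\mathrm{ad}P)$ has no effect on the local structure exactly when $H^0_\Theta$ is minimal. This is precisely where three-dimensionality enters, through Poincar\'e duality, and also precisely where $MFC(M,G)$ genuinely fails to be a manifold in general; accordingly I would follow the detailed deformation-theoretic analysis of \cite{Walker1992} rather than reproduce it here.
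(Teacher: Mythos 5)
The paper itself does not prove this theorem: it is quoted with an explicit pointer to \cite{Walker1992}, so there is no in-text proof to compare against and your proposal must stand on its own. Its scaffolding is the standard and correct one (holonomy correspondence from Theorem \ref{thm:repvar}, identification of $Z^1$, $B^1$, $H^0$ of $\pi_1(M)$ with coefficients in $\g_{\mathrm{Ad}\rho}$ with the cocycle, orbit and stabiliser data, Weil's comparison with $H^\bullet_\Theta(M,\mathrm{ad}P)$, and a quadratic-cone/Kuranishi model with obstruction in $H^2_\Theta$), but the decisive step is missing. You assert that when $H^0_\Theta(M,\mathrm{ad}P)=H^0(M)\otimes Z(\g)$, i.e.\ the effective stabiliser is discrete, the Kuranishi map is unobstructed, so that the slice is carried diffeomorphically onto an open subset of $H^1_\Theta(M,\mathrm{ad}P)$. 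Nothing in your construction supports this: on a closed oriented $3$-manifold Poincar\'e duality makes $H^2_\Theta\cong(H^1_\Theta)^*$, so the obstruction space is exactly as large as the tangent space whenever $H^1_\Theta\neq 0$, and the local model you yourself derive is the critical locus of the cubic potential $a\mapsto\tfrac16\int_M\langle a,[a,a]\rangle$ on $H^1_\Theta$, in general a proper and possibly singular subset of $H^1_\Theta$. Minimality of $H^0_\Theta$ is a hypothesis on the stabiliser, not on the cup product $H^1_\Theta\otimes H^1_\Theta\to H^2_\Theta$, and your argument supplies no mechanism by which the former kills the latter. Since this is precisely the point you defer to \cite{Walker1992}, the proposal in effect reduces the theorem to the same citation the lecture notes use, rather than proving it.

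The converse half of part (1) is likewise only asserted: ``a positive-dimensional stabiliser forces the conjugation orbit to drop dimension and produces a genuine singularity'' is not an argument in the functional-structure sense of smoothness defined in the paper, and it is delicate at reducible representations that are isolated in the character variety, where there is no nearby orbit whose dimension could drop and one must analyse the local ring of functions directly. Two smaller points: Weil's identification of $H^\bullet(\pi_1(M);\g_{\mathrm{Ad}\rho})$ with $H^\bullet_\Theta(M,\mathrm{ad}P)$ is valid in degrees $0$ and $1$ for any connected $M$, but in degree $2$ the comparison map is only injective unless $M$ is aspherical, so ``replace \dots throughout'' should be weakened to exactly what the obstruction argument needs; and identifying the Zariski tangent space of the quotient with $Z^1/B^1$ is not automatic (the tangent space of a quotient is not in general the quotient of tangent spaces) and itself requires the slice argument that hinges on the stabiliser hypothesis you have not yet exploited rigorously. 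In short: correct scaffolding, but the heart of both parts, namely the passage from the cohomological hypothesis to an actual chart and from a non-minimal $H^0_\Theta$ to an actual singularity, remains unproved in your write-up.
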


\chapter{Perturbative Quantization}\label{ch:PertQuant}
After considering the classical Chern-Simons theory, in this chapter we explain the methods of perturbative quantization that we will later use for Chern-Simons theory. Mathematically, this method is composed of two simple steps: 
\begin{enumerate}
\item Understand the behaviour of a finite-dimensional integral $I(\hbar) = \int_Xe^{\frac{\ii}{\hbar}S}\mu$ as $\hbar \to 0$. We will see that it depends only on $X$ through $Crit(S)$. Otherwise it is completely determined by $S$
\item Extend the formula resulting from the considerations above to the field theory situation. 
\end{enumerate}
We will see that special care is needed for gauge theories (like Chern-Simons theory). A couple of references explaining this approach are \cite{Polyak2005},\cite{Reshetikhin2010},\cite{Mnev2017}.
We start with the asymptotics of oscillatory integrals, which is a classic topic of microlocal analysis. We loosely follow the presentation of \cite{Mnev2017}. 
\section{Asymptotics of oscillatory integrals}
We recall briefly a few facts about derivatives on manifolds. 
If $X$ is a manifold and $S \colon X \to \R$ is a smooth function, then a point $p \in X$ is called critical if the differential $dS\colon T_pX \to \R$ vanishes. In that case, there is a coordinate-independent bilinear form $H_pS$ called the \emph{Hessian} of $S$ at $p$. With respect to any coordinate system it takes the form 
\begin{equation}
(H_pS)_{ij} =\restr{ \frac{\de^2S}{\de x^i \de y^j}}{p}
\end{equation}
Under a change of coordinates it transforms as a bilinear form on $T_pX$. In particular, its determinant is not well-defined but depends upon choice of coordinates\footnote{ Under coordinate changes, its determinant changes with the square of the determinant of the coordinate change, as opposed to the determinant of an endomorphism, which is invariant under change of coordinates.}. The higher derivatives, however, do not transform tensorially, but rather as jets\footnote{In the usual tautological way, jets are defined to be section of bundles that transform like derivatives of a function under changes of coordinates.}. We say that $p$ is a non-degenerate critical point of $S$ if the Hessian is non-degenerate. \\ 
Also, let us briefly explain the notation for asymptotic behaviour: Namely, we define 
\begin{equation}
f(\hbar) \simeq_{\hbar \to 0} O(\hbar^N) \colon\Leftrightarrow \exists C_N\in \R, \left|\frac{f(\hbar)}{\hbar^N}\right| \leq C_N
\end{equation}
Similarly one defines 
\begin{equation}
f(\hbar) \simeq_{\hbar \to 0} g(\hbar) + O(\hbar^N) \colon \Leftrightarrow f(\hbar) - g(\hbar) \simeq_{\hbar \to 0} O(\hbar^N).
\end{equation}
\subsection{Stationary phase formula}
The first result is the following theorem, also known as the \emph{stationary phase formula}. 
\begin{thm}\label{thm:stationaryphase}
Let $X$ be a compact manifold with $\dim X = n$ and $S\colon X \to \R$ have finitely many non-degenerate critical points. Then we have
\begin{equation}
I(\hbar)\colon=\int_X  e^{\frac{\ii}{\hbar}S}\mu \simeq_{\hbar \to 0} (2\pi\hbar)^{\frac{n}{2}} \sum_{x_0 \in Crit(S)}e^{\frac{\ii}{\hbar}S(x_0)} \frac{e^{\frac{\ii\pi}{4}\mathrm{sign} H_{x_0}S}}{|\det H_{x_0}S|^{\frac12}}\mu_{x_0} + O(\hbar^{1+n/2})\label{eq:stationaryphase}
\end{equation}
where one chooses coordinates $y^1,\ldots,y^n$ around $x_0$ to define the determinant of the Hessian  and $\mu_{x_0}$ by $\mu(x_0) = \mu_{x_0}dy^1 \wedge \ldots \wedge dy^n$. 
\end{thm}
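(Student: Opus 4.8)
The plan is to localize the integral with a partition of unity, dispose of the part supported away from $Crit(S)$ by the principle of non-stationary phase, and at each critical point reduce to a model quadratic integral by the Morse lemma and evaluate it --- with remainder --- via Plancherel.

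First I would fix, for each $x_0\in Crit(S)$, a chart $(U_{x_0},y)$ around $x_0$ containing no other critical point, together with a bump function $\rho_{x_0}\in C_c^\infty(U_{x_0})$ that is $\equiv 1$ near $x_0$; put $\rho_\infty := 1-\sum_{x_0}\rho_{x_0}$, which is supported where $dS\neq 0$. This splits $I(\hbar) = \sum_{x_0\in Crit(S)} I_{x_0}(\hbar) + I_\infty(\hbar)$ with $I_{x_0} = \int_X \rho_{x_0}e^{\frac{\ii}{\hbar}S}\mu$ and $I_\infty = \int_X\rho_\infty e^{\frac{\ii}{\hbar}S}\mu$. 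For $I_\infty$, fix an auxiliary Riemannian metric and consider the first-order operator $Lf := \frac{\hbar}{\ii\,\|dS\|^2}(\mathrm{grad}\,S)(f)$, which is well-defined on $\operatorname{supp}\rho_\infty$ and satisfies $L\,e^{\frac{\ii}{\hbar}S} = e^{\frac{\ii}{\hbar}S}$. Integrating by parts $N$ times transfers $N$ copies of the formal adjoint $L^\dagger$ (with respect to $\mu$) onto $\rho_\infty\mu$; since each $L^\dagger$ carries a factor of $\hbar$ and $\operatorname{supp}\rho_\infty$ is compact, this gives $|I_\infty(\hbar)|\le C_N\hbar^N$ for every $N$, so $I_\infty = O(\hbar^N)$ for all $N$ and in particular $I_\infty = O(\hbar^{1+n/2})$.

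For each local piece, I would apply the Morse lemma at $x_0$: there are coordinates $z = (z^1,\dots,z^n)$ centered at $x_0$ in which $S = S(x_0) + \tfrac12\sum_i\epsilon_i(z^i)^2$ with $\epsilon_i = \pm 1$. Writing $\mu = a(z)\,dz^1\wedge\cdots\wedge dz^n$ with $\rho_{x_0}$ absorbed into $a$ (so $a\in C_c^\infty$ and $a$ coincides with the density near $z=0$), we get $I_{x_0}(\hbar) = e^{\frac{\ii}{\hbar}S(x_0)}\int_{\R^n} a(z)\,e^{\frac{\ii}{2\hbar}\sum_i\epsilon_i(z^i)^2}\,dz$. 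I would compute the remaining oscillatory integral by Plancherel. The Fourier transform of $z\mapsto e^{\frac{\ii}{2\hbar}\sum_i\epsilon_i(z^i)^2}$ --- a Gaussian with purely imaginary exponent, given rigorous meaning by the regularization $e^{\frac{\ii}{2\hbar}Q(z)-\varepsilon|z|^2}$ and contour rotation, i.e.\ the coordinatewise one-dimensional Fresnel integral $\int_\R e^{\pm\frac{\ii}{2\hbar}t^2}\,dt = \sqrt{2\pi\hbar}\,e^{\pm\frac{\ii\pi}{4}}$ --- equals $(2\pi\hbar)^{n/2}e^{\frac{\ii\pi}{4}\sum_i\epsilon_i}\,e^{-\frac{\ii\hbar}{2}\sum_i\epsilon_i\xi_i^2}$ in the standard normalization. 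Hence $\int a(z)e^{\frac{\ii}{2\hbar}Q(z)}dz = (2\pi\hbar)^{n/2}e^{\frac{\ii\pi}{4}\sum_i\epsilon_i}\int\hat a(\xi)\,e^{-\frac{\ii\hbar}{2}\sum_i\epsilon_i\xi_i^2}\,d\xi$, and Taylor-expanding $e^{-\frac{\ii\hbar}{2}\sum\epsilon_i\xi_i^2} = 1 + O(\hbar|\xi|^2)$ while using that $\hat a$ is Schwartz (so $\int|\hat a(\xi)|(1+|\xi|^2)\,d\xi<\infty$) yields $\int\hat a(\xi)e^{-\frac{\ii\hbar}{2}\sum\epsilon_i\xi_i^2}d\xi = c\,a(0) + O(\hbar)$ for the Fourier-inversion constant $c$. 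Therefore $I_{x_0}(\hbar) = e^{\frac{\ii}{\hbar}S(x_0)}(2\pi\hbar)^{n/2}e^{\frac{\ii\pi}{4}\sum_i\epsilon_i}\,a(0) + O(\hbar^{1+n/2})$, the powers of $2\pi$ in $c$ cancelling as they must.

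Finally I would convert the local data into the invariant quantities in \eqref{eq:stationaryphase}. Let $J$ be the Jacobian at $x_0$ of the Morse coordinates $z$ with respect to the chosen coordinates $y$; differentiating $S = S(x_0)+\tfrac12 z^\top D z$ twice, where $D$ is the diagonal matrix with entries $\epsilon_i$, gives $H_{x_0}S = J^\top D J$, hence $|\det H_{x_0}S|^{1/2} = |\det J|$ and $\mathrm{sign}\,H_{x_0}S = \sum_i\epsilon_i$, while the change-of-density formula gives $a(0) = \mu_{x_0}/|\det J|$. Substituting reproduces each summand of \eqref{eq:stationaryphase}, and $I_\infty = O(\hbar^{1+n/2})$ finishes the proof. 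The main obstacle I anticipate is not any single step but the uniform control of the error term: the complex Gaussian must be treated via regularization rather than as an absolutely convergent integral, and --- crucially --- the $\hbar$-order correction cannot be bounded by absolute values (that only produces $O(1)$) but has to be extracted from the oscillation, which is exactly what Plancherel together with the Schwartz decay of $\hat a$ achieves; ensuring that all these errors, across the finitely many critical points, genuinely collect into a single $O(\hbar^{1+n/2})$ is the technical heart of the argument.
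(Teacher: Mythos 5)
Your proof is correct, but it takes a genuinely different route from the notes in the local analysis at the critical points. The notes never invoke the Morse lemma: they Taylor-expand $S$ about $x_0$, keep the quadratic part $\tfrac12 (H_{x_0}S)_{ij}y^iy^j$ as the phase, absorb the cubic-and-higher remainder $P(y)$ into the amplitude, and then combine Lemma \ref{lem:quadcritpt} (subtracting the zeroth Taylor term of the amplitude costs $O(\hbar^{1+n/2})$) with the Fresnel integral Lemma \ref{lem:Fresnel} to extract the leading term; you instead straighten the phase exactly by the Morse lemma, push all nonlinearity into the amplitude $a$, and evaluate $\int a\,e^{\frac{\ii}{2\hbar}Q}$ by Plancherel, expanding $e^{-\frac{\ii\hbar}{2}\sum_i\epsilon_i\xi_i^2}=1+O(\hbar|\xi|^2)$ against the Schwartz function $\hat a$. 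Similarly, your non-stationary phase estimate integrates by parts directly on $X$ with the operator built from $\mathrm{grad}\,S/\|dS\|^2$, whereas the notes fiber over the level sets of $S$ and reduce to the one-dimensional Lemma \ref{lem:statphaseR}; both are standard and equivalent. Your route is cleaner at leading order: it never has to handle an $\hbar$-dependent amplitude (in the notes' proof the amplitude $g=\rho\psi_\alpha e^{\frac{\ii}{\hbar}P(y)}$ depends on $\hbar$, a point that is glossed over there), and your final bookkeeping $H_{x_0}S=J^\top D J$, $|\det H_{x_0}S|^{1/2}=|\det J|$, $a(0)=\mu_{x_0}/|\det J|$ is exactly what is needed to land on the invariant formula \eqref{eq:stationaryphase}. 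What the notes' route buys instead is that it keeps the Taylor coefficients of $S$ in the original coordinates explicit, which is precisely the structure reused for the all-order expansion and the Feynman-diagram bookkeeping (Lemma \ref{lem:inftyapprox} and the following sections); in Morse coordinates those vertex tensors are hidden inside the amplitude, so your argument, while perfectly adequate for the theorem as stated, would have to be reorganized (e.g.\ by expanding $\hat a$ to higher order, which gives derivatives of $a$ rather than Taylor coefficients of $S$) to produce the diagrammatic higher-order corrections in the form the notes want.
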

Note that the expression on the right hand side is independent of the choice of coordinates: The signature $\mathrm{sign}H_{x_0}$ is independent under coordinate change, and the changes in $\det H_{x_0}$ and $\mu_{x_0}$ cancel. \\
The proof of this theorem follows from a sequence of lemmata. 
\begin{lem}[Fresnel integrals]\label{lem:Fresnel}
Let $Q$ be a symmetric bilinear form on $\R^n$, and $Q_0$ any positive definite symmetric bilinear form. Then 
\begin{equation}
\lim_{\varepsilon \to 0} \int_{\R^n} d^nx e^{iQ(x,x) - \varepsilon Q_0(x,x)} = \pi^{n/2}\frac{e^{\frac{\ii\pi}{4}\mathrm{sign}Q}}{|\det Q|^{1/2}}
\end{equation}
\end{lem}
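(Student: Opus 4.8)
The plan is to establish the Fresnel integral formula in three stages: first the one-dimensional case, then the diagonal multi-dimensional case via Fubini, and finally the general case by orthogonal diagonalization of $Q$.

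First I would treat the one-dimensional case $\int_\R e^{\ii a x^2 - \varepsilon x^2}dx$ for $a \in \R \setminus \{0\}$. Setting $z = \varepsilon - \ii a$ (so $\mathrm{Re}\, z > 0$), this is the standard Gaussian integral $\int_\R e^{-z x^2}dx = \sqrt{\pi/z}$, where $\sqrt{\cdot}$ is the principal branch. The only subtlety is tracking which branch of the square root one lands on as $\varepsilon \to 0$: for $a > 0$ we have $z \to -\ii|a|$, which lies on the negative imaginary axis, so $\sqrt{\pi/z} \to \sqrt{\pi/|a|}\, e^{\ii\pi/4}$; for $a < 0$ one gets the conjugate $e^{-\ii\pi/4}$. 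In both cases this is $\sqrt{\pi}\, e^{\frac{\ii\pi}{4}\mathrm{sign}(a)}/|a|^{1/2}$, which is exactly the claimed formula with $n=1$ since $\mathrm{sign}$ of the form is $\mathrm{sign}(a)$ and $|\det Q| = |a|$. The justification that $\int_\R e^{-zx^2}dx = \sqrt{\pi/z}$ for $\mathrm{Re}\, z > 0$ follows from analytic continuation from the real-positive case, using dominated convergence (the integrand is bounded by $e^{-\varepsilon x^2}$) to get continuity up to the boundary.

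Next, for a diagonal form $Q(x,x) = \sum_j a_j x_j^2$ with all $a_j \neq 0$ (and $Q_0$ replaced by the standard form, which is legitimate by the argument below), the integral factors as a product of one-dimensional integrals by Fubini, and the exponents add: $\prod_j \sqrt{\pi}\, e^{\frac{\ii\pi}{4}\mathrm{sign}(a_j)}/|a_j|^{1/2} = \pi^{n/2} e^{\frac{\ii\pi}{4}\sum_j \mathrm{sign}(a_j)}/\prod_j|a_j|^{1/2} = \pi^{n/2} e^{\frac{\ii\pi}{4}\mathrm{sign}\,Q}/|\det Q|^{1/2}$. For a general nondegenerate symmetric $Q$, diagonalize it by an orthogonal transformation $x = Ov$ with $O \in O(n)$; then $Q(Ov, Ov) = \sum_j a_j v_j^2$ where the $a_j$ are the eigenvalues, $|\det Q| = \prod|a_j|$ is unchanged, $\mathrm{sign}\,Q = \sum\mathrm{sign}(a_j)$ by Sylvester's law of inertia, and the Lebesgue measure $d^n x = d^n v$ is preserved since $|\det O| = 1$; crucially the regulator term $Q_0(x,x) = Q_0(Ov,Ov)$ remains positive definite, so after the change of variables we are in the diagonal case (with a possibly different but still positive definite regulator — and the final step handles regulator-independence).

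The one remaining point is that the limit does not depend on the choice of positive definite $Q_0$. I would handle this by interpolating: for two choices $Q_0, Q_1$, write $Q_t = (1-t)Q_0 + tQ_1$, still positive definite, and show the $\varepsilon \to 0$ limit is independent of $t$, e.g. by noting that after diagonalizing $Q$ the difference can be absorbed and estimated, or more slickly by a two-parameter argument sending $\varepsilon \to 0$ along $\varepsilon Q_t$. I expect this regulator-independence to be the main technical nuisance rather than the branch-tracking, which is the conceptual heart of the formula; neither is deep, but the regulator step is where one must be careful not to interchange limits illegitimately. Everything else is bookkeeping with the principal branch of the square root and Sylvester's law of inertia.
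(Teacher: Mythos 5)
Your overall route---the one-dimensional Fresnel integral by analytic continuation of the Gaussian formula with branch tracking, then reduction of the general case to a product of one-dimensional integrals by a linear change of variables---is exactly what the paper has in mind (its ``proof'' is the one-line remark that the lemma follows from the one-dimensional case by complex analysis plus a change of coordinates), and your one-dimensional computation and the bookkeeping with $\mathrm{sign}\,Q$ and $|\det Q|$ via Sylvester's law are correct.

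The one genuine hole is the regulator. After an orthogonal diagonalization of $Q$ the form $Q_0$ is in general no longer diagonal, so Fubini does not apply, and you defer this to a separate regulator-independence argument. As sketched, that argument does not go through: interpolating $Q_t=(1-t)Q_0+tQ_1$ and differentiating in $t$ produces $-\varepsilon\int_{\R^n}(Q_1-Q_0)(x,x)\,e^{\ii Q(x,x)-\varepsilon Q_t(x,x)}\,d^nx$, and the crude bound $\varepsilon\int_{\R^n}|x|^2e^{-\varepsilon Q_t(x,x)}d^nx\sim \varepsilon^{-n/2}$ blows up, so ``absorbed and estimated'' actually requires exploiting the oscillatory phase (integration by parts against the nondegenerate $Q$, or the holomorphy of $Z\mapsto\int e^{-x^TZx}d^nx$ on matrices with positive definite real part); it is not mere bookkeeping. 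The cleaner fix stays inside your own scheme: instead of an orthogonal diagonalization of $Q$, use a congruence $x=Tv$ that simultaneously takes $Q_0$ to the standard form and $Q$ to a diagonal form $D$ (possible because $Q_0$ is positive definite). Then the integral factors exactly as in your diagonal case with the standard regulator; the Jacobian $|\det T|$ cancels against $|\det D|^{1/2}=|\det T|\,|\det Q|^{1/2}$, and $\mathrm{sign}\,D=\mathrm{sign}\,Q$ by Sylvester, so the stated formula for an arbitrary positive definite $Q_0$ drops out with no separate regulator-independence lemma.
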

The proof of this lemma follows from the one-dimensional case, which is proven by standard complex analysis, together with a change of coordinates. We interpret the left-hand side as the definition of $\int_{\R^n}e^{iQ(x,x)}$ (which is only conditionally convergent). 
\begin{lem}\label{lem:statphaseR}
Let $g \in C_c^\infty(\R)$ and consider $I(\hbar) = \int_\R g(x)e^{\frac{\ii}{\hbar}x}dx$.
Then \begin{equation}I(\hbar) \simeq_{\hbar \to 0} O(\hbar^\infty) \Leftrightarrow I(\hbar) \simeq_{\hbar \to 0}  O(\hbar^N) \forall N \in \N.
\end{equation}
\end{lem}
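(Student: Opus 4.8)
The plan is to establish the genuinely substantive half of the equivalence --- that $I(\hbar)$ decays faster than any power of $\hbar$ --- by repeated integration by parts, the other half being tautological once one recalls that $O(\hbar^\infty)$ \emph{means} $O(\hbar^N)$ for every $N$.

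First I would exploit that $e^{\frac{\ii}{\hbar}x} = -\ii\hbar\,\frac{d}{dx}e^{\frac{\ii}{\hbar}x}$, so that one integration by parts --- with vanishing boundary term, since $g$ has compact support --- gives
\begin{equation}
I(\hbar) = -\ii\hbar\int_\R g(x)\,\frac{d}{dx}e^{\frac{\ii}{\hbar}x}\,dx = \ii\hbar\int_\R g'(x)\,e^{\frac{\ii}{\hbar}x}\,dx.
\end{equation}
Iterating this $N$ times yields
\begin{equation}
I(\hbar) = (\ii\hbar)^N\int_\R g^{(N)}(x)\,e^{\frac{\ii}{\hbar}x}\,dx,
\end{equation}
and hence $|I(\hbar)| \le \hbar^N\,\|g^{(N)}\|_{L^1(\R)} =: C_N\,\hbar^N$. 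Since $g \in C_c^\infty(\R)$ forces each derivative $g^{(N)}$ to be smooth with compact support, the constant $C_N$ is finite for every $N$, so $I(\hbar) \simeq_{\hbar\to 0} O(\hbar^N)$ for all $N \in \N$ --- which is exactly the right-hand side of the asserted equivalence, and therefore also the left-hand side.

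I do not expect any real obstacle: the only points needing a word are that the boundary terms vanish (immediate from $\mathrm{supp}\,g$ compact) and that the $C_N$ are finite (same reason together with smoothness). Conceptually, the lemma is just the statement that, under the substitution $\xi = 1/\hbar$, $I(\hbar)$ is the Fourier transform of the compactly supported smooth function $g$ evaluated at $\xi$, which is rapidly decreasing; the integration-by-parts computation above is simply the standard proof of that fact written out explicitly.
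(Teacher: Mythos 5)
Your proof is correct and is essentially the paper's own argument: integrate by parts $N$ times (boundary terms vanish by compact support of $g$) to get $|I(\hbar)|\le \hbar^N\int_\R|g^{(N)}(x)|\,dx$, hence $O(\hbar^N)$ for every $N$. Your remark that the stated equivalence itself is just the definition of $O(\hbar^\infty)$, with the real content being the rapid decay, matches how the paper treats the lemma as well.
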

The proof is just integration by parts: 
\begin{proof}
Let $N \in \N$, then 
$$\left|\frac{I(\hbar)}{\hbar^N}\right| = \left|\int_\R g(x)\left(-i\frac{d}{dx}\right)^Ne^{\frac{\ii}{\hbar}x}dx\right| \leq \int_\R \left|g^{(N)}(x)\right|dx =\colon C_N.$$ 
\end{proof}
Here we are using the ``phase function'' (the multiplier of $\frac{\ii}{\hbar}$ in the exponent) $f(x) = x$, which has no critical points on  $\R$. The conclusion is that the integral vanishes faster than any power of $\hbar$. This is true also in the multi-dimensional case: 
\begin{lem}\label{lem:nocritpts}
Let $g\in C_c^\infty(\R^n)$ and $f \in C^\infty(\R^n)$ such that $f$ has no critical points on the support of $g$. Then 
\begin{equation}
\int_{\R^n}d^nx g(x) e^{\frac{\ii}{\hbar}f(x)} \simeq_{\hbar \to 0} O(\hbar^\infty)
\end{equation}
\end{lem}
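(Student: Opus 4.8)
The plan is to carry over to several variables the integration-by-parts argument already used in the proof of Lemma \ref{lem:statphaseR}. The crucial observation is that since $f$ has no critical points on the compact set $\operatorname{supp}(g)$, there is an open neighbourhood $U \supset \operatorname{supp}(g)$ on which $\nabla f \neq 0$, and on $U$ the first-order differential operator
\[
L = \frac{-\ii\hbar}{|\nabla f|^2}\sum_{j=1}^n (\de_j f)\, \de_j
\]
is well-defined, where $|\nabla f|^2 = \sum_j (\de_j f)^2$; if one prefers a globally defined operator one multiplies the coefficient by a cutoff $\chi \in C_c^\infty(U)$ with $\chi \equiv 1$ on $\operatorname{supp}(g)$. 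A one-line computation shows $L\, e^{\frac{\ii}{\hbar}f} = e^{\frac{\ii}{\hbar}f}$, which is the multidimensional replacement for the fact that $x$ has no critical points in the one-variable case.

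First I would use this identity to rewrite the integral: since $g$ is compactly supported inside $U$, integration by parts produces no boundary terms and gives
\[
\int_{\R^n} d^nx\, g(x)\, e^{\frac{\ii}{\hbar}f(x)} = \int_{U} g\, L\!\left(e^{\frac{\ii}{\hbar}f}\right) = \int_U (L^{t} g)\, e^{\frac{\ii}{\hbar}f},
\]
where $L^t g = \ii\hbar \sum_j \de_j\!\big( (\de_j f)\, |\nabla f|^{-2}\, g \big)$ is again smooth and supported in $\operatorname{supp}(g)$. The point is that $L^t$ carries exactly one explicit factor of $\hbar$, so $L^t = \hbar\, \tilde L$ for an $\hbar$-independent first-order operator $\tilde L$ with smooth coefficients on $U$.

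Next I would iterate this $N$ times, obtaining
\[
\int_{\R^n} d^nx\, g(x)\, e^{\frac{\ii}{\hbar}f(x)} = \hbar^N \int_U \big(\tilde L^{N} g\big)\, e^{\frac{\ii}{\hbar}f},
\]
and since $\tilde L^N g \in C_c^\infty(U)$ is independent of $\hbar$, bounding the exponential by $1$ yields $\big|\int g\, e^{\frac{\ii}{\hbar}f}\big| \leq \hbar^N \int_U |\tilde L^N g| =: C_N \hbar^N$. As $N \in \N$ is arbitrary, this is precisely $\simeq_{\hbar\to 0} O(\hbar^\infty)$.

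There is no serious obstacle; the only points needing a little care are the well-definedness of the coefficient $|\nabla f|^{-2}$ (handled by restricting to $U$ or inserting the cutoff $\chi$) and the vanishing of the boundary terms at each stage, which holds because $g$ — and hence every $\tilde L^k g$ — is compactly supported in $U$.
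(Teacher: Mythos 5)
Your proof is correct, but it takes a genuinely different route from the one in the notes. You run the standard non-stationary phase argument directly in $n$ variables: since $\nabla f\neq 0$ on a neighbourhood $U$ of $\operatorname{supp}(g)$, the operator $L=\frac{-\ii\hbar}{|\nabla f|^2}\sum_j(\de_j f)\de_j$ reproduces the exponential, and repeated integration by parts (with no boundary terms, since $g$ and hence each $\tilde L^k g$ stays supported in $\operatorname{supp}(g)$) extracts an arbitrary power of $\hbar$; the coefficients of $L^t$ are smooth on $U$, so each $\tilde L^N g$ is an $\hbar$-independent element of $C_c^\infty$, giving the bound $C_N\hbar^N$ for every $N$. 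The notes instead reduce to the one-dimensional Lemma \ref{lem:statphaseR} by fibering $\R^n$ over $\R$ by the level sets $f^{-1}(y)$ and writing the integral as $\int_\R e^{\frac{\ii}{\hbar}y}\,h(y)\,dy$ with $h(y)$ an integral of $g$ over the fiber. That reduction is quick given the 1D lemma, but it quietly uses the coarea formula (the correct fiber integrand carries a factor $1/|\nabla f|$), requires checking that $h$ is smooth and compactly supported in $y$, and begins with a ``w.l.o.g.\ no critical points at all'' step. Your argument is more self-contained and avoids those points of care, at the price of introducing the operator $L$ and its transpose explicitly; it is the proof one finds in H\"ormander-style treatments, and it generalizes verbatim to the setting of Lemma \ref{lem:quadcritpt}-type estimates. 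Both establish the statement as defined in the notes (a bound $C_N\hbar^N$ for every $N$).
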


\begin{proof}
W.l.o.g. we can assume that $f$ has no critical points at all. Thus the sets $f^{-1}(y), y \in \R$ are embedded submanifods in $\R^n$ and we can rewrite the integral as 
$$\int d^nx g(x) e^{\frac{\ii}{\hbar}f(x)} = \int_{\R}dy e^{\frac{\ii}{\hbar}y}\int_{f^{-1}(y)}g(x)dvol_{f^{-1}(y)}(x),$$
which brings us in the situation of the previous lemma. 

\end{proof}
Finally, let us look at the situation where there is a single non-degenerate critical point. This can be described by the following situation: 
\begin{lem}\label{lem:quadcritpt}
Let $Q$ be a non-degenerate quadratic form on $\R^n$ and $g \in S(\R^n)$. Let $g_N$ be the $N$-th Taylor expansion of $g$. Define 
$$I(\hbar) := \int_{\R^n} (g(x) - g_N(x)) e^{\frac{\ii}{\hbar}Q(x,x)}.$$ 
Then 
\begin{equation}
I(\hbar) \simeq_{\hbar \to 0} O(\hbar^{n/2 + \lfloor \frac{N+2}{2}\rfloor})
\end{equation}
\end{lem}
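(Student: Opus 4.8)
The plan is to reduce everything to the model case by the substitution $x=\sqrt\hbar\,y$, and then to combine a Taylor expansion with a parity argument, the only delicate point being the uniform control of the resulting oscillatory integrals (all of which are understood, as in Lemma~\ref{lem:Fresnel}, via the $\varepsilon$-regularization, since the amplitudes here grow polynomially). After the rescaling, $e^{\frac{\ii}{\hbar}Q(x,x)}$ becomes $e^{\ii Q(y,y)}$ and a Jacobian $\hbar^{n/2}$ appears, so writing $R_N=g-g_N$ for the Taylor remainder at the origin we get
\[
I(\hbar)=\hbar^{n/2}\int_{\R^n}R_N(\sqrt\hbar\,y)\,e^{\ii Q(y,y)}\,d^ny .
\]
I would split $R_N=P_{N+1}+R_{N+1}$, where $P_{N+1}$ is the homogeneous degree-$(N+1)$ component of the Taylor polynomial of $g$ at $0$ and $R_{N+1}=g-g_{N+1}$. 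By homogeneity the $P_{N+1}$-term equals $\hbar^{n/2+(N+1)/2}\int_{\R^n}P_{N+1}(y)e^{\ii Q(y,y)}d^ny$; choosing an even regulator (e.g.\ $Q_0(y,y)=|y|^2$) and using $y\mapsto -y$, the moment $\int_{\R^n}y^\alpha e^{\ii Q(y,y)}d^ny$ vanishes whenever $|\alpha|$ is odd, so this term vanishes identically when $N$ is even and is $O(\hbar^{n/2+(N+1)/2})$ when $N$ is odd. For the remainder, the integral form of Taylor's theorem gives $R_{N+1}(z)=\sum_{|\beta|=N+2}z^\beta k_\beta(z)$ with $k_\beta(z)=\tfrac{N+2}{\beta!}\int_0^1(1-t)^{N+1}(\partial^\beta g)(tz)\,dt$; since $g$ is Schwartz each $k_\beta$ is smooth with all derivatives bounded, and after rescaling the $R_{N+1}$-term becomes $\hbar^{n/2+(N+2)/2}\sum_{|\beta|=N+2}\int_{\R^n}y^\beta k_\beta(\sqrt\hbar\,y)e^{\ii Q(y,y)}d^ny$.

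It then remains to bound, uniformly in $\hbar\in(0,1]$, an oscillatory integral $J(\hbar)=\int_{\R^n}y^\beta k(\sqrt\hbar\,y)e^{\ii Q(y,y)}d^ny$ with $k$ and all its derivatives bounded. Here I would argue by non-stationary phase: fix $\chi\in C_c^\infty(\R^n)$ with $\chi\equiv 1$ near $0$; on $\operatorname{supp}\chi$ the integrand is continuous with compact support, so that contribution is trivially bounded by $\|y^\beta\chi\|_{L^1}\sup|k|$, uniformly in $\hbar$. On $\operatorname{supp}(1-\chi)$ one has $\nabla Q(y,y)=2Ay\neq 0$ (writing $Q(y,y)=\langle Ay,y\rangle$ with $A$ symmetric and invertible), so the first-order operator $L=\tfrac{1}{2\ii}\,|Ay|^{-2}\langle Ay,\nabla\rangle$ satisfies $L\,e^{\ii Q(y,y)}=e^{\ii Q(y,y)}$; integrating by parts $M$ times, inside the $\varepsilon$-regularization where it is legitimate, transfers $L^t$ onto $y^\beta(1-\chi)\,k(\sqrt\hbar\,y)$. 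Each transfer either lowers the degree of the polynomial factor, or improves the decay by one power of $|y|$ when it lands on the coefficient $|Ay|^{-2}\langle Ay,\cdot\rangle$, or produces a factor $\sqrt\hbar\,(\partial k)(\sqrt\hbar\,y)$ bounded by $\|\nabla k\|_\infty$ when it lands on $k(\sqrt\hbar\,y)$, while terms where it lands on $1-\chi$ are compactly supported and harmless; taking $M>|\beta|+n$ makes the integrand absolutely integrable with a bound independent of $\hbar\in(0,1]$, and the terms where $L^t$ meets the regulator $e^{-\varepsilon|y|^2}$ carry a factor $O(\sqrt\varepsilon)$ and disappear in the limit. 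Hence $J(\hbar)=O(1)$.

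Assembling the pieces: if $N$ is even the $P_{N+1}$-term drops out and $I(\hbar)=O(\hbar^{n/2+(N+2)/2})$, while if $N$ is odd $I(\hbar)=O(\hbar^{n/2+(N+1)/2})+O(\hbar^{n/2+(N+2)/2})=O(\hbar^{n/2+(N+1)/2})$; in both cases this is $O(\hbar^{n/2+\lfloor(N+2)/2\rfloor})$, as claimed. The rescaling and the Taylor bookkeeping are routine; the genuine obstacle is the uniform bound on $J(\hbar)$ — controlling a conditionally convergent oscillatory integral whose amplitude $k(\sqrt\hbar\,y)$ spreads out as $\hbar\to 0$ — which is exactly where the non-stationary-phase estimate away from the single critical point of $Q$, together with the bookkeeping for the Fresnel regulator, does the work.
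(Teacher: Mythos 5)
Your proposal is correct, but it takes a genuinely different route from the notes. The proof given here integrates by parts directly in $x$, using an operator $D$ with $De^{\frac{\ii}{\hbar}Q(x,x)}=\frac1\hbar e^{\frac{\ii}{\hbar}Q(x,x)}$ whose transpose carries a $1/|x|$-type singularity at the origin; the gain per step is then limited by local integrability of $|x|^{N+1-2m}$ near $0$, which yields $m<\frac{N+n+1}{2}$ and only ``almost'' the stated exponent, with the endpoint (the floor) deferred to \cite{Mnev2017}. You instead rescale $x=\sqrt\hbar\,y$, split the Taylor remainder as the homogeneous degree-$(N+1)$ piece plus the order-$(N+2)$ remainder, kill the borderline piece by parity of the (regularized) Fresnel moments when $N$ is even --- which is precisely what produces the $\lfloor\frac{N+2}{2}\rfloor$ improvement --- and bound the remainder uniformly by non-stationary phase away from the single critical point of $Q$, with the $\varepsilon$-regulator making the repeated integrations by parts legitimate. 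Your approach buys the sharp exponent in one self-contained argument and localizes all the analysis at infinity (where the phase is non-stationary), at the cost of more bookkeeping for the conditionally convergent integrals; the notes' approach is shorter and works scale-by-scale at the origin, but as written it does not reach the floor. Two small points of care in your write-up: when the transferred derivative lands on $k(\sqrt\hbar\,y)$ you gain no decay from the derivative itself, so you should say explicitly that the homogeneity $-1$ of the coefficient $|Ay|^{-2}Ay$ guarantees at least one power of $|y|^{-1}$ per application of $L^t$ no matter where the derivative falls (this is what justifies $M>|\beta|+n$); and the parity argument uses a specific even regulator, which is fine since the regularized value is (and is here taken to be) regulator-independent.
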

\begin{proof}
The proof goes again through integration by parts. Let $Q(x,x)= \frac12 Q_{ij}x^ix^j$, then  
$$\frac{\de }{\de x^j}e^{\frac{\ii}{\hbar}Q(x,x)} = \left(\frac{2\ii}{\hbar}\right) x^iQ_{ij}e^{\frac{\ii}{\hbar}Q(x,x)}.$$
We conclude that $$D = \frac{-\ii}{2}(Q^{-1})^{ij}\frac{1}{x^i}\frac{\de }{\de x^j}$$ satisfies $$De^{\frac{\ii}{\hbar}Q(x,x)} = \frac{1}{\hbar}e^{\frac{\ii}{\hbar}Q(x,x)}$$ and hence 
$$\frac{I(\hbar)}{\hbar^m}= \int_{\R^n}d^nx (g(x)-g_N(x))D^me^{\frac{\ii}{\hbar}Q(x,x)}.$$ To integrate by parts we have to find the operator $D^T$ such that $\int_{\R^n} f Dg d^nx = \int_{\R^n}(D^Tf)g d^nx$, this operator is given by 
$$D^T = \frac{\ii}{2}(Q^{-1})^{ij}\frac{\de }{\de x^j}\frac{1}{x^j}$$
as one immediately verifies. Since $g_N$ is the $N$-th Taylor approximation of $g$, we have $|g(x)-g_N(x)| = C|x|^{N+1}(1 + O(|x|)$, where $C$ is a constant (possibly 0). Applying the operator $D^T$ decreases the power of the absolute value by 2 (once by dividing and once by taking a derivative). Thus 
$$|(D^m)^T(g-g_N)(x)| = C'|x|^{N+1-2m}(1+O(|x|)).$$
It is an elementary exercise (using e.g. polar coordinates) to verify that, if $k \in \Z,\varepsilon>0$ and $n\in \N$, we have that $$\int_{D_\varepsilon(0)}|x|^kd^nx < \infty \Leftrightarrow k > -n.$$
We conclude that $\left|\frac{I(\hbar)}{\hbar^m}\right| \leq C_m$ if $N+1 - 2m >  -n \Leftrightarrow m < \frac{N+n+1}{2}$. This is almost the estimate claimed in the Lemma, for the remaining part (which is not too difficult) we refer to \cite{Mnev2017}.
\end{proof}
We are now ready to prove the stationary phase formula \eqref{eq:stationaryphase}. 
\begin{proof}[Proof of Theorem \ref{thm:stationaryphase}]
To prove the stationary phase formula, cover the manifold $X$ with charts $\{U_\alpha\}$ such that each of the critical points $x_1,\ldots,x_N$ is contained in exactly one chart. Let $\{\psi_{\alpha}\}$ be a partition of unity subordinate to the open cover $\{U_\alpha\}$. Then, we have 
\begin{align*}
\int_Xe^{\frac{\ii}{\hbar}S}\mu &= \sum_{\alpha}\int_{U_\alpha}d^ny \rho_\alpha(y)\psi_\alpha(y) e^{\frac{\ii}{\hbar}S(y)} \\\text{(compact support of $\psi_\alpha$)} &=\sum_{\alpha}\int_{\R^n}d^ny \rho_\alpha(y)\psi_\alpha(y) e^{\frac{\ii}{\hbar}S(y)} \\
\text{Lemma \ref{lem:nocritpts}} &\simeq \sum_{k = 1}^N\int_{\R^n}d^ny \rho_\alpha(y)\psi_{\alpha}(y)\exp\left(\frac{\ii}{\hbar}\left(S(x_k) + \frac12(H_{x_k})_{ij}y^iy^j + P(y)\right)\right) + O(\hbar^\infty)\\
\end{align*}
Here we have denoted $P(y)$ the terms of degree at least 3 in the Taylor series of $S$. 
Applying Lemma \ref{lem:quadcritpt} with $g=\rho\psi_\alpha e^{\frac{\ii}{\hbar}P(y)}$ and  $N=0$ we find 
\begin{align*}
\sum_{k=1}^N \int_{\R^n}d^ny g e^{\frac{\ii}{2\hbar}(H_{x_k}S)_{ij}y^iy^j} &\simeq_{\hbar \to 0} \sum_{k=1}^N e^{\frac{\ii}{\hbar}S(x_k)}\int_{\R^n}d^n(y)\rho(x_0) e^{\frac{\ii}{2\hbar}(H_{x_k}S)_{ij}y^iy^j} + O(\hbar^{1+n/2}) \\
\text{Lemma \ref{lem:Fresnel}}&=\sum_{k=1}^Ne^{\frac{\ii}{\hbar}S(x_k)}(2\pi\hbar)^{n/2}\rho(x_0)\frac{e^{\frac{\ii\pi}{4}}\mathrm{sign}H_{x_k}S}{|\det H_{x_k}S|^{1/2}} + O(\hbar^{1+n/2})
\end{align*}
which concludes the proof.
\end{proof}
\subsection{Higher order corrections}

%\footnote{A personal anecdote: I was first made aware of this fact in an introductory Analysis lecture ny H. Kn\"orrer. At that time I thought it was completely useless. Also, this was the one lecture my wife ever joined me for}.
With just a little extra effort, we can derive a closed formula for the asymptotic behaviour to all orders in $\hbar$. The importance of Fresnel moments becomes obvious from the following Lemma. 
\begin{lem}\label{lem:inftyapprox}
Assume that $S\colon X \to R$ has finitely many non-degenerate critical points $x^1,\ldots,x^N$ and around each critical point there are coordinates $y^1,\ldots,y^n$ in which $\mu = \mu_{x_k}dy^1\wedge \ldots \wedge dy^n$ (with $\mu_{x_k}$ constant). Then 
\begin{equation}
I(\hbar) = \int_Xe^{\frac{\ii}{\hbar}S}\mu = \sum_{k=1}^N\mu_{x_k}e^{\frac{\ii}{\hbar}S(x_k)} \int_{\R^n}d^n y e^{\frac{\ii}{2\hbar}(H_{x_k}S)_{ij}y^iy^j}e^{\frac{\ii}{\hbar}P(y)}  + O(\hbar^\infty)
\end{equation}
\end{lem}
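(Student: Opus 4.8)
The plan is to re-run the first half of the proof of the stationary phase formula (Theorem~\ref{thm:stationaryphase}): I would localize near the critical points with a partition of unity and discard, via Lemma~\ref{lem:nocritpts}, every contribution bounded away from $Crit(S)$, but then — instead of Taylor-expanding and invoking the Fresnel Lemma~\ref{lem:Fresnel} — I would keep the oscillating factor $e^{\frac{\ii}{\hbar}P}$ intact and only show that enlarging each localized domain to all of $\R^n$ costs an error $O(\hbar^\infty)$. Concretely: cover $X$ by finitely many charts with each $x_k$ in exactly one chart $U_k$ on which $\mu=\mu_{x_k}\,dy^1\wedge\cdots\wedge dy^n$ with $\mu_{x_k}$ constant, and with $S$ having no critical point on the charts containing no $x_k$; take a subordinate partition of unity $\{\psi_\alpha\}$ with $\psi_k\equiv 1$ near $x_k$ and $\operatorname{supp}\psi_k\subset U_k$. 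By Lemma~\ref{lem:nocritpts} the non-critical charts contribute $O(\hbar^\infty)$, so $I(\hbar)=\sum_{k=1}^N\mu_{x_k}\int_{U_k}\psi_k(y)\,e^{\frac{\ii}{\hbar}S(y)}\,d^ny+O(\hbar^\infty)$.

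Fixing $k$ and writing $S(y)=S(x_k)+\tfrac12(H_{x_k}S)_{ij}y^iy^j+P(y)$ in the chosen coordinates ($P(y)=O(|y|^3)$), the key step is to extend $S|_{U_k}$ to $\widetilde S_k\in C^\infty(\R^n)$ that agrees with $S$ on $\operatorname{supp}\psi_k$, equals $S(x_k)+\tfrac12(H_{x_k}S)_{ij}y^iy^j$ outside a compact set, and still has $x_k$ as its only critical point — which is possible because near $0$ the linear form $(H_{x_k}S)(y,\cdot)$ has size $\asymp|y|$ while the differential of the cubic-and-higher remainder has size $O(|y|^2)$, so interpolating at a sufficiently small scale creates no spurious critical points. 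After redefining $P:=\widetilde S_k-S(x_k)-\tfrac12(H_{x_k}S)_{ij}y^iy^j$ (now compactly supported and still equal to the original $P$ on $\operatorname{supp}\psi_k$), the $k$-th summand on the right of the lemma is exactly $\mu_{x_k}e^{\frac{\ii}{\hbar}S(x_k)}\int_{\R^n}e^{\frac{\ii}{\hbar}\widetilde S_k(y)}\,d^ny$, with $P$ now part of an honest phase function.

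It then remains to check that $\int_{\R^n}(1-\psi_k(y))\,e^{\frac{\ii}{\hbar}\widetilde S_k(y)}\,d^ny=O(\hbar^\infty)$. I would split $1-\psi_k=(1-\psi_k)\rho+(1-\psi_k)(1-\rho)$ with $\rho\in C_c^\infty(\R^n)$ equal to $1$ on a ball containing $\operatorname{supp}P$. The first piece is compactly supported and $\widetilde S_k$ has no critical point on it (the only one, $x_k$, is excluded), so it is $O(\hbar^\infty)$ by Lemma~\ref{lem:nocritpts}. On $\operatorname{supp}(1-\rho)$ the phase reduces to the quadratic $\tfrac12(H_{x_k}S)_{ij}y^iy^j$ and $\psi_k$ vanishes, so the second piece is $\int_{\R^n}(1-\rho(y))\,e^{\frac{\ii}{2\hbar}(H_{x_k}S)_{ij}y^iy^j}\,d^ny$; rescaling $y=\sqrt{\hbar}\,z$ rewrites it as $\hbar^{n/2}$ times the tail of a conditionally convergent Fresnel integral over $\{|z|\gtrsim\hbar^{-1/2}\}$, which decays faster than any power of $\hbar$ by the integration-by-parts argument behind Lemma~\ref{lem:nocritpts}, since the quadratic phase has nonvanishing differential off the origin. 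Summing over $k$ would give the identity.

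The main obstacle is precisely the $\hbar$-dependence of the amplitude $e^{\frac{\ii}{\hbar}P}$: Lemma~\ref{lem:nocritpts} cannot be applied to it directly, which is why one must absorb $P$ into the phase and hence choose the extension $\widetilde S_k$ carefully, and then deal with the genuinely non-compact behaviour at infinity. That last point — the Fresnel-tail bound — is a mild strengthening of the integration-by-parts estimates already used for Lemmas~\ref{lem:nocritpts} and \ref{lem:quadcritpt}, and is the only ingredient not stated verbatim in the excerpt.
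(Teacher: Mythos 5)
Your argument is correct in substance, but it takes a genuinely different route from the one in the notes. The notes' proof simply reruns the proof of Theorem \ref{thm:stationaryphase} with ``$N=\infty$'' in Lemma \ref{lem:quadcritpt}: the factor $e^{\frac{\ii}{\hbar}P}$ is kept in the \emph{amplitude} $g=\rho\,\psi_\alpha\, e^{\frac{\ii}{\hbar}P}$, whose Taylor series at $x_k$ coincides (up to the constant $\mu_{x_k}$) with that of $e^{\frac{\ii}{\hbar}P}$ because $\rho$ and $\psi_\alpha$ are constant near $x_k$, and the resulting statement is then read formally, feeding directly into the Fresnel-moment expansion of the following proposition. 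You instead absorb $P$ into the \emph{phase}, extend the phase to all of $\R^n$ so that it is exactly quadratic at infinity with $x_k$ as its only critical point, and pay only non-stationary-phase errors: Lemma \ref{lem:nocritpts} on the compactly supported piece, plus a Fresnel-tail estimate at infinity (which is true, provable by the same integration by parts using the operator built from $Q y/|Qy|^2$, and which you rightly flag as the one ingredient not in the text; note also that the $\R^n$-integrals here are only conditionally convergent and should be understood in the regularized sense of Lemma \ref{lem:Fresnel}). What your route buys is exactly the point you identify: it avoids quoting Lemma \ref{lem:quadcritpt} for an $\hbar$-dependent amplitude, which the notes' shorter argument glosses over; what it costs is the extension construction and the tail bound. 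Two small repairs: (i) for the interpolation producing $\widetilde S_k$ without spurious critical points you need the set where $\widetilde S_k$ must agree with $S$, i.e.\ $\operatorname{supp}\psi_k$, to lie in a ball around $x_k$ small enough that $|d(\chi P)|\lesssim\delta\,|y|$ is dominated by $|Q(y,\cdot)|\gtrsim|y|$ — so state explicitly that you shrink $\operatorname{supp}\psi_k$, refining the cover so the complement of small balls around the critical points is covered by charts containing no critical point; (ii) the $P$ in your final identity is your compactly supported modification, not literally the degree-$\geq 3$ Taylor part of $S$ as in the statement — since the latter only makes the right-hand side an honest integral after some such choice, and since different admissible extensions change it by $O(\hbar^\infty)$ (and the subsequent formal expansion sees only the jets of $P$ at $x_k$), this matches the intended content of the lemma, but it deserves a sentence.
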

\begin{proof}
We just repeat the proof of Theorem \ref{thm:stationaryphase} with $N=\infty$. The Taylor series of $g = \rho(y)\psi_\alpha(y)e^{\ii/\hbar P(y)}$ depends only on $P$, since both $\rho$ and $\psi_\alpha$ are constant in a neighbourhood of $x_k$ (again, $P$ is the degree  $\geq 3$ part of the Taylor series of $S$ at $x_k$). 
\end{proof}
Expanding $e^{\ii/\hbar P(y)}$ in a power series, we are hence led to study the Fresnel moments 
$$\int d^n y e^{\frac{\ii}{2\hbar}Q_{ij}y^iy^j}y_{i_1}\ldots y_{i_n}.$$
The standard way to do this is to introduce the generating function 
\begin{equation}
Z^{Q}[J] := \int_{\R^n}e^{\frac{\ii}{2\hbar}Q_{ij}y^iy^j + J_iy^i}d^ny 
\end{equation}
(we often write just $Z[J]$ if $Q$ is understood from the context) which has the property that 
\begin{equation}
\int d^n y e^{\frac{\ii}{2\hbar}Q_{ij}y^iy^j}y_{i_1}\ldots y_{i_n} = \frac{\de }{\de J_{i_1}}\cdots \frac{\de }{\de J_{i_n}}\bigg|_{J=0} Z^{Q}[J]
\end{equation}
Also, $Z[J]$ can be explicitly computed by completing the square: 
\begin{equation}
Z[J] = Z[0]e^{\frac{\hbar}{2\ii}(Q^{-1})^{ij}J_iJ_j}
\end{equation}
and we know $Z^{Q}[0]$ from Lemma \ref{lem:Fresnel}. We thus arrive at the following proposition: 
\begin{prop}
Denote $H_{x_k}S =:Q_k$, $Z_k[J] =:Z^{Q_k}[J]$ and $P_k(y)$ the terms of degree 3 and higher in the Taylor series of $S$. With the assumptions of Lemma \ref{lem:inftyapprox}, we then have
\begin{equation}
I(\hbar) \simeq_{\hbar \to 0} \sum_{k=1}^N \mu_{x_k}e^{\frac{\ii}{\hbar}S(x_k)}Z_k[0]\restr{e^{\frac{\ii}{\hbar}P_k\left(\frac{\de}{\de J}\right)}}{J=0}\frac{Z_k[J]}{Z_k[0]} + O(\hbar^{\infty})\label{eq:asy_exp}
\end{equation}
\end{prop}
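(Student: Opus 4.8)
The plan is to start from Lemma \ref{lem:inftyapprox}, which has already done all of the analytic localization work: up to $O(\hbar^\infty)$ it reduces $I(\hbar)$ to a finite sum, over the critical points $x_k$, of model integrals $\int_{\R^n} d^ny\, e^{\frac{\ii}{2\hbar}(Q_k)_{ij}y^iy^j} e^{\frac{\ii}{\hbar}P_k(y)}$, where $Q_k = H_{x_k}S$ and $P_k$ is the part of the Taylor expansion of $S$ at $x_k$ of degree $\ge 3$. So the entire content of the proposition is to rewrite each such model integral, as a formal power series in $\hbar$, in terms of $Z_k[J]$; everything else is just carrying the prefactors $\mu_{x_k}e^{\frac{\ii}{\hbar}S(x_k)}$ along.

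First I would do the formal computation. Expanding $e^{\frac{\ii}{\hbar}P_k(y)} = \sum_{m\ge 0}\frac{1}{m!}\big(\tfrac{\ii}{\hbar}P_k(y)\big)^m$ turns the model integral into a sum of Fresnel moments $\int e^{\frac{\ii}{2\hbar}(Q_k)_{ij}y^iy^j}\, y_{i_1}\cdots y_{i_l}\, d^ny$; by the defining property of the generating function each of these equals $\frac{\de}{\de J_{i_1}}\cdots\frac{\de}{\de J_{i_l}}\big|_{J=0}Z_k[J]$, and re-summing recognizes the differential operator $e^{\frac{\ii}{\hbar}P_k(\de/\de J)}$. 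Then I would use $Z_k[J] = Z_k[0]\,e^{\frac{\hbar}{2\ii}(Q_k^{-1})^{ij}J_iJ_j}$, obtained by completing the square and recorded just before the proposition, and pull out $Z_k[0]$ (constant in $J$) to land on the right-hand side of \eqref{eq:asy_exp}. I should also check that the right-hand side is a well-defined formal power series in $\hbar$: since $P_k$ has degree $\ge 3$, the $m$-th operator term carries $\hbar^{-m}$ but at least $\lceil 3m/2\rceil$ powers of $\hbar$ come out when its $\ge 3m$ derivatives $\de/\de J$ are paired off against the Gaussian in $J$ before setting $J=0$, so the net power is $\ge m/2$ and only finitely many $m$ contribute at each order in $\hbar$.

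The real work, and the step I expect to be the main obstacle, is justifying that this formal identity is an honest asymptotic expansion, i.e. controlling the remainder; the interchange of the (only conditionally convergent) Fresnel integral with the infinite sum is not legitimate as written. I would handle this exactly as in the proof of Theorem \ref{thm:stationaryphase}: rescale $y = \sqrt{\hbar}\,u$, so the model integral becomes $\hbar^{n/2}\int_{\R^n} d^nu\, e^{\frac{\ii}{2}(Q_k)_{ij}u^iu^j} e^{\frac{\ii}{\hbar}P_k(\sqrt{\hbar}\,u)}$; because $P_k$ vanishes to order $3$, the new exponent $\frac1\hbar P_k(\sqrt{\hbar}\,u)$ is a polynomial in $u$ whose coefficients form a power series in $\hbar^{1/2}$ starting at order $\hbar^{1/2}$. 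I would then expand $e^{\frac{\ii}{\hbar}P_k(\sqrt{\hbar}\,u)}$ in powers of $\hbar^{1/2}$, truncate at a large order --- the truncation being a polynomial in $u$ whose Gaussian integral is a finite combination of Fresnel moments, exactly what the $\de/\de J$ formula produces --- and bound the tail using Lemma \ref{lem:quadcritpt} with quadratic form $Q_k$, which forces it to be $O(\hbar^N)$ once the truncation order is large enough. Finally I would match powers of $\hbar$ between the truncated sum and the right-hand side of \eqref{eq:asy_exp}, observing that the half-integer powers drop out because odd Fresnel moments vanish. The genuine difficulty is pure bookkeeping: aligning the $u$-truncation order with the target order in $\hbar$ so that Lemma \ref{lem:quadcritpt} supplies a high enough power, and keeping track of the $\hbar^{1/2}$'s; since this is the same mechanism already used for the stationary phase formula I anticipate no new analytic input, and for the finer estimate I would defer to \cite{Mnev2017}, as was done for Lemma \ref{lem:quadcritpt} itself.
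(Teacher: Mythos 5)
Your proposal is correct and follows essentially the same route as the paper: the proposition there is presented as an immediate consequence of Lemma \ref{lem:inftyapprox} together with the preceding generating-function discussion (expanding $e^{\frac{\ii}{\hbar}P_k(y)}$, identifying Fresnel moments with $J$-derivatives of $Z_k[J]$, and using $Z_k[J]=Z_k[0]e^{\frac{\hbar}{2\ii}(Q_k^{-1})^{ij}J_iJ_j}$), with no separate proof given. Your additional rescaling/truncation argument controlling the remainder via Lemma \ref{lem:quadcritpt} is sound and simply makes explicit the order-by-order justification that the paper leaves implicit.
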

\begin{rem}
A priori, this series depends on the choice of coordinates around the critical points, because the higher derivatives of the action do. However, one can show that this dependence cancels out in the sum over all diagrams at every order \cite{Johnson-Freyd2010}. 
\end{rem}
\begin{rem}
The limit $\hbar \to 0$ is known as the ``semiclassical'' limit in physics, whereas ``perturbative'' usually refers to taking the coupling constant(s) to 0. At least in the case where there is a single coupling constant, the two expansions are equivalent, as one can see by rescaling the fields with $\sqrt{\hbar}$. 
\end{rem}
\subsection{Feynman diagrams}
To label the terms in \eqref{eq:asy_exp}, the  asymptotic expansion of oscillatory integrals to all orders in $\hbar$, it is convenient to introduce Feynman diagrams. We try to give a self-contained but slightly condensed introduction here, but there are plenty of excellent sources in the literature. A very pedagogical introduction for mathematicians is \cite{Polyak2005}. In this exposition we follow closely \cite{Reshetikhin2010}, \cite{Mnev2017}. In physics the use of Feynman diagrams is usually derived somewhat differently\footnote{See e.g. \cite{Peskin1995} or any other QFT textbook for an account of this.}, but the outcome is completely equivalent. Mathematically, Feynman diagrams label the terms appearing in Gaussian (or Fresnel) moments. The physical interpretation is that they represent processes that happen between particles. The corresponding Gaussian (or Fresnel) moment is interpreted as the probability amplitude of that process.  \\
We thus set out for a graphical representation of $\restr{e^{\frac{\ii}{\hbar}P_k\left(\frac{\de}{\de J}\right)}}{J=0}Z_k[J]$. We briefly introduce some combinatorial terminology: 
\begin{defn}
Let $I$ be a finite set. 
\begin{itemize} 
\item A \emph{partition} of $I$ is a collection $\mathcal{P} = \{I_1,\ldots,I_n\}$ of pairwise disjoint subsets of $I$ such that $I = \bigcup_{k=1}^n I_i$. The set of all partitions of $I$ is denoted $\mathfrak{P}_I$. The set of all partitions of $\{1,\ldots,n\}$ is denoted $\mathfrak{P}_n$
\item A \emph{perfect matching} $m$ on $I$ is a partition of $I$ into two-element subsets. The set of all perfect matchings is denoted $\mathfrak{M}_I$. The set of all perfect matchings of $\{1,\ldots,n\}$ is denoted $\mathfrak{M}_n$.
\end{itemize}
\end{defn}
A perfect matching on $I$ is given by $\mathfrak{m} = \{\{a_1(\m),b_1(\m)\},\ldots,\{a_n(\m),b_n(\m)\}\}$ where $a_i(\m) \neq b_i(\m), \{a_i(\m),b_i(\m)\} \cap \{a_j(\m),b_j(\m)\}= \emptyset$ and $I = \bigcup_{j=1}^n \{a_j(\m),b_j(\m)\}$. Notice that $\mathfrak{M}_n$ is empty if $n$ is odd. 
A central step is the following Lemma often called Wick's Lemma. 
\begin{lem}[Wick's Lemma]\label{lem:Wick}
\begin{equation}
 \restr{\frac{\de }{\de J_{i_1}}\cdots\frac{\de }{ \de J_{i_n}}}{J=0} Z[J]/Z[0] = \begin{cases} 0 &  n \text{ odd} \\
 \left(\frac{\hbar}{\ii})\right)^m \sum_{\m \in \mathfrak{M}_n} (Q^{-1})^{i_{a_1(\m)}i_{b_1(\m)}}\ldots(Q^{-1})^{i_{a_l(\m)}i_{b_l(\m)}}& n = 2m
 \end{cases}
 \end{equation}
\end{lem}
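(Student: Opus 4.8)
The plan is to reduce everything to the closed form $Z[J]/Z[0] = e^{\frac{\hbar}{2\ii}(Q^{-1})^{kl}J_kJ_l}$ already obtained by completing the square, and then to read off its Taylor coefficients. Write $B(J) := \frac{\hbar}{2\ii}(Q^{-1})^{kl}J_kJ_l$, a quadratic form in $J$, so that $Z[J]/Z[0] = \sum_{r\geq 0}\frac{1}{r!}B(J)^r$, with $B(J)^r$ homogeneous of degree $2r$ in $J$. Applying $\frac{\de}{\de J_{i_1}}\cdots\frac{\de}{\de J_{i_n}}$ and setting $J=0$ kills every homogeneous piece except the one of degree exactly $n$; hence the expression vanishes when $n$ is odd, and for $n=2m$ only the $r=m$ term survives. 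This settles the odd case at once and reduces the even case to computing $\restr{\frac{\de}{\de J_{i_1}}\cdots\frac{\de}{\de J_{i_{2m}}}}{J=0}\frac{1}{m!}B(J)^m$.

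Next I would expand $B(J)^m = \left(\frac{\hbar}{2\ii}\right)^m\sum (Q^{-1})^{k_1l_1}\cdots(Q^{-1})^{k_ml_m}\,J_{k_1}J_{l_1}\cdots J_{k_m}J_{l_m}$, viewing each summand as a product of $2m$ linear factors in $J$ organized into $m$ blocks, the $p$-th block being $J_{k_p}J_{l_p}$ with coefficient $(Q^{-1})^{k_pl_p}$. Differentiating $2m$ times and evaluating at $J=0$: by the Leibniz rule the only surviving terms are those in which each operator $\frac{\de}{\de J_{i_a}}$ hits a distinct $J$-slot (a slot hit twice or not at all contributes $0$, since the factors are linear and vanish at $J=0$), and $\frac{\de}{\de J_{i_a}}J_{k_p}=\delta^{i_a}_{k_p}$ forces the corresponding block index to equal $i_a$. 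Thus the answer is a sum over bijections $\tau$ between the $2m$ derivative indices and the $2m$ $J$-slots; each such $\tau$ assigns exactly two derivative indices to the two slots of each block $p$, contributing a factor $(Q^{-1})^{i_{a}i_{b}}$, and hence induces a perfect matching $\m\in\mathfrak{M}_{2m}$ of $\{1,\dots,2m\}$ (two indices are matched iff they land in the same block), with total contribution $\prod_{p=1}^m (Q^{-1})^{i_{a_p(\m)}i_{b_p(\m)}}$.

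The one genuinely non-trivial step — and the only place I expect any friction — is the bookkeeping of multiplicities: for a fixed matching $\m$ there are $m!$ ways to assign its $m$ pairs to the $m$ blocks and $2^m$ ways to order the two indices of a pair across the two slots of its block, and since $Q^{-1}$ is symmetric (it is the inverse of the symmetric form $Q$) all of these $m!\,2^m$ bijections give the same monomial. Therefore $\restr{\frac{\de}{\de J_{i_1}}\cdots\frac{\de}{\de J_{i_{2m}}}}{J=0}B(J)^m = \left(\frac{\hbar}{2\ii}\right)^m m!\,2^m\sum_{\m\in\mathfrak{M}_{2m}}\prod_{p=1}^m (Q^{-1})^{i_{a_p(\m)}i_{b_p(\m)}}$, and dividing by $m!$ collapses the prefactor $\left(\frac{\hbar}{2\ii}\right)^m 2^m$ to $\left(\frac{\hbar}{\ii}\right)^m$, which is exactly the asserted formula. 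As a cross-check I would also note the inductive route: differentiating $Z[J]/Z[0]=e^{B(J)}$ once gives $\frac{\de}{\de J_i}\frac{Z[J]}{Z[0]} = \frac{\hbar}{\ii}(Q^{-1})^{ij}J_j\,\frac{Z[J]}{Z[0]}$, and iterating with the Leibniz rule reproduces precisely the standard recursion on the partner of the first index, $|\mathfrak{M}_{2m}| = (2m-1)\,|\mathfrak{M}_{2m-2}|$, for the set of matchings — yielding the same result by induction on $m$.
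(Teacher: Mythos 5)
Your proof is correct and follows essentially the same route as the paper's: expand $Z[J]/Z[0]=e^{\frac{\hbar}{2\ii}(Q^{-1})^{kl}J_kJ_l}$, observe that only the order-$m$ term of the exponential survives the $2m$ derivatives at $J=0$ (and nothing survives for odd $n$), identify the surviving assignments of derivatives to $J$-factors with perfect matchings, and account for the $2^m m!$ multiplicity per matching, which converts $\left(\frac{\hbar}{2\ii}\right)^m\frac{1}{m!}$ into $\left(\frac{\hbar}{\ii}\right)^m$. Your explicit appeal to the symmetry of $Q^{-1}$ in the multiplicity count is a point the paper leaves implicit, but the argument is the same.
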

\begin{proof}
If $n$ is odd, the claim follows from noticing that $Z[J]/Z[0] = e^{\frac{\hbar}{2\ii}(Q^{-1})^{ij}J_iJ_j}$ does not contain terms of odd orders. If $n = 2m$, we realize that the only surviving term is the order $m$ term in the exponential series 
\begin{align*} \restr{\frac{\de }{\de J_{i_1}}\cdots\frac{\de }{ \de J_{i_n}}}{J=0} Z[J]/Z[0] &=  \restr{\frac{\de }{\de J_{i_1}}\cdots\frac{\de }{ \de J_{i_n}}}{J=0}\left(\frac{\hbar}{2\ii}\right)^m\frac{1}{m!}((Q^{-1})^{ij}J_iJ_j)^m \\
&= \restr{\frac{\de }{\de J_{i_1}}\cdots\frac{\de }{ \de J_{i_n}}}{J=0}\left(\frac{\hbar}{2\ii}\right)^m\frac{1}{m!}(Q^{-1})^{k_1l_1}J_{k_1}J_{l_1}\cdots(Q^{-1})^{k_ml_m}J_{k_m}J_{l_m}
\end{align*}
 A term in this sum survives if and only if every derivative can be matched to a $J$, that is there is a permutation $\sigma \in S_n$ such that $(i_{\sigma(1)},\ldots,i_{\sigma(n)}) = (k_1,l_1,\ldots,k_m,l_m)$.  Such a permutation defines a perfect matching $\m = \{\{(\sigma(1),\sigma(2)\},\ldots,\{\sigma(n-1),\sigma(n)\}\}$. Two permutations give the same term in the sum precisely if they correspond to the same matching. The claim now follows from the observation that the same matching appears $2^m m!$ times (we can exchange the two elements of a pair and permute the pairs among themselves).  
\end{proof}
Thus we have explained how every term in $e^{\ii/\hbar P (\de/\de J)}$ acts on $Z[J]/Z[0]$. To go one step further, we expand the formal power series $P$ (remember it starts in degree $3$) 
\begin{equation}
P(y) = \frac{1}{3!}P_{ijk}y^iy^jy^k + \frac{1}{4!}P_{ijkl}y^iy^jy^ky^l + \ldots = \sum_{k = 3}^{\infty}\frac{1}{k!}P_{i_1i_2\ldots i_k}y^{i_1}\cdots y^{i_k} \label{eq:defP}
\end{equation}
where all $p$'s are symmetric in all indices. We now expand the exponential using the multinomial theorem as 
\begin{align}
\exp(\ii\hbar P(\de/\de J)) &= \sum_{N=0}^\infty\left(\frac{\ii}{\hbar}\right)^N\frac{1}{N!}P(\de/\de J) ^N \notag\\
&=  \sum_{N=0}^\infty\left(\frac{\ii}{\hbar}\right)^N\frac{1}{N!}\sum_{j_3+\ldots+j_l = N}{N \choose j_1 \cdots j_l}\left(\frac{1}{3!}P_{3}(\de/\de J)\right)^{j_3} \cdots\left(\frac{1}{l!}P_{l}(\de/\de J)\right)^{j_l} \notag \\
&=\sum_{l=3}^\infty\sum_{j_3,\ldots,j_l=0}^\infty\left(\frac{\ii}{\hbar}\right)^{\sum j_i}
\frac{1}{j_3!(3!)^{j_3}\cdots j_l!(l!)^{j_l}} \notag\\
&\hspace{2cm}P_{i_1i_2i_3}\cdots  P_{i_{3j_3-2}i_{3j_3-1}i_{3j_3}}\cdots P_{i_{n-l+1}\ldots i_n}\frac{\de }{\de J_{i_1}}\cdots\frac{\de }{\de J_{i_n}}, \label{eq:formula1}
\end{align}
where in the last line we defined $n = \sum_{j=1}^ln_j j $ and enumerated the vertices accordingly. We want to apply Wick's Lemma \ref{lem:Wick}.  To keep track of the corresponding terms one can introduce graphs. We give here a definition adapted to our needs. 
\begin{defn}
\begin{itemize}
\item 
A \emph{graph} $\Gamma=(H,V,E)$ consists of a finite set $H$ together with a partition $V$ of $H$ and a perfect matching $E$ of $H$. 
\item Two graphs $\Gamma = (H,V,E)$, $\Gamma' = (H',V',E')$ are \emph{isomorphic} if there is a bijection $\varphi\colon H\to H'$ with $\varphi(V) = V'$ and $\varphi(E)=E'$.
\end{itemize} 
\end{defn}
\begin{rem}\label{rem:isographs}
Let $H=\{1,\ldots,n\}$, and $\Gamma=(H,V,E)$ be a graph. Suppose $V$ has $n_j$ blocks of size $j=1,\ldots l$, $n= \sum n_jj$. Then $\Gamma$ is isomorphic to a graph where $V$ is the standard partition $\calP_0$ with $n_j$ blocks of size $j$, i.e. $\calP_0 = \{\{1\},\ldots,\{n_1\},\{n_1+1,n_1+2\},\ldots,\{n-l+1,\ldots,n\}\}$. 
\end{rem}
We introduce some further terminology. The set $H$ is called the set of \emph{half-edges}, $V$ is called the set of \emph{vertices}, and $E$ is called the set of \emph{edges}. 

Next, we consider the automorphism group  $\mathrm{Aut}(\Gamma)$ of a graph. To this end, we first note that the symmetric group $S_n$ acts on $\mathcal{P}_n$, stabilizing the number and size of blocks. The stabilizer subgroup $(S_n)_\calP$ of a partition\footnote{Two partitions of a set $I$ are the same if they coincide as sets of subsets of $I$.} $\calP$ with $n_j$ blocks of size $j$ is isomorphic to $\prod_{j} S_{n_j} \rtimes (S_j)^{n_j}\subset S_n$.  The following proposition is immediate:
\begin{prop} 
Let $\Gamma = (H,V,E)$ be a graph. Then the automorphism group of $\Gamma$ is the stabilizer group of the pair $(V,E)$ under the diagonal action of $S_H$. 
\end{prop}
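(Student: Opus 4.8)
The plan is to unwind the two definitions involved — graph automorphism and diagonal stabilizer — and observe that they single out the same set of permutations of $H$, after which the matching of group structures is automatic.

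First I would recall that, by the definition of isomorphic graphs applied to the case $\Gamma' = \Gamma$ (so $H' = H$, $V' = V$, $E' = E$), an automorphism of $\Gamma$ is by definition a bijection $\varphi \colon H \to H$ with $\varphi(V) = V$ and $\varphi(E) = E$, where $\varphi$ acts on the partition $V$ by pushforward of subsets, $\varphi(V) = \{\varphi(B) : B \in V\}$, and on the perfect matching $E$ in the same way. A bijection $H \to H$ is precisely an element $\sigma \in S_H$, and the pushforward operations on partitions and on perfect matchings are exactly the natural actions of $S_H$ on $\mathfrak{P}_H$ and $\mathfrak{M}_H$ that appear in the statement. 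Hence, as a subset of $S_H$, $\mathrm{Aut}(\Gamma) = \{\sigma \in S_H : \sigma \cdot V = V \text{ and } \sigma \cdot E = E\}$, which is by definition the stabilizer of the pair $(V,E)$ under the diagonal action $\sigma \cdot (V,E) = (\sigma \cdot V, \sigma \cdot E)$.

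Next I would check compatibility with composition: since pushforward of subsets is functorial, $\sigma(\tau(B)) = (\sigma\tau)(B)$ for every $B \subseteq H$, so the pushforward operations genuinely define actions of $S_H$ on $\mathfrak{P}_H$ and $\mathfrak{M}_H$ (associativity and $\mathrm{id} \cdot V = V$), and composition of automorphisms corresponds to the product in $S_H$. The identity automorphism is the identity permutation, and if $\varphi(V) = V$ and $\varphi(E) = E$ then applying $\varphi^{-1}$ gives $\varphi^{-1}(V) = V$, $\varphi^{-1}(E) = E$, so inverses also match. Therefore the set-theoretic identification of the previous paragraph is an identification of subgroups of $S_H$, which is the claim.

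The only step requiring any care — and hence the closest thing to an obstacle, though it is genuinely routine — is the bookkeeping that pushforward along a permutation really respects the structures of partition (disjointness, covering $H$) and of perfect matching (two-element blocks), so that it lands in $\mathfrak{P}_H$ and $\mathfrak{M}_H$ respectively and defines the asserted $S_H$-action; this is immediate since a bijection carries a partition to a partition and a two-element subset to a two-element subset. With that observed, the proposition follows directly from the definitions.
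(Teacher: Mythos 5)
Your argument is correct: unwinding the definition of graph isomorphism in the case $\Gamma'=\Gamma$ shows an automorphism is exactly a permutation of $H$ fixing the pair $(V,E)$ under pushforward, which is the diagonal stabilizer. This is precisely the reasoning the paper has in mind — it states the proposition as immediate from the definitions and gives no further proof — so your write-up simply makes that routine verification explicit.
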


We are now ready to give an expression for  $
\restr{e^{\frac{\ii}{\hbar}P\left(\frac{\de}{\de J}\right)}}{J=0}\frac{Z[J]}{Z[0]}$. To this end we define the Feynman weight of a graph. 
\begin{defn}
Let $(P_k)_{k=1}^\infty$ be a family of symmetric tensors $P_k \in \mathrm{Sym}^{V^*}$ and $K \in \mathrm{Sym}^2V$. Let $\Gamma=(H,V,E)$ be a graph. Let $L=\{l \colon H \to \{1,\ldots, |H|\}, l \text{ bijective }\}$ be the set of all labeling of the half-edges of $\Gamma$. Then we define 
\begin{equation}
F^{P,K}(\Gamma) \equiv F(\Gamma) = \sum_{l\in L}\prod_{v=\{h_1,\ldots,h_{|v|}\} \in V}P_{l(h_1)\ldots l(h_{|v|})}\prod_{e=\{h_1,h_2\} \in E}K^{l(h_1)l(h_2)}
\end{equation}
\end{defn} 
The following Proposition follows from symmetry of the tensors: 
\begin{prop}
If $\Gamma$ is isomorphic to $\Gamma'$, then $F(\Gamma) = F(\Gamma')$.
\end{prop}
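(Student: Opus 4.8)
The plan is to use the isomorphism $\varphi$ to set up a bijection between the labeling sets of $\Gamma$ and $\Gamma'$, and then to check that this bijection identifies the summands of $F(\Gamma)$ and $F(\Gamma')$ one-to-one; the whole argument is pure bookkeeping once the relabeling bijection is in hand.

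First I would note that, since $\varphi\colon H \to H'$ is a bijection, $n := |H| = |H'|$, so both $F(\Gamma)$ and $F(\Gamma')$ are indexed by bijections onto $\{1,\ldots,n\}$. Precomposition with $\varphi$ defines a map $L' \to L$, $l' \mapsto l' \circ \varphi$, which is a bijection with inverse $l \mapsto l \circ \varphi^{-1}$. Re-indexing the sum defining $F(\Gamma')$ along this bijection, it then suffices to show that for each $l' \in L'$ the summand of $F(\Gamma')$ at $l'$ equals the summand of $F(\Gamma)$ at $l := l' \circ \varphi$.

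Next I would compare the two products factor by factor. Because $\varphi(V) = V'$, every vertex of $\Gamma'$ is $\varphi(v)$ for a unique $v \in V$, with $|\varphi(v)| = |v|$; fixing an ordering $v = \{h_1,\ldots,h_{|v|}\}$ and transporting it through $\varphi$ to order $\varphi(v) = \{\varphi(h_1),\ldots,\varphi(h_{|v|})\}$, the vertex factor of $\varphi(v)$ for $l'$ is $P_{l'(\varphi(h_1))\cdots l'(\varphi(h_{|v|}))} = P_{l(h_1)\cdots l(h_{|v|})}$, which is the vertex factor of $v$ for $l$. Similarly, because $\varphi(E) = E'$, the edges of $\Gamma'$ are exactly the two-element sets $\varphi(e)$ for $e \in E$, and the edge factor $K^{l'(\varphi(h_1))\, l'(\varphi(h_2))} = K^{l(h_1)\, l(h_2)}$ matches that of $e$ for $l$. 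Taking the product over all vertices and all edges, the two summands coincide, and summing over $l' \in L'$ (equivalently over $l \in L$) gives $F(\Gamma') = F(\Gamma)$.

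The only subtlety — and what the appeal to symmetry of the tensors is for — is that vertices and edges are \emph{unordered} sets, so writing $P_{l(h_1)\cdots l(h_{|v|})}$ or $K^{l(h_1) l(h_2)}$ requires choosing an ordering of the half-edges in each block; the hypotheses $P_k \in \mathrm{Sym}^{V^*}$ and $K \in \mathrm{Sym}^2 V$ guarantee these factors — and hence the definition of $F$ itself — are independent of those choices, so the identifications above are legitimate regardless of which orderings are used. I expect no genuine obstacle beyond keeping this ordering convention straight.
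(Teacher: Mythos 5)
Your proof is correct and is exactly the argument the paper has in mind: the paper merely remarks that the proposition ``follows from symmetry of the tensors,'' and your relabeling bijection $l' \mapsto l' \circ \varphi$ together with the factor-by-factor matching (with symmetry handling the arbitrary ordering of half-edges within each vertex and edge) is the intended justification, spelled out in full.
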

The main theorem of this section is the following: 
\begin{thm}
If $P$ is a formal power series on $V^*$ starting in degree 3, and $Q$ is a non-degenerate symmetric bilinear form on $V$, then 
\begin{equation}\restr{e^{\frac{\ii}{\hbar}P\left(\frac{\de}{\de J}\right)}}{J=0}\frac{Z^Q[J]}{Z^Q[0]} =\sum_{[\Gamma]}\frac{(-\ii\hbar)^{-\chi(\Gamma)}}{|\mathrm{Aut}(\Gamma)|}F^{P,Q^{-1}}(\Gamma).
\end{equation}
Here the sum goes over isomorphism classes of graphs that are at least trivalent (all the blocks in the partition $V$ have size at least 3), we use the components of $P$ to define symmetric tensors $P_k$ as in \eqref{eq:defP}, and $\chi(\Gamma)$ is the Euler characteristic $\chi(\Gamma) = V(\Gamma)-E(\Gamma)$. 
\end{thm}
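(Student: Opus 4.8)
\textit{Proof strategy.} The plan is to begin from the expansion \eqref{eq:formula1} of $e^{\frac{\ii}{\hbar}P(\de/\de J)}$, apply Wick's Lemma \ref{lem:Wick} to each term, and then reorganize the resulting sum of tensor contractions according to isomorphism classes of graphs. First I would rewrite \eqref{eq:formula1} more economically as a sum over finitely-supported valence profiles $(n_j)_{j\geq 3}$: the term attached to such a profile carries the prefactor $\left(\frac{\ii}{\hbar}\right)^{\sum_j n_j}\frac{1}{\prod_j n_j!\,(j!)^{n_j}}$ and the tensor factor $P_{i_1 i_2 i_3}\cdots\,\frac{\de}{\de J_{i_1}}\cdots\frac{\de}{\de J_{i_n}}$, where $n=\sum_j j\,n_j$ and the half-edge set $\{1,\dots,n\}$ is distributed among vertices according to the standard partition $V_0$ having $n_j$ blocks of size $j$, as in Remark \ref{rem:isographs}.

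Applying Lemma \ref{lem:Wick} to the $n$ derivatives replaces them by $\left(\frac{\hbar}{\ii}\right)^{n/2}\sum_{\m\in\mathfrak{M}_n}(Q^{-1})^{i_{a_1(\m)}i_{b_1(\m)}}\cdots$; for each fixed matching $\m$ the tensor factor becomes the complete index contraction of the vertex tensors along the edges prescribed by $\m$, i.e.\ (up to relabelling the dummy indices, which leaves it unchanged) the Feynman weight $F^{P,Q^{-1}}(\Gamma_\m)$ of the graph $\Gamma_\m=(\{1,\dots,n\},V_0,\m)$. The powers of $\hbar$ now assemble automatically: the $\sum_j n_j=V(\Gamma_\m)$ vertices produce $(\ii/\hbar)^{V}$ and the $n/2=E(\Gamma_\m)$ edges produce $(\hbar/\ii)^{E}$, whose product is $(\ii/\hbar)^{V-E}=(\ii/\hbar)^{\chi(\Gamma_\m)}=(-\ii\hbar)^{-\chi(\Gamma_\m)}$ since $\ii^{-1}=-\ii$.

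It remains to pass from matchings to isomorphism classes. Since $F^{P,Q^{-1}}$ is invariant under graph isomorphism (the proposition stating $\Gamma\cong\Gamma'\Rightarrow F(\Gamma)=F(\Gamma')$, which rests on the symmetry of $P$ and of $Q^{-1}$), I group $\sum_{\m\in\mathfrak{M}_n}F^{P,Q^{-1}}(\Gamma_\m)$ by the class $[\Gamma_\m]$. The stabilizer $G:=(S_n)_{V_0}\cong\prod_j S_{n_j}\rtimes(S_j)^{n_j}$ has order $\prod_j n_j!\,(j!)^{n_j}$ and acts on $\mathfrak{M}_n$; two matchings give isomorphic graphs exactly when they lie in one $G$-orbit, and by Remark \ref{rem:isographs} every class with profile $(n_j)$ is realised by some $\Gamma_\m$. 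Hence the matchings yielding a fixed class $[\Gamma]$ form a single $G$-orbit, whose size is $|G|/|\mathrm{Stab}_G(\m)|=\prod_j n_j!\,(j!)^{n_j}/|\mathrm{Aut}(\Gamma)|$ by orbit--stabilizer, using the proposition that identifies $\mathrm{Aut}(\Gamma)$ with the stabilizer of the pair $(V_0,\m)$ under the diagonal $S_n$-action. Multiplying by the prefactor $1/\prod_j n_j!\,(j!)^{n_j}$ from \eqref{eq:formula1} collapses the orbit size to $1/|\mathrm{Aut}(\Gamma)|$, and letting the profile vary collects each isomorphism class of at-least-trivalent graph exactly once, producing $\sum_{[\Gamma]}\frac{(-\ii\hbar)^{-\chi(\Gamma)}}{|\mathrm{Aut}(\Gamma)|}F^{P,Q^{-1}}(\Gamma)$.

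The analytic content (Wick's Lemma, and $Z^Q[0]$ from Lemma \ref{lem:Fresnel}) is already in hand, so the argument is purely combinatorial from \eqref{eq:formula1} onward; accordingly the main obstacle is the bookkeeping in the last step. One has to check carefully that the ``diagonal'' and ``internal'' pairings of half-edges (which create multiple edges and self-loops) are correctly included among the matchings, that $\prod_j n_j!\,(j!)^{n_j}$ is exactly $|G|$ so that the symmetry-factor cancellation is clean, that the orbit--stabilizer count is applied to the correct group action, and that distinct valence profiles contribute disjoint families of isomorphism classes so no residual over-counting remains. I would also make explicit why every labelling of the half-edges yields the same contraction, so that $F^{P,Q^{-1}}(\Gamma_\m)$ is well-defined and isomorphism-invariant.
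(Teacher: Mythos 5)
Your proposal is correct and follows essentially the same route as the paper's proof: expand via \eqref{eq:formula1}, apply Wick's Lemma \ref{lem:Wick} to identify each matching's contribution with $F^{P,Q^{-1}}$ of the graph $([n],\calP_0,\m)$, and then use the action of the stabilizer $(S_n)_{\calP_0}$ on matchings together with the orbit--stabilizer theorem to convert the prefactor $1/\prod_j n_j!(j!)^{n_j}$ into $1/|\mathrm{Aut}(\Gamma)|$. Your bookkeeping of the $\hbar$-powers, giving $(-\ii\hbar)^{-\chi(\Gamma)}$, in fact matches the theorem statement more cleanly than the paper's own displayed exponent.
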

\begin{proof}
We start with the expression in Equation \eqref{eq:formula1} and apply Wick's Lemma \eqref{lem:Wick}. Every term in the sum is labeled by a standard partition $\calP_0[n]$ of $n= \sum j n_j$ with $n_j$ blocks of size $j$. By the Wick Lemma, we get a sum over all matchings of $[n]=\{1,\ldots,n\}$. The term corresponding to the perfect matching $\m$ is precisely $F(\Gamma)$, where $\Gamma = ([n],\calP_0[n],\m)$. By Remark \ref{rem:isographs}, we obtain all isomorphism classes of graphs in this way. Notice that they appear with the correct power $(-\ii\hbar)^{|E|-|V|}$ of $-\ii\hbar$. The only question left is the combinatorial factor. The stabilizer group of the standard partition $(S_n)_{\calP_0}\subset S_n$ acts on graphs $(\calP_0,m).$ It satisfies $|(S_n)_{\calP_0}|= \prod n_j! (j!)^{n_j}!$ and graphs isomorphic to $(n,\calP_0,\m)$ are precisely given by the $\m$-orbit $(S_n)_{\calP_0}\m$ of this action. Thus, 
\begin{align*}
\restr{e^{\frac{\ii}{\hbar}P\left(\frac{\de}{\de J}\right)}}{J=0}\frac{Z^Q[J]}{Z^Q[0]} &= \sum_{[\Gamma=([n],\calP_0,\m)]}\frac{(-\ii\hbar)^{1-\chi(\Gamma)}}{|(S_n)_{\calP_0}} |(S_n)_{\calP_0}\cdot \m| F^{P,Q^{-1}}(\Gamma) \\
&= \sum_{[\Gamma]}\frac{(-\ii\hbar)^{1-\chi(\Gamma)}}{|\mathrm{Aut}(\Gamma)|}F^{P,Q^{-1}}(\Gamma),
\end{align*}
where the last equality follows from the orbit-stabilizer theorem\footnote{Thanks to M.Berghoff for pointing this out.} 
$$|\mathrm{Aut}(\Gamma)| = |((S_n)_{\calP_0})_\m| = \frac{(S_n)_{\calP_0})}{|(S_n)_{\calP_0}|}.$$
\end{proof}
% we are interested in the terms of order $n > 0$ in $e^{\ii/\hbar P (\de/\de J)}$. Such a term corresponds to a partition of $n$ into sets of at least three elements: We denote the set of all these partitions by $\mathfrak{P}_n^{\geq 3}$. Then, the sum can be expanded as follows: 
%\begin{equation}
%e^{\ii/\hbar P (\de/\de J)} = \sum_{n = 0}^\infty \sum_{\calP \in \mathfrak{P}^{\geq 3}_n}\frac{(-\ii\hbar)^{-|\calP|}}{|\calP|!} c_\calP P^\calP_{i_1\ldots i_n} \frac{\de }{\de J_{i_1}}\cdots\frac{\de }{ \de J_{i_n}}
%\end{equation}
%Here, for $\calP = \{I_1,\ldots,I_k\}$, we define $|\calP| = k$ and 
%\begin{equation}
%P^\calP_{i_1\ldots i_n} = P_{I_1}\cdots P_{I_k}
%\end{equation}
%where, for $I={a_1,\ldots, a_l}$, we set $P_I = P_{i_{a_1}\ldots i_{a_l}}$. Finally, $c_\calP$ counts the number the same partition comes up. To compute this coefficient we introduce the 
\subsection{Feynman graphs and rules}\label{sec:FeynmanGraphs}
Above we have discussed graphs describing the asymptotic behaviour of the partition function. The vertices and edges were all indistinguishable. One can easily extend this discussion to keep track of different terms in the quadratic operator by ``decorating'' the graphs.  Automorphisms of graphs then have to be replaced by automorphisms of decorated graphs. Loosely, one says that one computes a quantity ``by Feynman graphs and rules'': One just specifies the different types of vertices and edges appearing in the graphs - the possible graphs are then called the ``Feynman graphs''. One specifies how to compute the Feynman weight of a graph by specifying it on the generators - these prescriptions are called the ``Feynman rules'', and they can be read off from the action functional. This often provides an elegant way of generating all the terms in a complicated expression such as $$\restr{e^{\frac{\ii}{\hbar}P\left(\frac{\de}{\de J}\right)}}{J=0}\frac{Z^Q[J]}{Z^Q[0]},$$or other terms of interest.  

\section{Oscillatory integrals with degenerate phase functions and Faddeev-Popov method} 
Many action functionals that appear in physics actually do not have non-degenerate critical points, including the one that the forms the main focus of these lectures, the Chern-Simons action functional. In that case, the Hessian at the critical point has a kernel  and one can ask if the vectors in the kernel can be extended to symmetries of the action. In that case, there is a (local or global) distribution of symmetry $\calV \subset TM$, i.e. all vector fields tangent to $\calV$ annihilate the action $L_VS = 0$. $\calV$ might or might not be integrable. However, in many examples, a stronger statement is true: Not only do these vector fields exist, but they actually come from a group action on the space of fields. In these cases one can use the Faddeev-Popov method. \\
\subsection{Setup}
Again, we consider the integral 
\begin{equation}
I(\hbar) = \int_X e^{\ii/\hbar S}\mu.
\end{equation}
We now assume that there is a Lie group $G$ of dimension $k$ which acts freely on $X$ such that $S$ and $\mu$ are $G$-invariant. We denote the quotient map by $p\colon X \to X/G$. The dimension of the quotient is $n-k=:l$. In addition we ask that $\mu$ is horizontal\footnote{I.e. that $L_v\mu = 0$ for all vertical vector fields, where a vector field $v$ is vertical if $dp (v) \equiv 0$.}. We denote this action by 
$\rho\colon G \times X \to X$, $\rho(g,x) = g\cdot x$, and the corresponding infinitesimal action by $\rho^\#\colon \g \to \Gamma(TM)$, i.e.
\begin{equation}
\rho^\#(\xi)_x = \restr{\frac{d}{dt}}{t=0}\exp(t\xi)\cdot x \in T_xX.
\end{equation}
We frequently denote $\rho^\#(\xi)_x = \rho_x^\#(\xi)$, this defines, for all $x\in X$, a map $\rho_x^\# \colon \g \to T_xX$. Since the action is free, we can write 
\begin{equation}
\int_X e^{\frac{\ii}{\hbar}S}\mu = Vol(G) \int_{X/G} e^{\frac{\ii}{\hbar}\tilde{S}}\tilde{\mu}.
\end{equation}
Here $S = p^*\tilde{S}$ and $\mu = p^*\tilde{\mu} \wedge \chi$, $\tilde{S},\mu{S}$ are a function (resp. a volume form) on the quotient and $\chi \in \Omega^k(y)$ is such that $\iota_{v_l}\ldots\iota_{v_k}\chi =1$, where $v_a = \rho^\#(T_a)$ are fundamental vector fields for the Lie algebra action (i.e. $T_a$ is a basis for the Lie algebra $\g$). The integral over the quotient now admits an asymptotic expansion if the following assumption is satisfied: 
\begin{ass} The critical orbits of $S$ are isolated and $\tilde{S}$ has a non-degenerate Hessian at these orbits. 
\end{ass}
Even if this assumption is satisfied, we still often do not have a good way to think of the quotient and $\tilde{S}$, $\tilde{\mu}$ (think e.g. of the action of gauge transformations on connections). The Faddeev-Popov method gives an alternative way to compute the integral by means of more tractable data. 
\subsection{Gauge fixings} 
Let $x_0$ be a critical point of $S$. We say that a chart $(U;y^1,\ldots,y^{l},z^1,\ldots,z^k)$ is \emph{adapted} if $\left(\frac{\de^2S}{\de y_i\de y_j}\right)_{i,j=1}^{l}$ is non-degenerate and $v_a = f_a^b\frac{\de}{\de z^b}$.

We define a map $\phi\colon U \to \g$ by $\phi(y,z) = z^bT_b$. It follwos that $\phi$ defines a local section $s$ of the quotient by $s(x) = [x] \cap \phi^{-1}({0})$. Such a section is known as a (local) \emph{gauge-fixing}. $\phi$ is called the \emph{gauge-fixing function}.
\begin{ass}
There is a global function $\phi\colon X \to \g$ (a global gauge-fixing function) such that $\phi^{-1}(0)$ intersects every orbit of $G$ exactly $N$ times.
\end{ass}
Very often it will not be possible at all to find functions $\phi$ with  $N=1$: Consider e.g. the example of the circle acting on the cylinder $I \times S^1$ by rotation. Any global gauge-fixing will intersect every orbit at least twice (exercise). \\
Given such a global gauge-fixing, we can rewrite the integral over the quotient as an integral over a subset of $X$ employing a delta function: 
\begin{equation}
I(\hbar) = vol(G)\int_{X/G}e^{\frac{\ii}{\hbar}S}\mu = \frac{vol(G)}{N}\int_X\delta^{(l)}(\phi)p^*\tilde{\mu}e^{\frac{\ii}{\hbar}S}.
\end{equation}
Here $\delta^{(l)}(\phi) = \delta(\phi)d\phi^1 \wedge \cdots \wedge d\phi^k$.  Our next goal is to rewrite the integral in terms of the original measure $\mu$, i.e. find a function\footnote{The Radon-Nikodym Derivative of the corresponding measures.} $J$ such that $d\phi^1 \wedge\cdots \wedge d\phi^k p^*\tilde{\mu} = J \mu$. 
\begin{lem}
$J$ is given by $J(x) = \det FP(x)$ where $FP(x)$ is the Faddeev-Popov operator 
\begin{align*}
FP(x) \colon \g &\to \g\\
\xi \mapsto d_x\phi(\rho^\#_x\xi)
\end{align*}
\end{lem}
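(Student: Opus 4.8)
The plan is to compute the pullback $p^*\tilde\mu$ and the form $d\phi^1\wedge\cdots\wedge d\phi^k$ separately in an adapted chart $(U;y^1,\ldots,y^l,z^1,\ldots,z^k)$ around a point $x$ of the gauge-fixing surface $\phi^{-1}(0)$, and then compare the two sides of $d\phi^1\wedge\cdots\wedge d\phi^k\wedge p^*\tilde\mu = J\,\mu$ as top-degree forms on $X$. Recall that in an adapted chart the fundamental vector fields are $v_a = f_a^b\,\partial/\partial z^b$, and the gauge-fixing function is $\phi(y,z)=z^bT_b$, so that $\phi^{-1}(0)=\{z=0\}$ locally and the $y$'s descend to coordinates on $X/G$.

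First I would note that since $p$ collapses the $z$-directions and $y^1,\ldots,y^l$ are (pullbacks of) coordinates on the quotient, $p^*\tilde\mu = h(y)\,dy^1\wedge\cdots\wedge dy^l$ for some function $h$, and $\mu = p^*\tilde\mu\wedge\chi$ with $\chi$ normalized by $\iota_{v_l}\cdots\iota_{v_1}\chi = 1$ — equivalently $\chi = (\det f)^{-1}\,dz^1\wedge\cdots\wedge dz^k$ in the chart, since $\iota_{v_a}$ contracts against $f_a^b\,\partial/\partial z^b$. Hence $\mu = h(y)(\det f)^{-1}\,dy^1\wedge\cdots\wedge dy^l\wedge dz^1\wedge\cdots\wedge dz^k$. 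On the other hand $d\phi^b = dz^b$ since $\phi^b = z^b$ in the adapted chart, so $d\phi^1\wedge\cdots\wedge d\phi^k\wedge p^*\tilde\mu = h(y)\,dz^1\wedge\cdots\wedge dz^k\wedge dy^1\wedge\cdots\wedge dy^l$, which up to the sign $(-1)^{kl}$ (from moving the $dz$'s past the $dy$'s — I will fix an orientation convention so this is absorbed) equals $(\det f)\,\mu$. So $J = \det f$ on the gauge-fixing surface.

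The remaining step is to identify $\det f$ with $\det FP(x)$. By definition $FP(x)\colon\g\to\g$ sends $\xi = \xi^a T_a$ to $d_x\phi(\rho^\#_x\xi) = d_x\phi(\xi^a v_a)$. In the adapted chart, $v_a = f_a^b\,\partial/\partial z^b$ and $d\phi^c = dz^c$, so $d\phi^c(v_a) = f_a^c$; thus $FP(x)\xi = \xi^a f_a^c T_c$, i.e. $FP(x)$ is exactly the matrix $(f_a^c)$ acting on $\g$ in the basis $\{T_a\}$. Therefore $\det FP(x) = \det f = J(x)$, as claimed.

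The main obstacle — really the only subtlety — is bookkeeping: being careful that the "adapted chart" can always be arranged so that $v_a$ has only $\partial/\partial z^b$ components (which follows from freeness of the action and the implicit function theorem applied to $\phi$, since $d\phi$ is surjective on the $z$-directions at points of $\phi^{-1}(0)$), and tracking the sign/orientation factor $(-1)^{kl}$ when reordering the wedge of $dz$'s and $dy$'s so that the statement $d\phi^1\wedge\cdots\wedge d\phi^k\wedge p^*\tilde\mu = J\mu$ holds with $J = +\det FP$. One should also check that $J$ extends off the gauge-fixing surface consistently — but since only the value on $\phi^{-1}(0)$ enters the delta-function integral, this is immaterial for the application.
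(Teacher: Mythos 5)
Your argument is correct and is essentially the paper's own proof carried out in coordinates: the paper extends $d\phi^1,\ldots,d\phi^k$ to a coframe with $\alpha_j(v_a)=0$, normalizes $\mu = d\phi^1\wedge\cdots\wedge d\phi^k\wedge\alpha_1\wedge\cdots\wedge\alpha_l$, and contracts $\mu$ with the fundamental vector fields to produce $\det\bigl(d\phi^a(v_b)\bigr)=\det FP(x)$, which is exactly the determinant your adapted chart $(y,z=\phi)$ computation isolates via the expansion of $\chi$. The points you flag (arranging $v_a$ purely in the $z$-directions, the $(-1)^{kl}$ reordering sign, and only values on $\phi^{-1}(0)$ mattering) are handled at the same level of care as in the paper, so there is no gap beyond what the paper itself leaves implicit.
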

In the basis $T_a$, the Faddeev-Popov operator is given by $FP(x)^a_b = d\phi^a(v_b(x))$. 
\begin{proof}
Extend $d\phi^1,\ldots,d\phi^k$ to a basis of $T^*M$ by $\alpha_1\ldots,\alpha_l$ s.t $\alpha_j(v_k) = 0$ and $\mu = d\phi^1\wedge \cdots\wedge d\phi^k\wedge \alpha_1 \wedge \cdots \wedge \alpha_l$. Then, one simply computes 
\begin{align*}
\iota_{v_k}\ldots\iota_{v_1}\mu &=\sum_{\sigma\in S_k}(-1)^{\mathrm{sgn}(\sigma)} d\phi^1(v_{\sigma(1)})\cdots d\phi^k(v_{\sigma(k)})\wedge \alpha_1 \wedge \cdots \wedge \alpha_k \\
&= \det FP(x) \alpha_1 \wedge \cdots \wedge \alpha_k
\end{align*}
which proves the claim. 
\end{proof}
We can use this Lemma in order to rewrite the integral in terms of $\mu$:
\begin{cor}
\begin{equation}
I(\hbar) =\frac{vol(G)}{N}\int_X \mu \delta(\phi(x))\det FP(x) e^{\frac{\ii}{\hbar}S}\label{eq:FP2}
\end{equation}
\end{cor}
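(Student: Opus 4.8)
The plan is simply to substitute the identity furnished by the preceding Lemma into the delta-function presentation of $I(\hbar)$ established just above it. Recall that from the global gauge-fixing assumption we already have
\begin{equation}
I(\hbar) = \frac{vol(G)}{N}\int_X \delta^{(l)}(\phi)\, p^*\tilde{\mu}\, e^{\frac{\ii}{\hbar}S}, \qquad \delta^{(l)}(\phi) = \delta(\phi)\, d\phi^1 \wedge \cdots \wedge d\phi^k .
\end{equation}
The Lemma computes the Radon--Nikodym factor $J$ defined by $d\phi^1 \wedge \cdots \wedge d\phi^k \wedge p^*\tilde{\mu} = J\,\mu$ and identifies it as $J(x) = \det FP(x)$. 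First I would insert this into the displayed formula, writing $\delta^{(l)}(\phi)\, p^*\tilde{\mu} = \delta(\phi(x))\, d\phi^1 \wedge \cdots \wedge d\phi^k \wedge p^*\tilde{\mu} = \delta(\phi(x))\,\det FP(x)\,\mu$. Since $\delta(\phi(x))$ and $\det FP(x)$ are scalar functions they commute freely past the top-form $\mu$, and reordering the factors yields precisely \eqref{eq:FP2}.

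The only point that deserves a word of justification is the status of the distributional factor $\delta(\phi)$. The integral $\int_X \delta^{(l)}(\phi)(\cdots)$ means, by definition, the integral over the zero locus $\phi^{-1}(0)$ with the transverse measure encoded by $d\phi^1 \wedge \cdots \wedge d\phi^k$; the identity $d\phi^1 \wedge \cdots \wedge d\phi^k \wedge p^*\tilde{\mu} = \det FP(x)\,\mu$ from the Lemma is an equality of smooth $n$-forms valid wherever $\phi$ is a submersion --- in particular on an open neighbourhood of $\phi^{-1}(0)$, which is all the delta function sees --- so multiplying by $\delta(\phi)$ and integrating over $X$ is legitimate. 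This is exactly the Radon--Nikodym reformulation anticipated in the footnote preceding the Lemma.

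There is no genuine obstacle left: the analytic content has been absorbed into the Lemma, and what remains is bookkeeping. The one place I would be careful is the sign. The Lemma is proved via the iterated contraction $\iota_{v_k}\cdots\iota_{v_1}\mu$, and one must check that the ordering of the fundamental vector fields $v_a = \rho^\#(T_a)$ used there is compatible with the ordering $d\phi^1 \wedge \cdots \wedge d\phi^k$ appearing in $\delta^{(l)}(\phi)$, so that $J = +\det FP(x)$ and not $-\det FP(x)$. Fixing once and for all the order of the basis $\{T_a\}$ of $\g$, the corresponding order of the $v_a$, and the order of the components $\phi^a$ resolves this, and only a single sign has to be verified at the end.
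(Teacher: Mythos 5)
Your proposal is correct and is exactly the argument the paper intends: the corollary is stated as an immediate consequence of substituting the Lemma's identity $d\phi^1\wedge\cdots\wedge d\phi^k\, p^*\tilde{\mu} = \det FP(x)\,\mu$ into the previously derived delta-function formula for $I(\hbar)$. Your extra remarks on the distributional meaning of $\delta(\phi)$ and the ordering/sign convention are sensible bookkeeping but do not change the route.
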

Our next goal is to rewrite this as the integral of an exponential so that we can again use the Feynman diagrammatic methods in the last chapter. 
We can rewrite the Delta function as a Fourier transform
\begin{equation}
\delta(\phi(x))= \frac{1}{2\pi\hbar}\int_{\lambda \in \g^*}e^{\frac{\ii}{\hbar}\langle \lambda,\phi(x)\rangle}d^m\lambda.\label{eq:FP3}
\end{equation}
What about the determinant? The main realization in the Faddeev-Popov formalism (sometimes called the Faddeev-Popov trick) is that it can be written as an integral of an exponential by introducing ``odd'' coordinates, which we shall briefly introduce in the next section.
\subsection{Berezin integrals}
Berezin integrals, introduced by Berezin \cite{Berezin1966}, (see also \cite{Cattaneo2011b}  for an introduction), are integrals over \emph{super vector spaces}. Those, in turn, are combinations of even and odd vector spaces - even vector spaces are just the usual ones, whereas odd vector spaces are defined through their algebra of functions. \\
\subsubsection{Odd vector spaces}
\begin{defn}
 Let $V$ be a vector space over $\R$, then we define\footnote{Purists might prefer to say that the category of odd vector spaces is by the definition the opposite category of the categors of free finite-dimensional anticommutative algebras.} the odd vector space $\Pi V$ by $\mathcal{O}(\Pi V) = \wedge^\bullet V^*$.
\end{defn}
The upshot is that the coordinates $\xi^1,\ldots,\xi^n$ on $V$ \emph{anticommute} 
\begin{equation}
\xi^i\xi^j = \xi^j\xi^i.
\end{equation}
These are known as \emph{Grassmann}, \emph{fermionic} or simply \emph{odd} coordinates. 
\subsubsection{Derivatives}
The algebra of functions on an odd vector space is a super algebra, i.e. a $\Z_2$-graded algebra. The grading is given by the degree in the $\xi$'s (modulo 2) and derivatives are super-derivations of this super-algebra, i.e. $\de/\de \xi_i$ is defined by 
\begin{align}
\frac{\de }{\de \xi^i} (fg) &= \frac{\de }{\de \xi^i}f g + (-1)^{|f|}\frac{\de }{\de \xi^i}g \\ 
\frac{\de }{\de \xi^i}\xi^j = \delta_i^j 
\end{align}
\subsubsection{Integrals}
To define integrals we first define them on the ``odd line'' $\pi \R$. 
Here the integral is completely defined by the two requirements 
\begin{align}
\int_{\Pi \R} D\theta \frac{\de g}{\de \theta} &= 0 \\
\int_{\Pi \R} D\theta \theta &= 1. \label{eq:defBerint2}
\end{align}
The algebra of functions on the odd line is $\mathcal{O}(\Pi \R) = \R + \R\theta$. The integral of a function $f = a + b \theta$ is then simply $$\int_{\Pi\R}D\theta f = b.$$
\textbf{Warning:} The symbol $D\theta$ is \emph{not} a one-form. In fact, if we define a new coordinate $\theta' = \lambda \theta$, then Equation \eqref{eq:defBerint2} implies 
$$1 = \int_{\Pi\R}D\theta' \theta' = \int_{\Pi \R} D\theta' \lambda \theta $$ and thus $D\theta' = \frac{1}{\lambda}D\theta$ (in contrast, one-forms transform as $d\theta' = \lambda d\theta)$!). 
To define the integral on $\Pi\R^n$, we iterate 
$$\int_{\Pi\R^n}D^n\theta := \int_{\Pi \R}\left(\int_{\Pi\R}\cdots\int_{\Pi\R}D\theta_1\right).$$
An easy exercise shows that this implies 
\begin{equation}\int_{\Pi\R^n}D^n\theta f = f^{top}\label{eq:defBerInt3}\end{equation}
where $f^{top}$ is the component of $f$ that lies in $ \wedge^n\R^n$. 
This is turn means that in the odd case, integral and derivative coincide: 
$$\int_{\Pi\R^n}D^n\theta f =\frac{\de }{\de \theta_n }\ldots \frac{\de }{\de \theta_1}f.$$
Equation \eqref{eq:defBerInt3}suggests how to define the integral in a abstract odd vector space $\Pi V$: Namely, by the choice of an identification $\wedge^{top}V^* \cong \R$. Such an identification is equivalent to an element $\mu \in \wedge V^*$ (called a \emph{Berezinian}) and we can define the integral 
\begin{equation}
\int_{\Pi V}\mu f = \langle \mu , f \rangle. 
\end{equation} 
In particular, the odd vector space $\Pi(V \oplus V^*)$ has a canonical Berezinian given by 
\begin{equation}
\mu^c = D\theta^nD\bar{\theta}_n\cdots  D\theta^1D\bar{\theta}_1 \label{eq:canonicalBer}
\end{equation} 
for any choice of coordinates $\theta^i$ with dual coordinates $\theta_i$ on $\Pi V^*$. 
The crucial property of the canonical Berezinian is that for an endomorphism $B \in \mathrm{End}(V) \cong V^* \otimes V \subset \mathcal{O}(\Pi V \oplus \Pi V^*)$ 
we have 
\begin{equation}
\det B = \int_{\Pi V \oplus \Pi V^*}\mu^c e^{B_i^j\theta^i\bar{\theta}_j}.\label{eq:oddexp}
\end{equation}
We have thus completed the task we set out for and found a representation of the determinant as the integral of an exponential!
\subsection{The Faddeev-Popov action}
Equations \eqref{eq:FP3} and \eqref{eq:oddexp} allow us to rewrite the integral \eqref{eq:FP2} as the integral of an exponential. To this end, we collect the Faddeev-Popov data we have constructed so far.
\begin{defn}[Faddeev-Popov data]
Let $X$ be a manifold with a free Lie group action $\rho\colon G \times X \to X$, and $S\colon X \to \R$ a $G$-invariant function, $\mu \in \Omega^{top}(X)$ a $G$-basic form\footnote{$G$-invariant and horizontal for the quotient.}. 
Suppose we are given a global gauge-fixing function $\phi\colon X \to \g$. Then we define 
\begin{itemize}
\item The \emph{Faddeev-Popov space of fields} 
\begin{equation}
X_{FP} = X \times \g^* \times \Pi(\g \oplus \g^*) \ni (x,\lambda,c,\bar{c}) \label{eq:FPfields}
\end{equation}
$\lambda$ is called the Lagrange multiplier and $(c,\bar{c})$ the \emph{ghost-antighost pair}.
\item For every $x\in X$ the Faddeev-Popov operator $FP(x) \colon \g \to \g$ by 
\begin{equation}
FP(x) = d \phi(x) \circ \rho^\#_x
\end{equation}
\item The \emph{Faddeev-Popov action} $S_{FP} \in \mathcal{O}(X_FP)$, defined by 
\begin{equation}
S_{FP}(x,\lambda,c,\bar{c}) = S(x) + \langle \lambda,\phi(x)\rangle + \langle \bar{c},FP(x) c\rangle \label{eq:FPaction}
\end{equation}
\item The \emph{Faddeev-Popov Berezinian} $\mu_{FP}$ by 
\begin{equation}
\mu_{FP} = \frac{vol(G)}{N (2\pi\ii)^m}\mu d^n \lambda \mu^c_{\Pi(\g\oplus\g^*)}
\end{equation}
where $\mu^c_{\Pi(\g \oplus \g^*)}$ is the canonical Berezinian on $\Pi(\g \oplus \g^*)$  introduced in \eqref{eq:canonicalBer}. 
\end{itemize}
\end{defn}
The following is one of the two main results of this section:
\begin{thm}
In the notation as above, we have 
\begin{equation}
\int_Xe^{\frac{\ii}{\hbar}S}\mu = \int_{X_{FP}}e^{\frac{\ii}{\hbar}S_{FP}}\mu_{FP}
\end{equation}
\end{thm}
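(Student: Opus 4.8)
The plan is to assemble the three ingredients that have already been set up—the delta-function rewriting of the integral over the quotient, the Fourier representation of the delta function, and the Berezin-integral representation of the determinant—and simply substitute them one after the other into the right-hand side, checking that the constants match.

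First I would start from Corollary \eqref{eq:FP2}, which already expresses $\int_X e^{\frac{\ii}{\hbar}S}\mu$ as $\frac{vol(G)}{N}\int_X \mu\, \delta(\phi(x))\det FP(x)\, e^{\frac{\ii}{\hbar}S}$. This reduces the theorem to showing that the $\delta$-function and the determinant can each be absorbed into an exponential at the cost of introducing the extra fields $\lambda \in \g^*$ and $(c,\bar c)\in \Pi(\g\oplus\g^*)$. Next I would insert the Fourier identity \eqref{eq:FP3}, $\delta(\phi(x)) = \frac{1}{(2\pi\hbar)^m}\int_{\g^*} e^{\frac{\ii}{\hbar}\langle\lambda,\phi(x)\rangle} d^m\lambda$, which turns the $\delta$-factor into the exponential of the second term $\langle\lambda,\phi(x)\rangle$ in $S_{FP}$ (here I should be careful about whether the normalization $2\pi\hbar$ or $2\pi\ii$ is intended, and whether $\langle\lambda,\phi\rangle$ carries a factor of $\hbar$; the statement of the theorem fixes this through the definition of $\mu_{FP}$, so I would reconcile the powers of $2\pi$, $\ii$, and $\hbar$ explicitly at this point). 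Then I would apply the Berezin identity \eqref{eq:oddexp}, $\det B = \int_{\Pi V\oplus\Pi V^*}\mu^c\, e^{B^j_i\theta^i\bar\theta_j}$, with $B = FP(x)$ and $V=\g$, so that $\det FP(x) = \int_{\Pi(\g\oplus\g^*)}\mu^c\, e^{\langle\bar c, FP(x)c\rangle}$, producing the third term of $S_{FP}$. (Strictly, \eqref{eq:oddexp} gives $e^{B}$ with coefficient $1$, not $\frac{\ii}{\hbar}$, so I would either rescale the ghost variables $c\mapsto \sqrt{\hbar/\ii}\,c$, absorbing the Jacobian into $\mu^c$—which is consistent since $\mu^c$ is a Berezinian and transforms inversely—or note that the exponent $\langle\bar c,FP\,c\rangle$ and $\frac{\ii}{\hbar}\langle\bar c,FP\,c\rangle$ give the same Berezin integral up to a power of $\hbar/\ii$ that is again swept into the definition of $\mu_{FP}$.)

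Having substituted all three, the integrand becomes $e^{\frac{\ii}{\hbar}(S(x) + \langle\lambda,\phi(x)\rangle + \langle\bar c,FP(x)c\rangle)} = e^{\frac{\ii}{\hbar}S_{FP}}$, and the product of the measures $\mu$, $d^m\lambda$, and $\mu^c_{\Pi(\g\oplus\g^*)}$ together with the prefactor $\frac{vol(G)}{N(2\pi\ii)^m}$ is exactly $\mu_{FP}$ by definition. Using Fubini to combine the three iterated integrals into a single integral over $X_{FP} = X\times\g^*\times\Pi(\g\oplus\g^*)$ then yields $\int_{X_{FP}} e^{\frac{\ii}{\hbar}S_{FP}}\mu_{FP}$, which is the claim.

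The main obstacle I expect is purely bookkeeping: tracking the normalization constants and the powers of $\hbar$, $\ii$, and $2\pi$ through the Fourier transform of the delta function and through the rescaling of the ghost fields needed to match the $\frac{\ii}{\hbar}$ in the exponent of $e^{\frac{\ii}{\hbar}S_{FP}}$ against the coefficient-$1$ normalization in identity \eqref{eq:oddexp}. Because a Berezinian transforms inversely to an ordinary density under a linear change of the odd coordinates, every such rescaling must be compensated inside $\mu_{FP}$, and getting the signs and the exponent of $\hbar$ exactly right—so that the stated $\mu_{FP}$ comes out, rather than $\mu_{FP}$ times some residual power of $\hbar$—is the only genuinely delicate step; the rest is direct substitution and Fubini.
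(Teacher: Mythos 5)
Your proposal is correct and follows exactly the route the paper intends: the theorem is stated as the direct consequence of combining the corollary \eqref{eq:FP2}, the Fourier representation \eqref{eq:FP3} of the delta function, and the Berezin identity \eqref{eq:oddexp}, with the prefactor $\frac{vol(G)}{N(2\pi\ii)^m}$ in $\mu_{FP}$ absorbing precisely the powers of $\hbar$ and $\ii$ you track (the $(2\pi\hbar)^{-k}$ from the Fourier transform against the $(\ii/\hbar)^{k}$ from rescaling the ghosts). Your attention to the normalization bookkeeping is exactly the only delicate point, and it works out to the stated $\mu_{FP}$.
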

\subsubsection{Critical Points}
The main point of the Faddeev-Popov action \eqref{eq:FPaction}
is that - in contrast to the action $S$ - its critical points are non-degenerate, at least under the assumption that the FP operator is non-degenerate\footnote{In the finite-dimensional case this is equivalent to $\phi^{-1}(0)$ intersecting  the $G$-orbits transversally. }
\begin{prop}
Assume that $FP(x)$ is non-degenerate for all $x$. Then the critical points of $S_{FP}(x)$ are given by 
\begin{equation}
\begin{cases}
c &= 0 \\
\bar{c} &= 0 \\
\phi(x) &= 0 \\
\lambda &= 0\\
d_xS  &= 0 
\end{cases}
\end{equation}
I.e. $Crit(S_{FP}) = \{Crit(S) \cap \phi^{-1}(0)\} \times \{(0,0,0)\} \subset X_{FP}$. 
\end{prop}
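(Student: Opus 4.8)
The plan is to compute the graded differential of $S_{FP}$ on $X_{FP} = X \times \g^* \times \Pi(\g\oplus\g^*)$ one block of coordinates at a time, read off from the vanishing of each component the asserted system of equations, and then verify the (easy) converse inclusion.

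First I would vary the Lagrange multiplier, the ghost and the antighost, which enter $S_{FP}$ only linearly (resp. bilinearly). Differentiating in $\lambda \in \g^*$ gives $\phi(x) = 0$, so a critical point necessarily sits on the gauge-fixing slice $\phi^{-1}(0)$. Differentiating in $\bar c$ gives $FP(x)\,c = 0$, and differentiating in $c$ gives $\bar c\circ FP(x) = 0$ (up to an irrelevant Koszul sign); since $FP(x)$ is non-degenerate by hypothesis, both equations force $c = 0$ and $\bar c = 0$.

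It remains to treat the variation in $x$, i.e. the equation $d_xS_{FP} = 0$ in $T_x^*X$, which reads
\[
d_xS + \langle\lambda,\, d_x\phi\rangle + \langle\bar c,\, (d_x FP)\,c\rangle = 0 .
\]
Since we have already established $\bar c = 0$, the last term drops out and we are left with $d_xS + \langle\lambda,\, d_x\phi\rangle = 0$. Now I would invoke the two structural inputs at hand. Because $S$ is $G$-invariant it is constant along orbits, so $d_xS(\rho^\#_x\xi) = 0$ for every $\xi\in\g$; and by definition $(d_x\phi)(\rho^\#_x\xi) = FP(x)\xi$. Pairing the equation with the orbit vector $\rho^\#_x\xi$ therefore gives $\langle\lambda,\, FP(x)\xi\rangle = 0$ for all $\xi$, i.e. $\lambda\circ FP(x) = 0$ in $\g^*$, whence $\lambda = 0$ because $FP(x)$ is invertible. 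Feeding this back, the $x$-equation collapses to $d_xS = 0$, so $x\in Crit(S)$; combined with $\phi(x)=0$ this proves $Crit(S_{FP}) \subseteq \{Crit(S)\cap\phi^{-1}(0)\}\times\{(0,0,0)\}$. The reverse inclusion is immediate: if $x\in Crit(S)\cap\phi^{-1}(0)$ and $\lambda = c = \bar c = 0$, then the $\lambda$-equation is $\phi(x)=0$, the ghost equations read $FP(x)\cdot 0 = 0$, and $d_xS_{FP} = d_xS = 0$.

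The only step that is more than bookkeeping is the extraction of $\lambda = 0$ from the $x$-equation: this is precisely the point where the non-degeneracy of the Faddeev-Popov operator and the gauge-invariance of $S$ genuinely interact, through the pairing with orbit directions $\rho^\#_x\xi$. The mild technical care needed elsewhere — keeping track of signs when differentiating in the odd variables $c,\bar c$ — does not affect the outcome, since only the vanishing of $FP(x)\,c$ and $\bar c\circ FP(x)$ is used.
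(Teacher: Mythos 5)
Your proof is correct and follows essentially the same route as the paper: differentiate block by block, use non-degeneracy of $FP(x)$ to force $\phi(x)=0$, $c=0$, $\bar c=0$, and then analyze the $x$-equation $d_xS+\langle\lambda,d_x\phi\rangle=0$. Your extraction of $\lambda=0$ by pairing with the orbit directions $\rho^\#_x\xi$ (using $d_xS(\rho^\#_x\xi)=0$ and $d_x\phi\circ\rho^\#_x=FP(x)$) is a more explicit rendering of the paper's terser statement that, by $G$-invariance, conditional extrema on the slice are global extrema; you also correctly account for the $\langle\bar c,(d_xFP)\,c\rangle$ term, which the paper leaves implicit.
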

\begin{proof}
The first three equations follow (using non-degeneracy of $FP$) by taking derivatives with respect to $\bar{c},c$ and $\lambda$ respectively. For the last equation, notice that taking derivative with respect to $x$ yields 
$$d_xS + \langle\lambda,d_x\phi\rangle = 0.$$ 
Solutions of this equation are extrema of $S$ under the condition $\phi = 0$. By $G$-invariance of $S$, such conditional extrema are also global extrema and hence $dS =0$ at these points. It follows that also $\lambda = 0$. 
\end{proof}
\begin{prop}
Assume $FP(x)$ is non-degenerate for all $x\in X$. Then the critical points of $S_{FP}$ are non-degenerate. 
\end{prop}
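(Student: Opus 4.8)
The plan is to compute the Hessian of $S_{FP}$ at a critical point and to show that, with respect to the splitting of the tangent space into its even part (the directions along $X$ and along $\g^*$) and its odd part (the directions along $\Pi(\g\oplus\g^*)$), it is block diagonal with each block non-degenerate.

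First I would fix a critical point $x_0$, which by the previous proposition lies in $Crit(S)\cap\phi^{-1}(0)$ with all ghost and Lagrange--multiplier coordinates zero, and choose an adapted chart $(U;y^1,\dots,y^l,z^1,\dots,z^k)$ around $x_0$: by definition $\bigl(\tfrac{\de^2 S}{\de y^i\de y^j}\bigr)$ is non-degenerate and $v_a=f_a^b\tfrac{\de}{\de z^b}$ with $(f_a^b)$ invertible. Two structural consequences are recorded at once. From $G$-invariance, $L_{v_a}S=0$ gives $\tfrac{\de S}{\de z^b}\equiv 0$, so $S$ is a function of $y$ alone; in particular the $z$-derivatives in $\tfrac{\de^2 S}{\de y\,\de z}$ and $\tfrac{\de^2 S}{\de z\,\de z}$ vanish identically. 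And $FP(x_0)^a_b=d_{x_0}\phi^a(v_b)=f_b^c\,\tfrac{\de\phi^a}{\de z^c}(x_0)$, so non-degeneracy of $FP(x_0)$ forces the $k\times k$ matrix $\bigl(\tfrac{\de\phi}{\de z}\bigr)(x_0)$ to be invertible.

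Next I would write out the second derivatives of $S_{FP}=S+\langle\lambda,\phi\rangle+\langle\bar c,FP\,c\rangle$ at $x_0$, using $\lambda=0$ and $c=\bar c=0$. Every term of $S_{FP}$ that couples an even variable to $c$ or $\bar c$ is at least linear in the ghosts, so its second derivative vanishes at the critical point; and there is no coupling of $\lambda$ to $c$ or $\bar c$ at all. Hence the Hessian is block diagonal across the even--odd splitting. The even block, in the ordering $(y,z,\lambda)$, is the symmetric matrix
\[
H^{ee}=\begin{pmatrix} \tfrac{\de^2 S}{\de y\,\de y} & 0 & \ast\\ 0 & 0 & (\tfrac{\de\phi}{\de z})^{T}\\ \ast & \tfrac{\de\phi}{\de z} & 0\end{pmatrix},
\]
since at $x_0$ the cross terms $\tfrac{\de^2 S_{FP}}{\de\lambda\,\de y}$, $\tfrac{\de^2 S_{FP}}{\de\lambda\,\de z}$ come only from $\langle\lambda,\phi\rangle$ and equal $\tfrac{\de\phi}{\de y}$, $\tfrac{\de\phi}{\de z}$, while $S_{FP}$ is affine in $\lambda$ (so the $\lambda\lambda$ block is zero) and the $yz,zz$ blocks vanished in the previous step. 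A one-line kernel computation proves $H^{ee}$ non-degenerate: a kernel vector $(u,w,\nu)$ satisfies $(\tfrac{\de\phi}{\de z})^{T}\nu=0$, hence $\nu=0$; then $\tfrac{\de^2 S}{\de y\,\de y}\,u=0$, hence $u=0$; then $\tfrac{\de\phi}{\de z}\,w=0$, hence $w=0$. (Equivalently $\det H^{ee}=\pm\det\bigl(\tfrac{\de^2 S}{\de y\,\de y}\bigr)\,\det\bigl(\tfrac{\de\phi}{\de z}\bigr)^{2}\neq 0$.) The odd block is the pairing $(c,\bar c)\mapsto\langle\bar c,FP(x_0)\,c\rangle$ between $\Pi\g$ and $\Pi\g^*$, and non-degeneracy of such a bilinear function on an odd vector space means precisely invertibility of $FP(x_0)$, which is the hypothesis. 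Being block diagonal with both blocks non-degenerate, the Hessian of $S_{FP}$ at $x_0$ is non-degenerate.

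The routine parts here are the two small linear-algebra facts above; the only place where care is needed is the bookkeeping of the super-Hessian — checking that the mixed even--odd second derivatives genuinely vanish at the critical point (because each offending term of $S_{FP}$ is at least linear in the ghosts), and spelling out that ``non-degenerate Hessian'' on an odd vector space is to be read as invertibility of the associated pairing. I do not expect any genuine obstacle beyond this.
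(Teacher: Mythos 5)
Your proof is correct and takes essentially the same route as the paper: compute the super-Hessian of $S_{FP}$ at a critical point in an adapted chart, observe that it splits into an even block (the $y$, $z$, $\lambda$ directions) and an odd block given by $FP(x_0)$, and check that each block is non-degenerate using the non-degeneracy of $\bigl(\frac{\de^2 S}{\de y\,\de y}\bigr)$ and of $FP(x_0)$. The only (harmless) difference is that the paper further takes $\phi(y,z)=z$ in adapted coordinates, so $d_y\phi=0$ and $d_z\phi=\mathrm{Id}$, whereas you keep a general gauge-fixing function, deduce invertibility of $\frac{\de\phi}{\de z}$ from the non-degeneracy of $FP(x_0)$, and finish with a short kernel computation.
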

\begin{proof}
To see this one computes the Hessian of $S_{FP}$ with respect to an adapted chart $x=(y,z)$ in $X$ at a critical point $x_0 \equiv ((y_0,0),0,0,0) \in X_{FP}$: 
\begin{equation}
H_{x_0}S_{FP} = \left(\begin{array}{c|c|c|c|c}
(\frac{\de^2S}{\de y^i\de y^j})_{j=1}^l & 0 & d_y\phi^T &  0 & 0 \\
0 & 0 & d_{z}\phi^* & 0  \\
d_y\phi & d_{z}\phi^* & 0 & 0 & 0 \\
0 & 0 & 0 & 0 & FP(x_0)^* \\
0 & 0 & 0 &  FP(x_0) &0 \\
\end{array}
\right)
\end{equation}
In adapted coordinates, we have $\phi(y,z) = z$, whence we conclude that $d_y\phi = 0$ and $d_z\phi = Id$. By assumption ,$(\frac{\de^2S}{\de y^i\de y^j})_{j=1}^l$ and $FP(x_0)$ are non-degenerate. Hence $H_{x_0}S_{FP}$ is non-degenerate. 
\end{proof}
\subsection{Stationary phase method}
Since we have replaced the degenerate function $S$ with the non-degenerate $S_{FP}$, our next aim is to generalize the methods developed in the last section to the case at hand. The main difference is that the determinant of the odd quadratic operator appears in the nominator (rather than the denominator) of the prefactor.
Let us first write the Hessian of $S_{FP}$ at $x_0$ and its inverse without adapted chart: 
\begin{equation}
H_{x_0}S_{FP} = \left(\begin{array}{c|c|c|c}
(\frac{\de^2S}{\de x^i\de x^j})_{j=1}^n & d_x\phi^T &  0 & 0 \\
d_x\phi & 0 & 0  & 0  \\
 0 & 0 & 0 & FP(x_0)^* \\
 0 & 0 &  FP(x_0) &0 \\
\end{array}
\right)
\end{equation}
\begin{equation}
(H_{x_0}S_{FP})^{-1} = \left(\begin{array}{c|c|c|c}
K & \gamma &  0 & 0 \\
\gamma & 0 & 0  & 0  \\
 0 & 0 & 0 & (FP(x_0)^{-1})^* \\
 0 & 0 &  (FP(x_0)^{-1}) &0 \\
\end{array}
\right)
\end{equation}
Here $\gamma = \rho_{x_0}^\# \circ FP(x_0)^{-1}$
We denote by  $Q_{x_0}$ the Hessian of $S$ restricted to $\phi^{-1}(0)$. 	
We can then express the asymptotic behaviour of $I(\hbar)$ as 
\begin{equation}
I(\hbar) \simeq_{\hbar \to 0}  vol(G)(2\pi\hbar)^{(n-k)/2}\sum_{\text{crit. $G$-orbits } [x_0]} e^{\frac{\ii}{\hbar}S(x_0)}\frac{e^{\frac{\ii}{\hbar}\mathrm{sign}Q_{x_0}}}{|\det Q_{x_0}|^{1/2}}\det FP(x_0) \sum_{\Gamma}\frac{(-\ii\hbar)^{1-\chi(\Gamma)}}{|\mathrm{Aut}(\Gamma)|}F(\Gamma)
\end{equation}
Here the sum goes still over Feynman graphs, but the graphs can have different generators (in the spirit of subsection \ref{sec:FeynmanGraphs}), which we list below. The graphs are generated by the following half-edges:  

\begin{table}[!h]
\centering
\begin{tabular}{c|c|c}
coordinate & half-edge & name  \\
\hline  \\ 
$x^i$ & \begin{tikzpicture} \node[vertex] (a) at (0,0) {};
\node at (1,0) {$i$}
edge (a);
 \end{tikzpicture} & ``field(s)''
\\
$\lambda_j$ & \begin{tikzpicture} \node[vertex] (a) at (0,0) {};
\node at (1,0) {$j$}
edge[dotted] (a);
 \end{tikzpicture} & Lagrange multiplier\footnote{In physics literature, one often uses another version of the Faddeev-Popov formalism where $\lambda$ is not needed, if it is introduced, it is sometimes called ``Nakantishi-Lautrup field'' (e.g. \cite{Weinberg2005}).}\\
$c^j$ & \begin{tikzpicture} \node[vertex] (a) at (0,0) {};
\node at (1,0) {$j$}
edge[ghost] (a);
 \end{tikzpicture}  & ghost \\
$\bar{c}_j$ & \begin{tikzpicture} \node[vertex] (a) at (0,0) {};
\node at (1,0) {$i$}
edge[antighost] (a);
 \end{tikzpicture} & antighost \\
\end{tabular}
\caption{Half-edges in FP Feynman diagrams, $i$ runs from $1$ to $n$, whereas $j$ runs from $1$ to $k$.}
\end{table}

Edges are perfect matchings on the space of half-egdes. Note  that that the Feynman rules do not associate edges (as opposed to vertices, see below) with individual terms in the action, but rather, edge types that are not listed below evaluate to 0 (the corresponding block in the inverse of the Hessian vanishes). 

\begin{table}[!h]
\centering
\begin{tabular}{c|c|c}
edge & operator & name \\
\hline \\
\begin{tikzpicture} \node[] (a) at (-1,0) {$i$};
\node at (1,0) {$j$}
edge (a); \end{tikzpicture} & $K^{ij}$ & propagator in gauge $\phi = 0 $ \\
\begin{tikzpicture} \node[coordinate] (a) at (0,0) {};
\node at (1,0) {$j$}
edge[dotted] (a); 
\node at (-1,0) {$i$} edge (a); 
\end{tikzpicture} & $ \beta_i^j $ & `` $x$-$\lambda$ propagator'' \\
\begin{tikzpicture} \node[coordinate] (a) at (0,0) {};
\node at (1,0) {$j$}
edge[antighost] (a); 
\node at (-1,0) {$i$} edge[ghost] (a); 
\end{tikzpicture}& $FP^i_j$ & ghost propagator\\
\end{tabular}
\caption{Edges in FP Feynman diagrams and their corresponding operators}
\end{table}

Vertices are given by a partition of the set of half-edges. Under the Feynman rules, like in the case before, they are given by third and higher degree derivatives of the action. Every term in the action yields a corresponding vertex. Since the action is linear in $\lambda, c$, and $\bar{c}$, there is at most one of the corresponding half-edges at every vertex\footnote{Again, although in principle there are of course graphs with e.g. 5 $\lambda$ half-edges at one vertex, they evaluate to 0 under the Feynman rules, since the corresponding derivative of the action vanishes. Usually one does not count them among the Feynman graphs appearing in the theory.}. See Table \ref{tab:vertices}. 

\begin{table}[!h]
\centering
\begin{tabular}{m{3cm}|m{6cm}|m{3cm}}
vertex & tensor & name \\
\hline \\
\begin{tikzpicture}
%center
\node[vertex] (o) at (0,0) {};
%lines on top
\node[coordinate, label=below:{$i_1$}] at (30:1) {$i_1$}
edge[] (o);
\node[coordinate, label=above:{$i_2$}] at (60:1) {$i_2$}
edge[] (o);
\node[coordinate, label=below:{$i_s$}] at (145:1) {$i_s$}
edge[] (o);
\draw[dotted] (90:0.5) arc (90:130:0.5);
% lines below 
%\node[coordinate, label=below:{$j_1$}] (j1) at (-30:1) {};
%\node[coordinate, label=below:{$j_2$}] (j2) at (-60:1) {};
%\node[coordinate, label=below:{$j_t$}] (j3) at (-145:1) {};
%\draw[dotted] (-90:0.5) arc (-90:-145:0.5); 
%\draw[fermion] (o) -- (j1);
%\draw[fermion] (o) -- (j2);
%\draw[fermion] (o) -- (j3);
\end{tikzpicture}
&  \hfil$\restr{\frac{\de^s}{\de x^{i_1} \ldots \de x^{i_s}}S(x)}{x=x_0}, s \geq 3$  & $x$  $s$-vertex\footnote{In gauge theories the field (here represented by $x$) usually has a particular name (``photon'' in QED, ``gluon'' in QCD, etc.) and the vertex is then called after the field. } \\
\begin{tikzpicture} 
%center
\node[vertex] (o) at (0,0) {};
%lines on top
\node[coordinate, label=below:{$i_1$}] at (30:1) {$i_1$}
edge[] (o);
\node[coordinate, label=above:{$i_2$}] at (60:1) {$i_2$}
edge[] (o);
\node[coordinate, label=below:{$i_l$}] at (145:1) {$i_s$}
edge[] (o);
\draw[dotted] (90:0.5) arc (90:130:0.5);
% lines below 
%\node[coordinate, label=below:{$j_1$}] (j1) at (-30:1) {};
\node[coordinate, label=below:{$j$}] (j2) at (-160:1) {};
%\node[coordinate, label=below:{$j_t$}] (j3) at (-145:1) {};
%\draw[dotted] (-90:0.5) arc (-90:-145:0.5); 
%\draw[fermion] (o) -- (j1);
\draw[dotted] (o) -- (j2);
%\draw[fermion] (o) -- (j3);
\end{tikzpicture}
&  \hfil$\restr{\frac{\de^s}{\de x^{i_1} \ldots \de x^{i_l}}\phi^j(x)}{x=x_0}, l \geq 2$ & $\lambda$ $l$-vertex \\
\begin{tikzpicture}
%center
\node[vertex] (o) at (0,0) {};
%lines on top
\node[coordinate, label=below:{$i_1$}] at (30:1) {$i_1$}
edge[] (o);
\node[coordinate, label=above:{$i_2$}] at (60:1) {$i_2$}
edge[] (o);
\node[coordinate, label=below:{$i_m$}] at (145:1) {$i_m$}
edge[] (o);
\draw[dotted] (90:0.5) arc (90:130:0.5);
% lines below 
%\node[coordinate, label=below:{$j_1$}] (j1) at (-30:1) {};
\node[coordinate, label=below:{$j_1$}] (j2) at (-160:1) {};
\node[coordinate, label=below:{$j_2$}] (j3) at (-20:1) {};
%\draw[dotted] (-90:0.5) arc (-90:-145:0.5); 
%\draw[fermion] (o) -- (j1);
\draw[antighost] (o) -- (j2);
\draw[ghost] (o) -- (j3);
\end{tikzpicture}
& \hfil $\restr{\frac{\de^s}{\de x^{i_1} \ldots \de x^{i_m}}FP^{j_1}_{j_2}(x)}{x=x_0}, m \geq 1	$  & ghost $m$-vertex
\end{tabular}
\caption{Vertices in FP Feynman diagrams and their corresponding tensors} \label{tab:vertices}
\end{table}
Mathematically, the graphs are modeled by tuples $(H=H_e \cup H_o,V,\m_e,\m_o)$  where $H_e$ denotes the set of even and $H_o$ the set of all half-edges. $m_e$, $m_o$ are perfect matchings on $H_e, H_o$ respectively. The additional structure in the Feynman rules above (no $\lambda - \lambda$ edges, only directed $\bar{c} - c$ edges) comes from the vanishing of the corresponding block in the Hessian, i.e. $F(e) = 0$ on these unwanted edges, $F(v)$  is automatically 0 on vertices that were not listed above because the corresponding derivatives of $S_{FP}$ vanish.  
\begin{rem}
There are two special cases that simplify the diagrams in question. In examples, the gauge-fixing function can often chosen to be linear in the fields. In that case, the $\lambda$ vertices with dotted edges vanish (they correspond to at least two derivatives of the gauge-fixing function). This is the case in the covariant gauges usually applied in Yang-Mills theories, discussed in the example below. \\
Another special case occurs if the FP operator is locally constant in $x$. In that case the weight of the ghost vertex vanishes. This happens in abelian Yang-Mills theories (such as QED).
\end{rem}
\subsection{Example: Yang-Mills}
Yang-Mills theory can be used to describe the dynamics of gluons, the particles that transmit the electro-magnetic and weak and strong forces. 
The space of fields is the space of connections on a trivial principal bundle $M \times G$, where $M$ is a Riemannian manifold and $G \subset GL(n)$ and connected and compact Lie group\footnote{In particle physics, $G = U(1)$ describes the electromagnetic force, $G=SU(2)$ the weak force and $G=SU(3)$ the strong force.} 
\begin{equation}
F_M = Conn(M\times G ) \cong \Omega^1(M,g)
\end{equation}
with action functional 
\begin{equation}
S_M = \int_M \mathrm{tr}\frac12 F_A \wedge *F_A.
\end{equation}
Here $*$ denotes the Hodge star and we remind the reader that $F_A$ is the curvature of the connection $A$ given by $F_A = dA + \frac12[A,A].$ 
The gauge group $\mathcal{G} =C^\infty(M,G) $ acts on $F_M$ via 
\begin{equation}
g\cdot A = \rho(g)A = g^{-1}Ag + g^{-1}dg
\end{equation}
and the infinitesimal action of the Lie algebra $Lie(\mathcal{G}) = \Omega^0(M,\g)$ is given by 
\begin{equation}
\rho^\#_A(\alpha) = d_A\alpha \in T_AF_M. 
\end{equation}
A gauge that is often chosen is the Lorenz gauge 
\begin{equation}
\phi(A) = d^*A = 0
\end{equation} 
so that the gauge fixing function is $\phi = d^*\colon \Omega^1(M,\g) \to \Omega^0(M,\g)$ and the Faddeev-Popov operator $FP(A) d\phi \circ \rho_A^\# \colon \Omega^0(M,\g)\to \Omega(M,\g)$ is given by 
$$FP(A)\alpha = d^*d_A\alpha.$$ 
In the infinite-dimensional setting, one has to chose an appropriate model for $\Omega^0(M,\g)^*$. A possible choice is $\Omega^d(M,\g)$, where the action of $\lambda \in \Omega^d(M,\g)$ on $\alpha\in\Omega^0(M,\g)$ is given by 
\begin{equation}
\langle \lambda,\alpha\rangle = \int_M\mathrm{tr}\lambda \wedge \alpha. 
\end{equation}
The space of Faddeev-Popov fields is then 
\begin{equation}
F_{FP} = F_M \times \Omega^d(M,\g) \times \Pi\Omega^0(M,\g) \times \Pi\Omega^d(M,\g) \ni (A,\lambda,c,\bar{c})
\end{equation}
and the Faddeev-Popov action is 
\begin{equation}
S_{FP}[A,\lambda,c,\bar{c}] = \int_M \mathrm{tr} \left(\frac12 F_A \wedge * F_A+ \lambda \wedge d^*A + \bar{c} \wedge d^*d_Ac\right). 
\end{equation}
Perturbative evaluation of the path integral 
$$Z^{pert} = \int^{pert}_{F_{FP}}e^{\frac{\ii}{\hbar}S_{FP}}  \simeq \sum_{\Gamma}\frac{(-\ii\hbar)^{-\chi(\Gamma)}}{|\mathrm{Aut(\Gamma)}}F(\Gamma)$$
leads to a series of terms labeled by Feynman graphs. As discussed above, there are in principle three types of edges in these diagrams but since the gauge-fixing is linear, the ``dotted'' edges (or ``$A$-$\lambda$'' propagators) do not appear. See Table \ref{tab:YMedges}The propagator in Yang-Mills is usually denoted with a curly line, and called \emph{gluon} propagator (after the particles corresponding to excitations of the gauge field). 
\begin{table}[!h]
\centering
\begin{tabular}{c|c|c}
edge & operator & name \\
\hline \\
\begin{tikzpicture} \node[] (a) at (-1,0) {$i$};
\node at (1,0) {$j$}
edge[gluon] (a); \end{tikzpicture} & $K^{ij}$ & gluon propagator in gauge $\phi = 0 $ \\
\begin{tikzpicture} \node[coordinate] (a) at (0,0) {};
\node at (1,0) {$j$}
edge[antighost] (a); 
\node at (-1,0) {$i$} edge[ghost] (a); 
\end{tikzpicture}& $FP^i_j$ & ghost propagator\\
\end{tabular}
\caption{Edges in Yang-Mills diagrams in Lorenz gauge and their corresponding operators}
\label{tab:YMedges}
\end{table}

The vertices correspond to the cubic and higher terms in the action functional, see Table \ref{tab:YMvertices}. 
\begin{table}[!h]
\centering
\begin{tabular}{c|c|c}
vertex & term in action functional & name \\
\hline
\begin{tikzpicture} 
%center
\node[vertex] (o) at (0,0) {};
%lines on top
\node[coordinate] at (30:1) {$i_1$}
edge[gluon] (o);
\node[coordinate] at (150:1) {$i_2$}
edge[gluon] (o);
\node[coordinate] at (270:1) {$i_s$}
edge[gluon] (o);
\end{tikzpicture}
&  \hfil$\frac12\int_M \mathrm{tr}[A,A]*dA$ & 3-gluon vertex \\
\begin{tikzpicture} 
%center
\node[vertex] (o) at (0,0) {};
%lines on top
\node[coordinate] at (30:1) {$i_1$}
edge[gluon] (o);
\node[coordinate] at (120:1) {$i_2$}
edge[gluon] (o);
\node[coordinate] at (210:1) {$i_s$}
edge[gluon] (o);
\node[coordinate] at (300:1) {$i_s$}
edge[gluon] (o);
\end{tikzpicture}
&  \hfil$\frac18\int_M \mathrm{tr}[A,A]*[A,A]$ & 4-gluon vertex \\
\begin{tikzpicture}
%center
\node[vertex] (o) at (0,0) {};
%lines on top
\node[coordinate] at (90:1) {$i_1$}
edge[gluon] (o);
% lines below 
%\node[coordinate, label=below:{$j_1$}] (j1) at (-30:1) {};
\node[coordinate, label=below:{$j_1$}] (j2) at (-160:1) {};
\node[coordinate, label=below:{$j_2$}] (j3) at (-20:1) {};
%\draw[dotted] (-90:0.5) arc (-90:-145:0.5); 
%\draw[fermion] (o) -- (j1);
\draw[antighost] (o) -- (j2);
\draw[ghost] (o) -- (j3);
\end{tikzpicture}
& \hfil $\int_M \bar{c} \wedge d^*[A,c]$  & ghost vertex
\end{tabular}
\caption{Vertices in FP Feynman diagrams and their corresponding tensors} \label{tab:YMvertices}
\end{table}
A major problem is that closed gluon loops will result in divergent integrals. For Yang-Mills theory this problem has been solved by the process of renormalization, see e.g. \cite{Hollands2008}.

\subsection{Example: Gravity}
In principle, the perturbative Faddeev-Popov formalism can be applied to gravity just in the same way. The space of fields is the space of Lorentzian metrics on $M$: 
\begin{equation}
F_M = Met^{1,3}(M)
\end{equation}
and the action functional is 
\begin{equation}
S_M[g] = \int_M R_g dvol_g,
\end{equation}
where $R_g$ denotes the Ricci scalar. The gauge group is the group of Diffeomorphisms of $M$, and its Lie algebra is the Lie algebra of vector fields: $\mathcal{G}=Diff(M), Lie(\mathcal{G}) = \mathfrak{X}(M)$. The infinitesimal action of $\mathfrak{X}(M)$ on $F_M$ is 
\begin{equation}
\rho^\#_g(\xi) = L_\xi(g) \in T_gMet^{1,3}(M) \simeq S^2T^*M.
\end{equation}
A gauge-fixing condition often used is the \emph{de Donder} or \emph{harmonic coordinates} gauge-fixing given by 
$$\phi(g)= \Gamma_{\mu\nu}^\rho g^{\mu\nu} = 0, $$
where $\Gamma$ denotes the Christoffel symbols, and we have 
$$\langle \bar{c},FP(g)c\rangle = g^{\mu\nu}g^{\rho\sigma}\bar{c}_\rho\de_\mu\de_\nu c_\sigma.$$
Due to the highly non-linear nature of all three terms in the Faddeev-Popov action, all possible graphs discussed above can appear (with any valence of ``graviton'' legs). For this theory, no satisfactory renormalization procedure has yet been found\footnote{Which does not mean that it does not exist. See, for instance, the discussion in \cite{Kreimer2008}. }.  See also the discussion in \cite{Prinz2018}.
\section{BRST symmetry}

As already remarked in Henneaux and Teitelboim \cite{Henneaux1994}, a curiosity in gauge-fixing is that to ``remove'' unwanted degrees of freedom, one \emph{adds} extra variables. An explanation of this phenomenon is the mathematical concept of resolutions. The BRST symmetry (discovered independently by Becchi, Rouet and Stora in \cite{Becchi1975}, \cite{Becchi1976} and Tyutin in \cite{Tyutin1976}) is the starting point of the connection between gauge theory and homotopical (or derived) algebra.  
\subsection{BRST operator}
Remember that 
$$X_{FP} = X \times \g^* \times \Pi(\g \oplus \g^*) \ni (x,\lambda,c,\bar{c})$$
from where we have
$$ \mathcal{O}(X_{FP}) = C^{\infty}(X) \otimes \hat{S}\g \otimes \wedge^\bullet\g^* \otimes \wedge^\bullet\g.$$
Again let $T_a$  be a basis of $\g$, $f_{ab}^c$ the corresponding structure constants and $\rho^\#(T_a) = v^i_a\frac{\de }{\de x^i}$. 
$\mathcal{O}(X_{FP})$ is a super algebra (the $Z_2$-grading comes from the exterior algebras) and there is an odd derivation $Q$ on this superalgebra (put differently, an odd vector field on $X_{FP}$): 
\begin{defn} 
The \emph{BRST operator} or \emph{BRST symmetry} is the derivation $Q \colon \mathcal{O}(X_{FP}) \to \mathcal{O}(X_{FP})$ defined on generators by
\begin{align}
Qx^i  &= c^av^i_a(x)\\
Qc^c &= \frac12 f^c_{ab}c^ac^b \\
Q\bar{c}_a &= \lambda_a \\
Q\lambda_a &= 0
\end{align}
\end{defn}
The connection to homological algebra comes from the following Lemma: 
\begin{lem}
The operator $Q\colon \mathcal{O}(X_{FP}) \to \mathcal{O}(X_{FP})$ squares to 0.
\end{lem}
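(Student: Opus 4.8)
The plan is to verify $Q^2 = 0$ directly on the four generating sets of coordinates $x^i$, $c^c$, $\bar{c}_a$, $\lambda_a$, and then invoke the fact that $Q$ is an (odd) derivation: if $Q^2$ annihilates a generating set of a free (super)algebra, it annihilates everything, since $Q^2$ is an even derivation (the commutator of an odd derivation with itself, $Q^2 = \frac12[Q,Q]$, is always an even derivation) and even derivations are determined by their values on generators. So the whole content is four computations, each using the structure equations of the Lie algebra $\g$.

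\begin{proof}
Since $Q$ is an odd derivation of $\mathcal{O}(X_{FP})$, the operator $Q^2 = \frac12[Q,Q]$ is an even derivation. An even derivation of a free supercommutative algebra vanishes identically as soon as it vanishes on a set of generators, so it suffices to check $Q^2 = 0$ on $x^i$, $c^c$, $\bar{c}_a$ and $\lambda_a$.

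On $\lambda_a$ and $\bar{c}_a$ this is immediate: $Q^2\lambda_a = Q(0) = 0$ and $Q^2\bar{c}_a = Q\lambda_a = 0$.

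On $c^c$ we compute, using that $Q$ is an odd derivation and the $c$'s anticommute,
\begin{align*}
Q^2 c^c &= Q\!\left(\tfrac12 f^c_{ab} c^a c^b\right) = \tfrac12 f^c_{ab}\big((Qc^a)c^b - c^a (Qc^b)\big) \\
&= \tfrac12 f^c_{ab}\Big(\tfrac12 f^a_{de} c^d c^e c^b - \tfrac12 f^b_{de} c^a c^d c^e\Big) = \tfrac14\big(f^c_{ab}f^a_{de} + f^c_{da}f^a_{be}\big)c^d c^e c^b,
\end{align*}
after relabeling indices in the second term. Antisymmetrizing $c^d c^e c^b$ over $d,e,b$, the coefficient is proportional to $f^c_{a[b}f^a_{de]}$, which vanishes by the Jacobi identity for the structure constants. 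Hence $Q^2 c^c = 0$.

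On $x^i$ we compute
\begin{align*}
Q^2 x^i &= Q\big(c^a v^i_a(x)\big) = (Qc^a)v^i_a(x) - c^a\, Q\big(v^i_a(x)\big) \\
&= \tfrac12 f^a_{bc} c^b c^c v^i_a(x) - c^a c^b v^j_b(x)\,\partial_j v^i_a(x).
\end{align*}
The second term antisymmetrized over $a,b$ is $\tfrac12 c^a c^b\big(v^j_a \partial_j v^i_b - v^j_b \partial_j v^i_a\big) = \tfrac12 c^a c^b\,[\rho^\#(T_a),\rho^\#(T_b)]^i$. Since $\rho^\#$ is a Lie algebra homomorphism, $[\rho^\#(T_a),\rho^\#(T_b)] = \rho^\#([T_a,T_b]) = f^c_{ab}\,\rho^\#(T_c)$, i.e. this equals $\tfrac12 f^c_{ab} c^a c^b v^i_c$, which exactly cancels the first term. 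Hence $Q^2 x^i = 0$, and the lemma follows.
\end{proof}

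\emph{Main obstacle.} There is no serious obstacle; the only points requiring care are the sign bookkeeping for the odd derivation acting on products of $c$'s, and recognizing that the $x^i$-computation forces one to use that $\rho^\#$ is a genuine Lie algebra morphism (not merely a linear map) — this is where the bracket closes up against the structure constants. The Jacobi identity is what makes the purely ghost term $Q^2 c^c$ vanish.
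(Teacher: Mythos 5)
Your overall strategy is exactly the paper's: the paper's proof is the one-line remark that $Q^2=0$ follows from the Jacobi identity and the compatibility of $\rho^\#$ with the bracket, and you have simply filled in the generator-by-generator computation ($\bar c,\lambda$ trivial, $c$ via Jacobi, $x$ via the infinitesimal action), together with the correct observation that it suffices to check $Q^2$ on generators since it is an even derivation.

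One sign needs fixing in the $x^i$ computation, however. With your own intermediate expressions, the first term is $\tfrac12 f^c_{ab}c^ac^b v^i_c$ and the antisymmetrized second term is $\tfrac12 c^ac^b[\rho^\#(T_a),\rho^\#(T_b)]^i$; if one now inserts $[\rho^\#(T_a),\rho^\#(T_b)]=+f^c_{ab}\rho^\#(T_c)$ as you state, the two terms are \emph{equal} and add to $c^ac^bf^c_{ab}v^i_c$ rather than cancel. The cancellation requires $[\rho^\#(\xi),\rho^\#(\eta)]=-\rho^\#([\xi,\eta])$, i.e.\ the fundamental vector fields of the \emph{left} action $\rho(g,x)=g\cdot x$ as defined in the text form an anti-homomorphism (equivalently, one keeps the homomorphism sign but takes $Qc^c=-\tfrac12 f^c_{ab}c^ac^b$, the usual Chevalley--Eilenberg convention). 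This is a convention-level slip rather than a missing idea --- and the paper's own wording (``$\rho^\#$ is a homomorphism'') is guilty of the same looseness --- but as literally written your final cancellation step does not go through; state the bracket relation with the minus sign (or flip the sign in $Qc$) and the proof is complete.
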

\begin{proof}
This follows from the Jacobi identity and the fact that $\rho^\#$ is a homomorphism of Lie algebras. 
\end{proof}
\subsection{BRST cohomology}
Thus $\mathcal{O}(X_{FP})$ is a ($\Z_2$-graded) complex when equipped with $Q$. The following observation is obvious from the definition. 
\begin{prop}
The complex $(\mathcal{O}(X_{FP}),Q)$ splits as a direct sum of two subcomplexes $(\mathcal{O}(X_{FP}),Q)=((\mathcal{O}(X_{min}),Q_{CE})\oplus (\mathcal{O}(X_{aux}),Q_{aux})$, where 
\begin{align}
X_{min} &= X \times \Pi\g \\
Q_{CE} &=  c^av^i_a(x)\frac{\de }{\de x^i} +  c^ac^b\frac12f^c_{ab}\frac{\de }{\de c^c} \\ 
\end{align} 
and 
\begin{align}
X_{aux} &= \g^* \oplus \Pi\g^* \\
Q_{aux} &= \lambda_a\frac{\de }{\de \bar{c}_a} 
\end{align}
\end{prop}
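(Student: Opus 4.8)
The plan is to recognise the right-hand side as a tensor product of complexes and to check that the BRST operator is ``block diagonal'' with respect to it. The underlying supermanifold factors as $X_{FP}=X\times\g^{*}\times\Pi(\g\oplus\g^{*})=(X\times\Pi\g)\times(\g^{*}\times\Pi\g^{*})=X_{min}\times X_{aux}$, the first factor carrying the coordinates $(x^{i},c^{a})$ and the second the coordinates $(\lambda_{a},\bar{c}_{a})$. Hence $\mathcal{O}(X_{FP})=\mathcal{O}(X_{min})\otimes\mathcal{O}(X_{aux})$ as $\Z_{2}$-graded algebras, with $\mathcal{O}(X_{min})=C^{\infty}(X)\otimes\wedge^{\bullet}\g^{*}$ and $\mathcal{O}(X_{aux})=\hat{S}\g\otimes\wedge^{\bullet}\g$ (in the field-theoretic application ``$\otimes$'' is understood as a suitably completed tensor product).

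First I would verify $Q$-stability on generators. The defining formulas $Qx^{i}=c^{a}v^{i}_{a}(x)$ and $Qc^{c}=\frac12 f^{c}_{ab}c^{a}c^{b}$ involve only the coordinates $x$ and $c$, so $Q$ maps the generators of $\mathcal{O}(X_{min})$ into $\mathcal{O}(X_{min})$; being an odd derivation, $Q$ then preserves the whole subalgebra $\mathcal{O}(X_{min})\otimes 1$. Symmetrically, $Q\bar{c}_{a}=\lambda_{a}$ and $Q\lambda_{a}=0$ involve only $\bar{c}$ and $\lambda$, so $Q$ preserves $1\otimes\mathcal{O}(X_{aux})$. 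Since $\mathcal{O}(X_{min})$ and $\mathcal{O}(X_{aux})$ are $Q$-stable subalgebras that together generate $\mathcal{O}(X_{FP})$, the graded Leibniz rule forces $Q=Q_{CE}\otimes\mathrm{id}+\mathrm{id}\otimes Q_{aux}$ with $Q_{CE}:=Q|_{\mathcal{O}(X_{min})}$ and $Q_{aux}:=Q|_{\mathcal{O}(X_{aux})}$; reading off these restrictions from the defining formulas gives exactly the stated expressions. The relations $Q_{CE}^{2}=0$ and $Q_{aux}^{2}=0$ then follow from the preceding Lemma ($Q^{2}=0$): restricting $Q^{2}$ to $\mathcal{O}(X_{min})\otimes 1$ gives $Q_{CE}^{2}\otimes\mathrm{id}$, and to $1\otimes\mathcal{O}(X_{aux})$ gives $\mathrm{id}\otimes Q_{aux}^{2}$. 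This exhibits $(\mathcal{O}(X_{FP}),Q)$ as the tensor product of the two complexes $(\mathcal{O}(X_{min}),Q_{CE})$ and $(\mathcal{O}(X_{aux}),Q_{aux})$, which is the precise meaning of the splitting.

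To round things off I would identify the two pieces: $Q_{CE}$ is the Chevalley--Eilenberg differential of $\g$ with coefficients in the module $C^{\infty}(X)$, the module structure being the infinitesimal gauge action $\rho^{\#}$ (in the sign convention fixed by the formulas above); and $(\mathcal{O}(X_{aux}),Q_{aux})$ with $Q_{aux}=\lambda_{a}\,\de/\de\bar{c}_{a}$ is the Koszul complex of the regular sequence $\lambda_{1},\dots,\lambda_{k}$ in $\hat{S}\g$, hence quasi-isomorphic to $\R$ via the augmentation $\bar{c}=\lambda=0$. By K\"unneth this gives $H(\mathcal{O}(X_{FP}),Q)\cong H^{\bullet}_{CE}(\g,C^{\infty}(X))$, which is the reason for isolating the ``minimal'' sector. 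There is essentially no obstacle in this argument; the only things calling for a little care are the Koszul signs when writing the differential of a tensor product of $\Z_{2}$-graded complexes, the replacement of $\otimes$ by an appropriate completion when $\g$ or $X$ is infinite-dimensional, and the (purely notational) point that the displayed identity is a tensor product of complexes rather than a direct sum.
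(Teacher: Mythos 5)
Your argument is correct and is exactly the verification the paper has in mind: the paper states this proposition without proof (``obvious from the definition''), and checking that $Q$ preserves the generators $(x^i,c^a)$ and $(\bar c_a,\lambda_a)$ separately, then invoking the Leibniz rule, is precisely that check. Your remark that the splitting is more accurately a tensor product of complexes (with the direct-sum phrasing referring to the odd vector field $Q$ splitting into pieces tangent to the two factors of $X_{FP}=X_{min}\times X_{aux}$) is a fair clarification and also anticipates, via K\"unneth and the acyclicity of the Koszul piece, the cohomology computation the paper carries out in the following proposition.
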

This allows us to compute the cohomology of the BRST operator: 
\begin{prop} The cohomology of the BRST operator is 
\begin{equation}
H^k(\mathcal{O}(X_{FP}),Q) = H^k(\mathcal{O}(X_{min}),Q_{CE}) = \begin{cases}  
C^{\infty}(X/G) & k=0 \\
0 & k = 1
\end{cases}
\end{equation}
\end{prop}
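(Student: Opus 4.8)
The plan is to reduce the computation to the Chevalley--Eilenberg complex of $\g$ via the splitting in the Proposition, and then to exploit freeness of the action. \textbf{Step 1 (kill the auxiliary factor).} The complex $(\mathcal{O}(X_{aux}),Q_{aux})$ is $\hat S\g\otimes\wedge^\bullet\g$ with $Q_{aux}=\lambda_a\frac{\de}{\de\bar c_a}$, the Koszul differential attached to the ``trivial pair'' $(\bar c,\lambda)$. I would exhibit the explicit homotopy $h=\bar c_a\frac{\de}{\de\lambda_a}$ and check that $Q_{aux}h+hQ_{aux}$ equals the operator counting total degree in $\lambda$ and $\bar c$; since this operator is invertible off the constants, $(\mathcal{O}(X_{aux}),Q_{aux})$ has cohomology $\R$ concentrated in degree $0$. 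Combining this with the Proposition (and the Künneth formula, as $\mathcal{O}(X_{FP})$ is the tensor product of the two complexes), one gets $H^\bullet(\mathcal{O}(X_{FP}),Q)\cong H^\bullet(\mathcal{O}(X_{min}),Q_{CE})$.

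\textbf{Step 2 (identify the minimal complex).} Here $\mathcal{O}(X_{min})=C^\infty(X)\otimes\wedge^\bullet\g^*$, and reading off the definition of $Q_{CE}$ one recognizes exactly the Chevalley--Eilenberg cochain complex of $\g$ with coefficients in the $\g$-module $C^\infty(X)$, where $\xi\in\g$ acts by the derivation $f\mapsto\rho^\#(\xi)f=v^i_a\frac{\de f}{\de x^i}$; that this is a Lie algebra action is precisely the fact that $\rho^\#$ is a homomorphism, already used in the Lemma on $Q^2=0$. Hence $H^k(\mathcal{O}(X_{min}),Q_{CE})=H^k_{CE}(\g;C^\infty(X))$, and it remains to compute this in degrees $0$ and $1$.

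\textbf{Step 3 (degrees $0$ and $1$).} In degree $0$ one gets the functions annihilated by all the $v_a$, i.e.\ the functions constant along $G$-orbits; since $G$ is connected and acts freely with $p\colon X\to X/G$ a submersion, these are exactly $p^*C^\infty(X/G)\cong C^\infty(X/G)$. For degree $1$, the point is that freeness makes $\rho^\#_x$ injective for every $x$, so the action Lie algebroid $\g\times X$ is isomorphic, as a Lie algebroid (again using that $\rho^\#$ is a homomorphism), to the vertical distribution $VX=\ker dp\subset TX$, an integrable distribution whose leaves are the $G$-orbits; thus $H^\bullet_{CE}(\g;C^\infty(X))$ is the foliated (fiberwise) de Rham cohomology of the fibration $p$. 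Over a cover $\{U_\alpha\}$ of $X/G$ trivializing the bundle one has $p^{-1}(U_\alpha)\cong U_\alpha\times G$ and the fiberwise complex splits as $C^\infty(U_\alpha)\,\hat\otimes\,\Omega^\bullet(G)$, so its degree-$1$ cohomology is $C^\infty(U_\alpha)\otimes H^1_{\mathrm{dR}}(G)$, which vanishes when $H^1_{\mathrm{dR}}(G)=0$ --- the case of interest (e.g.\ $\g$ semisimple, as for $G=SU(n)$; the abelian case is genuinely different, cf.\ the earlier remark on $U(1)$). Finally I would patch: a fiberwise-closed $1$-form is locally $d_{\mathrm{fib}}g_\alpha$, the differences $g_\alpha-g_\beta$ are fiberwise-locally constant hence pulled back from $X/G$ and form a \v{C}ech $1$-cocycle valued in the fine sheaf $C^\infty_{X/G}$, so they are a coboundary, the $g_\alpha$ can be corrected to a global primitive, and $H^1=0$.

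\textbf{Main obstacle.} The substantive point is the degree-$1$ part of Step 3: translating the algebraic vanishing into the geometry of the principal bundle $p\colon X\to X/G$, using local triviality to reduce to $H^1_{\mathrm{dR}}(G)$, and then carrying out the partition-of-unity / \v{C}ech patching on the base --- all while being honest that this genuinely requires $H^1_{\mathrm{dR}}(G)=0$. The remaining ingredients (the Koszul homotopy of Step 1, the identification of Step 2, and the degree-$0$ statement) are routine once the definitions are unwound.
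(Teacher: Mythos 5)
Your proposal is correct and follows the same route as the paper: split off the auxiliary $(\bar c,\lambda)$ sector and identify the minimal complex with the Chevalley--Eilenberg complex of $\g$ acting on $C^\infty(X)$. The difference is that the paper stops at these two assertions, whereas you actually establish them and then compute the cohomology in degrees $0$ and $1$: the Koszul homotopy $h=\bar c_a\,\de/\de\lambda_a$ (whose anticommutator with $Q_{aux}$ is the Euler operator in $\lambda,\bar c$), the identification --- using freeness, so that $\rho^\#$ trivializes the vertical bundle of $p\colon X\to X/G$ --- of the CE complex with the fiberwise de Rham complex of the principal bundle, and the \v{C}ech/fine-sheaf patching over the base. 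Your caveat is also accurate and worth stressing: the vanishing in degree $1$ genuinely requires $H^1_{\mathrm{dR}}(G)=0$ (plus connectedness of $G$ for the $k=0$ statement), so the proposition as literally stated fails for $G=U(1)$, where $H^1$ is isomorphic to $C^\infty(X/G)$; it does hold in the cases the notes care about (simply connected or compact semisimple $G$). In short, your argument is a completed version of the paper's sketch rather than a different method.
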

\begin{proof}
The first equality follows from the fact that the cohomology $Q_{aux}$ is trivial. The second equality follows from the fact that $H^k(\mathcal{O}(X_{min}),Q_{CE})$ is the Chevalley-Eilenberg complex for the Lie algebra action of $\g$ on $C^{\infty}(X)$.
\end{proof}
Thus both functions on $X_{FP}$ and $X_{min}$ are resolutions of the quotient $X/G$. Since the ``physical'' observables are elements of the quotient we interpret the even cohomology of the BRST operator as the physically relevant information.
\subsection{Gauge fixing} 
It follows from gauge invariance of $S$ that $S$ is $Q$-closed.  
From the definition of $Q$, the following Lemma is immediate: 
\begin{lem}
We have 
\begin{equation}
S_{FP} = S + Q\psi,
\end{equation}
where 
\begin{equation}
\psi(x) = \langle\bar{c} \phi(x)\rangle
\end{equation}
is called the \emph{gauge fixing fermion}.
\end{lem}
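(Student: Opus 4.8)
The plan is to compute $Q\psi$ directly from the action of the BRST operator on generators together with the graded Leibniz rule, and to read off the claim by comparing with the definition \eqref{eq:FPaction} of $S_{FP}$.

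First I would fix a basis $T_a$ of $\g$ together with the dual basis on $\g^*$, so that $\psi = \langle \bar c, \phi(x)\rangle = \bar c_a\,\phi^a(x)$, where the $\phi^a$ are the components of the gauge-fixing function $\phi\colon X\to\g$; this is an odd element of $\mathcal{O}(X_{FP})$, being linear in the antighost. Since $Q$ is an odd derivation, $Q(\bar c_a\,\phi^a(x)) = (Q\bar c_a)\,\phi^a(x) - \bar c_a\, Q(\phi^a(x))$, the minus sign coming from $|\bar c_a| = 1$. The first summand is immediate from $Q\bar c_a = \lambda_a$ and reassembles into $\langle\lambda,\phi(x)\rangle$.

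For the second summand, observe that $\phi^a$ is a function of $x$ alone, so the chain rule gives $Q(\phi^a(x)) = \tfrac{\de\phi^a}{\de x^i}(x)\,Qx^i = \tfrac{\de\phi^a}{\de x^i}(x)\,c^b v^i_b(x)$, and by the very definition $FP(x) = d\phi(x)\circ\rho^\#_x$ with $\rho^\#(T_b) = v^i_b\,\de/\de x^i$, the right-hand side is exactly $(FP(x)c)^a = FP(x)^a_b\,c^b$. Collecting the two summands gives $Q\psi = \langle\lambda,\phi(x)\rangle \pm \langle\bar c, FP(x)c\rangle$, which, up to the overall sign in the ghost pairing, is precisely $S_{FP}-S$. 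It is also worth recording (and using) that $S$ itself is $Q$-closed: $QS = c^a v^i_a\,\de S/\de x^i = c^a\,(L_{\rho^\#(T_a)}S) = 0$ by $G$-invariance of $S$, so that $S_{FP} = S + Q\psi$ genuinely exhibits $S_{FP}$ as a $Q$-exact deformation of a $Q$-cocycle.

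There is no serious obstacle here; the whole content is the graded Leibniz computation above. The only point needing care is the bookkeeping of Koszul signs — when commuting $Q$ past the odd generator $\bar c$, and when reordering the odd factors $c$ and $\bar c$ — and, relatedly, pinning down the sign conventions (in $Q\bar c$, in $\psi$, or in the ghost term of \eqref{eq:FPaction}) so that the two sides agree on the nose.
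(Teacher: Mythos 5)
Your computation is exactly the verification the paper leaves implicit: the text simply declares the lemma ``immediate from the definition of $Q$'', and your graded-Leibniz expansion $Q\langle\bar c,\phi(x)\rangle = \langle\lambda,\phi(x)\rangle \pm \langle\bar c, FP(x)c\rangle$ (using $Q\bar c_a=\lambda_a$, $Qx^i=c^av^i_a$, and $FP(x)=d\phi(x)\circ\rho^\#_x$) is precisely that omitted check, with the remaining sign being a matter of the ordering convention in the odd pairing, as you note. So the proposal is correct and takes essentially the same route as the paper.
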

Thus the gauge fixing does not change the cohomology class of the physical action. 
\begin{rem}
It should be noted that if we are only interested in a resolution of the gauge symmetries, $X_{min}$ suffices. However, to implement the gauge-fixing condition, we have to add the acyclic complex $X_{aux}$. 
\end{rem}
One can promote the ``BRST logic'' (try to resolve the action of the gauge symmetries on $X$) to a gauge-fixing formalism, and there is also a quantum version of it. However, we will head straight to the BV formalism, and refer to the literature (e.g. \cite{Henneaux1994}) for a deeper discussion of BRST concepts.  \section{Batalin-Vilkovisky formalism and effective actions}
The Batalin-Vilkovisky formalism, introduced by Batalin and Vilkovisky in the early '80s (\cite{Batalin1981},\cite{Batalin1983}) is the most powerful gauge-fixing formalism on the market. Its main advantages over the BRST formalism are 
\begin{itemize}
\item it can deal with situations where the distribution of symmetries is not integrable (this happens mostly in string theory) 
\item it has a natural framework for renormalization and effective actions.
\end{itemize}
The very rough idea is that - in a situation where one cannot define the perturbative integral even on $X_{FP}$ - one adds, for all fields $\phi \in X_{FP}$ an ``antifield'' $\phi^+$ and the perturbs the region of integration away from the zero section $X_{FP}  = \{\phi^+ = 0\}$ to a new region $\mathcal{L}$: 

\begin{figure}[!h]
\centering
\begin{tikzpicture} 
\node[coordinate, label=below:{$\phi$}] at (2,0) {};
\draw (-2,0) -- (2,0); 
\node[coordinate, label=left:{$\phi^+$}] at (0,2) {};
\draw (0,-2) -- (0,2); 
\node[coordinate, label=above:{$\mathcal{L}$}] at (20:2) {};
\draw[dashed] (-160:2) -- (20:2);
\end{tikzpicture}
\end{figure} 
In these lecture notes we discuss a simple version of the BV formalism that uses only $\Z_2$-grading and super vector spaces. This avoids the technical complications of graded manifolds. There are plenty of resources on BV formalism, the author has found the introductory texts \cite{Fiorenza2003} and \cite{Cattaneo2019} and references therein helpful. 
\subsection{Odd symplectic vector spaces}
We begin with the discussion of odd symplectic vector spaces.
\begin{defn}
An \emph{odd symplectic vector space} (or \emph{BV vector space}) $(V,\omega)$ is a super vector space  $V = V_0 \oplus \Pi V_1$ with a non-degenerate symmetric bilinear pairing $\omega\colon V_0 \times V_1 \to \R$.  
\end{defn}
Let us explain how one can interpret the pairing $\omega$ as an odd symplectic form on $V$.
One can extend $\omega$ to a bilinear form on $V$ by defining it to vanish on $V_0$ and $V_1$. The symmetry of $\omega$ then implies that it is antisymmetric in the $\Z_2$-graded sense. It is odd since it pairs only even and odd vectors. On a vector space, any antisymmetric bilinear form is symplectic ($d\omega = 0$ since $\omega$ is constant). 
\begin{defn}
A subspace $\mathcal{{L}} \subset V$ of an odd symplectic vector space $(V,\omega)$ is called Lagrangian if $\restr{\omega}{\mathcal{L}} = 0 $ and $\dim \mathcal{L} = \frac12\dim V$. \end{defn}
For example, $V_0$ and $\Pi V_1$ are Lagrangian subspace of $V$ (non-degeneracy of $\omega$ forces $\dim V_0= \dim V_1 = \frac12 \dim V$). 
In coordinates $y^i$ on $V_0$ and $\theta_i$ on $V_1$ the odd symplectic form reads 
$$\omega = dy^i \wedge d\theta^i.$$
We can also choose more general Darboux coordinates $z= (x^i,x_i^+) $ where the $x^i$ can be even or odd, as long as $x_i+$ have opposite parity form $x^i$. 

\subsection{Odd Poisson bracket and BV algebra}
The odd symplectic form induces an odd Poisson bracket $(\cdot,\cdot) \colon \mathcal{O}(V) \times \mathcal{O}(V) \to \mathcal{O}(V)$. In Darboux coordinates $z^\alpha = (x^i,x_i^+)$ it reads 
\begin{equation}
(f,g) = \frac{\de_r f}{\de x_i^+}\frac{\de_l g}{\de x^i} - (-1)^{(|f|+1)(|g|+1)}\frac{\de_r f}{\de x_i^+}\frac{\de_lg}{\de x^i}
\end{equation}
where the \emph{left and right derivatives} are defined by 
\begin{align}
\frac{\de_lf}{\de z^\alpha} &= \frac{\de f}{\de z^\alpha} \\
\frac{\de_rg}{\de z^\alpha} &= (-1)^{|z^\alpha||g|}\frac{\de g}{\de z^\alpha}. 
\end{align}
This bracket is called \emph{BV bracket}, sometimes also the \emph{Buttin bracket} or \emph{antibracket}. 
There is an odd operator $\Delta \colon \mathcal{O}(V) \to \mathcal{O}(V)$ called the \emph{BV Laplacian} given\footnote{The story of the BV Laplacian is a lot subtler and more interesting than we make it appear here. See the papers \cite{Khudaverdian2004},\cite{Severa2006} for a more detailed discussion of this aspect.}  in these Darboux coordinates by 
\begin{equation}
\Delta = \sum_i \frac{\de }{\de x_i^+}\frac{\de }{\de x^i}
\end{equation}
\begin{lem} The BV operator squares to zero: $\Delta^2 = 0$. 
\end{lem}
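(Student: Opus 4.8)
The plan is to verify $\Delta^2 = 0$ by a direct computation in the Darboux coordinates $z^\alpha = (x^i, x_i^+)$, using two elementary facts about the graded-commutative algebra $\mathcal{O}(V)$: partial derivatives with respect to distinct coordinates super-commute, $\frac{\partial}{\partial z^\alpha}\frac{\partial}{\partial z^\beta} = (-1)^{|z^\alpha||z^\beta|}\frac{\partial}{\partial z^\beta}\frac{\partial}{\partial z^\alpha}$, and $\frac{\partial}{\partial \xi}$ squares to zero whenever $\xi$ is an odd coordinate. Recall also that in Darboux coordinates $x_i^+$ has parity opposite to that of $x^i$.

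First I would set $D_i := \frac{\partial}{\partial x_i^+}\frac{\partial}{\partial x^i}$, so that $\Delta = \sum_i D_i$ and hence $\Delta^2 = \sum_i D_i^2 + \sum_{i\neq j} D_i D_j$. Each $D_i$ is a composite of two derivatives of parities $|x^i|$ and $|x^i|+1$, so $D_i$ is an \emph{odd} operator; this observation is what drives the whole argument.

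Then I would dispose of the two sums separately. For the diagonal terms: since $x^i$ and $x_i^+$ are distinct coordinates, $\frac{\partial}{\partial x^i}$ and $\frac{\partial}{\partial x_i^+}$ super-commute, and exactly one of them is odd, so rewriting $D_i^2 = \frac{\partial}{\partial x_i^+}\frac{\partial}{\partial x^i}\frac{\partial}{\partial x_i^+}\frac{\partial}{\partial x^i}$ and moving the two copies of the odd derivative next to each other makes the term vanish. For the off-diagonal terms with $i \neq j$, all four coordinates $x^i, x_i^+, x^j, x_j^+$ are distinct, so the blocks $D_i$ and $D_j$ can be interchanged at the price of the Koszul sign $(-1)^{|D_i||D_j|} = (-1)^{1\cdot 1} = -1$; thus $D_iD_j = -D_jD_i$ and $\sum_{i\neq j} D_iD_j$ cancels in pairs. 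Combining, $\Delta^2 = 0$.

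The computation is entirely routine; the one point deserving care — and the only place one could slip — is the sign bookkeeping when commuting the second-order blocks $D_i$ past one another, i.e. confirming that $D_i$ is genuinely odd, since it is exactly this parity that turns the off-diagonal contributions into cancelling rather than reinforcing terms.
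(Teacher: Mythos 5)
Your argument is correct: writing $\Delta=\sum_i D_i$ with $D_i=\frac{\de}{\de x_i^+}\frac{\de}{\de x^i}$, noting each $D_i$ is odd, killing the diagonal terms via $\left(\frac{\de}{\de\xi}\right)^2=0$ for the odd coordinate in each pair, and cancelling the off-diagonal terms pairwise by the Koszul sign $(-1)^{1\cdot 1}=-1$ is exactly the standard computation. The paper states this lemma without proof, and your direct check in Darboux coordinates is precisely the argument it leaves implicit, including the one genuinely delicate point (the odd parity of the second-order blocks $D_i$), which you identify correctly.
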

The triple $(\mathcal{O}(V), \Delta, (\cdot,\cdot))$ satisfies the axioms of what is known as a \emph{BV algebra}. 
\subsection{BV integral} 
We now turn to the main idea of the BV formalism. Suppose that we equip $V$ with the Berezinian $ \mu_{BV} = d^nyD^n\theta$.
Then its square root $\mu_{BV}^{1/2}$ is the density that transforms with the square root of the superdeterminant of the transition functions. The restriction of $\mu_{BV}^{1/2}$ is a \emph{density} and one can define the \emph{BV integral}  
\begin{equation}
\int_{\mathcal{L}} f \mu_{BV}^{1/2}.
\end{equation} of an element $f \in \mathcal{O}(V)$. 
The importance of this integral stems from the following theorem. 
\begin{thm}[Batalin-Vilkovisky \cite{Batalin1981,Batalin1983}, Schwarz \cite{Schwarz1993}]\label{thm:BVint}
If $\Delta f = 0$, then the BV integral 
\begin{equation}
\int_{\mathcal{L}}f\mu_{BV}^{1/2}
\end{equation}
depends only on $[f] \in H_{\Delta}^\bullet(\mathcal{O}(V))$ and the homology class of $\mathcal{L}$.
\end{thm}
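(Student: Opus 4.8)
The plan is to prove the two invariance statements separately, since they are logically independent: one is invariance under adding $\Delta$-exact terms to $f$, the other is invariance under deforming $\mathcal{L}$ within its homology class. The conceptual engine behind both is a single ``Stokes-type'' lemma, which I would establish first and then apply twice.

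\textbf{Step 1: The key lemma.} The heart of the matter is the identity that, for any $g \in \mathcal{O}(V)$ and any Lagrangian $\mathcal{L} \subset V$,
\begin{equation}
\int_{\mathcal{L}} (\Delta g)\, \mu_{BV}^{1/2} = 0.
\end{equation}
I would prove this by choosing Darboux coordinates $z = (x^i, x_i^+)$ adapted to $\mathcal{L}$, so that $\mathcal{L} = \{x_i^+ = 0\}$ and $\mu_{BV}^{1/2}$ restricts to $d^l x\, D^m\theta$ on $\mathcal{L}$ (here $x^i$ splits into even and odd pieces; the odd $x^i$ are integrated à la Berezin, the even ones in the ordinary sense). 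In these coordinates $\Delta = \sum_i \partial_{x_i^+}\partial_{x^i}$, so $\Delta g$, restricted to $x^+ = 0$, is a sum of terms each of which is a derivative: the even-$x^i$ directions give a total derivative integrating to zero by ordinary Stokes (using compact support, or rapid decay, in $\mathcal{O}(V)$), and the odd-$x^i$ directions give something of the form $\partial_{\theta}(\cdots)$, which integrates to zero by the defining property $\int_{\Pi\R} D\theta\, \partial_\theta h = 0$ of the Berezin integral. The one subtlety is that $\Delta$ also differentiates in the $x_i^+$ directions transverse to $\mathcal{L}$; but those derivatives land on a factor that is then evaluated at $x^+ = 0$ and do not obstruct the argument, since what remains after setting $x^+=0$ is still a tangential derivative. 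This coordinate computation is the step I expect to be the main obstacle, mostly in getting the signs and the even/odd bookkeeping right; conceptually it is just ``$\Delta$ is a divergence.''

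\textbf{Step 2: Independence of the representative of $[f]$.} If $f' = f + \Delta g$ with $\Delta f = 0$, then
\begin{equation}
\int_{\mathcal{L}} f' \mu_{BV}^{1/2} - \int_{\mathcal{L}} f \mu_{BV}^{1/2} = \int_{\mathcal{L}} (\Delta g)\, \mu_{BV}^{1/2} = 0
\end{equation}
by Step 1. This is immediate once the lemma is in hand.

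\textbf{Step 3: Independence of $\mathcal{L}$ within its homology class.} Given two homologous Lagrangians $\mathcal{L}_0, \mathcal{L}_1$, connect them by a smooth family $\mathcal{L}_t$ (a ``Lagrangian homotopy''), sweeping out a chain whose boundary is $\mathcal{L}_1 - \mathcal{L}_0$. I would compute $\frac{d}{dt}\int_{\mathcal{L}_t} f\,\mu_{BV}^{1/2}$ and show that, because $\omega$ restricts to zero on each $\mathcal{L}_t$ and $\Delta f = 0$, the derivative is itself of the form $\int_{\mathcal{L}_t}(\Delta g_t)\mu_{BV}^{1/2}$ for a suitable $g_t$ built from the deformation vector field and $f$ — hence vanishes by Step 1. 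Concretely, a deformation of a Lagrangian is governed by a function (the generating function of the infinitesimal symplectomorphism moving $\mathcal{L}_t$), and the variation of the integral is the integral of the odd-Poisson bracket of $f$ with that function, which one rewrites via the BV-algebra identity relating $(\cdot,\cdot)$, $\Delta$, and multiplication as a $\Delta$-exact term plus $(\Delta f)\cdot(\text{stuff})$; the second piece dies because $f$ is $\Delta$-closed. I would reduce to the infinitesimal statement and then integrate in $t$, invoking that any two homologous $\mathcal{L}$ are connected through such families (citing Schwarz for the general position argument rather than reproving it). The delicate point here is purely that the space of Lagrangians can be disconnected, which is exactly why the hypothesis is ``same homology class'' and not merely ``Lagrangian''; within a connected component the homotopy argument applies verbatim.
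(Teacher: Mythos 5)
The paper never proves Theorem \ref{thm:BVint}: it is stated with attribution to Batalin--Vilkovisky and Schwarz and then used, so there is no in-text argument to compare yours against. What you propose is the standard proof from that literature, and its architecture is sound: a single lemma $\int_{\mathcal{L}}(\Delta g)\,\mu_{BV}^{1/2}=0$, proved in Darboux coordinates adapted to $\mathcal{L}$ (even directions by ordinary Stokes, odd directions by $\int D\theta\,\partial_\theta h=0$), applied once to $\Delta$-exact shifts of $f$ and once, infinitesimally, to deformations of $\mathcal{L}$. Two points deserve tightening. First, in Step 1 the even integrations only discard boundary terms under a decay or compact-support hypothesis; for the intended $f=e^{\frac{\ii}{\hbar}S}$ this is formal/perturbative, which matches the level of rigor of the statement, but it should be flagged rather than attributed to $\mathcal{O}(V)$, whose even part is just $C^\infty$.

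Second, and more substantively, Step 3 as written has a loose end: the BV identity gives $(f,\dot\Psi)$ as $\Delta(f\dot\Psi)$ minus $(\Delta f)\dot\Psi$ minus a term proportional to $f\,\Delta\dot\Psi$, and this last term is neither $\Delta$-exact nor killed by $\Delta f=0$; equivalently, ``variation of the integral $=$ integral of the bracket with the generating function'' ignores the change of the half-density under the Hamiltonian flow, whose divergence is proportional to $\Delta\dot\Psi$. The clean repair, available in the linear setting of the theorem, is to describe the nearby Lagrangians as graphs $\{x_i^+=\partial\Psi_t/\partial x^i\}$ with $\Psi_t$ a function of the base coordinates only, so that $\Delta\dot\Psi_t=0$ identically; differentiating in $t$ and integrating by parts then leaves exactly $(\Delta f)\big|_{\mathcal{L}_t}$, which vanishes by hypothesis, plus a second-derivative term contracting $\partial^2 f/\partial x^+\partial x^+$ with $\partial^2\Psi_t$, which cancels by the graded symmetry count. (Alternatively, keep the divergence term explicitly; it cancels $f\,\Delta\dot\Psi$.) With that amendment, and with the statement that homologous Lagrangians are connected by such families delegated to Schwarz as you do, your argument is complete at the stated level of rigor.
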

In particular, the integral is invariant under small perturbations of the Lagrangian. 
\subsection{Master equations}
The functions $f$ we are interested in are of the form $f = e^{\frac{\ii}{\hbar}S}$. The compatibility between the BV Laplacian and the bracket implies 
\begin{equation}
\Delta e^{\frac{\ii}{\hbar}S} = \left(\frac{1}{2}(S,S) - \ii\hbar\Delta S\right)e^{\frac{\ii}{\hbar}S}
\end{equation}
We are thus interested in action functionals $S$ that satisfy 
\begin{equation}
\frac{1}{2}(S,S) - \ii\hbar\Delta S = 0\label{eq:QME}
\end{equation}
Equation \eqref{eq:QME} is known as the \emph{Quantum Master Equation} (QME). Its reduction modulo $\hbar$ is 
\begin{equation}
(S,S) = 0,\label{eq:CME}
\end{equation}
the \emph{classical Master Equation} (CME). We are thus facing the following two problems: 
\begin{prob}[Classical BV extension]
Given an action functional $S\colon X \to \R$, find a supervector space $X_{BV}$ and $S_{BV}\in\mathcal{O}(X_{BV})$ such that 
\begin{itemize}
\item $X \subset (X_{BV})_0$ and $\restr{S_{BV}}{X} = S$.
\item The CME holds: $(S_{BV},S_{BV})  =0$. 
\end{itemize}
\end{prob}
\begin{prob}[Quantum BV extension]
Given $X_{BV}$ and $S_{BV}$ as above, find $\tilde{S} \in \mathcal{O}(X_{BV})[[\hbar]]$ such that the QME holds.
\end{prob}
A very simple case is when $\Delta S = 0$. The QME is then automatically satisfied. 
\subsection{Gauge-fixing}
The gauge-fixing in the BV formalism is the choice of the Lagrangian submanifold $\mathcal{L}$. If we can find a quantum BV action functional $\tilde{S}_{BV}$ satisfying $\Delta e^{\frac{\ii}{\hbar}S} = 0$, then the integral 
\begin{equation}
\int_{\mathcal{L}}e^{\frac{\ii}{\hbar}S}
\end{equation}
is independent under deformations of $\mathcal{L}$. If we find a Lagrangian $\mathcal{L}$ such that the quadratic part of the action $\tilde{S}_{BV}$ has \emph{non-degenerate} critical points when restricted to $\mathcal{L}$, then the BV integral can be defined perturbatively. 
\subsection{Faddeev-Popov and BRST solutions to Classical Master Equation}
If one has FP data, then one can construct a solution to the classical master equation as follows. 
Suppose we are given $X_{FP},S_{FP}$ as above (for simplicity we assume that $X$ is a vector space). Then, define\footnote{$z^\alpha$ is used with different meaning here from above.}
\begin{equation}
X_{BV} = \Pi T^*X_{FP} = X_{FP} \oplus \Pi X^*_{FP} \ni (z^\alpha,z_{\alpha}^+)
\end{equation} 
and 
\begin{equation}
S_{BV} = S + Q(z^\alpha)z_{\alpha}^+ = S + c^av_a^i(x)x_i^+ + \frac{1}{2}f_{ab}^cc_c^+ + \lambda_a\bar{c}^{a,+}.
\end{equation}
Given a gauge-fixing function $\phi(x)$ the gauge-fixing Lagrangian can be defined by  
\begin{equation}
\mathcal{L} = \left\lbrace z^+ = \frac{\de \psi}{\de z^\alpha}\right\rbrace
\end{equation} where $\psi = \left\langle \bar{c},\phi(x) \right\rangle$ is the BV version of gauge-fixing fermion.  One can then show that $\restr{S_{BV}}{\mathcal{L}} = S_{FP}$. \\
A similar construction works for the ``minimal'' space of BRST fields $X_{min} = X \oplus \Pi \g$, one can define \begin{equation}
X_{BV} = X_{min} \oplus \Pi X_{min}^*\label{eq:minBVfields_BRST}\end{equation} 
and 
\begin{equation}S_{BV} = S + Q_{CE}(z^\alpha)z_\alpha^+.\label{eq:minBVaction_BRST}\end{equation} This satisfies the classical master equation.  
\begin{rem}
In both cases, $\Delta S = 0$ if the corresponding $Q_{BV} = (S,\cdot)$ is divergence-free for $\mu_{BV}$. In the infinite-dimensional case the measure does not make sense and one has to regularize the Laplacian to obtain the QME. 
\end{rem}
\subsection{BV pushforward and effective actions}
Suppose we have a BV vector space $(X_{BV},\omega_X)$ with a splitting into BV vector spaces $X_{BV} = Y_{BV} \times Y'_{BV}$ such that the symplectic form splits: $\omega_{X} = \omega_{Y} + \omega_{Y'}$. Then  also the BV Laplacians split: $\Delta_X = \Delta_Y + \Delta_{Y'}$. We also assume that $\mu_{BV}^X = \mu_{BV}^Y\mu_{BV}^{Y'}$. 
\begin{defn}
The \emph{effective action $S_{eff} \in \mathcal{O}(Y)$ on $Y$ induced by $S_{BV} \in \mathcal{O}(X_{BV})$} is defined by 
\begin{equation}
e^{\frac{\ii}{\hbar}S_{eff}(y)} = \int_{y' \in \mathcal{L}}e^{\frac{\ii}{\hbar}S(y+y')}(\mu^{Y'}_{BV})^{1/2}
\end{equation}
\end{defn}
The fiber integral on the right hand side is called a \emph{BV pushforward}. The following theorem is the generalization of Theorem \ref{thm:BVint} to BV pushforwards: 
\begin{thm}[\cite{Cattaneo2008},\cite{Cattaneo2017}]\label{thm:BVpushforward}
\begin{enumerate}
\item If $S\in \mathcal{O}(X_{BV})$ satisfies the Quantum Master Equation, then so does $S_{eff} \in \mathcal{O}(Y_{BV})$, i.e. 
\begin{equation}
\Delta_Y e^{\frac{\ii}{\hbar}S_{eff}(y)} = 0. 
\end{equation}
\item Let $\calL_t$ be a smooth family of Lagrangians with $\calL_0 = \calL$. Denote $S_{eff,t}$ the effective action defined by the gauge-fixing Lagrangian $\calL_t$. Then there is $X \in \mathcal{O}(Y)$ s.t.
\begin{equation}
\restr{\frac{d}{dt}S_{eff,t}}{t=0} = \Delta_Y X.
\end{equation}
\end{enumerate}
\end{thm}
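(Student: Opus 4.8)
The plan is to reduce both claims to two properties of the Berezinian-square-root (BV) integral: that $\Delta_Y$ may be pulled under the fiber integral over $\calL \subset Y'_{BV}$ (it differentiates only the $Y$-variables), and the ``BV Stokes'' principle underlying Theorem \ref{thm:BVint} — the integral over a Lagrangian of a $\Delta$-exact function vanishes — together with its infinitesimal refinement, which controls the first variation of the integral under a deformation of the Lagrangian. Everything else is the splitting $\Delta_X = \Delta_Y + \Delta_{Y'}$, $\mu_{BV}^X = \mu_{BV}^Y\mu_{BV}^{Y'}$, and the Quantum Master Equation for $S$.

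For part (1), write $g_y := e^{\frac{\ii}{\hbar}S(y+\,\cdot\,)} \in \mathcal{O}(Y'_{BV})$, so that $e^{\frac{\ii}{\hbar}S_{eff}(y)} = \int_{\calL} g_y\,(\mu^{Y'}_{BV})^{1/2}$. Since $\Delta_Y$ involves only $Y$-derivatives it commutes with the fiber integral, so
\[
\Delta_Y e^{\frac{\ii}{\hbar}S_{eff}(y)} = \int_{\calL}\left(\Delta_Y e^{\frac{\ii}{\hbar}S(y+y')}\right)(\mu^{Y'}_{BV})^{1/2} = \int_{\calL}\left(\Delta_X - \Delta_{Y'}\right)e^{\frac{\ii}{\hbar}S(y+y')}\,(\mu^{Y'}_{BV})^{1/2}.
\]
The QME gives $\Delta_X e^{\frac{\ii}{\hbar}S} = 0$, so the integrand equals $-\Delta_{Y'}g_y$, a $\Delta_{Y'}$-exact element of $\mathcal{O}(Y'_{BV})$; its integral over the Lagrangian $\calL$ vanishes by the lemma behind Theorem \ref{thm:BVint}. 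Hence $\Delta_Y e^{\frac{\ii}{\hbar}S_{eff}} = 0$, which is the QME for $S_{eff}$.

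For part (2), I would use the infinitesimal form of the Lagrangian-dependence lemma. Choosing Darboux coordinates on $Y'_{BV}$ and writing $\calL_t$ as the graph of $d(t\chi)$ over $\calL_0$ for an odd generating function $\chi \in \mathcal{O}(Y'_{BV})$ (a ``variation of the gauge-fixing fermion''), a short computation — differentiate in $t$ at $0$, then integrate by parts — gives, for any $g \in \mathcal{O}(Y'_{BV})$,
\[
\restr{\frac{d}{dt}}{t=0}\int_{\calL_t} g\,(\mu^{Y'}_{BV})^{1/2} = -\int_{\calL_0}(\Delta_{Y'}g)\,\chi\,(\mu^{Y'}_{BV})^{1/2},
\]
which is precisely the mechanism by which Theorem \ref{thm:BVint} is proved. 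Applying this with $g = g_y$, using $\Delta_{Y'}g_y = -\Delta_Y e^{\frac{\ii}{\hbar}S(y+\,\cdot\,)}$ (again the QME), and pulling $\Delta_Y$ out of the $y'$-integral (legitimate since $\chi$ does not depend on $y$) yields
\[
\restr{\frac{d}{dt}}{t=0}e^{\frac{\ii}{\hbar}S_{eff,t}(y)} = \Delta_Y\left(\int_{\calL_0}e^{\frac{\ii}{\hbar}S(y+y')}\,\chi(y')\,(\mu^{Y'}_{BV})^{1/2}\right) =: \Delta_Y G(y).
\]
Dividing by $e^{\frac{\ii}{\hbar}S_{eff}(y)}$ (invertible as a formal power series) and using part (1) together with the graded Leibniz rule for $\Delta_Y$ shows that $\restr{\frac{d}{dt}}{t=0}S_{eff,t}$ is exact in the appropriate sense: at the level of $e^{\frac{\ii}{\hbar}S_{eff}}$ it is literally $\Delta_Y G$, and de-exponentiating gives $\restr{\frac{d}{dt}}{t=0}S_{eff,t} = \Delta_Y X$ with $X$ read off from $G$ (equivalently, $\dot S_{eff}$ lies in the image of the twisted differential $\frac{\hbar}{\ii}\Delta_Y + (S_{eff},\,\cdot\,)$).

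The main obstacle is the Lagrangian-dependence lemma and its infinitesimal version used above: one has to check that $(\mu^{Y'}_{BV})^{1/2}$ is a bona fide density (transforming by the square root of the superdeterminant of transition maps), that every Lagrangian near $\calL_0$ is the graph of an exact form for a suitable choice of Darboux coordinates / complementary Lagrangian, and that the integration by parts is valid — boundary terms vanish, or, in the perturbative regime, that all manipulations make sense order by order in $\hbar$ and in the formal expansion about a critical point. Once this input is granted, both parts are just the chain rule, the identity $\Delta_X = \Delta_Y + \Delta_{Y'}$, and the QME, with the only remaining care being the $\Z_2$-graded sign bookkeeping.
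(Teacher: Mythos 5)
The paper itself does not prove this theorem --- it is quoted from the references \cite{Cattaneo2008},\cite{Cattaneo2017} --- so there is no in-text argument to compare with; what you propose is essentially the standard BV-pushforward proof from those sources. Your part (1) is correct as written: $\Delta_Y$ passes under the fiber integral, the splitting $\Delta_X=\Delta_Y+\Delta_{Y'}$ together with the QME for $S$ reduces the integrand to a $\Delta_{Y'}$-exact element of $\mathcal{O}(Y'_{BV})$, and its integral over the Lagrangian $\calL$ vanishes by the lemma underlying Theorem \ref{thm:BVint}. For part (2) your mechanism (deformation generated by an odd generating function $\chi$, the infinitesimal Lagrangian-dependence formula, then pulling $\Delta_Y$ out of the $y'$-integral) is the right one, but note what it literally yields is $\restr{\frac{d}{dt}}{t=0}e^{\frac{\ii}{\hbar}S_{eff,t}}=\Delta_Y G$; writing $G=e^{\frac{\ii}{\hbar}S_{eff}}h$ and de-exponentiating with the BV Leibniz rule and part (1) gives $\restr{\frac{d}{dt}}{t=0}S_{eff,t}=(S_{eff},h)-\ii\hbar\,\Delta_Y h$, i.e.\ exactness for the twisted (quantum) differential $-\ii\hbar\Delta_Y+(S_{eff},\cdot)$, not literal $\Delta_Y$-exactness of $\dot S_{eff}$ --- precisely the caveat you flag in your last parenthesis. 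This is in fact the form in which the theorem is used later in the text (the $\Delta_Y$-cohomology class of $e^{\frac{\ii}{\hbar}S_{eff}}$ is what is gauge-independent), so your argument establishes the operative statement; the printed formula $\dot S_{eff}=\Delta_Y X$ should be read in this twisted/exponentiated sense. Two small points you should make explicit to close the argument: in the linear odd-symplectic setting every Lagrangian subspace close to $\calL_0$ is indeed the graph of the differential of an odd generating function (identify $Y'_{BV}\cong\calL_0\oplus\Pi\calL_0^*$), so your ansatz loses no generality; and in the infinite-dimensional application all of these identities are to be interpreted perturbatively, order by order in $\hbar$ and in the fields, where the integration by parts and the vanishing of $\Delta_{Y'}$-exact integrals are taken as the defining properties of the formal Lagrangian integral.
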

Thus, from a solution of the QME on $X$ we can define a family of effective actions on BV subspaces by BV pushforwards: 

\begin{equation}
X\leadsto Y^{(1)} \leadsto Y^{(2)} \leadsto \ldots \leadsto Y^{(N)} \leadsto \ldots
\end{equation}
The corresponding effective actions all satisfy the corresponding Quantum Master Equation. In particular, this is a version of ``Wilson renormalization\footnote{See the work of Anselmi \cite{Anselmi1994} on BV formalism and renormalization, or \cite{Costello2011} for a modern treatment of the subject.}''. In our setting (for topological theories) we are interested in a BV pushforward from the space of fields to the space of zero modes. These prevent a theory from being handled in the FP or BRST formalisms and is one of the main reason for resorting to BV formalism. The other reason is that the BV formalism is better adapted to manifolds with boundary and cutting and gluing, as was recently shown in the work of Cattaneo, Mnev and Reshetikhin  in \cite{Cattaneo2014}, \cite{Cattaneo2017}. Thus one can hope to compute the BV effective action by cutting the spacetime into simple pieces and ``gluing'' them back together. 

\chapter{Perturbative Quantization of Chern-Simons Theory}
In this final chapter we turn to the perturbative quantization of Chern-Simons theory. To construct the perturbative partition function we go through several steps. We start off with the Batalin-Vilkovisky formulation of Chern-Simons theory. We then fix a classical background - i.e. a flat connection $A_0$ - and gauge fix in the neighbourhood of that classical background. Since there can be zero modes around, we then formulate the partition function as a formal BV pushforward. Finally, we show that while this partition function 
\section{BV Chern-Simons theory}
\subsection{Setup}
Let $M$ be a compact oriented 3-manifold and $G$ a simple and simply connected Lie group with Lie algebra $\g$. Let $P\to M$ be a principle $G$-bundle. By Lemma \ref{lem:PBtriv}, there exists a section $s\colon M \to P$. The space of fields is $F_M = Conn(P)$ which we identify with $\Omega^1(M,\g)$ using the section $s$.  Fixing an invariant symmetric bilinear form on $\g$ satisfying assumption \ref{ass:integralclass}, we define the Chern-Simons action functional 
$$S_{CS}[A] = \int_M \frac12\langle A,dA\rangle+\frac16\langle A,[A,A]\rangle $$
and we have that its exponential 
$$\exp\left(\frac{\ii}{\hbar}S_{CS}\right), \hbar = \frac{1}{2\pi k}$$ 
is independent of the choice of section $s$ and invariant under gauge transformations
$$A \mapsto A^g = g^{-1}Ag + g^{-1}dg, g\in C^{\infty}(M,G).
$$
\subsection{BRST operator}
The Lie algebra of the gauge group $C^{\infty}(M,G)$ is $C^\infty(M,\g)$ with infinitesimal action $\rho^\#_A(c) = d_Ac$.
The space of minimal BRST fields is 
$$F^{min}_{BRST} = \Omega^1(M,\g) \oplus \Pi\Omega^0(M,\g)\ni (A,c)$$
with BRST operator 
\begin{align}
QA &= d_Ac \\
Qc &= \frac{1}{2}[c,c]. \\
\end{align}
\subsection{BV formulation}
The minimal space of BV fields as in Equation \eqref{eq:minBVfields_BRST} is then 
\begin{equation}
\mathcal{F} = \Pi T^*F^{min}_{BRST} = \Pi\Omega^0(M,\g) \oplus \Pi\Omega^1(M,\g)\oplus \Pi\omega^2(M,\g) \oplus \Omega^3(M,\g)\ni(c,A,A^+,c^+)
\end{equation}
where we identify $\Omega^k(M,\g)$ with $(\Omega^{n-k}(M,\g))^*$ using the pairing 
\begin{equation}
\omega_{BV}(A,B) = \int_M\langle A,B\rangle
\end{equation}
which, extended to $\mathcal{F}$, becomes an odd symplectic form. Notice that we can write the BV space of fields as 
\begin{equation}
\calF = \Pi\Omega^\bullet(M,\g),
\end{equation}  where the $\Pi$ simply means that we are shifting the ``natural'' parity of the superspace $\Omega^\bullet(M,\g)$ induced by form degree by one. The BV action, as given in Equation \eqref{eq:minBVaction_BRST}, is 
\begin{equation}
S_{BV} = S_{CS} + Q(\phi^\alpha)\phi_{\alpha}^+ = S + \int_M \langle A^+,d_Ac\rangle + \frac{1}{2} \int_M \langle c^+,[c,c]\rangle
\end{equation}
The following is a crucial observation. If we define the ``superfield'' 
\begin{equation}
\mathcal{A} = c + A + A^+ +c^+ \in \Omega^\bullet(M,\g),
\end{equation}
the Chern-Simons action can be rewritten in the following intriguing fashion: 
\begin{equation}
S_{BV}[c,A,A^+,c^+] = S_{CS}[\mathcal{A}] = \int_M\frac{1}{2}\left\langle \calA,d\calA\right\rangle + \int_M \frac16\left\langle\calA,[\calA,\calA]\right\rangle.
\end{equation}
To see this, in terms of ``total parity'' (i.e. form parity plus ghost parity) all fields are odd. Hence both terms in the action are totally symmetric in all fields. Since only the terms of total form degree 3 contribute to the integral, we get 
\begin{align*}
\int_M\frac{1}{2}\left\langle \calA,d\calA\right\rangle + \int_M \frac16\left\langle\calA,[\calA,\calA]\right\rangle &= \int_M \frac{1}{2} \langle A,dA \rangle + \langle A^+,dc\rangle + \langle c,dA^+\rangle \\
&+ \int_M\frac{1}{6}\langle A,[A,A]\rangle + \frac{1}{2}\langle c^+, [c,c]\rangle + \langle A^+, [A,c]\rangle \\
&= S_{CS}[A] + \int_M \langle A^+, d_Ac\rangle + \int_M\langle c^+,[c,c]\rangle
\end{align*}
using that $d_Ac = dc + [A,c]$. 
\subsection{The Quantum Master Equation}\label{sec:QME1}
Let us now argue that the BV-extended Chern-Simons action formally satisfies the QME 
\begin{equation}
\Delta e^{\frac{\ii}{\hbar}S_BV} = 0. 
\end{equation}
To this end, we first show that the action satisfies the classical master equation $(S_{BV},S_{BV})=0$. The simplest way to see this is to use the superfield formalism and notice that 
$$Q = (S,\cdot) = F_{\calA}\frac{\delta}{\delta\calA}.$$ 
Now 
\begin{align*}Q(S) &= \int_M\langle \calA,dF_\calA\rangle + \frac{1}{2}\langle F_\calA,[\calA,\calA]\rangle   \\
&= \int_M \frac{1}{2}\langle \calA,d[\calA,\calA]\rangle + \frac12\langle d\calA,[\calA,\calA]\rangle + \frac{1}{2}\langle[\calA,\calA],[\calA,\calA]\rangle.
\end{align*}
The first two terms combine into a total derivative, and the last term vanishes by Jacobi and invariance of the pairing.  \\\\ For the BV Laplacian it is better to use field-antifield notation\footnote{The functional derivative is defined by the requirement $\frac{\delta}{\delta \phi^i(x)}\phi^j(y) = \delta_i^j\delta^{(3)}(x,y)$, where $\delta^{(3)}(x,y)$ is the integral kernel of the identity map on $\Omega^\bullet(M)$.}.
$$\Delta =\sum_i \int_M\frac{\delta}{\delta (A^+)^i}\frac{\delta}{\delta A^i} + \frac{\delta}{\delta (c^+)^i}\frac{\delta }{\delta c^i}$$
where we have expanded the fields in an orthonormal basis $T_i$. The only contribution to $\Delta(S_{BV})$ comes from the terms $\langle A^+,[A,c]\rangle$ and $\langle c^+,[c,c]\rangle$ and is proportional to 
$$\Delta(S) \varpropto \sum_i\int_{(x,y) \in M\times M}c^jf_{jii}(\delta^{(3)}(x,y))^2.$$ 
Since we do not know how to deal with the square of this delta form, we assume that $\sum_if_{jii} = 0$ for all $j$. This is condition is equivalent to \emph{unimodularity}: 
\begin{defn}\label{def:unimodular}
A Lie algebra $\g$ is called \emph{unimodular} if $\operatorname{tr}  ad_x = 0$ for all $x \in \g$. 
\end{defn}
We conclude that if $\g$ is unimodular, the Quantum Master Equation is satisfied at least formally, since
$$\Delta e^{\frac{\ii }{\hbar}{S}} = \left(\frac{1}{2}(S,S) -\ii\hbar\Delta(S)\right)e^{\frac{\ii}{\hbar}S}.$$
\section{Gauge fixing using background fields}
Instead of looking for a global gauge-fixing function, we fix a critical point of the Chern-Simons action functional and gauge fix the theory around it. This will turn out to be a bit simpler than trying to find a global gauge-fixing. 

\subsection{Fixing a background field}
As discussed in Section \ref{sec:crit_pts}, the critical points of the Chern-Simons action functional are the flat connections on $P$. For the following, we fix a flat connection $A_0 \in \Omega^1(M,\g)$, i.e. $dA_0+ \frac12[A_0,A_0] = 0$, and work in terms of a fluctuation $\hat{\calA}$ defined by $\calA = A_0 + \hat{\calA}$. Expressing the action in terms of this  decomposition, we obtain, using that the curvature of $A_0$ vanishes
\begin{equation}
S_{CS}[A_0 + \hat{\calA}] = S_{CS}[A_0] + \int_M\frac12\langle \hat{\calA},d_{A_0}\hat{\calA}\rangle + \frac16 \langle \hat{\calA},[\hat{\calA},\hat{\calA}]\rangle =: S_{CS}[A_0]+S^{A_0}_{CS}[\hat{\calA}]. 
\end{equation} 
The first term is simply the Chern-Simons invariant of the flat connection $A_0$, the second term is the Chern-Simons action of $\hat{\calA}$, but with differential twisted by the flat connection $A_0$. We denote this action by $S_{CS}^{A_0}$. 
\subsection{The gauge fixing Lagrangian}
In the spirit of the perturbation theory explained in the last chapter, we now gauge fix the quadratic part of $S^{A_0}$ and treat the cubic part as a perturbation. The gauge-fixing in the BV formalism consists of two steps: First, finding a decomposition of 
the BV space of fields $\calF = Y \times Y'$ such that there exists a gauge-fixing Lagrangian $\mathcal{L} \subset Y'$, and secondly, choosing such a gauge-fixing Lagrangian. The quadratic part of $S^{A_0}$ is 
\begin{equation}S^{A_0}_{free}:= \int_M\langle\hat{\calA},d_{A_0}\hat{\calA}\rangle.\end{equation}One possibility to gauge fix it is given by Hodge theory. Below we recall briefly how this works. 
\subsubsection{Hodge decomposition for $d_{A_0}$}
Let $g$ be a Riemannian metric on $M$. The metric induces a Hodge star $$*\colon\Omega^k(M,\g) \to \Omega^{3-k}(M,\g)$$ and a Hodge pairing on forms 
\begin{equation}
(\omega,\tau) = \int_M \langle\omega,*\tau\rangle
\end{equation}
which turns $\Omega^\bullet(M,\g)$ into a pre-Hilbert space. With respect to the Hodge pairing, the operator $d_{A_0}$ has a formal adjoint $d^*_{A_0}$ uniquely defined by the property 
\begin{equation}
(d_{A_0}\omega,\tau)=(\omega,d_{A_0}^*\tau).
\end{equation}
Central for Hodge theory is the (twisted) Hodge-deRham Laplacian 
\begin{equation}
\Delta_{A_0} \colon= d^*_{A_0}d_{A_0} + d_{A_0}d^*_{A_0} \colon \Omega^\bullet(M,\g) \to \Omega^\bullet(M,\g) 
\end{equation}
Forms in the kernel of $\Delta_{A_0}$ are called \emph{harmonic forms} and denoted $\mathrm{Harm}^\bullet_{A_0}(M)$. 
\begin{lem}
$\omega$ is harmonic if and only if it is both closed and co-closed: $d_{A_0}\omega = d^*_{A_0}\omega =0$.
\end{lem}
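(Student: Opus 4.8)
The plan is the classical Hodge-theoretic argument: one implication is immediate from the definition of the twisted Laplacian, and the converse rests on pairing $\Delta_{A_0}\omega$ with $\omega$ and invoking positivity of the Hodge pairing together with the adjointness relation defining $d^*_{A_0}$.

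For the easy direction, suppose $d_{A_0}\omega = 0$ and $d^*_{A_0}\omega = 0$. Then plugging into $\Delta_{A_0} = d^*_{A_0}d_{A_0} + d_{A_0}d^*_{A_0}$ gives $\Delta_{A_0}\omega = d^*_{A_0}(d_{A_0}\omega) + d_{A_0}(d^*_{A_0}\omega) = 0$, so $\omega$ is harmonic; nothing more is needed here.

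For the converse, assume $\Delta_{A_0}\omega = 0$. I would pair with $\omega$ and use the defining property $(d_{A_0}\alpha,\beta) = (\alpha,d^*_{A_0}\beta)$ twice:
\[
0 = (\Delta_{A_0}\omega,\omega) = (d^*_{A_0}d_{A_0}\omega,\omega) + (d_{A_0}d^*_{A_0}\omega,\omega) = (d_{A_0}\omega,d_{A_0}\omega) + (d^*_{A_0}\omega,d^*_{A_0}\omega).
\]
Since the Hodge pairing is positive definite on $\Omega^\bullet(M,\g)$, each of the two nonnegative summands on the right must vanish, forcing $d_{A_0}\omega = 0$ and $d^*_{A_0}\omega = 0$.

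There is no serious obstacle; the only point requiring care is the positive-definiteness of the Hodge pairing, which is exactly the statement (granted just above in the text) that $(\cdot,\cdot)$ makes $\Omega^\bullet(M,\g)$ a pre-Hilbert space — this in turn uses that the invariant form $\langle\cdot,\cdot\rangle$ on $\g$ is, up to the usual sign convention, positive definite, and that $M$ is compact so all the integrals converge. Granting that, the argument is purely formal.
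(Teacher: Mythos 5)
Your argument is correct and is essentially identical to the paper's proof: the easy direction by plugging into $\Delta_{A_0} = d^*_{A_0}d_{A_0} + d_{A_0}d^*_{A_0}$, and the converse by computing $0 = (\omega,\Delta_{A_0}\omega) = (d_{A_0}\omega,d_{A_0}\omega) + (d^*_{A_0}\omega,d^*_{A_0}\omega)$ and using positivity of the Hodge pairing. Your added remark on why the pairing is positive definite is a reasonable observation but not a divergence from the paper, which simply asserts the pre-Hilbert structure.
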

\begin{proof}
A closed and co-closed form is obviously harmonic. For the opposite, consider 
$$0 = (\omega,\Delta_{A_0}\omega) = (d_{A_0}^*\omega,d^*_{A_0}\omega) + (d\omega,d\omega).$$
The terms on the right vanish if and only if $\omega$ is both closed and co-closed. 
\end{proof}
\begin{thm}[Hodge theorem]
The map 
\begin{align}
\mathrm{Harm}^\bullet_{A_0} &\to H_{A_0}^\bullet(M,\g) \\
\omega & \mapsto [\omega] 
\end{align}
is an isomorphism.
\end{thm}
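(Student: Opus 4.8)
The plan is to prove the isomorphism in two steps, establishing injectivity and surjectivity separately, using the formal adjoint $d_{A_0}^*$ and the self-adjointness of the twisted Laplacian $\Delta_{A_0}$. Throughout I will work with the pre-Hilbert space structure on $\Omega^\bullet(M,\g)$ coming from the Hodge pairing $(\cdot,\cdot)$, and I will freely use the lemma just proved, that $\omega$ is harmonic if and only if $d_{A_0}\omega = d_{A_0}^*\omega = 0$. Note first that the map $\omega \mapsto [\omega]$ is well-defined: a harmonic form is closed by the lemma, so it defines a class in $H_{A_0}^\bullet(M,\g)$.

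First I would prove injectivity. Suppose $\omega \in \mathrm{Harm}_{A_0}^\bullet$ maps to $0$, i.e. $\omega = d_{A_0}\eta$ for some $\eta \in \Omega^{\bullet - 1}(M,\g)$. Then
\begin{equation}
(\omega,\omega) = (d_{A_0}\eta, \omega) = (\eta, d_{A_0}^*\omega) = 0,
\end{equation}
since $\omega$ is co-closed by the lemma. As the Hodge pairing is positive definite, $\omega = 0$. This step is entirely formal and requires nothing beyond the definition of the adjoint and non-degeneracy of $(\cdot,\cdot)$.

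The surjectivity is the substantive part, and it is where the genuine analytic input enters. The key claim is the Hodge decomposition
\begin{equation}
\Omega^\bullet(M,\g) = \mathrm{Harm}_{A_0}^\bullet \oplus \mathrm{im}(d_{A_0}) \oplus \mathrm{im}(d_{A_0}^*),
\end{equation}
an orthogonal direct sum. Granting this, any closed form $\tau$ with $d_{A_0}\tau = 0$ decomposes as $\tau = h + d_{A_0}\alpha + d_{A_0}^*\beta$; pairing with $d_{A_0}^*\beta$ and using $d_{A_0}\tau = 0$ together with orthogonality forces $d_{A_0}^*\beta = 0$, so $\tau = h + d_{A_0}\alpha$ and hence $[\tau] = [h]$ lies in the image of the map. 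To prove the decomposition itself, I would invoke the standard elliptic theory: $\Delta_{A_0}$ is an elliptic, formally self-adjoint, non-negative second-order operator on the compact manifold $M$ (ellipticity of $d_{A_0} + d_{A_0}^*$, hence of its square, follows because twisting by $A_0$ changes only lower-order terms and leaves the symbol equal to that of the untwisted Hodge-de Rham Laplacian). Elliptic theory then gives $\ker \Delta_{A_0} = \mathrm{Harm}_{A_0}^\bullet$ finite-dimensional and an orthogonal decomposition $\Omega^\bullet = \ker \Delta_{A_0} \oplus \mathrm{im}\,\Delta_{A_0}$; writing $\Delta_{A_0} = d_{A_0}d_{A_0}^* + d_{A_0}^*d_{A_0}$ and checking the three summands are mutually orthogonal (e.g. $(d_{A_0}\alpha, d_{A_0}^*\beta) = (d_{A_0}^2\alpha,\beta) = 0$) yields the refined decomposition above.

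The main obstacle is precisely this appeal to elliptic regularity and the Fredholm theory of $\Delta_{A_0}$ on a compact manifold — the finite-dimensionality of the harmonic space and the closed-range property of $\Delta_{A_0}$ are not formal and constitute the real content. In the spirit of the surrounding text, I would not reprove these from scratch but cite a standard reference (e.g. \cite{Kobayashi1996} or a text on Hodge theory) for the elliptic package, and then assemble the decomposition and the consequences for cohomology by the elementary orthogonality arguments sketched above. The only twist-specific point worth remarking is that flatness of $A_0$ is what makes $d_{A_0}^2 = 0$, so that $H_{A_0}^\bullet(M,\g)$ is defined in the first place and the identity $\Delta_{A_0} = (d_{A_0} + d_{A_0}^*)^2$ holds.
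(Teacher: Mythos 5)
Your proof is correct in outline. Note, though, that the paper itself gives no argument at all for this theorem: it simply refers to the literature (Nicolaescu) for a proof, and it states the Hodge decomposition as a separate theorem immediately afterwards, again with only a citation. So the comparison is really between ``cite everything'' and your route, which is the standard one: injectivity is the purely formal computation $(\omega,\omega)=(\eta,d_{A_0}^*\omega)=0$ for $\omega=d_{A_0}\eta$ harmonic, and surjectivity is reduced to the orthogonal decomposition $\Omega^\bullet = \mathrm{Harm}^\bullet_{A_0}\oplus \mathrm{im}\,d_{A_0}\oplus \mathrm{im}\,d_{A_0}^*$, whose proof you correctly identify as the genuine analytic content (ellipticity of $d_{A_0}+d_{A_0}^*$, finite-dimensional kernel, closed range of $\Delta_{A_0}$ on a compact manifold), to be quoted from a standard reference. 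Your orthogonality argument for surjectivity is sound: for closed $\tau = h + d_{A_0}\alpha + d_{A_0}^*\beta$ one has $(\tau,d_{A_0}^*\beta)=(d_{A_0}\tau,\beta)=0$, while orthogonality gives $(\tau,d_{A_0}^*\beta)=\|d_{A_0}^*\beta\|^2$, so $\tau = h + d_{A_0}\alpha$ and $[\tau]=[h]$; together with the remark that flatness of $A_0$ is what makes $d_{A_0}^2=0$ and hence the twisted cohomology well-defined, this is exactly the right decomposition of the problem. What your version buys over the paper's treatment is that it isolates precisely which step is formal and which step requires elliptic theory, and it shows the theorem is an immediate corollary of the decomposition the paper states anyway; the only caveat is that in the paper's ordering the decomposition appears after the Hodge theorem, so if you wanted to splice your argument in you would either reorder the two statements or cite the decomposition forward (there is no circularity, since the decomposition is not derived from the Hodge theorem).
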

For a proof see e.g. \cite{Nicolaescu2007}. 
The key to gauge fixing is the Hodge decomposition, which we now state. 
\begin{thm}
We have 
\begin{equation}
\Omega^\bullet(M,\g) = \mathrm{Harm}^\bullet_{A_0}(M,\g)\oplus d_{A_0}\Omega^\bullet(M,\g) \oplus d_{A_0}^*\Omega^\bullet(M,\g)\label{eq:HodgeDecomposition}
\end{equation}
where all sums are orthogonal with respect to $(\cdot,\cdot)$. 
\end{thm}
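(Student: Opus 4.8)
The plan is to establish the decomposition in two stages: first that the three summands are mutually orthogonal (and hence the sum is direct), and then that together they exhaust $\Omega^\bullet(M,\g)$. The first stage is pure formal manipulation with the adjunction $(d_{A_0}\omega,\tau)=(\omega,d^*_{A_0}\tau)$; the second stage is where the analytic content of elliptic theory enters.

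For orthogonality I would argue directly. Using the previous lemma, which characterizes harmonic forms as those that are simultaneously $d_{A_0}$-closed and $d^*_{A_0}$-closed: if $\omega$ is harmonic and $\alpha,\beta\in\Omega^\bullet(M,\g)$ are arbitrary, then $(\omega,d_{A_0}\alpha)=(d^*_{A_0}\omega,\alpha)=0$ and $(\omega,d^*_{A_0}\beta)=(d_{A_0}\omega,\beta)=0$, so $\mathrm{Harm}^\bullet_{A_0}$ is orthogonal to both $d_{A_0}\Omega^\bullet(M,\g)$ and $d^*_{A_0}\Omega^\bullet(M,\g)$. Likewise $(d_{A_0}\alpha,d^*_{A_0}\beta)=(d_{A_0}d_{A_0}\alpha,\beta)=0$, where it is crucial that $d_{A_0}^2=0$; this uses flatness of $A_0$, since in general $d_{A_0}d_{A_0}\omega=[F_{A_0},\omega]$ and $F_{A_0}=0$. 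These relations also force the sum to be direct.

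For the spanning statement I would invoke that $\Delta_{A_0}=d^*_{A_0}d_{A_0}+d_{A_0}d^*_{A_0}$ is a formally self-adjoint, second-order elliptic operator on the compact manifold $M$ (its principal symbol agrees with that of the untwisted Hodge--de Rham Laplacian). The same elliptic input already cited to \cite{Nicolaescu2007} for the Hodge theorem gives that $\ker\Delta_{A_0}=\mathrm{Harm}^\bullet_{A_0}$ is finite-dimensional and consists of smooth forms, and yields the $L^2$-orthogonal splitting $\Omega^\bullet(M,\g)=\mathrm{Harm}^\bullet_{A_0}\oplus\Delta_{A_0}\Omega^\bullet(M,\g)$ (equivalently a Green operator inverting $\Delta_{A_0}$ off the harmonic forms). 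It then remains to identify $\Delta_{A_0}\Omega^\bullet(M,\g)$ with $d_{A_0}\Omega^\bullet(M,\g)\oplus d^*_{A_0}\Omega^\bullet(M,\g)$: the inclusion $\subseteq$ is immediate from the formula for $\Delta_{A_0}$, and for $\supseteq$, given $d_{A_0}\alpha$ one decomposes $\alpha=h+\Delta_{A_0}\beta$ with $h$ harmonic and computes $d_{A_0}\alpha=d_{A_0}\Delta_{A_0}\beta=\Delta_{A_0}(d_{A_0}\beta)$, using that $d_{A_0}$ commutes with $\Delta_{A_0}$; symmetrically for $d^*_{A_0}\alpha$. Combining the two stages gives the claimed decomposition.

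The genuinely hard part is the analytic input: elliptic regularity and the Fredholm alternative for $\Delta_{A_0}$ on the Sobolev completions of $\Omega^\bullet(M,\g)$, giving closed range and the finite-dimensional kernel. I would not reprove this but cite it, exactly as the text already does for the Hodge theorem itself; everything else is the orthogonality bookkeeping above together with the elementary but essential remark that flatness of $A_0$ is precisely what makes $d_{A_0}$ square to zero, so that $d_{A_0}$-cohomology and the whole Hodge package are available.
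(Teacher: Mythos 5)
Your argument is correct. Note, however, that the paper does not prove this theorem at all: it simply points to the same reference as for the Hodge isomorphism theorem (\cite{Nicolaescu2007}), so there is no in-text proof to compare against; what you have written is essentially the standard argument that such references give, organized with the right division of labor. The orthogonality part is exactly as it should be, and you correctly flag the one point specific to this twisted setting, namely that $(d_{A_0}\alpha,d^*_{A_0}\beta)=0$ needs $d_{A_0}^2=0$, i.e.\ flatness of the background $A_0$ (which the paper has fixed), since in general $d_{A_0}d_{A_0}=[F_{A_0},\cdot]$. For the spanning part you rightly delegate the genuinely analytic content (ellipticity of $\Delta_{A_0}$, whose symbol is that of the untwisted Laplacian, closed range, finite-dimensional smooth kernel, and the splitting $\Omega^\bullet=\mathrm{Harm}^\bullet_{A_0}\oplus\Delta_{A_0}\Omega^\bullet$) to the cited literature, which matches the paper's own stance. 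One small remark: the reverse inclusion identifying $\Delta_{A_0}\Omega^\bullet$ with $d_{A_0}\Omega^\bullet\oplus d^*_{A_0}\Omega^\bullet$ is not actually needed for the theorem as stated; the inclusion $\Delta_{A_0}\Omega^\bullet\subseteq d_{A_0}\Omega^\bullet+ d^*_{A_0}\Omega^\bullet$ already gives spanning, and orthogonality gives directness. Your extra step (using that $d_{A_0}$ commutes with $\Delta_{A_0}$) is correct, just superfluous.
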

A proof can be found in the same reference as above. 
\subsubsection{Lorenz gauge as gauge-fixing Lagrangian}
Using the Hodge decomposition we can write 
\begin{equation} \calF = \calY \times \calY'\label{eq:BVsplit}
\end{equation}
with 
\begin{align*}
\calY &= \mathrm{Harm}^\bullet_{A_0}(M) \\
\calY' &= d_{A_0}\Omega^\bullet(M,\g) \oplus d_{A_0}^*\Omega^\bullet(M,\g)
\end{align*}
Since the direct sum in \eqref{eq:HodgeDecomposition} is orthogonal, the decomposition \eqref{eq:BVsplit} is symplectic. 
\begin{defn}
The \emph{Lorenz gauge} Lagrangian is \begin{equation}\mathcal{L}_g:=\mathrm{im}d_{A_0}^* = \ker d_{A_0}^* \subset \calY'. \label{eq:DefLorenzGauge}
\end{equation}
\end{defn}
Notice that in Equation \eqref{eq:DefLorenzGauge} we have $\mathrm{im} d_{A_0}^* = \ker d_{A_0}^*$ since we are only considering forms in the orthogonal complement of harmonic forms. 
This is indeed a Lagrangian.
\begin{lem}
$\calL_g   \subset d_{A_0}\Omega^\bullet(M,\g) \oplus d_{A_0}^*\Omega$ is Lagrangian, i.e. both isotropic and coisotropic. 
\end{lem}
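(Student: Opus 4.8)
The plan is to verify the two defining properties of a Lagrangian — isotropy ($\restr{\omega_{BV}}{\mathcal{L}_g} = 0$) and coisotropy — separately, working throughout with the Hodge decomposition $\calY' = d_{A_0}\Omega^\bullet(M,\g) \oplus d_{A_0}^*\Omega^\bullet(M,\g)$, in which $\mathcal{L}_g$ is precisely the second summand. The one structural input I will use repeatedly is the relation between $d_{A_0}^*$ and $d_{A_0}$: since $\dim M = 3$ we have $** = \mathrm{id}$ on every $\Omega^k(M,\g)$, so $d_{A_0}^* = \pm * d_{A_0} *$ (the sign depending on the degree, but it will be irrelevant), and consequently, degree by degree, $*\bigl(d_{A_0}^*\Omega^\bullet(M,\g)\bigr) = d_{A_0}\Omega^\bullet(M,\g)$ and conversely. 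I will also use that $A_0$ is flat, so that $d_{A_0}^2 = 0$ (by the curvature identity $d_{A_0}d_{A_0}\omega = [F_{A_0},\omega]$) and hence $(d_{A_0}^*)^2 = \pm * d_{A_0}^2 * = 0$, together with the adjunction $(d_{A_0}\omega,\tau) = (\omega, d_{A_0}^*\tau)$ for the Hodge pairing.

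For isotropy, I would take $\alpha = d_{A_0}^*\gamma$ and $\beta = d_{A_0}^*\delta$ in $\mathcal{L}_g$. Since $\omega_{BV}(\alpha,\beta) = \int_M \langle \alpha \wedge \beta\rangle$ only sees components of complementary form degree, it suffices to treat $\alpha$ of pure degree $p$ and $\beta$ of pure degree $3-p$. Writing $\beta = *\beta''$ with $\beta'' = *\beta$ a $p$-form, one gets $\omega_{BV}(\alpha,\beta) = \int_M\langle\alpha\wedge*\beta''\rangle = (\alpha,\beta'')$. Now $\beta'' = *\beta \in *\bigl(d_{A_0}^*\Omega^\bullet(M,\g)\bigr) = d_{A_0}\Omega^\bullet(M,\g)$, say $\beta'' = d_{A_0}\eta$; and $\alpha = d_{A_0}^*\gamma$, so by adjunction and flatness $(\alpha,\beta'') = (d_{A_0}^*\gamma, d_{A_0}\eta) = (\gamma, d_{A_0}d_{A_0}\eta) = 0$. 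Hence $\restr{\omega_{BV}}{\mathcal{L}_g} = 0$. The identical computation, with the roles of $d_{A_0}$ and $d_{A_0}^*$ interchanged, shows that the other summand $d_{A_0}\Omega^\bullet(M,\g)$ is $\omega_{BV}$-isotropic as well.

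For coisotropy it is cleanest to argue by self-orthogonality rather than by a dimension count — which, in this infinite-dimensional setting, is the correct formulation of "Lagrangian": I claim $\mathcal{L}_g^{\perp} = \mathcal{L}_g$ inside $\calY'$. One already knows $\restr{\omega_{BV}}{\calY'}$ is an odd symplectic, in particular non-degenerate, form, because the splitting $\calF = \calY\times\calY'$ is symplectic; and $\calY' = \mathcal{L}_g \oplus d_{A_0}\Omega^\bullet(M,\g)$ is a sum of two isotropic subspaces by the previous step. Non-degeneracy of $\omega_{BV}$ on $\calY'$ then forces the pairing between the two summands to have trivial kernels, identifying $d_{A_0}\Omega^\bullet(M,\g)$ with the dual of $\mathcal{L}_g$. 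Given $v = v_1 + v_2 \in \mathcal{L}_g^{\perp}$ with $v_1 \in \mathcal{L}_g$ and $v_2 \in d_{A_0}\Omega^\bullet(M,\g)$, isotropy of $\mathcal{L}_g$ kills $\omega_{BV}(v_1,\cdot)$ on $\mathcal{L}_g$, so $\omega_{BV}(v_2, w) = 0$ for all $w \in \mathcal{L}_g$, whence $v_2 = 0$; thus $v \in \mathcal{L}_g$, i.e. $\mathcal{L}_g^{\perp} \subseteq \mathcal{L}_g$. Combined with isotropy, $\mathcal{L}_g$ is Lagrangian.

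I expect the main obstacle to be bookkeeping rather than anything conceptual: keeping track of form degrees so that the symplectic pairing $\int_M\langle\,\cdot\wedge\cdot\,\rangle$ is correctly reinterpreted as the Hodge inner product $(\cdot,\cdot)$ (only complementary degrees contribute, and the Hodge star shuffles degrees), and being careful that "Lagrangian" must here be read as $\mathcal{L} = \mathcal{L}^\perp$ rather than "half-dimensional". The signs in $d_{A_0}^* = \pm * d_{A_0} *$ and in $\mu\wedge*\nu = \pm\,\nu\wedge*\mu$ never matter, since every relevant expression is ultimately forced to vanish by flatness of $A_0$, i.e. by $d_{A_0}^2 = 0$.
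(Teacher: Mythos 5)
Your proof is correct. The isotropy half is essentially the paper's argument: adjunction/integration by parts plus nilpotency of the twisted differential (you route it through $d_{A_0}^2=0$ after converting the wedge pairing into the Hodge pairing; the paper moves one $d_{A_0}^*$ across the wedge pairing and uses $(d_{A_0}^*)^2=0$ — the same computation up to Hodge-star bookkeeping). Where you genuinely diverge is coisotropy. The paper exhibits, for each nonzero $d_{A_0}\alpha$ in the complementary summand, an explicit partner in $\calL_g$, namely $*d_{A_0}\alpha=\pm d_{A_0}^*(*\alpha)$, and notes $\omega_{BV}(d_{A_0}\alpha,*d_{A_0}\alpha)=\pm(d_{A_0}\alpha,d_{A_0}\alpha)\neq 0$ (nonzero because a nonzero coexact form cannot be closed), so $d_{A_0}\Omega^\bullet(M,\g)\cap\calL_g^\perp=\{0\}$ and hence $\calL_g^\perp\subset\calL_g$. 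You instead argue abstractly: both summands of $\calY'$ are isotropic, so any $v_2\in d_{A_0}\Omega^\bullet(M,\g)$ that annihilates $\calL_g$ annihilates all of $\calY'$, and weak non-degeneracy of $\omega_{BV}$ on $\calY'$ kills it. That is sound, but be aware of where its weight sits: the non-degeneracy of $\omega_{BV}$ restricted to $\calY'$ is exactly the assertion, made just before the lemma, that the splitting $\calF=\calY\times\calY'$ is symplectic, and in this infinite-dimensional setting its verification is precisely the Hodge-star pairing $v\mapsto *v$ (the star preserves $\calY'$, swapping the two summands, and $\omega_{BV}(v,*v)=\pm\|v\|^2$) — i.e. the same trick the paper uses directly, so your argument is not circular but does outsource that step. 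What your route buys is the correct abstract formulation: you rightly replace the half-dimensionality clause by self-orthogonality $\calL_g=\calL_g^\perp$, which is the meaningful notion here; what the paper's route buys is self-containedness, since it never needs non-degeneracy as an external input.
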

\begin{proof} Isotropy $(\calL \subset \calL^\perp)$ follows immediately from $(d_{A_0}^*)^2 = 0$ and integration by parts: 
$$\omega_{BV}(d_{A_0}^*\omega,d_{A_0}^*\tau) = \int_M d_{A_0}^*\omega \wedge d_{A_0}^*\tau = \int_M \omega \wedge (d_{A_0}^*)^2\tau = 0$$ 
shows isotropy. To see that it is also coisotropic ($\calL^\perp \subset \calL$) take a form $0\neq\alpha \in \calL$. Then 
$$0 \neq (d_{A_0}\alpha,_{A_0}d\alpha) = \pm \omega_{BV}(d_{A_0}\alpha,d_{A_0}^*(*\alpha)).$$
Hence $d_{A_0}\alpha \notin \calL^\perp$. Since all non-zero forms in $d_{A_0}\Omega^\bullet(M,\g)$ are of the form $d_{A_0}\alpha$, this shows that $d_{A_0}\Omega^\bullet(M,\g) \cap \calL^\perp = \{0\}. $
\end{proof}
\subsection{Inverting the quadratic term}
Notice that the quadratic term $\int_M\langle \hat{\calA},d_{A_0}\hat{\calA}\rangle$ has a unique critical point on $\calL_g$, namely $\hat{\calA} = 0$. The operator $d_{A_0}\colon\Omega^k_{coex}(M,\g) \to \Omega^{k+1}_{ex}(M,\g)$, and can be inverted as follows. Let $\omega = d_{A_0}\tau \in \Omega_{coex}^k(M,\g)$. Then $d_{A_0}\omega = d_{A_0}d_{A_0}^*\tau$ and, since $(d_{A_0}^*)^2 = 0$, we have 
$$d_{A_0}^*d_{A_0}\omega = d_{A_0}^*d_{A_0}d_{A_0}^*\tau = (d_{A_0}^*d_{A_0} + d_{A_0}d_{A_0}^*)d_{A_0}^*\tau = \Delta_{A_0}\omega.$$
But on the orthogonal complement of harmonic forms, the Laplacian is invertible, so we can define 
\begin{equation}
K_{A_0} := d_{A_0}^* \circ \Delta_{A_0}^{-1} \colon \Omega_{ex}^{k+1}(M,\g) \to \Omega^k_{coex}(M,\g). 
\end{equation}
It then follows that $d_{A_0}K_{A_0}\omega = K_{A_0}d_{A_0}\omega = \omega$ for all $\omega \in \calL_g$. Hence $K_{A_0}$ is an inverse to $d_{A_0}$ on $\calL_g$. See also Figure \ref{fig:HodgeDecomp} below.
\begin{figure}
\begin{equation}\notag
\begin{tikzcd}
\mathrm{Harm}^{k+1}\oplus & \Omega_{ex}^{k+1} \arrow[rd, "d^*",near start,  two heads, hook, bend left] \oplus& \Omega_{coex}^{k+1}                                         \\
\mathrm{Harm}^k    \oplus & \Omega_{ex}^{k}                                                 \oplus& \Omega_{coex}^k \arrow[lu, "d",near start, two heads, hook, bend left]
\end{tikzcd}
\end{equation}
\caption{Hodge decomposition at degree $k$ (subscripts and arguments suppressed). $d$,$d^*$ are both isomorphisms restricted to the spaces but not inverse to each other, rather, their composition is the Laplacian $\Delta$.}\label{fig:HodgeDecomp}
\end{figure}
\begin{rem}
We can extend $K_{A_0}$ to an operator $\Omega^\bullet(M,\g) \to \Omega^{\bullet-1}(M,\g)$ by defining 
\begin{equation}
K_{A_0} = d_{A_0}^* \circ (\Delta_{A_0} + P_{A_0})^{-1},
\end{equation} 
where $P_{A_0}$ is the orthogonal projection to harmonic forms. 
In terms of the Hodge decomposition this just means extending the operator $K_{A_0}$ by 0 on coexact and  harmonic forms. This operator satisfies 
\begin{equation}
d_{A_0}K_{A_0} + K_{A_0}d_{A_0} = id - P_{A_0}
\end{equation}
This means that $K_{A_0}$ is a parametrix for $d_{A_0}$ (an inverse up to smoothing operators). 
In the language of homological algebra, the map $K_{A_0}$ defines a chain homotopy between the identity map and the projection to harmonic forms. The triple $(\iota_{A_0},K_{A_0},P_{A_0})$, where $\iota_{A_0}$ is inclusion of harmonic forms, is called a \emph{contracting triple} for the complex $(\Omega^\bullet(M,\g),d_{A_0})$. For more on contracting triples see e.g. \cite{Cattaneo2008}. 
\end{rem}
\subsection{The propagator}\label{sec:prop}
The propagator, loosely speaking, is the integral kernel of the map $K_{A_0}$. There are different conventions for integral kernels. To reflect the topological nature of Chern-Simons theory, we use here the ``topological kernel'' of the map $K_{A_0}$ (this is the terminology of de Rham \cite{Rham1984}). This is a de Rham 2-current $\hat{\eta}$\footnote{Roughly speaking, de Rham currents are to forms what distributions are to functions. For more background the reader is referred to the excellent original text \cite{Rham1984}.} on $M \times M$ with values in $\g \times \g$ such that\footnote{if we were instead to use the ``metric kernel'', it would be $\int_M \hat{\eta}\wedge * y$ instead}
\begin{equation}
(K_{A_0}\omega)_x = \int_{y\in M}\langle\hat{\eta}_{(x,y)}\wedge \omega_y\rangle_{23}
\end{equation}
where we write pairing between currents and forms as integrals, as customary also for distributions, and we define $$\langle\alpha \otimes \xi_1 \otimes \xi_2 \xi_3 \rangle_{23} = \langle\xi_2,\xi_3\rangle \alpha \otimes \xi_1.$$  From the fact that $\Delta_{A_0}$ is an elliptic operator, it follows\footnote{See loc. cit. } that $\hat{\eta}$ can be represented by a smooth 2-form $\eta \in \Omega^2(M \times M - \mathrm{diag},\g\otimes\g)$.
One of the main results of Axelrod and Singer \cite{Axelrod1991a},\cite{Axelrod1994} is that the singularities of $\eta$ are tame enough to ensure that $\eta$ extends to the differential-geometric blow-up of the diagonal $Bl_{\mathrm{diag}}(M\times M)$, which, roughly speaking, is defined by replacing the diagonal with its unit sphere bundle $ST\mathrm{diag}$. For reasons that will become clear below, we denote 
\begin{equation}
C_2(M) := Bl_{\mathrm{diag}}(M\times M)
\end{equation}
\begin{prop} $C_2(M)$ has the following properties.  
\begin{itemize}
\item $C_2(M)$ is a smooth manifold with boundary.
\item The boundary of $C_2(M)$ is diffeomorphic to the unit sphere bundle over $M$: $\de C_2(M) \cong STM \to M$. 
\item $C_2(M)^\circ = C_2(M) - \de C_2(M) = M \times M - \mathrm{diag}$. 
\item $C_2(M)$ is compact (if $M$ is, which we always assume).
\end{itemize}
\end{prop}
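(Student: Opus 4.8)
The plan is to realize $C_2(M)$ concretely and then read off the four properties from that model; the concrete model makes compactness obvious and reduces the manifold-with-boundary claim to a single smoothness estimate. I would first choose, by Whitney, an embedding $M\hookrightarrow\R^N$ and consider the ``direction map''
\[
\iota\colon M\times M\setminus\mathrm{diag}\longrightarrow M\times M\times S^{N-1},\qquad \iota(x,y)=\Bigl(x,\,y,\,\tfrac{x-y}{|x-y|}\Bigr),
\]
which is a section of the projection to $M\times M$ over $M\times M\setminus\mathrm{diag}$, hence a smooth embedding onto a locally closed submanifold. Then set $C_2(M):=\overline{\iota(M\times M\setminus\mathrm{diag})}$, the closure in the compact manifold $M\times M\times S^{N-1}$; afterwards one checks this is independent of the embedding and coincides with the differential-geometric blow-up $Bl_{\mathrm{diag}}(M\times M)$. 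Compactness of $C_2(M)$ is then immediate (a closed subset of a compact space). The identity $C_2(M)^\circ=M\times M\setminus\mathrm{diag}$ follows from the observation that if $(x_i,y_i,\theta_i)\to(x,y,\theta)$ with $\theta_i=(x_i-y_i)/|x_i-y_i|$ and $y\neq x$, then $\theta=(x-y)/|x-y|$ by continuity, so $\iota(M\times M\setminus\mathrm{diag})$ is open in its closure and every point of $C_2(M)\setminus\iota(M\times M\setminus\mathrm{diag})$ has $x=y$.

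\textbf{Boundary.} Next I would identify $C_2(M)\setminus\iota(M\times M\setminus\mathrm{diag})$ with $STM$. A point $(x,x,\theta)$ lies in $C_2(M)$ iff there exist $x_i,y_i\to x$ in $M$ with $(x_i-y_i)/|x_i-y_i|\to\theta$. Working in a chart $\psi\colon V\subset\R^n\to M$ with $\psi(0)=x$ and writing $x_i=\psi(a_i),y_i=\psi(b_i)$, the formula $\psi(a)-\psi(b)=\bigl(\int_0^1 d\psi_{b+t(a-b)}\,dt\bigr)(a-b)=d\psi_0(a-b)+o(|a-b|)$ together with injectivity of $d\psi_0$ shows that any such $\theta$ is a unit vector of $T_xM\subset\R^N$, and conversely that every unit tangent vector is attained (take $y_i=x$ fixed). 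Hence $\partial C_2(M)=\{(x,x,\theta):x\in M,\ \theta\in S(T_xM)\}$, and the map $(x,x,\theta)\mapsto x$ presents it as the unit sphere bundle $STM\to M$ (for the induced metric; another metric gives a canonically diffeomorphic bundle). Invariantly, this is the statement $\partial C_2(M)\cong S\bigl(N_{\mathrm{diag}}(M\times M)\bigr)$ combined with the bundle isomorphism $N_{\mathrm{diag}}(M\times M)\cong TM$ induced by $(v,w)\mapsto v-w\colon T_xM\oplus T_xM\to T_xM$, whose kernel is $T_{(x,x)}\mathrm{diag}$.

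\textbf{Boundary charts.} The remaining and main point is to produce smooth charts near the boundary. Fixing $(x,x,\theta_0)\in\partial C_2(M)$ with a chart $\psi\colon V\to M$ as above, the idea is to ``blow up the difference of the arguments'': define
\[
\Phi\colon V\times[0,\varepsilon)\times S^{n-1}\longrightarrow M\times M\times S^{N-1},\qquad
\Phi(p,r,\sigma)=\Bigl(\psi(p),\ \psi(p-r\sigma),\ \tfrac{g(p,r,\sigma)}{|g(p,r,\sigma)|}\Bigr),
\]
where $g(p,r,\sigma):=\int_0^1 d\psi_{p-(1-t)r\sigma}(\sigma)\,dt$, so that $\psi(p)-\psi(p-r\sigma)=r\,g(p,r,\sigma)$. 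The crucial observation is that $g$ is smooth near $r=0$ with $g(p,0,\sigma)=d\psi_p(\sigma)\neq0$, hence $g/|g|$ is smooth \emph{up to and including $r=0$}; consequently $\Phi$ is smooth, agrees with $\iota\circ(\psi\times\psi)$ for $r>0$ (so its image lies in $C_2(M)^\circ$ there), and at $r=0$ maps into $\partial C_2(M)$. A direct computation shows $d\Phi$ is injective everywhere and that $\Phi$ is a homeomorphism onto a neighbourhood of $(x,x,\theta_0)$ in $C_2(M)$; together with the charts coming from $\iota$ over $M\times M\setminus\mathrm{diag}$ this gives an atlas modelled on open subsets of $\R^{2n}$ and of the half-space $\R^{2n-1}\times[0,\infty)$, so $C_2(M)$ is a smooth compact manifold with boundary $\partial C_2(M)\cong STM$ and interior $M\times M\setminus\mathrm{diag}$, as claimed. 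The step I expect to be the genuine obstacle is precisely the smooth extension of $g/|g|$ across $r=0$ — equivalently, the independence of the blow-up from the chart, the tubular neighbourhood and the auxiliary metric — together with verifying that $\Phi$ is a chart; everything else is either formal or a routine coordinate check.
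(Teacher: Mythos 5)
Your construction is correct in outline, but it is worth noting that the paper does not actually prove this proposition at all: it simply refers to Sinha's paper and to Axelrod--Singer. So your route is genuinely different in the sense that you supply a proof where the text supplies a citation, and the model you build is essentially Sinha's: realize $C_2(M)$ as the closure of the graph of the direction map $(x,y)\mapsto (x,y,\frac{x-y}{|x-y|})$ inside $M\times M\times S^{N-1}$ after a Whitney embedding. What this buys you is that compactness and the identification of the interior are immediate, and the only real analytic content is the Hadamard-type statement that $g(p,r,\sigma)=\int_0^1 d\psi_{p-(1-t)r\sigma}(\sigma)\,dt$ is smooth with nonvanishing value at $r=0$, which makes the normalized direction extend smoothly across $r=0$ and yields half-space charts; this is exactly the right key point, and your extension of $\Phi$ to small negative $r$ also gives the standard way to see that transition maps between boundary charts are smooth. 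By contrast, the Axelrod--Singer route the paper alludes to constructs $Bl_{\mathrm{diag}}(M\times M)$ intrinsically (polar coordinates in a tubular neighbourhood of the diagonal, i.e.\ gluing in the spherical normal bundle), which avoids any embedding but requires checking independence of the tubular neighbourhood instead. The one caveat, which you flag yourself but do not carry out, is that the proposition is stated for the differential-geometric blow-up, so strictly you still owe the identification of your closure model with $Bl_{\mathrm{diag}}(M\times M)$ and its independence of the embedding; since the paper only defines the blow-up ``roughly speaking'', taking your model as the definition is defensible, and the four listed properties do follow from your argument once the routine verifications (injectivity of $d\Phi$, $\Phi$ a homeomorphism onto a neighbourhood in the closure) are written out. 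Also note that your boundary is the unit sphere bundle for the metric induced by the embedding; as you say, invariantly it is the sphere bundle of the normal bundle of the diagonal, identified with $TM$ via $(v,w)\mapsto v-w$, so the statement $\de C_2(M)\cong STM$ holds up to the usual metric-independent diffeomorphism.
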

A proof can be found in \cite{Sinha2003} or the papers of Axelrod and Singer cited.
\begin{thm}[Axelrod-Singer \cite{Axelrod1991a}]
There is a smooth 2-form $\eta^{ext} \in \Omega^2(C_2(M),\g\otimes\g)$ such that $\iota_{M\times M -\mathrm{diag}}^*\eta^{ext} = \eta$. 
\end{thm}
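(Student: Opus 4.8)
The plan is to show that the propagator 2-form $\eta$ on the open configuration space $M\times M - \mathrm{diag}$ extends smoothly across the boundary of the blow-up $C_2(M)$. The key is that $\eta$ is built from $d^*_{A_0}\circ(\Delta_{A_0}+P_{A_0})^{-1}$, and the singular behaviour of $\eta$ near the diagonal is governed entirely by the singularity of the Schwartz kernel of the \emph{parametrix} of the Laplacian $\Delta_{A_0}$, which is a classical (polyhomogeneous) pseudodifferential operator of order $-2$ on the 3-manifold $M$. First I would recall that for an elliptic operator $\Delta_{A_0}$ of order $2$ on a compact manifold of dimension $3$, the inverse $(\Delta_{A_0}+P_{A_0})^{-1}$ is a pseudodifferential operator of order $-2$, and that $K_{A_0}=d^*_{A_0}\circ(\Delta_{A_0}+P_{A_0})^{-1}$ is therefore a pseudodifferential operator of order $-1$. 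The Schwartz kernel of such an operator is smooth away from the diagonal and has a conormal singularity along the diagonal of the type $|x-y|^{-(n-1)}=|x-y|^{-2}$ in local coordinates (with possible logarithmic corrections appearing only in even dimensions, hence absent here).

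The core technical step is the statement that a classical pseudodifferential kernel's conormal singularity along the diagonal lifts to a \emph{smooth} (or at worst polyhomogeneous conormal) form on the radial blow-up $C_2(M)=Bl_{\mathrm{diag}}(M\times M)$. Concretely, I would introduce polar coordinates $(r,\theta)$ around the diagonal, where $r=|x-y|$ is the (metric) distance and $\theta\in ST\,\mathrm{diag}$ is the unit direction of $x-y$; these are exactly the coordinates in which $C_2(M)$ is the manifold with boundary $\{r\ge 0\}$, with $\de C_2(M)=\{r=0\}\cong STM$. The symbol calculus shows that in these coordinates the kernel of $K_{A_0}$, expressed as a $2$-form, has the form $r^{-2}\cdot r^{2}\,(\text{smooth in }r,\theta)=(\text{smooth in }r,\theta)$: the apparent $r^{-2}$ blow-up of the coefficient is exactly compensated by the Jacobian factors $r^{2}\,dr\wedge d\theta$ hidden in writing a $2$-form in polar coordinates (more precisely, $dx^i$ pulled back to the blow-up acquires a factor of $r$). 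Hence the pulled-back form $\eta^{ext}=\pi^*\eta$ extends continuously, and then smoothly by the full asymptotic expansion of the symbol, to all of $C_2(M)$. Since $C_2(M)^\circ = M\times M - \mathrm{diag}$ and $\pi$ is the identity there, $\iota^*_{M\times M-\mathrm{diag}}\eta^{ext}=\eta$, which is the claim.

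I would organize the argument as: (1) identify $\eta$ as (the topological kernel of) the order $-1$ $\Psi$DO $K_{A_0}$; (2) recall the structure theorem for conormal kernels of classical $\Psi$DOs, namely that the kernel is the sum of a smooth term and a term of the form $\int e^{i\langle x-y,\xi\rangle}a(x,\xi)\,d\xi$ with $a$ a classical symbol of order $-1$; (3) compute the leading singularity by Fourier-transforming the homogeneous symbol components, getting terms homogeneous of degree $-(3-j-1)$ in $x-y$ for $j\ge 0$, i.e.\ degree $\ge -2$; (4) pass to polar/blow-up coordinates and observe, via the $r$-degree count above (coefficient degree $\ge -2$, form-degree $2$ contributing $r^{+2}$ from the differentials and Jacobian), that the pulled-back form is bounded, and by induction on the asymptotic expansion, smooth up to $\{r=0\}$; (5) note that the coexact/harmonic projections and the boundary term $P_{A_0}$ only contribute genuinely smooth forms on $M\times M$, so they cause no extra singularity. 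The main obstacle — and the place where I would lean most heavily on the cited work of Axelrod--Singer and on Sinha's description of $C_2(M)$ — is step (4): making the ``degrees cancel'' bookkeeping fully rigorous requires carefully tracking how differential forms (not just functions) transform under the blow-down map $\pi\colon C_2(M)\to M\times M$, in particular that $\pi^*(dx^i - dy^i)$ is divisible by the boundary defining function $r$, and handling the full polyhomogeneous expansion rather than just the leading term. One must also verify that there is no logarithmic term obstructing smoothness; this is where the oddness of $\dim M = 3$ (equivalently, evenness of $\dim M\times M$ minus the issue, i.e.\ the parity that makes the relevant residues vanish) is used, and I would simply invoke the explicit computation in Axelrod--Singer that the relevant obstruction vanishes.
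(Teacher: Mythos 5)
First, note that the paper does not prove this statement at all: it is quoted from Axelrod--Singer, and the surrounding text only records the properties of $C_2(M)$ and of $\eta$ needed later. So your proposal cannot be compared to an in-paper argument; it has to stand on its own as a sketch of the Axelrod--Singer analysis, and in outline (parametrix of $\Delta_{A_0}$, classical symbol expansion of the order $-1$ operator $K_{A_0}=d_{A_0}^*\circ(\Delta_{A_0}+P_{A_0})^{-1}$, term-by-term lift to the radial blow-up, smoothing corrections harmless) it is indeed the right strategy.

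However, the mechanism you give for the crucial step (4) is wrong as stated, and this is a genuine gap. You claim that $\pi^*(dx^i-dy^i)$ is divisible by the boundary defining function $r$, so that a coefficient singularity of order $r^{-2}$ is automatically compensated by the two form degrees. In the blow-up coordinates $x-y=r\theta$ one has $\pi^*(dx^i-dy^i)=\theta^i\,dr+r\,d\theta^i$: only the angular part carries a factor of $r$, the radial part does not. Hence a $2$-form whose coefficients are homogeneous of degree $-2$ generically pulls back with $r^{-1}$-singular $dr\wedge d\theta$ components, and your degree count does not close. What actually saves the day is the specific structure of the leading singularity of the Chern--Simons propagator: in the flat model it is (up to smooth factors) $\epsilon_{ijk}\frac{(x-y)^i\,d(x-y)^j\wedge d(x-y)^k}{|x-y|^3}$, and the would-be $r^{-1}$ terms cancel because $\epsilon_{ijk}\theta^i\theta^j=0$; equivalently, the leading term is the pullback of the round area form on $S^2$ under the direction map $(x,y)\mapsto(x-y)/|x-y|$, which extends to $C_2(M)$ because that map does. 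Any correct write-up must identify this leading term (or an analogous cancellation) explicitly rather than rely on a formal $r$-count; the subleading homogeneous terms of degree $\geq -1$ are then handled by the count you describe. Relatedly, your parity remark about logarithms is imprecise: for an order $-1$ operator in dimension $3$ the symbol expansion does reach the critical degree $-3$ (at the second subleading term), so a $\log r$ term is not excluded by dimension parity alone; it is excluded because that homogeneous symbol term is odd in $\xi$ and hence has vanishing spherical average. Deferring both points to the explicit computations in Axelrod--Singer is legitimate (the paper itself does no more), but as written your ``degrees cancel'' argument would not survive scrutiny without these corrections.
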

The smooth 2-form $\eta^{ext}$ will be the propagator that we work with. From now on we will drop the superscript $ext$. It has the following important properties. 
\begin{prop}\label{prop:propprops}
The propagator $\eta \in \Omega_2(C_2(M),\g\times\g)$ satisfies: 
\begin{enumerate}
\item Denote $\chi_i$ an orthonormal basis of harmonic forms and $\chi^i = *\chi_i$. Then 
\begin{equation}
d_{A_0}\eta = \sum_i \pm \pi_1^*\chi_i \pi_2^*\chi^i. 
\end{equation}
Here $\pi_1,\pi_2\colon C_2(M) \to M$ are the extensions to $C_2(M)$ of the restrictions of $\pi_1,\pi_2\colon M \times M \to M$ to $M \times M - \mathrm{diag}$. 
\item Denote $\iota_\de\colon \de C_2(M) \hookrightarrow C_2(M)$ the inclusion. Then $\iota_\de^*\eta$ is a global angular form on the sphere bundle $\de C_2(M) \cong STM$. 
\item Denote $T$ the extension of the map $(x_1,x_2)\mapsto (x_2,x_1)$ to $C_2(M)$. Then 
\begin{equation}T^*\eta = - \eta
\end{equation}
\item For all $i$ we have 
\begin{equation}
\int_{y\in M} \langle\eta(x,y) \wedge \chi_i\rangle_{23} = 0.
\end{equation}
\item We have 
\begin{equation}
\int_{y\in M} \langle\eta(x,y) \wedge \eta(y,z)\rangle_{23} = 0.
\end{equation}
\end{enumerate}
\end{prop}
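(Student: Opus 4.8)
The guiding principle I would use is that $\eta$ is (the topological kernel of) the operator $K_{A_0}=d^*_{A_0}\circ(\Delta_{A_0}+P_{A_0})^{-1}$, so each of the five assertions should reduce to an identity between operators on $\Omega^\bullet(M,\g)$, translated back into kernels; and since $\eta^{\mathrm{ext}}$ is smooth up to the boundary of $C_2(M)$, it suffices to check each identity on the interior $M\times M\setminus\mathrm{diag}$ and conclude by continuity. For (1): the contracting-triple relation $d_{A_0}K_{A_0}+K_{A_0}d_{A_0}=\mathrm{id}-P_{A_0}$ says, at the level of kernels, that the total twisted differential of $\eta$ equals the kernel of $\mathrm{id}$ (a current supported on the diagonal) minus the kernel of $P_{A_0}$; by the Hodge theorem $P_{A_0}\omega=\sum_i(\chi_i,\omega)\chi_i$, so the latter kernel is $\pm\sum_i\pi_1^*\chi_i\wedge\pi_2^*\chi^i$, and on the blow-up the diagonal contribution is gone, which gives the formula (the global sign is fixed by the orientation conventions). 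For (4): $(\Delta_{A_0}+P_{A_0})^{-1}\chi_i=\chi_i$ and harmonic forms are co-closed, so $K_{A_0}\chi_i=d^*_{A_0}\chi_i=0$, which is exactly the stated integral after pairing the $\g$-legs and integrating over $y$. For (5): $d^*_{A_0}$ commutes with $\Delta_{A_0}$ and satisfies $d^*_{A_0}P_{A_0}=P_{A_0}d^*_{A_0}=0$, hence commutes with $(\Delta_{A_0}+P_{A_0})^{-1}$, so $K_{A_0}^2=(d^*_{A_0})^2(\Delta_{A_0}+P_{A_0})^{-2}=0$; it then remains to check that $\int_{y\in M}\langle\eta(x,y)\wedge\eta(y,z)\rangle_{23}$ genuinely computes the kernel of $K_{A_0}^2$, which is legitimate because $\eta$ has an $L^1_{\mathrm{loc}}$ singularity along the diagonal (order $2$ in a $3$-dimensional direction), so the fiber integral converges for $x\neq z$, equals the $K_{A_0}^2$-kernel there, and extends by continuity to the relevant blow-up.

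For (3), I would observe that both $\eta$ and $-T^*\eta$ satisfy the equation of (1) — using that $T$ reverses the relevant orientations and that $T^*\big(\sum_i\pi_1^*\chi_i\wedge\pi_2^*\chi^i\big)$ equals that sum up to sign for a suitably normalized orthonormal basis of harmonic forms ($**=\mathrm{id}$ in the degrees that occur) — and that they have the same boundary restriction up to sign. Rather than invoke a uniqueness statement, I would simply replace the constructed $\eta$ by its antisymmetrization $\tfrac12(\eta-T^*\eta)$; this is harmless because $T$ restricts to the fiberwise antipodal map on $\partial C_2(M)\cong STM$, under which a global angular form is odd, so properties (1), (2), (4), (5) are undisturbed.

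The substantial point is (2), and I expect it to be the main obstacle (everything else being essentially formal). Here one has to identify the leading singularity of the parametrix near the diagonal: working in geodesic normal coordinates, the principal part of $\Delta_{A_0}$ is the Euclidean Laplacian, so near a diagonal point $\eta$ agrees to leading order with the Euclidean propagator on $\R^3$, whose restriction to the exceptional $S^2$-bundle of $Bl_{\mathrm{diag}}(\R^3\times\R^3)$ is an $SO(3)$-invariant area form normalized to have fiber integral $1$. Transporting this through the definition $C_2(M)=Bl_{\mathrm{diag}}(M\times M)$, one concludes that $\iota_\partial^*\eta$ has fiber integral $1$ and — by (1) restricted to the boundary — that $d(\iota_\partial^*\eta)$ is, up to the harmonic correction, $-\pi^*e(TM)$, i.e. that $\iota_\partial^*\eta$ is a global angular form in the sense of Bott--Tu. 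This argument rests on the Axelrod--Singer extension theorem (already quoted) together with the explicit Euclidean model and a careful bookkeeping of the blow-up near the diagonal, which is where the real work lies.
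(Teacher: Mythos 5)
For this proposition the paper itself gives no argument at all: its ``proof'' is a pointer to Remark 11, Section 4 of \cite{Cattaneo2008}. So you are supplying strictly more than the text does, and the route you take --- read every item as an operator identity for $K_{A_0}=d^*_{A_0}(\Delta_{A_0}+P_{A_0})^{-1}$ and translate into kernels, with (1) coming from the contracting-triple relation, (4) from $K_{A_0}\chi_i=d^*_{A_0}\chi_i=0$, (5) from $K_{A_0}^2=0$ (with the correct remark that the composed kernel converges because the diagonal singularity is of order $2$ in three dimensions), and (2) from the Euclidean model of the parametrix near the diagonal plus the Axelrod--Singer extension --- is exactly the standard argument behind the cited remark. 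Items (1), (4), (5) and your honest flagging of (2) as the genuinely analytic step are fine as a sketch.

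The weak point is (3), and it propagates. Replacing $\eta$ by $\tfrac12(\eta-T^*\eta)$ does not prove the stated property for the propagator as the paper defines it (the kernel of $K_{A_0}$); it proves existence of \emph{some} form with properties (1)--(5), which is enough for the rest of the chapter but is a different statement. Worse, your claim that (5) is ``undisturbed'' by antisymmetrization is not free: expanding $\int_y\langle\tilde\eta(x,y)\wedge\tilde\eta(y,z)\rangle_{23}$ with $\tilde\eta=\tfrac12(\eta-T^*\eta)$ produces, besides the kernels of $K_{A_0}^2$ and $(K_{A_0}^2)^T$, the mixed terms corresponding to $K_{A_0}K_{A_0}^T$ and $K_{A_0}^TK_{A_0}$ (transpose taken with respect to the Poincar\'e pairing $\int_M\langle\cdot\wedge\cdot\rangle$), and these do not vanish merely because $K_{A_0}^2=0$; their vanishing is essentially equivalent to the symmetry statement you are trying to avoid. (Property (4) survives, but only because both $P_{A_0}K_{A_0}=0$ and $K_{A_0}P_{A_0}=0$ hold, which you should say.) The clean fix is to prove (3) directly for the original kernel: on a closed oriented Riemannian $3$-manifold $d^*_{A_0}=\pm *d_{A_0}*$, and $*$ commutes with $\Delta_{A_0}$ and with $P_{A_0}$, so the transpose of $K_{A_0}$ with respect to $\int_M\langle\cdot\wedge\cdot\rangle$ is $\pm(\Delta_{A_0}+P_{A_0})^{-1}d^*_{A_0}=\pm K_{A_0}$, and a careful sign count gives the minus sign, i.e.\ $T^*\eta=-\eta$ on the nose. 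With that, the antisymmetrization and the extra verifications it requires become unnecessary.
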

\begin{proof}
This is the content of Remark 11, Section 4, in \cite{Cattaneo2008}. 
\end{proof}
\section{The effective action}
In the presence of zero modes we can not integrate over all fields at once, however, we can define an effective action on zero modes as a formal BV pushforward. To this end we split the fluctuation $\hat{\calA} = \alpha + \mathsf{a}$, where $\sfa \in \mathrm{Harm}^\bullet_{A_0}(M,\g)$. The quadratic part of the action then becomes
\begin{equation}
\langle\alpha + \sfa, d_{A_0}(\alpha + \sfa) \rangle = \langle \alpha,d_{A_0}\alpha = (\alpha,*d_{A_0}\alpha).
\end{equation}
For $\alpha \in \calL_g$ the operator $*d_{A_0}$ becomes invertible. 
The cubic term becomes $\langle \alpha + \sfa, [\alpha + \sfa,\alpha + \sfa]\rangle$. Expanding this term we obtain 3-valent vertices, but any number of half-edges issuing from these vertices can \emph{end} in $\mathsf{a}$ fields (see below for the precise Feynman rules). We now use the superspace formulation of the perturbative integral: We will get a superdeterminant from the quadratic part and a signature factor from the even part of the quadratic part. All in all, we obtain
\begin{equation}
Z^{A_0}_{CS}(\sfa;g) = \underbrace{e^{\frac{\ii}{\hbar}S_{CS}[A_0]}}_{=:Z_{back}}\underbrace{\operatorname{sdet}\left(\restr{*d_{A_0}}{\calL}\right)^{-1/2}
e^{\frac{\ii\pi}{4}\mathrm{sign}\left(\restr{*d_{A_0}}{\calL_{even}}\right)}}_{=:Z_{free}}\underbrace{\sum_{\Gamma}\frac{(-\ii\hbar)^{\chi(\Gamma)}}{|\mathrm{Aut(\Gamma)|}}F(\Gamma)}_{=:Z_{pert}}. \label{eq:effCS1}
\end{equation}
Here $Z_{back},Z_{free}$ are traditionally called the 0- and 1-loop part, respectively, but in this context this terminology is misleading because in the presence of zero modes the perturbation series $Z_{pert}$ contains 0- and 1-loop graphs (which are absent if there are no zero modes).
\begin{defn}
The \emph{effective action on zero modes with background $A_0$} is defined by 
\begin{equation}
Z^{A_0}_{CS}(\sfa;g) = Z_{back}Z_{free}e^{\frac{\ii}{\hbar}S_{eff}(a)}
\end{equation}
\end{defn}
A standard combinatorial exercise shows
\begin{prop}
The effective action is given by 
\begin{equation}
S_{eff}(\sfa) = \sum_{\Gamma \text{ connected }}\frac{(-\ii\hbar)^{(\mathrm{loops}(\Gamma)}}{|\mathrm{Aut}(\Gamma)|}F(\Gamma)
\end{equation}
\end{prop}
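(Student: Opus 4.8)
The plan is to recognize the claimed identity as the ``linked-cluster theorem'' (exponential formula) for weighted graph sums, together with the elementary relation between the Euler characteristic and the loop number of a connected graph. First I would read off from \eqref{eq:effCS1} and the definition of $S_{eff}$ that
$$e^{\frac{\ii}{\hbar}S_{eff}(\sfa)} = Z_{pert} = \sum_{[\Gamma]}\frac{(-\ii\hbar)^{-\chi(\Gamma)}}{|\mathrm{Aut}(\Gamma)|}F(\Gamma),$$
the sum running over isomorphism classes of the (at least trivalent, $\sfa$-decorated) Feynman graphs of the theory. It therefore suffices to show that this sum equals $\exp\bigl(\sum_{[\Gamma]\ \mathrm{conn.}}\frac{(-\ii\hbar)^{-\chi(\Gamma)}}{|\mathrm{Aut}(\Gamma)|}F(\Gamma)\bigr)$ and then to rewrite the exponent. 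Everything lives in the ring of formal power series in $\hbar$ and in the zero-mode variable $\sfa$, and at each fixed order only finitely many graphs contribute, so no analytic convergence question arises.

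Next I would assemble three multiplicativity statements under disjoint union of graphs. A graph $\Gamma$ decomposes uniquely, up to reordering, as a disjoint union $\Gamma_1 \sqcup \cdots \sqcup \Gamma_r$ of its connected components; for this decomposition $\chi(\Gamma) = \sum_i \chi(\Gamma_i)$ (additivity of the Euler characteristic of a $1$-complex) and $F(\Gamma) = \prod_i F(\Gamma_i)$ (immediate from the definition of the Feynman weight as a sum over labelings of half-edges, since no edge and no vertex joins different components, and the decorations are assigned componentwise). The one point requiring care is the automorphism count: grouping the components of $\Gamma$ by isomorphism type and letting $m_{[\Delta]}$ be the number of components isomorphic to a given connected graph $\Delta$, an automorphism of $\Gamma$ amounts to a permutation of the copies of each $[\Delta]$ together with an automorphism of each copy, so $|\mathrm{Aut}(\Gamma)| = \prod_{[\Delta]} m_{[\Delta]}!\,|\mathrm{Aut}(\Delta)|^{m_{[\Delta]}}$.

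Granting this, I would organize the sum over $[\Gamma]$ by the multiset $(m_{[\Delta]})_{[\Delta]\ \mathrm{conn.}}$ of component types --- there being exactly one isomorphism class of graph for each such multiset --- and compute
\begin{align}
\sum_{[\Gamma]}\frac{(-\ii\hbar)^{-\chi(\Gamma)}}{|\mathrm{Aut}(\Gamma)|}F(\Gamma)
&= \sum_{(m_{[\Delta]})}\ \prod_{[\Delta]\ \mathrm{conn.}}\frac{1}{m_{[\Delta]}!}\left(\frac{(-\ii\hbar)^{-\chi(\Delta)}}{|\mathrm{Aut}(\Delta)|}F(\Delta)\right)^{m_{[\Delta]}} \notag \\
&= \prod_{[\Delta]\ \mathrm{conn.}}\ \sum_{m\geq 0}\frac{1}{m!}\left(\frac{(-\ii\hbar)^{-\chi(\Delta)}}{|\mathrm{Aut}(\Delta)|}F(\Delta)\right)^{m} \notag \\
&= \exp\left(\sum_{[\Delta]\ \mathrm{conn.}}\frac{(-\ii\hbar)^{-\chi(\Delta)}}{|\mathrm{Aut}(\Delta)|}F(\Delta)\right).
\end{align}
Taking $\frac{\hbar}{\ii}\log = -\ii\hbar\log$ of both sides (legitimate for a formal power series with constant term $1$) then yields $S_{eff}(\sfa) = -\ii\hbar\sum_{\Gamma\ \mathrm{conn.}}\frac{(-\ii\hbar)^{-\chi(\Gamma)}}{|\mathrm{Aut}(\Gamma)|}F(\Gamma)$.

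Finally I would absorb the prefactor into the power of $-\ii\hbar$: for a connected graph $\Gamma$ the number of independent loops is the first Betti number $\mathrm{loops}(\Gamma) = E(\Gamma) - V(\Gamma) + 1 = 1 - \chi(\Gamma)$, so $-\ii\hbar\,(-\ii\hbar)^{-\chi(\Gamma)} = (-\ii\hbar)^{1-\chi(\Gamma)} = (-\ii\hbar)^{\mathrm{loops}(\Gamma)}$, which is exactly the asserted formula. The only genuinely substantive step is the middle one --- verifying the factorization $|\mathrm{Aut}(\Gamma)| = \prod_{[\Delta]} m_{[\Delta]}!\,|\mathrm{Aut}(\Delta)|^{m_{[\Delta]}}$ and checking that the $\sfa$-decorations (and, in the richer setting of Section~\ref{sec:FeynmanGraphs}, the ghost/antighost orientations) are compatible with passing to connected components --- but this is standard species-theoretic bookkeeping rather than a real obstacle; the remainder is formal manipulation of power series.
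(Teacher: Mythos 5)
Your proof is correct and is exactly the ``standard combinatorial exercise'' the paper alludes to without spelling out: factorization of $F$, of the Euler characteristic, and of $|\mathrm{Aut}(\Gamma)|$ over connected components, the exponential (linked-cluster) formula, and $\mathrm{loops}(\Gamma)=1-\chi(\Gamma)$ for connected graphs. The only caveat is notational: you use $(-\ii\hbar)^{-\chi(\Gamma)}$ in $Z_{pert}$, whereas \eqref{eq:effCS1} writes $(-\ii\hbar)^{\chi(\Gamma)}$; your choice is the one consistent with the earlier diagrammatic theorem and with the stated proposition, so this is a resolution of the paper's inconsistent sign convention rather than an error on your part.
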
 
(of course, this uses the fact that $F(\Gamma_1 \sqcup \Gamma_2)= F(\Gamma_1)F(\Gamma_2)$, which will become clear below). 
\subsection{Feynman graphs and rules}
In Equation \eqref{eq:effCS1} the sum is over all trivalent graphs with leaves, i.e. there are three half-edges emanating from every vertex that can either be connected to another half-edge or end in a leaf. We do not allow tadpoles\footnote{The unimodularity condition (Definition \ref{def:unimodular}) implies that any graph containing a tadpole evaluates to zero under the Feynman rules, so we might as well disregard them from the start} (edges connecting a vertex to itself, also known as short loops). Formally, these graphs are given by a set by a quadruple $\Gamma=(H(\Gamma),V(\Gamma),L(\Gamma),E(\Gamma))$ where 
\begin{itemize}
\item $H(\Gamma)$ is a finite set (the half-edges), 
\item $V(\Gamma)$ is a partition of $H(\Gamma)$ into sets of cardinality three, 
\item $L(\Gamma)$ is any subset of $H(\Gamma)$ (the leaves) 
\item $E(\Gamma)$ is a perfect matching of $H(\Gamma) - L(\Gamma)$.
\end{itemize} 
The set of all such graphs is denoted $\mathrm{Gr}$. 
We think of the leaves are decorated by an $\sfa$ field. See e.g. 
Let us define first the graph configuration space:
\begin{defn}
For every graph $\Gamma$, the \emph{open graph configuration space} is 
\begin{equation}
C_{\Gamma}^\circ(M) := \{c \colon V(\Gamma) \to M | c(v) \neq c(w)  \text{ if  } v \neq w \}
\end{equation}
\end{defn} 
If $c \in C_{\Gamma}^\circ$, we can extend $c$ to $H(\Gamma)$ (the set of half-edges) by defining $c(h) = c(v)$ if $h \in v$. This extension is well-defined since vertices partition the set of half-edges. \\
To formulate the Feynman rules, let $T_a$ be an orthonormal basis of the Lie algebra and define $f_{abc} = \langle T_a,[T_b,T_c]\rangle$. Expand $\eta = \eta^{ab}T_a \otimes T_b$ and $\sfa = \sfa^iT_i$. 
\begin{defn}[Feynman rules]\label{def:FRCS}
A \emph{labeling} of a graph $\gamma$ is a map $H(\Gamma) \to \{1,\ldots,\dim \g\}$. 
The \emph{graph differential form} is a differential form $\omega_\Gamma \in \Omega^\bullet(C_\Gamma^\circ)$ defined by 
\begin{equation}
(\omega_\Gamma)_f:=\sum_l \prod_{v={h_1,h_2,h_3} \in V(\Gamma)}f_{l(h_1)l(h_2)l(h_3)}\prod_{e={h_1,h_2}}\eta^{l(h_1)l(h_2)}(c(h_1),c(h_2))\prod_{h \in L(\Gamma)}\sfa^{l(h)}(c(h)).
\end{equation}
We then define the \emph{Feynman rules map} 
\begin{equation}
F \colon \mathrm{Gr} \to \R
\end{equation}
by 
\begin{equation}
F(\Gamma) = \int_{C_\Gamma^\circ(M)}\omega_\Gamma
\end{equation}
For convenience, we also the define the normalized versions 
\begin{align*}
\tilde{\omega}_\Gamma &= \frac{(-\ii\hbar)^{\chi(\Gamma)}}{|\mathrm{Aut(\Gamma)|}}F(\Gamma) \\
\tilde{F}(\Gamma) &= \frac{(-\ii\hbar)^{\chi(\Gamma)}}{|\mathrm{Aut(\Gamma)|}}F(\Gamma)
\end{align*}
\end{defn}
\begin{expl}
Let $\Gamma_1$  be the theta graph of Figure \ref{fig:theta}. Then we have 
$$\omega_{\Gamma_1}(x,y) = \sum_{i,j,k,l,m,n}f_{ijk}f_{lmn}\eta^{ik}(x,y)\eta^{jl}(x,y)\eta^{kn}(x,y)$$
and 
$$\tilde{\omega}_{\Gamma_1}(x,y) = \frac{(-\ii\hbar)^2}{12}\omega_{\Gamma_1}(x,y).$$
On the other hand, for the graph of Figure \ref{fig:theta_back} with residual fields, we have 
$$\omega_{\Gamma_2} = \sum_{i,j,k,l,m,n}f_{ijk}f_{l,m,n}\mathsf{a}^i(x)\eta^{jm}(x,y)\eta^{kn}(x,y)\mathsf{a}^l(y)$$ 
and 
$$\tilde{\omega}_{\Gamma_2} = \frac{-\ii\hbar}{4}\omega_{\Gamma_2}.$$
\begin{figure}
\centering
%%%%%%%%%%%%%%
\begin{subfigure}{0.4\textwidth}
\centering
\begin{tikzpicture}[scale=1]
\coordinate (O) at (0,0);
 \coordinate (bulk1) at (-1,0) {};
 \coordinate (bulk2) at (1,0.0) {};
 \draw[thick] (bulk1) to[out=60,in=120] (bulk2);
 \draw[thick] (bulk1) to (bulk2);
 \draw[thick] (bulk2) to[out=-120,in=-60] (bulk1);
  \tkzDrawPoints[color=black,fill=black,size=12](bulk1,bulk2);
\end{tikzpicture}
\caption{$\Gamma_{1}$}\label{fig:theta}
\end{subfigure}
%%%%%%%%%%%%%%%%
\begin{subfigure}{0.4\textwidth}
\centering
\begin{tikzpicture}[scale=1]
\coordinate (O) at (0,0);
 \coordinate (bulk1) at (-1,0) {};
 \coordinate (bulk2) at (1,0.0) {};
 \node[shape=circle,draw,above left] (res1) at (-0.8,0.4) {$\mathsf{a}$}
 edge (bulk1);
 \node[shape=circle,draw,above right] (res2) at (0.8,0.4) {$\mathsf{a}$}
 edge (bulk2);
 \draw[thick] (bulk1) to[out=60,in=120] (bulk2);
 \draw[thick] (bulk2) to[out=-120,in=-60] (bulk1);
  \tkzDrawPoints[color=black,fill=black,size=12](bulk1,bulk2);
\end{tikzpicture}
\caption{$\Gamma^{}_{2}$}\label{fig:theta_back}
\end{subfigure}
\caption{Two graphs in the effective action}
\end{figure}
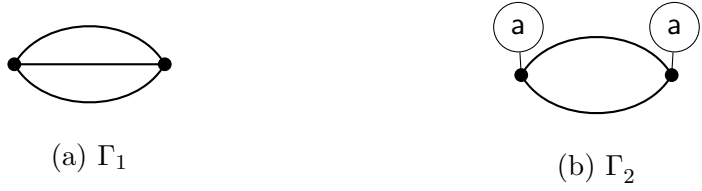 
\end{expl}
In principle, $F(\Gamma)$ does not need to be finite, as the open configuration space $C_{\Gamma}^\circ$ is not compact. However, this is not the case. Again, this was first proven in generality by Axelrod and Singer. 

\begin{thm}[\cite{Axelrod1991a},\cite{Axelrod1994}]
$F(\Gamma)$ is finite for all graphs $\Gamma$. 
\end{thm}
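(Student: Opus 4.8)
The plan is to prove finiteness of the integral $F(\Gamma) = \int_{C_\Gamma^\circ(M)} \omega_\Gamma$ by showing that the integrand $\omega_\Gamma$, although defined a priori only on the open (non-compact) configuration space $C_\Gamma^\circ(M)$, extends to a smooth form on a suitable compactification $C_\Gamma(M)$, the Fulton--MacPherson / Axelrod--Singer compactification of the configuration space of $|V(\Gamma)|$ points in $M$. Since $C_\Gamma(M)$ is a compact manifold with corners and $\omega_\Gamma$ will be shown to be smooth up to the boundary, its integral over $C_\Gamma(M)$ is automatically finite, and this integral agrees with $\int_{C_\Gamma^\circ(M)}\omega_\Gamma$ because the boundary strata have measure zero.

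\textbf{First} I would construct $C_\Gamma(M)$ concretely. For $\Gamma$ with $n = |V(\Gamma)|$ vertices, start from $M^n$ and iteratively blow up the diagonals where two or more points collide, in order of increasing dimension (first all pairwise diagonals, then triple collisions, etc.), exactly as in the construction of $C_2(M) = Bl_{\mathrm{diag}}(M\times M)$ recalled earlier in the text. This yields a compact manifold with corners whose open interior is $C_\Gamma^\circ(M)$, equipped with projection maps $\pi_e \colon C_\Gamma(M) \to C_2(M)$ for each edge $e \in E(\Gamma)$ extending the evaluation $c \mapsto (c(h_1), c(h_2))$. The graph form is then $\omega_\Gamma = \sum_l \big(\prod_v f_{\cdots}\big)\, \bigwedge_{e} \pi_e^* \eta^{l(h_1)l(h_2)} \wedge \bigwedge_{h\in L(\Gamma)} c_h^* \sfa^{l(h)}$, where $c_h \colon C_\Gamma(M) \to M$ is the (everywhere-defined, smooth) evaluation at the vertex containing $h$. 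By the Axelrod--Singer extension theorem quoted above, $\eta$ extends smoothly to $C_2(M)$; hence each $\pi_e^*\eta$ is smooth on $C_\Gamma(M)$, the $\sfa$-factors are smooth (they are harmonic forms pulled back along smooth maps), the structure constants $f_{abc}$ are just numbers, and therefore $\omega_\Gamma$ is a smooth form on the compact manifold-with-corners $C_\Gamma(M)$.

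\textbf{The main obstacle} is verifying that the maps $\pi_e$ really do extend to \emph{smooth} maps $C_\Gamma(M) \to C_2(M)$ — equivalently, that the blow-up construction for $n$ points dominates (in each edge pair of coordinates) the single blow-up defining $C_2(M)$. This is a compatibility statement about iterated real blow-ups: when points $c(h_1)$ and $c(h_2)$ collide within $C_\Gamma(M)$, the relevant collision stratum has already been blown up, and the induced normal-direction data (a point in the sphere bundle $STM$, possibly together with higher-order separation data from a multi-point collision) maps continuously and smoothly to the corresponding point of $\partial C_2(M) = STM$. One checks this in local coordinates adapted to the blow-up: near a collision of a cluster containing $h_1, h_2$, use iterated polar-type coordinates $(r, \hat v, \ldots)$; the edge $e$ sees the relative position $c(h_1) - c(h_2)$, which in these coordinates is $r\,\hat v_e + O(r^2)$ for a component $\hat v_e$ of the angular data, and this extends smoothly across $r = 0$ to the unit-sphere value $\hat v_e$. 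Granting this, the argument is complete: the boundary $\partial C_\Gamma(M)$ is a finite union of strata of codimension $\geq 1$, hence of measure zero, so
\begin{equation}
F(\Gamma) = \int_{C_\Gamma^\circ(M)} \omega_\Gamma = \int_{C_\Gamma(M)} \omega_\Gamma,
\end{equation}
and the right-hand side is the integral of a smooth form over a compact space, hence finite. A remark worth including is that this argument uses \emph{only} smoothness of $\eta$ up to the boundary of $C_2(M)$, not the more refined properties (vanishing of certain pushforwards, the global angular form condition) listed in Proposition~\ref{prop:propprops}; those finer facts are needed for metric-independence and for the vanishing of anomalies, not for mere finiteness.
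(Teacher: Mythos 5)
Your proof follows essentially the same route as the paper: extend $\omega_\Gamma$ to the Fulton--MacPherson/Axelrod--Singer compactification $C_\Gamma(M)$ by pulling back the extended propagator along the smooth extensions of the edge maps to $C_2(M)$ (exactly point 2 of the paper's proposition on $C_n(M)$), and conclude finiteness because one is integrating a smooth form over a compact manifold with corners whose boundary has measure zero. The only quibble is your stated blow-up order (pairwise diagonals first, then triples): the standard iterated construction blows up diagonals in order of increasing dimension, i.e.\ the deepest collisions first, since the pairwise diagonals intersect along the deeper ones; this detail does not affect the structure of your argument.
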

The proof goes through the construction of a compactification $C_{\Gamma}(M)$ of $C_{\Gamma}^\circ(M)$ such that the form $\omega_{\Gamma}$ extends to $C_{\Gamma}(M)$. This is a generalization of the compactification $C_2(M)$ discussed in  Section \ref{sec:prop} above. Instead of giving an explicit construction, we simply state the properties of this compactification that are important for us. Let us fix some notation first.
\begin{defn}[Configuration space of a finite set] For a finite set $S$ denote $C_S(M)^\circ$ the open configuration space given by maps $S \hookrightarrow M$. For convenience denote $C_{[n]}(M)^\circ =: C_n(M)^\circ$. 
\end{defn}
\begin{defn}[Reduced configuration space]

If $V$ is a vector space, the group $V  \rtimes \R_{>0}$ acts on $C_S(V)$ by scaling and translations.  $\tilde{C}_S(V)^\circ$ denotes the quotient of $C_S(V)$ under this group action, and $\tilde{C}_S(V)$ the corresponding compactification. 
\end{defn}
\begin{prop}\label{prop:Compact_Conf_Space}
There is a compact smooth $3n$-dimensional manifold with corners $C_{[n]}(M):=C_n(M)$ that satisfies: 
\begin{enumerate}
\item There is a stratification 
$$C_n(M) = \bigcup C_n(M)^{(k)}$$
with $C_n(M)^{(k)}$ compact,  $C_n(M)^{(k-1)} \subset C_n(M)^{(k)}$, such that $C_n(M)^{(k)}\setminus C_n(M)^{(k-1)}$ is a smooth manifold of dimension $3n-k$ and 
$$C_n(M)^{(0)}\setminus C_n(M)^{(1)} = C_n^\circ(M).$$
\item The  maps $\pi_{ij}\colon C^\circ_n(M) \to C_2(M)^\circ$
given by $(x_1,\ldots,x_n)$ admit smooth extensions $\pi_{ij}\colon C_n(M) \to C_2(M)$ to the  respective compactifications. 
\item The boundary of $C_n(M)$ is $\de C_n(M) = C_n(M)^{(1)}$ is given by 
$$\de C_n(M) = \bigsqcup_{S \subset [n], |S|\geq 2} \de_SC_n(M),$$ 
where $\de_SC_n(M)$ is a fiber bundle over $C_{([n]\setminus S)\sqcup\{*\}}(M)$. The fiber over a configuration $\iota$ is the reduced configuration space $\tilde{C}_{S}(T_\iota(*)M)$ of $|S|$ points in $T_{\iota(*)}M$.
\end{enumerate}
\end{prop}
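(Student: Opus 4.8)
The plan is to realize $C_n(M)$ via the Fulton--MacPherson--Axelrod--Singer construction, as an iterated spherical blow-up of $M^n$ along its diagonals, following \cite{Axelrod1991a,Axelrod1994} and \cite{Sinha2003}; I would only set up the construction and indicate how the three properties emerge, citing those sources for the delicate points. For $S \subseteq [n]$ with $|S| \geq 2$ write $\Delta_S = \{\,(x_1,\dots,x_n) \mid x_i = x_j \text{ for all } i,j \in S\,\} \subseteq M^n$. Order the $\Delta_S$ by increasing dimension (equivalently, by decreasing $|S|$) and let $C_n(M)$ be the result of successively performing the real (spherical) blow-up $\mathrm{Bl}$ along the strict transforms of the $\Delta_S$ in this order. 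Equivalently, as in \cite{Sinha2003}, one defines $C_n(M)$ as the closure of the embedding $C_n^\circ(M) \hookrightarrow M^n \times \prod_{|S|\geq 2} \mathrm{Bl}_{\Delta}(M^S)$ whose $S$-factor records the ``direction of approach'' of the points indexed by $S$; compactness of $C_n(M)$ is then immediate from compactness of $M$ and of the spherical blow-ups, and $C_n^\circ(M)$ is exactly the complement of all exceptional divisors, which is the last clause of (1).

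For the rest of (1), the key point is that, in the chosen order, the center of each successive blow-up is a closed submanifold meeting the previously created exceptional divisors cleanly; this makes the construction independent of the chosen linear refinement of the partial order and produces a compact manifold with corners of dimension $3n$. I would index the strata by nested families (forests) of subsets of $[n]$ of size $\geq 2$, encoding which clusters of points have collided and in which hierarchical order; the stratum attached to such a forest is a bundle over a product of open configuration spaces, with codimension governed by the depth of the forest, and the empty forest recovers $C_n^\circ(M) = C_n(M)^{(0)}\setminus C_n(M)^{(1)}$. Taking $C_n(M)^{(k)}$ to be the union of the closures of the strata of codimension $\geq k$ gives the required filtration by compact subsets with $C_n(M)^{(k-1)} \subseteq C_n(M)^{(k)}$ and with smooth difference.

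For (2): each $\pi_{ij}\colon C_n^\circ(M) \to C_2(M)^\circ$ is the restriction of the coordinate projection $M^n \to M^2$, which a priori defines only a rational map into $C_2(M) = \mathrm{Bl}_{\mathrm{diag}}(M^2)$; but since our blow-ups include $\Delta_{\{i,j\}}$, the universal property of the spherical blow-up shows that $\pi_{ij}$ extends to a smooth map $C_n(M) \to C_2(M)$ --- concretely, the direction-of-approach data of the $\{i,j\}$-cluster that we have recorded is precisely a point of the exceptional sphere bundle $\partial C_2(M) \cong STM$. For (3): the boundary $\partial C_n(M) = C_n(M)^{(1)}$ is the union of the exceptional divisors of the \emph{first} round of blow-ups, i.e.\ those lying over the diagonals $\Delta_S$, and these are parametrized precisely by $S \subseteq [n]$ with $|S| \geq 2$, giving $\partial C_n(M) = \bigsqcup_S \partial_S C_n(M)$. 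A point of $\partial_S C_n(M)$ consists of a configuration $\iota$ of the surviving points, i.e.\ a point of $C_{([n]\setminus S)\sqcup\{*\}}(M)$ (with $*$ marking the collision point), together with an infinitesimal configuration of the $|S|$ colliding points in $T_{\iota(*)}M$ modulo overall translation and scaling, which is exactly the reduced compactified configuration space $\tilde{C}_S(T_{\iota(*)}M)$; hence $\partial_S C_n(M)$ is the asserted fiber bundle over $C_{([n]\setminus S)\sqcup\{*\}}(M)$. For $n=2$ this reproduces $\partial C_2(M) \cong STM$ of the earlier proposition, a useful sanity check.

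The hard part will be the clean-intersection and order-independence argument underpinning the iterated blow-up, and hence the manifold-with-corners structure: one must check that the poset of diagonals is such that the successive strict transforms remain closed submanifolds meeting the accumulated exceptional divisors transversally in the appropriate corner-stratified sense. Once this is in place, the stratification, the extension of the maps $\pi_{ij}$, and the boundary description are essentially bookkeeping with the combinatorics of nested subsets. Since all of this is carried out in detail in \cite{Axelrod1991a,Axelrod1994} and \cite{Sinha2003}, I would cite those for the verifications rather than reproduce them here.
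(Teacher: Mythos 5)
Your proposal is correct and is essentially the route the paper itself takes: the notes deliberately state this proposition without proof, deferring the construction of the Fulton--MacPherson--Axelrod--Singer compactification and the verification of its properties to \cite{Axelrod1991a}, \cite{Axelrod1994} and \cite{Sinha2003}, exactly the sources you invoke for the delicate clean-intersection and manifold-with-corners points. Your sketch of the iterated spherical blow-up (equivalently Sinha's closure-of-embedding model), the stratification by nested collision sets, the extension of the $\pi_{ij}$ via the recorded direction-of-approach data, and the identification of the codimension-one faces $\de_S C_n(M)$ as bundles with fiber $\tilde{C}_S(T_{\iota(*)}M)$ is an accurate outline of what those references prove.
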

The second point in this proposition implies that there is a form $\omega^{ext}_\Gamma$ on $C_\Gamma$ (given by pulling back extended propagators) such that $\restr{\omega^{ext}_\Gamma}{C_\Gamma^\circ} = \omega_\Gamma$. Hence
$$\int_{C^\circ_{\Gamma}(M)}\omega_\Gamma = \int_{C^\circ_{\Gamma}(M)}\omega^{ext}_\Gamma$$ 
but the integral on the right hand side is of a smooth differential form over a compact manifold, so it is finite. 
\subsection{The superdeterminant and the Ray-Singer torsion}
It is a surprising fact that the (regularized) superdeterminant of the quadratic part of the theory can be explicitly computed and is given by a known invariant, the so-called Ray-Singer torsion \cite{Ray1971}. This was first shown by Schwarz in 1978 \cite{Schwarz1978} (this was maybe the first explicit connection between partition functions of topological theories and topological invariants). The aim is to compute the regularized superdeterminant 
$$\operatorname{sdet}\left(\restr{*d_{A_0}}{\calL}\right) = \operatorname{det}\left(\restr{*d_{A_0}}{\calL_{even}}\right)\operatorname{det}\left(\restr{*d_{A_0}}{\calL_{odd}}\right)^{-1}$$
where regularized means \emph{zeta-regularized}, i.e. 
$$\log\det A =\lim_{s\to 0} \sum_{\lambda \neq 0}(\log \lambda) \lambda^{-s}.$$
For elliptic operators, the sum on the right hand side always converges for $s >1$ and has an analytic extension to 0. The limit is understood in this sense. See e.g. \cite{Berline2003}. The result is the following: 
\begin{lem}
For the zeta-regularized superdeterminant $\operatorname{sdet}\left(\restr{*d_{A_0}}{\calL}\right)$ we have 
\begin{equation}
\operatorname{sdet}\left(\restr{*d_{A_0}}{\calL}\right)^{-1/2} = \tau(M,A_0)^{1/2},
\end{equation}
where $\tau(M,A_0)$ is the Ray-Singer torsion \cite{Ray1971}

\begin{equation}
\tau(M,A_0) = \prod_{i=0}^3 \det (\Delta_{A_0}^{(i)})^{-\frac{i(-i)}{2}}.
\end{equation}
\end{lem}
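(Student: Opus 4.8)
The plan is to compute the zeta-regularized superdeterminant by breaking it into the determinants on even and odd Lagrangian pieces and then matching them against the definition of Ray-Singer torsion. First I would set up notation: write $\calL = \calL_{even} \oplus \calL_{odd}$ for the decomposition of the gauge-fixing Lagrangian into even and odd form-degree parts, and recall that by definition
\begin{equation}
\operatorname{sdet}\left(\restr{*d_{A_0}}{\calL}\right) = \operatorname{det}\left(\restr{*d_{A_0}}{\calL_{even}}\right)\operatorname{det}\left(\restr{*d_{A_0}}{\calL_{odd}}\right)^{-1}.
\end{equation}
The operator $*d_{A_0}$ restricted to $\calL_g = \operatorname{im} d_{A_0}^*$ is an isomorphism, and since $(*d_{A_0})^2$ is (up to sign) the restriction of the Hodge-de Rham Laplacian $\Delta_{A_0}$ to the coexact forms, the key reduction is $\det(\restr{*d_{A_0}}{\calL}) = \pm \det(\restr{\Delta_{A_0}}{\calL})^{1/2}$, with the determinant of the Laplacian taken factor by factor in form degree.

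Next I would carefully count how many times $\det(\Delta_{A_0}^{(i)})$ on coexact $i$-forms appears in the numerator versus the denominator, tracking the parity bookkeeping. The Hodge decomposition identifies coexact $i$-forms with exact $(i+1)$-forms via $d_{A_0}$, and $*$ relates degree $i$ to degree $3-i$, so the nonzero spectra of the various $\Delta_{A_0}^{(i)}$ on their coexact parts are related; in particular $\det(\Delta_{A_0})$ on coexact $i$-forms equals $\det(\Delta_{A_0})$ on exact $(i+1)$-forms. Writing $\det' \Delta_{A_0}^{(i)}$ for the zeta-regularized determinant of the full Laplacian on $i$-forms modulo harmonic forms, and using the identity $\det' \Delta_{A_0}^{(i)} = \det(\Delta_{A_0}|_{\text{ex}^{(i)}}) \cdot \det(\Delta_{A_0}|_{\text{coex}^{(i)}})$, one recombines everything into an alternating product. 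After assembling the signs from $*d_{A_0}$ on the even part (this is exactly the signature factor that was split off into $Z_{free}$ and not into the superdeterminant) the bookkeeping should collapse to
\begin{equation}
\operatorname{sdet}\left(\restr{*d_{A_0}}{\calL}\right)^{-1/2} = \prod_{i=0}^{3}\left(\det{}'\Delta_{A_0}^{(i)}\right)^{-\frac{(-1)^i i}{4}} \cdot (\text{something}),
\end{equation}
and comparing with $\tau(M,A_0) = \prod_{i=0}^3 \det(\Delta_{A_0}^{(i)})^{-i(-1)^i/2}$ yields $\tau(M,A_0)^{1/2}$ after one final rearrangement using Poincar\'e duality $\det'\Delta^{(i)} = \det'\Delta^{(3-i)}$ to symmetrize the exponents.

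I expect the main obstacle to be the combinatorial sign and exponent bookkeeping: making sure the zeta-regularization is used consistently (so that $\det'$ of a product of commuting operators factorizes, which requires the operators to have disjoint nonzero spectra or a suitable argument via the zeta function), and that the half-integer powers coming from $(*d_{A_0})^2 = \pm\Delta_{A_0}$ combine correctly with the alternating signs $(-1)^i$ and the weights $i$ or $3-i$ appearing in the Ray-Singer definition. A secondary subtlety is being precise about what lands in the ``signature'' factor $e^{\frac{\ii\pi}{4}\operatorname{sign}(\ldots)}$ versus the modulus $|\det|$, since the phase is deliberately excluded from the quantity being computed here; I would handle this by working throughout with absolute values of eigenvalues and invoking the positivity of $\Delta_{A_0}$ on the orthogonal complement of harmonic forms, deferring the phase entirely to $Z_{free}$ as in Equation \eqref{eq:effCS1}. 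For the analytic input -- convergence of the zeta function for $\Re s > 3/2$ and analytic continuation to $s=0$ -- I would simply cite the standard theory of elliptic operators as the excerpt already does (e.g. \cite{Berline2003}), rather than reprove it.
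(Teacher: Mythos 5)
Your proposal follows essentially the same route as the notes' own proof: reduce $\operatorname{sdet}\left(\restr{*d_{A_0}}{\calL}\right)$ to determinants of $\Delta_{A_0}$ on coexact forms via $(*d_{A_0})^2=\Delta_{A_0}$, use the Hodge-decomposition isomorphisms ($d_{A_0}\colon\Omega^i_{coex}\to\Omega^{i+1}_{ex}$ commuting with the Laplacian, and the Hodge star exchanging degrees $i$ and $3-i$) together with $\det\Delta^{(i)}=\det(\restr{\Delta}{\Omega^i_{ex}})\det(\restr{\Delta}{\Omega^i_{coex}})$ to reassemble the alternating product defining $\tau(M,A_0)$; deferring the phase to the signature factor and the analytic facts about zeta-regularization to the standard references is also exactly what the notes do. The one place your bookkeeping must be done carefully is the parity labelling: ``even'' and ``odd'' here refer to the shifted total parity of $\calF=\Pi\Omega^\bullet(M,\g)$, so $\calL_{even}=\Omega^1_{coex}$ (odd form degree) and $\calL_{odd}=\Omega^0_{coex}\oplus\Omega^2_{coex}$; identifying $\calL_{even}$ with the even form-degree part, as your setup states, inverts the superdeterminant and would land you at $\tau(M,A_0)^{-1/2}$ rather than $\tau(M,A_0)^{1/2}$, so fix this convention before carrying out the exponent count.
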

\begin{proof}
Note that on $\calL$ we have $(*d_{A_0})^2 = \Delta_{A_0}$, thus $\det \restr{*d_{A_0}}{\Omega^{odd/even}_{coex}} =\left( \det \restr{\Delta_{A_0}}{\Omega_{coex}^{odd/even}}\right)^{1/2}$ and we can rewrite the superdeterminant as 
$$\operatorname{sdet}\left(\restr{*d_{A_0}}{\calL}\right)=\operatorname{det}\left(\restr{\Delta}{\Omega^1_{coex}}\right)^{1/2}\operatorname{det}\left(\restr{\Delta}{\Omega^{even}_{coex}}\right)^{-1/2}.$$
Consider the Hodge decomposition of the complement of harmonic forms:  
\begin{equation}
\begin{tikzcd}
\Omega^3_{ex} \arrow[rd, "d^*", bend left]                 &                                             \\
\Omega^2_{ex} \arrow[rd, "d^*" description, bend left,near start]     & \Omega^2_{coex} \arrow[lu, "d",bend left] \\
\Omega^1_{ex} \arrow[ru, "*" description] \arrow[rd, "d^*", bend left,near end] & \Omega^1_{coex} \arrow[lu, "d" description,near start, bend left]  \\
                                                           & \Omega^0_{coex} \arrow[lu, "d",bend left]            
\end{tikzcd}
\label{diag:Hodge1}
\end{equation}
(again subscripts and arguments of $d$ and $\Omega$ are suppressed to maintain a minimum of readability).
All arrows in this diagram are isomorphisms, and commute with the Hodge-de Rham Laplacian. The right hand column is precisely the gauge-fixing Lagrangian $\calL_g = \Omega_{coex}^\bullet$. The fact that all isomorphisms commute with the Laplacian means that there essentially only two ``independent'' pieces in the Laplacian: The Laplacian restricted to 0-forms and the Laplacian restricted to coexact one-forms. 
More precisely, we have 
\begin{equation}
\det \restr{\Delta}{\Omega^0} = \det \restr{\Delta}{\Omega^1_{ex}} = \det \restr{\Delta}{\Omega^2_{coex}} = \det \restr{\Delta}{\Omega^3} 
\end{equation}and 
\begin{equation}
\det \restr{\Delta}{\Omega^1_{coex}} = \det \restr{\Delta}{\Omega^2_{ex}}.
\end{equation}
Writing $\Delta^{(i)} =  \restr{\Delta}{\Omega^i}$, we can now combine this with the equation 
\begin{equation}
\det\Delta^{(i)} = \det \restr{\Delta}{\Omega^i_{coex}} \det \restr{\Delta}{\Omega^i_{ex}}
\end{equation} to obtain 
\begin{align*}
\operatorname{sdet}\left(\restr{*d_{A_0}}{\calL}\right)&=\operatorname{det}\left(\restr{\Delta}{\Omega^1_{coex}}\right)^{1/2}\operatorname{det}\left(\restr{\Delta}{\Omega^{even}_{coex}}\right)^{-1/2} \\
&= (\det \Delta^{(1)})^{1/2} \operatorname{det}\left(\restr{\Delta}{\Omega^1_{coex}}\right)^{-1/2}\left(\det\Delta^{(0)}\right)^{-1/2}\left(\det(\restr{\Delta}{\Omega^2_{coex}}\right)^{-1/2} \\
&=\frac{(\det\Delta^{(1)})^{1/2}}{(\det\Delta^{(3)})^{3/2}} 
\end{align*}
Now we use the fact $\det \Delta^{(1)} = \det\Delta^{(2)}$ (which again follows from the fact that the Hodge star intertwines the two Laplacians) to rewrite this as 
$$\frac{(\det\Delta^{(1)})^{1/2}}{(\det\Delta^{(3)})^{3/2}} = \frac{1}{(\det\Delta)^{1/2}}\frac{\det\Delta^{(2)}}{(\det\Delta^{(3)})^{3/2}} = \prod_{i=0}^3 \det (\Delta^{(i)})^{\frac{i(-i)}{2}}.$$
The right hand side is exactly the inverse of the Ray-Singer torsion \cite{Ray1971}
$$\tau(M,A_0) = \prod_{i=0}^3 \det (\Delta_{A_0}^{(i)})^{-\frac{i(-i)}{2}}.$$
Hence we conclude that 
$$\operatorname{sdet}\left(\restr{*d_{A_0}}{\calL}\right)^{-1/2} = \tau(M,A_0)^{1/2}.$$
\end{proof}
By the Cheeger-M\"uller Theorem (\cite{Cheeger1977},\cite{Mueller1978},\cite{Cheeger1979}) the Ray-Singer torsion is equal to the Reidemeister \cite{Reidemeister1935} torsion (see e.g. \cite{Nicolaescu2003} for background on the Reidemeister torsion) of the corresponding representation of $\pi_1(M)$, which is a topological invariant of $M$, i.e. independent of the metric $g$. 
\subsection{The phase factor}
Similar to the determinant, we can also zeta-regularize the signature of $\restr{*d_{A_0}}{\Omega^1_{coex}}$: 
\begin{equation}
\operatorname{sign}*d_{A_0}:= \lim_{s\to0}\sum_{\lambda\neq 0}\mathrm{sign}\lambda |\lambda^{-s}|
\end{equation}
(again this limit is understood in the sense of analytic continuation). 
Also this invariant can be expressed in terms of known quantities. To this end we need the following Lemma\footnote{This lemma and its proof were explained to the author by P.Mnev.}.
\begin{lem}
The spectrum of $d_{A_0}*\colon \Omega^{odd}_{ex} \to \Omega^{odd}_{ex}$ is symmetric around 0.  
\end{lem}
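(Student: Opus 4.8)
The plan is to produce an explicit involution of $\Omega^{odd}_{ex}$ that conjugates the operator $D := d_{A_0}*$ into $-D$; symmetry of $\mathrm{Spec}(D)$ about $0$ is then immediate.

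The first step is Hodge-theoretic bookkeeping. On a closed $3$-manifold there are no $4$-forms, so $\Omega^3_{coex} = d^*_{A_0}\Omega^4(M,\g) = 0$ and hence $\Omega^{odd}_{ex} = \Omega^1_{ex}\oplus\Omega^3_{ex}$, with $\Omega^3_{ex}$ the orthogonal complement of $\mathrm{Harm}^3_{A_0}(M)$ in $\Omega^3(M,\g)$. From $d^*_{A_0} = \pm *\,d_{A_0}\,*$ (and therefore $*\,d_{A_0} = \pm d^*_{A_0}\,*$ and $*\,d^*_{A_0} = \pm d_{A_0}\,*$) one sees that the Hodge star sends exact forms to coexact forms and conversely, and preserves harmonic forms; in particular it restricts to isomorphisms $*\colon\Omega^1_{ex}\xrightarrow{\ \sim\ }\Omega^2_{coex}$ and $*\colon\Omega^3_{ex}\xrightarrow{\ \sim\ }\Omega^0_{coex}$, where $\Omega^0_{coex}$ denotes the orthogonal complement of $\mathrm{Harm}^0_{A_0}(M)$. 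Composing with $d_{A_0}$, which restricts to isomorphisms $\Omega^2_{coex}\xrightarrow{\ \sim\ }\Omega^3_{ex}$ and $\Omega^0_{coex}\xrightarrow{\ \sim\ }\Omega^1_{ex}$, one finds that $D = d_{A_0}*$ maps $\Omega^1_{ex}$ isomorphically onto $\Omega^3_{ex}$ and $\Omega^3_{ex}$ isomorphically onto $\Omega^1_{ex}$. Thus $D$ does preserve $\Omega^{odd}_{ex}$, it is invertible there, and it is \emph{off-diagonal} for the $\mathbb{Z}/2$-decomposition $\Omega^{odd}_{ex} = \Omega^1_{ex}\oplus\Omega^3_{ex}$; in particular $D$ is built from the same data entering the Hodge decomposition \eqref{eq:HodgeDecomposition}.

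For the second step, let $\epsilon$ be the associated grading involution: $\epsilon = +\mathrm{id}$ on $\Omega^1_{ex}$ and $\epsilon = -\mathrm{id}$ on $\Omega^3_{ex}$. Off-diagonality of $D$ is precisely the identity $\epsilon D\epsilon^{-1} = -D$. Hence if $D\omega = \lambda\omega$ with $0 \neq \omega\in\Omega^{odd}_{ex}$, then $D(\epsilon\omega) = -\lambda(\epsilon\omega)$ with $\epsilon\omega\neq 0$, so $\epsilon$ restricts to an isomorphism $\ker(D-\lambda)\xrightarrow{\ \sim\ }\ker(D+\lambda)$ for every $\lambda$. This is exactly the asserted symmetry of the spectrum about $0$, eigenvalue multiplicities included; note in passing that $0$ is not an eigenvalue, since $D\omega = 0$ would force $*\omega$ to be simultaneously closed and coexact, hence zero.

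The only genuine content is the degree-and-sign bookkeeping in the first step --- checking, with the correct dimension-dependent signs in $d^*_{A_0} = \pm *\,d_{A_0}\,*$, that $*$ really sends $\Omega^1_{ex}$ onto $\Omega^2_{coex}$ and $\Omega^3_{ex}$ onto $\Omega^0_{coex}$, and that $d_{A_0}$ then lands in the exact (not merely the closed) parts --- but this is routine, and I do not expect a real obstacle. As elsewhere in this section one also uses that $D$ is elliptic and formally self-adjoint, so that its spectrum is a discrete subset of $\mathbb{R}$ and ``symmetric around $0$'' is meaningful; the involution argument above gives the symmetry $\lambda\leftrightarrow-\lambda$ in any case.
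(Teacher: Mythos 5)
Your argument is correct, but it takes a genuinely different route from the one in the notes. You exploit the fact that $D=d_{A_0}*$ is off-diagonal with respect to the splitting $\Omega^{odd}_{ex}=\Omega^1_{ex}\oplus\Omega^3_{ex}$ (which indeed follows from the Hodge-theoretic isomorphisms recorded in the decomposition diagram), so that the grading involution $\epsilon$ satisfies $\epsilon D\epsilon^{-1}=-D$ and carries each $\lambda$-eigenspace isomorphically onto the $(-\lambda)$-eigenspace; this gives the symmetry of the spectrum, multiplicities included, with essentially no computation. The proof in the notes proceeds differently: starting from an eigenfunction $f$ of the scalar Laplacian with $\Delta_{A_0}f=\lambda^2 f$, it exhibits the explicit eigenforms $\omega_{\pm\lambda}=d_{A_0}*d_{A_0}f\pm\lambda\,d_{A_0}f$ of eigenvalue $\pm\lambda$, and then argues (using $(*d_{A_0})^2=\Delta_{A_0}$ on $\Omega^{odd}_{ex}$ and the degreewise description of eigenforms of $\Delta_{A_0}$) that every eigenform is of this type, so that the nonzero spectrum is exactly $\{\pm\lambda:\lambda^2\in\mathrm{spec}(\Delta^{(0)})\}$. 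Your involution argument is shorter and more structural and buys the symmetry statement with multiplicities directly; the constructive argument buys more, namely an explicit identification of the spectrum with that of the scalar Laplacian, which is the kind of information that feeds into the zeta-regularized determinant and eta-invariant discussion surrounding this lemma. Note also that your conjugation is consistent with the explicit eigenforms: $\epsilon\,\omega_{\lambda}=-\omega_{-\lambda}$. Your side remark that $0$ is not an eigenvalue is fine and matches the standing convention of working on the orthogonal complement of harmonic forms.
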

\begin{proof}
Let $f \in \Omega^0(M,\g)$ be an eigenfunction of the Laplacian, $\Delta_{A_0}f = \lambda^2 f$. Then we claim that 
\begin{equation}
\omega_{\pm \lambda} = d_{A_0}*d_{A_0}f \pm \lambda d_{A_0}f\label{eq:eigenform}
\end{equation}
is an Eigenform of $\restr{*d_{A_0}}{\Omega^{odd}_{ex}}$ of eigenvalue $\pm\lambda$. Indeed, we have
\begin{align*} d_{A_0}*(\pm\lambda d_{A_0}f + d_{A_0}*d_{A_0} f) &= \pm\lambda d_{A_0}*d_{A_0}f + d_{A_0}\Delta_{A_0}f \\
&= \pm\lambda d_{A_0}*d_{A_0}f + d_{A_0}\lambda^2f = \pm\lambda(\pm\lambda d_{A_0}f + d_{A_0}*d_{A_0} f).
\end{align*}

We claim that all eigenforms of $*d_{A_0}$ are of the form $\omega_{\pm \lambda}$ as defined in Equation \eqref{eq:eigenform} above. Indeed, if $\omega \in \Omega^1_{ex}\oplus \Omega^3_{ex}$ is an eigenform of $*d_{A_0}$ of eigenvalue $\lambda$, then - since $*d_{A_0}$ squares to $\Delta$ on this space -  it is an eigenform of $\Delta_{A_0}$ of eigenvalue $\lambda^2$. But $\Delta_{A_0}$ maps $\Omega^1$ and $\Omega^3$ to itself, hence the 1- and 3-form components of  $\omega$ must be eigenforms of $\Delta_{A_0}$ of eigenvalue $\lambda^2$. But these eigenforms are precisely given by $d_{A_0}f$ (resp. $d_{A_0}*d_{A_0}f$) for $f$ and eigenfunction of $\Delta^{(0)}$ of eigenvalue $\lambda^2$. Hence $\omega$ is equal to $\omega_{\pm\lambda}$ (for one of two signs). We conclude that the spectrum is given precisely by $\{\pm\lambda|\lambda \in  \mathrm{spec}(\Delta^{(0)})\}.$
\end{proof}
As a direct corollory we have: 
\begin{cor}
\begin{equation}
\operatorname{sign}\restr{*d_{A_0}}{\Omega^1{coex}} = \operatorname{sign} \restr{*d_{A_0} + d_{A_0}*}{\Omega^{odd}} =: \psi(A_0,g)
\end{equation}
\end{cor}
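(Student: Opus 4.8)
The plan is to show that the self-adjoint operator $L := {*}d_{A_0} + d_{A_0}{*} \colon \Omega^{odd}(M,\g) \to \Omega^{odd}(M,\g)$ is block-diagonal with respect to the Hodge decomposition, that one of its blocks is \emph{exactly} $\restr{*d_{A_0}}{\Omega^1_{coex}}$, and that the complementary block is precisely the operator treated in the preceding Lemma, whose spectrum is symmetric about $0$. Additivity of the zeta-regularized signature over an orthogonal direct sum of operators, together with the fact that a spectrum symmetric about $0$ (with matched multiplicities) contributes nothing to the signature, then yields the claimed equality.

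First I would write down the Hodge decomposition of $\Omega^{odd} = \Omega^1(M,\g)\oplus\Omega^3(M,\g)$ modulo harmonic forms. Since a closed $3$-manifold has no coexact $3$-forms, this reads $\Omega^{odd} = \mathrm{Harm}^{odd}_{A_0} \oplus \Omega^1_{coex} \oplus \bigl(\Omega^1_{ex}\oplus\Omega^3_{ex}\bigr)$, and one checks readily that $\mathrm{Harm}^{odd}_{A_0}\subset\ker L$, so the harmonic part is irrelevant to the signature. Next I would verify that $L$ preserves this splitting. On $\Omega^1_{coex}$: writing $\alpha = d^*_{A_0}\beta$ and using that the Hodge star intertwines $d_{A_0}$ and $d^*_{A_0}$ up to sign (in particular $\restr{d^*_{A_0}}{\Omega^2} = \pm\,{*}d_{A_0}{*}$ and $**=\mathrm{id}$ in dimension $3$), one finds $*\alpha = \pm d_{A_0}(*\beta)$ is exact, hence $d_{A_0}*\alpha = 0$ by flatness of $A_0$; therefore $\restr{L}{\Omega^1_{coex}} = \restr{*d_{A_0}}{\Omega^1_{coex}}$. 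On $\Omega^{odd}_{ex} = \Omega^1_{ex}\oplus\Omega^3_{ex}$: flatness gives $d_{A_0}^2 = 0$ so $*d_{A_0}$ kills $\Omega^1_{ex}$, and $*d_{A_0}$ kills $\Omega^3_{ex}$ for degree reasons ($\Omega^4(M)=0$); hence $\restr{L}{\Omega^{odd}_{ex}} = \restr{d_{A_0}*}{\Omega^{odd}_{ex}}$, which is exactly the operator of the Lemma.

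With these two computations in place I would conclude: invoking additivity of the $\eta$-type invariant $\operatorname{sign}$ over a direct sum of operators on mutually orthogonal subspaces, and using that $\operatorname{sign}\restr{d_{A_0}*}{\Omega^{odd}_{ex}} = 0$ because, by the Lemma's proof, its spectrum is $\{\pm\lambda : \lambda^2\in\operatorname{spec}(\Delta^{(0)}_{A_0})\}$ with matched multiplicities, we obtain $\operatorname{sign}\restr{L}{\Omega^{odd}} = \operatorname{sign}\restr{*d_{A_0}}{\Omega^1_{coex}} + \operatorname{sign}\restr{d_{A_0}*}{\Omega^{odd}_{ex}} = \operatorname{sign}\restr{*d_{A_0}}{\Omega^1_{coex}}$, which is the assertion.

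The main obstacle is analytic bookkeeping rather than conceptual content: one must be careful that the zeta-regularized signature genuinely splits over the Hodge blocks — this uses ellipticity of $\Delta_{A_0}$ (hence of $L$), so that the defining $\eta$-series converge for $\operatorname{Re} s$ large and admit analytic continuation, and that the spectral decomposition respects the orthogonal direct sum — and that the sign conventions in $\restr{d^*_{A_0}}{\Omega^2} = \pm\,{*}d_{A_0}{*}$, together with the harmonic kernel, are tracked so that nothing nonzero is overlooked. Once the block structure above is isolated, everything else is routine.
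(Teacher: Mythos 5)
Your proposal is correct and follows the same route the paper intends: the corollary is stated there as a direct consequence of the preceding Lemma, with exactly the implicit argument you spell out — $*d_{A_0}+d_{A_0}*$ preserves the Hodge blocks, kills harmonic forms, equals $\restr{*d_{A_0}}{\Omega^1_{coex}}$ on coexact $1$-forms and $\restr{d_{A_0}*}{\Omega^{odd}_{ex}}$ on exact odd forms, and the latter contributes nothing to the zeta-regularized signature because its spectrum is symmetric with matched multiplicities. Your write-up simply makes explicit the block decomposition and the additivity of the regularized signature that the paper leaves to the reader.
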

Here $\psi(A_0,g)$ is the \emph{Atiyah-Patodi-Singer eta invariant} \cite{Atiyah1973},\cite{Atiyah1975} of the Dirac operator $L_ := *d_{A_0} + d_{A_0}* \colon \Omega^{odd} \to \Omega^{odd}$. In contrast to the torsion, it is \emph{not} invariant under change of the metric $g$. This is the first sign of the anomalous behaviour of the perturbative quantum Chern-Simons theory. We will return to this issue in section \ref{sec:anomaly} below. 
\section{The Quantum Master Equation, again}
In Section \ref{sec:QME1} we have argued that the Quantum Master Equation formally holds if we assume that the Lie algebra $\g$ is unimodular. We thus expect that the conclusion of Theorem \ref{thm:BVpushforward} holds, i.e. that the effective action also satisfies the Quantum Master Equation. However, since the theorem does not apply to the infinite-dimensional setting, the Quantum Master Equation has to be proven ``by hand''. 
\subsection{The Laplacian on the space of residual fields}
The space of residual fields is the space of harmonic forms, 
$$\calY = \Pi\mathrm{Harm}_{A_0}^\bullet(M,\g) \cong H_{A_0}^\bullet(M,\g).$$
It carries the symplectic form 
$$\omega_\calY(\sfa,\mathsf{b}) = \int_M \langle \sfa,\mathsf{b}\rangle.$$ 
To express the Laplacian it is helpful to express the symplectic form in Darboux coordinates. To this end, choose orthonormal bases (w.r.t. the Hodge pairing $(\cdot,\cdot)$) $\chi^i_{(k)}$ of $H_{A_0}^k(M,\g)$ for $k=0,1$ and we let $\chi^i_{(k)} = *\chi^i_{(3-k)}$ for $k=2,3$. We also choose an orthonormal basis $T_a$ of $\g$ and expand the the forms accordingly $\chi^i_{(k)}=\chi_{(k)}^{i,a}T_a.$ Denote by $z_{i,a}^{(k)}$ the coordinates with respect to this basis. Then the symplectic form is expressed as 
$$\omega_{BV} = \sum_{i,a} dz_{i,a}^{(3)}\wedge dz_{i,a}^{(0)} + dz_{i,a}^{(2)}\wedge dz_{i,a}^{(1)}$$

and the BV Laplacian can be expressed as 
\begin{equation}
\Delta_{Y}\sum_{i,a}\frac{\de}{\de z_{i,a}^{(3)}} \frac{\de}{\de z_{i,a}^{(0)}} + \sum_{i,a}\frac{\de}{\de z_{i,a}^{(2)}} \frac{\de}{\de z_{i,a}^{(1)}}\label{eq:BV_Laplacian_ZM}
\end{equation}
The following Lemma is crucial to the proof of the QME. 
\begin{lem}\label{lem:diff_prop}
Denoting the propagator by $\eta \in \Omega^2(C_2(M),\g \otimes \g)$ we have 
\begin{equation}
d_{A_0}\eta = \Delta_Y \pi_1^*\sfa \pi_2^*\sfa.
\end{equation}
\end{lem}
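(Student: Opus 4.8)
The plan is to evaluate the right-hand side $\Delta_Y(\pi_1^*\sfa\,\pi_2^*\sfa)$ explicitly in the Darboux coordinates on the space of residual fields $\calY = \Pi\,\mathrm{Harm}^\bullet_{A_0}(M,\g)$ introduced above, show that it equals $\sum_i \pm\,\pi_1^*\chi_i\,\pi_2^*\chi^i$, and then recognise the latter as $d_{A_0}\eta$ by Proposition~\ref{prop:propprops}(1). Conceptually, the point is that $\Delta_Y$ applied to a quadratic Darboux monomial always reproduces the canonical element associated to $\omega_\calY$, while Proposition~\ref{prop:propprops}(1) says that $d_{A_0}\eta$ is exactly this same element (expressed as an integral kernel of the harmonic projector $P_{A_0}$).

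First I would expand the residual superfield in the chosen orthonormal bases, $\sfa = \sum_{k=0}^3\sum_{i,a} z^{(k)}_{i,a}\,\chi^{i,a}_{(k)}T_a$, so that
\begin{equation}
\pi_1^*\sfa\wedge\pi_2^*\sfa \;=\; \sum_{k,\ell}\,\sum_{i,a}\,\sum_{j,b} z^{(k)}_{i,a}\,z^{(\ell)}_{j,b}\;\pi_1^*\bigl(\chi^{i,a}_{(k)}T_a\bigr)\wedge\pi_2^*\bigl(\chi^{j,b}_{(\ell)}T_b\bigr)
\end{equation}
is a $\g\otimes\g$-valued form on $C_2(M)$, quadratic in the coordinates $z^{(k)}_{i,a}$. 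The operator $\Delta_Y$ in \eqref{eq:BV_Laplacian_ZM} is second order and pairs a degree-$0$ coordinate with a degree-$3$ coordinate and a degree-$1$ with a degree-$2$; applied to a quadratic expression it therefore only receives contributions from terms with $\{k,\ell\}=\{0,3\}$ or $\{k,\ell\}=\{1,2\}$. Carrying out the two derivatives in each summand — and tracking the Koszul signs produced by commuting the odd coordinates $z^{(k)}_{i,a}$ past one another — forces $j=i$, $b=a$ in the surviving terms and collapses them to
\begin{equation}
\Delta_Y\bigl(\pi_1^*\sfa\wedge\pi_2^*\sfa\bigr) \;=\; \sum_{k,i,a}\,\pm\,\pi_1^*\bigl(\chi^{i,a}_{(k)}T_a\bigr)\wedge\pi_2^*\bigl(\chi^{i,a}_{(3-k)}T_a\bigr).
\end{equation}
Here I use that in three dimensions $**=\mathrm{id}$ on $\Omega^\bullet(M)$, so the convention $\chi^i_{(k)}=*\chi^i_{(3-k)}$ gives $*\chi^i_{(k)}=\chi^i_{(3-k)}$ for every $k$, and that the symplectic pairing $\omega_\calY(\chi^i_{(k)},\chi^j_{(3-k)})=(\chi^i_{(k)},*\chi^j_{(3-k)})$ is the identity matrix in these orthonormal bases — which is precisely why the canonical BV Laplacian of the monomial $\pi_1^*\sfa\,\pi_2^*\sfa$ reproduces the canonical element $\sum_i\chi_i\otimes\chi^i$ of $\mathrm{Harm}^\bullet_{A_0}(M,\g)\otimes\mathrm{Harm}^\bullet_{A_0}(M,\g)$.

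The right-hand side of the last display is, by definition of the conjugate basis $\chi^i := *\chi_i$, exactly $\sum_i \pm\,\pi_1^*\chi_i\,\pi_2^*\chi^i$ with $\chi_i$ running over an orthonormal basis of all of $\mathrm{Harm}^\bullet_{A_0}(M,\g)$; by Proposition~\ref{prop:propprops}(1) this is $d_{A_0}\eta$, which proves the claim. The one genuinely fiddly point — and the main obstacle — is the sign bookkeeping: one must verify that the signs generated by $\Delta_Y$ (from reordering the odd coordinates and from the ordering conventions in \eqref{eq:BV_Laplacian_ZM}) coincide termwise with the $\pm$ signs in Proposition~\ref{prop:propprops}(1), which originate from $d_{A_0}$ and the Hodge star. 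A convenient way to settle this is to fix, degree by degree, a Darboux ordering of the coordinates compatible with the ordering of the $\chi_i$ used in Proposition~\ref{prop:propprops}(1) and to check agreement on a single conjugate pair of degrees, say $0$ and $3$, the case of degrees $1$ and $2$ being identical; the swap symmetry $T^*\eta=-\eta$ of Proposition~\ref{prop:propprops}(3) provides a useful cross-check on the outcome.
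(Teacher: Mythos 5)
Your proposal is correct and follows essentially the same route as the paper: expand $\sfa$ in the orthonormal harmonic basis with Darboux coordinates $z^{(k)}_{i,a}$, apply the explicit formula \eqref{eq:BV_Laplacian_ZM} for $\Delta_\calY$ to $\pi_1^*\sfa\,\pi_2^*\sfa$, and identify the result with $d_{A_0}\eta$ via Proposition~\ref{prop:propprops}(1). The extra care you devote to the Koszul signs and the $T^*\eta=-\eta$ cross-check is a reasonable elaboration of the step the paper declares ``immediate,'' not a different argument.
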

\begin{proof}
First expand the left hand side in the basis $\chi_{i,a}^{(k)}$ to obtain (cf. Lemma \ref{prop:propprops})
$$d_{A_0}\eta = \sum_{k=0}^3\sum_{i,a} \pi_1^*(\chi_{i,a}^{(k)} \pi_2^*(\chi_{i,a}^{(3-k)}).$$
Expanding $\sfa = \sum_{i,a,k}z^{i,a}_{(k)}\chi_{i,a}^{(k)}$, the Lemma follows immediately from Equation \eqref{eq:BV_Laplacian_ZM}. 
\end{proof}
Pictorially this Lemma is expressed as $d_{A_0}($
\begin{tikzpicture}
\node[vertex] at (0,0) {};
\node[vertex] at (1,0) {};
\draw[black] (0,0) -- (1,0);
\end{tikzpicture}$) = \Delta_Y ($
\begin{tikzpicture}[scale=0.75]
\node[vertex] (l) at (0,0) {};
\node[circle,inner sep=1pt,draw] at (0.5,0) {$\sfa$}
edge (l); 
\node[vertex] (r) at (2,0) {};
\node[circle,inner sep=1pt,draw] at (1.5,0) {$\sfa$}
edge (r); 
\end{tikzpicture}$)= $
\begin{tikzpicture}[scale=0.75]
\node[vertex] (l) at (0,0) {};
\node[circle,inner sep=1pt,draw] at (0.5,0) {$\chi$}
edge (l); 
\node[vertex] (r) at (2,0) {};
\node[circle,inner sep=1pt,draw] at (1.5,0) {$\chi$}
edge (r); 
\end{tikzpicture}, i.e. the differential ``cuts'' the edge corresponding to a propagator (in the last picture summation over indices of $\chi$ is implied).
\subsection{Proof of Quantum Master Equation}
The main theorem of this section is that the effective action satisfies the Quantum Master Equation: 
\begin{thm}We have 
\begin{equation}
\Delta Z_{pert} = \Delta_{\calY}e^{\frac{\ii}{\hbar}S_{eff}} = 0.
\end{equation}
\end{thm}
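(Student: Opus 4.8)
The statement asserts that $\Delta_{\calY}Z_{pert}=0$, where $\Delta_{\calY}$ is the BV Laplacian on residual fields of \eqref{eq:BV_Laplacian_ZM} and $Z_{pert}=\sum_{\Gamma}\tilde F(\Gamma)$ is the sum over \emph{all} admissible graphs (Definition \ref{def:FRCS}). The plan is to differentiate this sum termwise, use Lemma \ref{lem:diff_prop} to turn $\Delta_{\calY}$ into the operation of ``joining two leaves by an edge carrying $d_{A_0}\eta$'', recognize the edgewise sum as a de Rham differential via Proposition \ref{prop:propprops}(1), apply Stokes on the compactified configuration spaces, and then check that all boundary integrals vanish.

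First I would record that $\Delta_{\calY}$ is second order in the residual-field coordinates, while $\omega_{\Gamma}$ depends on them only through the leaf factors $\sfa(c(h))$; hence $\Delta_{\calY}\omega_{\Gamma}=\sum_{\{h,h'\}\subset L(\Gamma)}\pm\,\omega_{\Gamma_{hh'}}^{(d\eta)}$, where $\Gamma_{hh'}$ is obtained by fusing the leaves $h,h'$ into a new edge $e$ and the superscript means the propagator on $e$ is replaced by $d_{A_0}\eta$ --- this is exactly the content of Lemma \ref{lem:diff_prop}. Reindexing the double sum over pairs $(\Gamma,\{h,h'\})$ by the pair $(\Gamma',e)$ of a graph with a distinguished edge (so that $\Gamma$ is recovered by cutting $e$; then $V(\Gamma)=V(\Gamma')$, $C_{\Gamma}(M)=C_{\Gamma'}(M)$, and $\chi(\Gamma)=\chi(\Gamma')+1$), the automorphism and leaf-pair counts collapse by the orbit-stabilizer bookkeeping familiar from the Feynman-diagram expansion, yielding, for a uniform power $\lambda$ of $-\ii\hbar$,
\begin{equation}
\Delta_{\calY}Z_{pert}=\lambda\sum_{\Gamma'}\frac{1}{|\mathrm{Aut}\,\Gamma'|}\int_{C_{\Gamma'}(M)}\sum_{e\in E(\Gamma')}\pm\,\omega_{\Gamma'}^{(d\eta,e)}.
\end{equation}
Now $\sum_{e}\pm\,\omega_{\Gamma'}^{(d\eta,e)}=\pm\,d_{C_{\Gamma'}}\omega_{\Gamma'}$: differentiating the product defining $\omega_{\Gamma'}$, the vertex tensors $f_{abc}$ are constant, the leaf factors are harmonic hence $d_{A_0}$-closed, and each edge factor differentiates to $d_{A_0}\eta$ by Proposition \ref{prop:propprops}(1).

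The next step is Stokes: each $C_{\Gamma'}(M)$ is a compact manifold with corners (Proposition \ref{prop:Compact_Conf_Space}), so $\int_{C_{\Gamma'}}d\omega_{\Gamma'}=\int_{\partial C_{\Gamma'}}\iota^{*}\omega_{\Gamma'}$, and everything reduces to showing the boundary contributions vanish. Using the stratification $\partial C_{\Gamma'}(M)=\bigsqcup_{S\subset V(\Gamma'),\,|S|\ge 2}\partial_{S}C_{\Gamma'}(M)$ with collapsing fiber $\tilde C_{S}(\R^{3})$ (Proposition \ref{prop:Compact_Conf_Space}(3)): the hidden faces $|S|\ge 3$ vanish by a degree count --- the fiber restriction of $\omega_{\Gamma'}$ is a product of $2$-forms of total degree below $\dim\tilde C_{S}(\R^{3})=3|S|-4$ for trivalent graphs, with the borderline cases killed by property (5) of Proposition \ref{prop:propprops}; the face $S=V(\Gamma')$ (everything collapses) vanishes by property (4); tadpole/short-loop faces do not occur because tadpoles are excluded from $\mathrm{Gr}$, consistently with unimodularity (Definition \ref{def:unimodular}); and the principal faces $|S|=2$ vanish by the orientation-reversal antisymmetry $T^{*}\eta=-\eta$ of Proposition \ref{prop:propprops}(3) when the two colliding vertices are joined by a multiple edge, and cancel in Jacobi (IHX) triples, using full antisymmetry and the Jacobi identity for $f_{abc}$, when they are joined by a single edge.

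I expect the face-by-face vanishing to be the real obstacle; the steps before it are combinatorial reorganization together with the already-established Lemma \ref{lem:diff_prop} and Proposition \ref{prop:propprops}(1). Within the boundary analysis the delicate points are the borderline hidden faces and the principal faces, and it is precisely there that the carefully engineered properties of the Axelrod--Singer propagator --- the angular-form boundary behaviour, the two vanishing fiber integrals, and the antisymmetry $T^{*}\eta=-\eta$ --- together with unimodularity become indispensable.
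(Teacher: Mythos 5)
Your overall route is the one the paper takes: use Lemma \ref{lem:diff_prop} to convert $\Delta_\calY$ into the operation of joining two leaves into a marked edge carrying $d_{A_0}\eta$, reindex by graphs with a marked edge (this is the paper's Proposition \ref{prop:DeltaBVaction}), recognize the edgewise sum as $d_{A_0}\tilde\omega_\Gamma$ using closedness of the zero modes and Proposition \ref{prop:propprops}(1), apply Stokes on the compactifications $C_\Gamma(M)$, and then kill the boundary faces — principal faces ($|S|=2$) cancelling only after summing over graphs via the IHX/Jacobi relation, hidden faces ($|S|\geq 3$) vanishing individually. Your explicit treatment of the multiple-edge principal face via $T^*\eta=-\eta$ is in fact more careful than the paper, which silently reduces $\de_e C_\Gamma$ to $C_{\Gamma/e}$ using only the normalization of the single propagator on $e$.

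However, in the step you yourself single out as the crux, the hidden faces, your argument as written does not go through. For a collapsing subgraph $\Gamma_S$ all of whose vertices are internally trivalent, the internal propagators contribute a form of degree $3|S|$, which is \emph{above}, not below, the fiber dimension $\dim\tilde C_S(\R^3)=3|S|-4$; the vanishing comes from the fact that fiber integration then leaves a form of degree $4$ supported at the $3$-dimensional collapse point, so asserting ``total degree below the fiber dimension'' gets the mechanism backwards (degree strictly below the fiber dimension is a different, also vanishing, regime). Moreover the remaining borderline cases are not disposed of by properties (4) or (5) of Proposition \ref{prop:propprops}, which are identities for integration over $M$ and do not apply fiberwise on the face: the paper handles a vertex of internal valence $0$ or $1$ by a degree count in the three fiber directions of that vertex, and a vertex of internal valence $2$ by Kontsevich's involution $z\mapsto x+z-y$ on the translation-invariant fiber, which fixes $\eta(x,y)\eta(y,z)$ while reversing orientation. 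These fiberwise vanishing lemmata (Kontsevich's) are the actual content of the hidden-face step, so you should replace your degree/property-(4)--(5) appeal by that case analysis.
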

Since $Z_{back},Z_{free}$ do not depend on $\sfa$ this implies that $\Delta_\calY Z^{A_0}(\sfa)_{CS} = 0$. 
There are two main steps in the proof. The first one is to rewrite $\Delta Z$ as a sum of integrals over boundary faces of the compactified configuration space. The second is to show that all such integrals vanish. This is indeed a general cooking recipe for proofs of such equations in theories whose propagators admit extensions to such compactifactions. 
\subsubsection{Applying Stokes' theorem}
\begin{prop}\label{prop:DeltaBVaction}
Using the notation of Definition \ref{def:FRCS}, write 
$$Z_{pert} = \sum_\Gamma \int_{C_\Gamma(M)} \tilde{\omega}_{\Gamma}.$$
Then 
\begin{equation}
\Delta_\calY Z_{pert} = \sum_\Gamma \int_{C_\Gamma(M)}d_{A_0}\tilde{\omega}_\Gamma.
\end{equation}
\end{prop}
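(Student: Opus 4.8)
The plan is to compute $\Delta_\calY Z_{pert}$ graph by graph. Fix a graph $\Gamma\in\Gr$. Since $Z_{back}$ and $Z_{free}$ are independent of $\sfa$, it suffices to show $\Delta_\calY Z_{pert} = \sum_\Gamma \int_{C_\Gamma(M)}d_{A_0}\tilde\omega_\Gamma$, after which the remaining task (a separate proposition, not treated here) is to show each boundary integral vanishes. The key observation is Lemma \ref{lem:diff_prop}: the vertical ($\calY$-)BV Laplacian $\Delta_\calY$ acts on the graph form $\tilde\omega_\Gamma$ exactly by ``cutting an edge'' — replacing a propagator $\eta$ on an edge $e=\{h_1,h_2\}$ by $\pi_1^*\chi\,\pi_2^*\chi$ (summed over the orthonormal harmonic basis), which is precisely $d_{A_0}\eta$ — while the de Rham differential $d_{A_0}$ acting on $\tilde\omega_\Gamma$ via the Leibniz rule differentiates each propagator and each leaf-decoration $\sfa$ in turn.

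\textbf{Key steps.} First I would set up the bookkeeping: for a fixed labeling $l$, $\omega_\Gamma$ is (up to the combinatorial prefactor $\frac{(-\ii\hbar)^{\chi(\Gamma)}}{|\Aut(\Gamma)|}$) a product of vertex factors $f_{\bullet\bullet\bullet}$ (constants), edge factors $\eta^{l(h_1)l(h_2)}(c(h_1),c(h_2))$, and leaf factors $\sfa^{l(h)}(c(h))$. Second, I would expand $d_{A_0}\tilde\omega_\Gamma$ by the graded Leibniz rule over all these factors. The vertex constants contribute nothing. Acting on a leaf factor $\sfa^{l(h)}$ gives $d_{A_0}\sfa = 0$ since $\sfa$ is harmonic, hence closed — so those terms drop. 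Acting on an edge factor gives $d_{A_0}\eta$, which by Proposition \ref{prop:propprops}(1) (equivalently Lemma \ref{lem:diff_prop}) equals $\sum_i \pm\pi_1^*\chi_i\,\pi_2^*\chi^i$; this turns the graph $\Gamma$ with that edge cut into a graph $\Gamma'$ with two new leaves decorated by $\chi_i,\chi^i$, so the term is exactly the contribution of $\Delta_\calY$ applied to that edge. Third, I would compare with the direct computation of $\Delta_\calY Z_{pert}$: expanding $e^{\frac{\ii}{\hbar}S_{eff}}$ or working graph-by-graph, $\Delta_\calY$ — being a second-order operator $\sum_{i,a}\partial_{z^{(3)}}\partial_{z^{(0)}} + \partial_{z^{(2)}}\partial_{z^{(1)}}$ in the harmonic-mode coordinates — acts on a product of $\sfa$-decorations by contracting two leaves via the symplectic pairing on $\calY$; that pairing is exactly $\sum_i \pm\chi_i\otimes\chi^i$ inserted at the two endpoints. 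Matching the signs and the symmetry factors $|\Aut|$ between ``$\Gamma$ with an edge'' and ``$\Gamma'$ with two matched leaves'' is the content that needs care, but it is the same orbit-stabilizer bookkeeping already used in the Feynman-diagram theorem earlier in the text.

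\textbf{Main obstacle.} The genuinely delicate point is the compatibility of the three sign conventions — the $(-1)$-graded Leibniz rule for $d_{A_0}$ on $\g$-valued forms on the configuration space, the antisymmetry $T^*\eta = -\eta$ of the propagator, and the signs $\pm$ in $d_{A_0}\eta = \sum_i \pm\pi_1^*\chi_i\,\pi_2^*\chi^i$ coming from the Hodge pairing conventions — and checking that they conspire so that the ``de Rham differential of the integrand'' reproduces exactly $\Delta_\calY$ of the integrand, with no leftover terms and correct relative signs between graphs differing by edge/leaf-pair moves. A secondary subtlety is that $\Delta_\calY$ also has the potential to produce a ``tadpole-type'' contraction of two leaves on the \emph{same} vertex or of the two new leaves with each other; one must check these either cancel or are excluded (the former by the no-tadpole convention and unimodularity, cf.\ the footnote in the definition of $\Gr$, the latter because $d_{A_0}$ acts on one edge at a time). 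Once the graph-level identity $\Delta_\calY\tilde\omega_\Gamma$-contribution $= \sum_{\Gamma} d_{A_0}\tilde\omega_\Gamma$-contribution is established, summing over $\Gamma$ and $C_\Gamma(M)$ gives the claim.
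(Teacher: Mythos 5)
Your proposal follows essentially the same route as the paper's proof: expand $d_{A_0}\tilde\omega_\Gamma$ by the Leibniz rule, kill the leaf terms via $d_{A_0}\sfa=0$, identify each $d_{A_0}\eta$ insertion with $\Delta_\calY$ contracting a pair of leaves via Lemma \ref{lem:diff_prop}, discard same-vertex leaf pairs by unimodularity ($f_{jii}=0$), and match combinatorics through the bijection between graphs with a marked edge and graphs with a marked pair of leaves (the orbit-stabilizer bookkeeping the paper encodes via $\mathrm{Aut}_m$). This is correct and complete at the level of detail the paper itself provides.
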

\begin{proof}
$\tilde{\omega}_\Gamma$ is a product of propagators $\eta$ and zero modes $\sfa$. From the Leibniz rule and the fact that $d_{A_0}\sfa = 0$, we have that 
$$d_{A_0}\omega = \sum_{e \in E(\Gamma)} \tilde{\omega}_{\Gamma}^e$$
where $\tilde{\omega}_{\Gamma}^e$ denotes the form where to the edge $e$ we associate the form $d_{A_0}\eta = \sum_{i,a,k}\pi_1^*\chi^{i,a}_{(k)}\pi_2^*\chi^{i,a}_{(k)}$. Thus, we can express $$
\sum_\Gamma \int_{C_\Gamma(M)}d_{A_0}\tilde{\omega}_\Gamma = \sum_{\Gamma^e_m}\int_{C_\Gamma}\frac{(-\ii\hbar)^{\chi(\Gamma)}}{|\mathrm{Aut}_m(\Gamma)|}{\omega}^e_{\Gamma}$$
as a sum over graphs with one marked edge. Notice that here we have  replaced $\mathrm{Aut}(\Gamma)$ by automorphisms $\mathrm{Aut}_n(\Gamma)$ of marked graphs to account for the fact that marking different edges might lead to automorphic marked graphs. On the other hand, we have 
$$\Delta\sum_\Gamma\int_{C_\Gamma(M)} \tilde{\omega}_{\Gamma} = \sum_\Gamma\sum_{l_1 \neq l_2 \in L(\Gamma)}\int_{C_\Gamma}\omega^{l_1,l_2}_\Gamma$$
where $\omega^{l_1,l_2}_{\Gamma}$ denotes the form that results from the usual Feynman rules with $\Delta$ applied to the residual fields at the two leafs $l_1,l_2$. Notice that if $l_1,l_2$ are placed at the same vertex, $\omega^{l_1,l_2}_{\Gamma}$ contains a term of the form $f_{jii} = 0$. Hence, we can rewrite $\Delta Z$ by 
$$ \Delta Z_{pert} = \sum_{\Gamma^{l_1,l_2}_m}\int_{C_\Gamma}\frac{(-\ii\hbar)^{\chi(\Gamma)}}{|\mathrm{Aut}_m(\Gamma)|}{\omega}^{l_1,l_2}_{\Gamma}$$ 
where the sum goes over graphs with a pair of marked leafs. Now, by Lemma \ref{lem:diff_prop} we have $\omega^e_{\Gamma} = \omega^{l_1,l_2}_{\Gamma}$ if $l_1,l_2$ are placed at the same vertices as the start and end of $e$. The claim now follows from noticing that there is an obivous bijection between automorphism classes of graphs with a marked edge and automorphism classes with a pair of marked vertices by simply connecting the two marked leaves (with inverse given by cutting the marked edge). 
\end{proof}
We thus conclude that, by Stokes' theorem,
\begin{equation}
\Delta_\calY Z_{pert} = \sum_\Gamma \int_{C_\Gamma(M)}d_{A_0}\tilde{\omega}_\Gamma = \int_{\de C_{\Gamma}}\omega_{\Gamma}. 
\end{equation}
\subsubsection{Vanishing of the boundary contributions}
As explained in Proposition \ref{prop:Compact_Conf_Space}, the boundary of the compactified configuration space has different components, one for every subset of $V(\Gamma)$  of cardinality at least 2: 
$$\de C_{\Gamma} = \sqcup_{S \subset V(\Gamma), |S| \geq 2}\de_S \Gamma.$$ 
We will make extended use of the fact that for a fiber bundle $F \hookrightarrow E \twoheadrightarrow M$, we have a generalised Fubini theorem for integration along the fiber 
$$\int_E \omega = \int_M \left(\int_F\omega\right).$$ We will apply this to $\de_SC_{\Gamma}$, which is a fiber bundle over $C_{\Gamma/S}$ with fiber over $c$ given by $\tilde{C}_{S}(T_{c([S])                    }M)$. \\
One has to distinguish two cases, the case where $|S| = 2$ (the so-called ``principal faces'') and the case $|S| \geq 3$ (the so-called ``hidden faces''). 
\begin{lem} 
We have 
\begin{equation}
\sum_{\Gamma}\sum_{S \subset V(\Gamma),|S|=2} \int_{\de_SC_\Gamma} \omega_\Gamma  =0. 
\end{equation}
\end{lem}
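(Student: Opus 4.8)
The plan is to analyze a principal face $\de_S C_\Gamma$ with $S = \{v, w\}$ a pair of vertices joined (or not) by edges of $\Gamma$, and to show the integral over this face either vanishes for parity/degree reasons or cancels in pairs under a combinatorial involution on graphs. First I would recall from Proposition \ref{prop:Compact_Conf_Space} that $\de_S C_\Gamma$ fibers over $C_{\Gamma/S}(M)$ with fiber the reduced configuration space $\tilde C_2(T_xM) \cong S^2$ of two points in a $3$-dimensional vector space modulo scaling and translation. By the generalized Fubini theorem for integration along the fiber, $\int_{\de_S C_\Gamma}\omega_\Gamma = \int_{C_{\Gamma/S}(M)}\left(\int_{S^2}\omega_\Gamma\big|_{\text{fiber}}\right)$, so everything hinges on the fiber integral over $S^2$.

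The key case distinction is by the number $j$ of edges of $\Gamma$ connecting $v$ to $w$. If $j=0$, no propagator restricts to the fiber and the integrand is independent of the fiber coordinate, so the $S^2$-integral of a form of fiber-degree $< 2$ vanishes. If $j=1$, the single propagator $\eta$ restricts on the fiber to the global angular form on $\de C_2(M) \cong STM$ (Proposition \ref{prop:propprops}(2)), which has fiber-degree exactly $2$; the fiber integral $\int_{S^2}\iota_\de^*\eta = 0$ by property \ref{prop:propprops}(2) — more precisely, since $v,w$ are trivalent, the two remaining half-edges at $v$ and the two at $w$ contribute a factor $\sum f_{l(h_1)l(h_2)a}f_{l(h_3)l(h_4)a}$ with the angular form, and this is handled exactly as in Remark 11, Section 4 of \cite{Cattaneo2008}; the anti-symmetry $T^*\eta = -\eta$ also forces a $\Z/2$ cancellation here. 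If $j=2$, two propagators restrict to the fiber, giving fiber-degree $4 > 2 = \dim S^2$, so the integrand vanishes identically. The case $j=3$ (the theta-subgraph collapsing) would give a disconnected remainder and is excluded by our graphs having all vertices trivalent with at most two vertices collapsing; in any event fiber-degree $6$ kills it.

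The main obstacle — and the only place requiring genuine care — is the $j=1$ case, where one must show the combined $\mathfrak g$-tensor and angular-form factor integrates to zero on $S^2$. The cleanest route is the standard ``IHX-type'' argument: the principal faces with a single collapsing edge come in triples (or pairs under the relevant symmetry) corresponding to the three ways of resolving a $4$-valent configuration into a trivalent tree, and their sum vanishes by the Jacobi identity $f_{abe}f_{cde} + f_{bce}f_{ade} + f_{cae}f_{bde} = 0$ together with the fact that the fiber integral $\int_{S^2}\iota^*\eta\,(\pi_1^*\omega_1)(\pi_2^*\omega_2)$ is the same normalization constant for all three. I would organize the sum $\sum_\Gamma\sum_{|S|=2}$ by first fixing the collapsed graph $\Gamma/S$ together with the four ``external'' half-edges at the collapse point, then summing over the three internal reconnections; invariance of the pairing $\langle\cdot,\cdot\rangle$ ensures the tensor factor is exactly the Jacobiator, which is zero. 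Summing over all collapsed graphs then yields the claim.
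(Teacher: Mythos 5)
Your overall strategy is the paper's: fiber the principal face $\de_S C_\Gamma$ over $C_{\Gamma/S}(M)$ with $S^2$ fiber, kill the no-edge case because the integrand has no fiber degree there, and in the single-edge case reduce (via the normalization of the propagator on the fiber) to $\int_{C_{\Gamma/e}}\omega_{\Gamma/e}$, which does not vanish graph by graph but cancels in the sum over the three reconnections of the resulting $4$-valent vertex by the Jacobi identity (the IHX relation of Figure \ref{fig:IHX}). Your final paragraph is exactly this argument, so the core is right. However, your intermediate discussion of the $j=1$ case contains a claim that is both false and inconsistent with what you do afterwards: Proposition \ref{prop:propprops}(2) says $\iota_\de^*\eta$ is a \emph{global angular form}, so its fiber integral is $\pm 1$, not $0$. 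If $\int_{S^2}\iota_\de^*\eta$ were zero, each principal face with one collapsing edge would vanish on its own and no IHX cancellation would be needed; the whole point is that these contributions survive the fiber integration with a universal normalization constant and only cancel after summing over graphs. The appeal to $T^*\eta=-\eta$ for a ``$\Z/2$ cancellation'' in this case is likewise not an argument: the involution exchanges the two collapsing vertices and acts on the Lie-algebra indices as well, and it does not force the single-edge face to vanish.

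The second gap is the double-edge case $j=2$ (which does occur, e.g.\ for the graph of Figure \ref{fig:theta_back}). Your degree count is not valid as stated: the two propagators pull back from $\de C_2(M)\cong STM$, and each has components of fiber degree $2$, $1$ and $0$; their product therefore has a potentially nonzero fiber-degree-$2$ component (fiber-$2$ times fiber-$0$, or fiber-$1$ times fiber-$1$), so ``degree $4>2$'' does not kill the fiber integral by itself. The standard fix is the symmetry you mentioned but applied here: the restriction of $T$ to the face is the fiberwise antipodal map, under which each factor $\iota_\de^*\eta$ changes sign, so the product is invariant, while the antipodal map reverses the orientation of the $S^2$ fiber; hence the fiber integral vanishes. (Alternatively, for the lowest graphs one can argue that the restricted form already exceeds the dimension of the face.) To be fair, the paper's own proof only discusses $j=0$ and $j=1$ explicitly, so on this point your write-up is no less complete than the source, but the justification you give would not survive scrutiny and should be replaced by the involution/orientation argument.
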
 Let us first consider the case $|S| = 2$. Let $S = \{v_1,v_2\}$. The boundary stratum is a sphere bundle over the ``small'' diagonal $v_1 = v_2$. If there is no edge connecting $v_1,v_2$, then the form $\omega_{\Gamma}$ is regular on this diagonal and hence has no form degree on the fibers of this bundle. Hence the corresponding contribution vanishes. If there is an edge $e = S$, then the corresponding boundary face is a sphere bundle over $C_{\Gamma/e}$ (here $\Gamma/e$ denotes $\Gamma$ with the edge $e$ contracted). By normalization of the propagator corresponding to $e$, integrating over the fiber of this sphere bundle yields $\pm +1$ and we have 
$$\int_{\de_eC_{\Gamma}}\omega_{\Gamma} = \int_{C_{\Gamma/e}}\omega_{\Gamma/e}.$$ 
These contributions do not vanish individually, but only after we sum over all graphs. The graphs $\Gamma/e$ contain a single 4-valent vertex, and there are three different possibilities for this 4-valent vertex to arise, see figure \ref{fig:IHX}. 
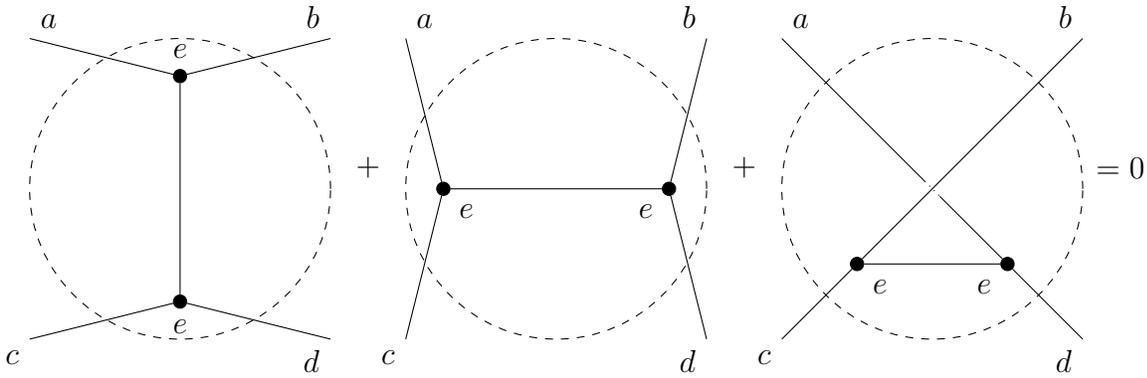
\begin{figure}[h!]
\begin{tikzpicture}
\draw[dashed] (0,0) circle (2cm);
\node[vertex, label=above:{$e$}] (uc) at (0,1.5) {};
\node[vertex,label=below:{$e$}] (ud) at (0,-1.5) {}
edge[black] (uc);
\node[coordinate,label=below left:{$c$}] at (-2,-2){} edge (ud);
\node[coordinate,label=below left:{$d$}] at (2,-2){} edge (ud);
\node[coordinate,label=above left:{$b$}] at (2,2){} edge (uc);
\node[coordinate,label=above right:{$a$}] at (-2,2){} edge (uc);
\node[coordinate,label={$+$}] at (2.5,0) {};
\begin{scope}[shift={(5,0)}]
\draw[dashed] (0,0) circle (2cm);
\node[vertex, label=below left:{$e$}] (uc) at (1.5,0) {};
\node[vertex,label=below right:{$e$}] (ud) at  (-1.5,0) {}
edge[black] (uc);
\node[coordinate,label=below left:{$c$}] at (-2,-2){} edge (ud);
\node[coordinate,label=below left:{$d$}] at (2,-2){} edge (uc);
\node[coordinate,label=above left:{$b$}] at (2,2){} edge (uc);
\node[coordinate,label=above right:{$a$}] at (-2,2){} edge (ud);
\node[coordinate,label={$+$}] at (2.5,0) {};
\end{scope}
\begin{scope}[shift={(10,0)}]
\draw[dashed] (0,0) circle (2cm);
\node[vertex, label=below left:{$e$}] (uc) at (1,-1) {};
\node[vertex,label=below right:{$e$}] (ud) at  (-1,-1) {}
edge[black] (uc);

\node [coordinate] (extra2) at (0.1,-0.1) {} edge[black] (uc);
\node [coordinate] (extra1) at (-0.1,0.1) {} edge[dotted] (extra2);
\node[coordinate,label=below left:{$c$}] at (-2,-2){} edge (ud);
\node[coordinate,label=below left:{$d$}] at (2,-2){} edge (uc);
\node[coordinate,label=above left:{$b$}] at (2,2){} edge (ud);
\node[coordinate,label=above right:{$a$}] at (-2,2){} edge (extra1);
\node[coordinate,label={$=0$}] at (2.5,0) {};
\end{scope}
\end{tikzpicture}
\caption{The famous IHX relation, which holds whenever graphs are identical outside of the dashed circle. }\label{fig:IHX}
\end{figure}
Summing over these three possibilities we obtain the 4-vertex weight $v_{abcd} = \sum_ef_{aeb}f_{ced} + f_{aec}f_{bed} + f_{aed}f_{bec}$ which vanishes by the Jacobi identity $\sum_e f_{[abe}f_{c]ed} = 0$ (the square brackets denote antisymmetrization over uncontracted indices). Thus, after summing over all graphs, these contributions cancel out. \\
In contrast to the principal faces, where we have to sum over all graphs to ensure vanishing, the contributions from hidden faces vanish seperately: 
\begin{lem}
Let $S \subset V(\Gamma)$ have cardinality $|S| \geq 3$. Then $$\int_{\de_SC_\Gamma(M)}\omega_{\Gamma} = 0.$$
\end{lem}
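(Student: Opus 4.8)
The plan is to integrate $\omega_\Gamma$ along the fibers of the bundle structure on $\partial_S C_\Gamma(M)$ and to show that the resulting pushed-forward form on the base already vanishes. By Proposition~\ref{prop:Compact_Conf_Space}, the stratum $\partial_S C_\Gamma(M)$ fibers over the configuration space $C_{\Gamma/S}(M)$ of the graph $\Gamma/S$ obtained from $\Gamma$ by collapsing the vertices of $S$ to a single vertex $*$, with fiber over a configuration $\iota$ the compactified reduced configuration space $\tilde C_S(T_{\iota(*)}M)\cong\tilde C_{|S|}(\R^3)$. Writing $\pi$ for this fibration, the generalized Fubini theorem for integration along the fiber gives
\begin{equation}
\int_{\partial_S C_\Gamma(M)}\omega_\Gamma = \int_{C_{\Gamma/S}(M)} \pi_{*}\left(\restr{\omega_\Gamma}{\partial_S C_\Gamma(M)}\right),
\end{equation}
so it suffices to show $\pi_{*}\left(\restr{\omega_\Gamma}{\partial_S C_\Gamma(M)}\right)=0$.

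Next I would analyze $\restr{\omega_\Gamma}{\partial_S C_\Gamma(M)}$ factor by factor, using the Feynman rules of Definition~\ref{def:FRCS}. Call an edge of $\Gamma$ \emph{internal} if both its endpoints lie in $S$, and let $\Gamma_S$ be the subgraph on the vertex set $S$ with these internal edges. A propagator $\eta(c(h_1),c(h_2))$ attached to a non-internal edge depends only on the position of the collapsed vertex $*$ and of vertices outside $S$; the same holds for every leaf factor $\sfa(c(h))$. Hence all non-internal propagators and all leaves are pulled back from $C_{\Gamma/S}(M)$ and carry no form degree along the fiber, while each internal propagator restricts to the global angular form on $\partial C_2(M)\cong STM$ of Proposition~\ref{prop:propprops}(2), of degree $2$. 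Therefore $\pi_{*}\left(\restr{\omega_\Gamma}{\partial_S C_\Gamma(M)}\right)=c_{\Gamma_S}\,\omega_{\Gamma/S}$, where $\omega_{\Gamma/S}$ is the graph form of $\Gamma/S$ and $c_{\Gamma_S}=\int_{\tilde C_{|S|}(\R^3)}\omega_{\Gamma_S}$, with $\omega_{\Gamma_S}$ the product of angular forms attached to the internal edges (the structure constants $f_{abc}$ are numerical and factor out). The claim thus reduces to $c_{\Gamma_S}=0$.

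The remaining work is a counting argument together with a vanishing statement for configuration-space integrals. Since $\dim\tilde C_{|S|}(\R^3)=3|S|-4$ while $\omega_{\Gamma_S}$ has degree $2|E(\Gamma_S)|$, we get $c_{\Gamma_S}=0$ unless $2|E(\Gamma_S)|=3|S|-4$; in particular $c_{\Gamma_S}=0$ when $|S|$ is odd, so assume $|S|$ even, $|S|\geq 4$. Trivalence of $\Gamma$ then forces the number of half-edges at vertices of $S$ that do not belong to internal edges (edges leaving $S$ plus leaves at $S$) to be $3|S|-2|E(\Gamma_S)|=4$. If some vertex $v\in S$ carries at least two such ``outgoing'' half-edges, then $v$ meets at most one internal edge: if none, $\omega_{\Gamma_S}$ is pulled back along the forgetful map $\tilde C_{|S|}(\R^3)\to\tilde C_{|S|-1}(\R^3)$ from a space of dimension smaller than its own degree, hence is identically $0$; if exactly one, performing the fiber integration over $c(v)$ first integrates a $2$-form over the $3$-dimensional $v$-direction and again yields $0$. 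One also notes that $\Gamma_S$ has no double edge (two parallel angular forms give a $4$-form on a $2$-sphere, hence $0$) and no tadpole, so $\Gamma_S$ is simple with all vertices of valence $2$ or $3$.

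The case left — every vertex of $S$ carrying at most one outgoing half-edge — is the genuine obstacle, and it is where I would invoke Kontsevich's vanishing lemma for integrals of products of angular forms over $\tilde C_{|S|}(\R^3)$, $|S|\geq 3$. When $|E(\Gamma_S)|$ is even this follows from the involution induced by the antipodal map $x\mapsto -x$ of $\R^3$: each angular form changes sign (the antipodal map of $S^2$ has degree $-1$), so the integrand is preserved, whereas a parity count of the translation and scaling directions shows the induced involution of $\tilde C_{|S|}(\R^3)$ reverses orientation, forcing $c_{\Gamma_S}=-c_{\Gamma_S}=0$. The residual subcase ($|E(\Gamma_S)|$ odd) requires the finer form of Kontsevich's lemma as established by Axelrod--Singer and Kontsevich, proved by a more delicate symmetry argument on the configuration space. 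I expect pinning down that argument — rather than the Fubini reduction and the degree counting above — to be the main difficulty.
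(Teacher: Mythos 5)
Your overall architecture is the same as the paper's: Fubini along the fibration $\de_S C_\Gamma(M)\to C_{\Gamma/S}(M)$, the observation that external propagators and leaves are pulled back from the base, and a degree count that isolates the case where every vertex of the edge subgraph $\Gamma_S$ has valence $2$ or $3$. The genuine gap is that you never prove the vanishing in that remaining case: your global antipodal involution $x\mapsto -x$ of $T_{\iota(*)}M$ only yields $c_{\Gamma_S}=(-1)^{|E(\Gamma_S)|+1}c_{\Gamma_S}$, so it kills the integral only when $|E(\Gamma_S)|$ is even, and you explicitly defer the odd case to an unspecified ``finer form of Kontsevich's lemma''. That case is nonempty (e.g.\ $|S|=6$, $|E(\Gamma_S)|=7$ with four bivalent and two trivalent vertices), so as written the lemma is not established; the hardest step has been outsourced rather than proved.

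The missing idea — and the way the paper closes the argument — is a \emph{local} involution instead of a global one. Your own counting shows that in the remaining case there are vertices of $\Gamma_S$ of valence exactly $2$ (four of them, each carrying one outgoing half-edge). Pick such a bivalent vertex $y$ with neighbours $x,z$ and use the reflection $y\mapsto x+z-y$, all other points fixed. It commutes with translations and scalings, hence descends to $\tilde{C}_S(T_{\iota(*)}M)$; it sends the direction $x\to y$ to the direction $y\to z$ and vice versa, so it merely swaps the two angular forms at $y$ (a product of two $2$-forms, hence preserved; the double-edge case is preserved as well, using $T^*\eta=-\eta$), while every other factor of $\omega_{\Gamma_S}$ is untouched; and it acts by $-\mathrm{id}$ on the three $y$-directions, hence reverses orientation. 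Therefore $c_{\Gamma_S}=-c_{\Gamma_S}=0$ independently of the parity of $|E(\Gamma_S)|$, which is exactly Kontsevich's bivalent-vertex vanishing argument invoked in the paper. With that replacement (or in addition to your antipodal trick) your proof would be complete; without it, it has a hole precisely where you anticipated the main difficulty.
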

\begin{proof}
This follows from the ``vanishing Lemmata'' by Kontsevich ( \cite{Kontsevich1994}),  but we briefly repeat the argument. The dimension of $\tilde{C}_S(\R^3)$ is $3|S| - 4$. Consider the edge subgraph $\Gamma_S$ on the vertices $S \subset V(\Gamma)$ (i.e. it contains all edges between vertices of $\gamma$ but no leaves). Suppose all vertices in $V(\Gamma)$ are 3-valent. Then the corresponding propagators multiply to a $3|S|$ form, after integrating over the fiber we have a 4-form placed at the point of collapse of $S$ (the point labeled $[S]$ in $C_{\Gamma/S}$, which is zero for dimensional reasons. Thus the contribution vanishes unless $\Gamma_S$ has a vertex of valence 2 or less. If a vertex has valence 0 or 1, integrating over that vertex yields 0 (again or degree reasons). So the only remaining case is when $S$ has a vertex of degree 2. In that case, we consider the integral over that vertex only, it looks like 
$$\int_{y\in\R^3}\eta(x,y)\eta(y,z).$$
On $\tilde{C}_S(\R^3), \eta$ is translation invariant, and the involution $z \mapsto x+z -y $ sends $\eta(x,y)\eta(y,z)$ to itself. Since this involution reverses the orientation we conclude that the integral is zero. 
\end{proof} 
\section{Gauge invariance and the framing anomaly}\label{sec:anomaly}
In the language of BV formalism, what we expect to be gauge invariant is the \emph{BV cohomology class} of the partition function $Z^{A_0}_{CS}(\sfa;g)$. More precisely, from Theorem \ref{thm:BVpushforward} we expect that if the vary the gauge-fixing Lagrangian smoothly, the partition function changes by a $\Delta_Y$-exact term.   In this section we will see that this is not the case, but that there is a partial remedy to this. This is a well-known fact whose appearance in the literature will be discussed in more detail in Subsection \ref{sec:framing_lit} below. 
\subsection{Dependence of $Z_{free}$ on the gauge-fixing metric}
Since we only consider gauge-fixings coming from Riemannian metrics, we will consider gauge fixing Lagrangians $\calL_{g_t}$ induced by a smooth family $g_t$ of Riemannian metrics. Surprisingly, not even the ``free'' part of the partition function $$Z_{free} = \tau(M,A_0)^{1/2}e^{\frac{\ii\pi}{4}\psi(A_0,g)}$$
is gauge invariant. To compute the dependence on the gauge it is necessary to introduce a \emph{framing} and the \emph{gravitational Chern-Simons invariant}: 
\begin{defn}
A \emph{framing} of a three-manifold $M$ is a vector bundle isomorphism $f\colon M \times \R^3 \to TM$, i.e. a trivialization of the tangent bundle. We demand that, if $M$ is oriented then $f$ is orientation-preserving.
\end{defn}
The datum of a framing is equivalent to a section of the frame bundle $s_f\colon M\to\mathrm{Fr}(TM)$ (the section $s$ is simply given by $s_f(x) = (f(e_1),f(e_2),f(e_3))$ where $e_i$ is the canonical basis of $\R^3$). Using this section we can define the gravitational Chern-Simons invariant of $g$ and $f$: 
\begin{defn}
Let $g$ be a Riemannian metric on $M$ and $f$ a framing of $M$. Denote $\Theta_{LC} \in \Omega^1(\mathrm{Fr}(TM),\mathfrak{so}(\R^3))$ denote the connection form of the Levi-Civita connection of $g$. Then we define 
\begin{equation}
S_{CS}^{grav}(g,f) := \int_M\operatorname{tr}s_f^*cs(\Theta_{LC})
\end{equation}
\end{defn}
We then have the following proposition. 
\begin{prop}
For any framing $f$, the quantity 
\begin{equation}
\frac{\ii\pi }{4}\psi(A_0,g) + \frac{\ii\dim  G}{24}\frac{1}{2\pi}S_{CS}^{grav}(g,f)\label{eq:invariant_phase}
\end{equation}
is independent of the metric $g$.
\end{prop}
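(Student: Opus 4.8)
The plan is to compute the variation of the $\eta$-invariant $\psi(A_0,g)$ under a smooth one-parameter family of metrics $g_t$ and show that this variation is exactly cancelled by the variation of the gravitational Chern--Simons term $S^{grav}_{CS}(g_t,f)$, up to the correct numerical factor. The key input is the Atiyah--Patodi--Singer variation formula for the eta invariant of the odd signature operator $L = *d + d* \colon \Omega^{\mathrm{odd}}(M,\g)\to\Omega^{\mathrm{odd}}(M,\g)$ twisted by the flat connection $A_0$. Since $A_0$ is flat, its curvature contributes nothing to the local index density, and by the APS theory the derivative $\frac{d}{dt}\psi(A_0,g_t)$ is the integral over $M$ of a local expression built from the metric $g_t$ and its derivative $\dot g_t$ — concretely, $-\frac{2}{\pi}\dim G$ times the transgression of the Hirzebruch $L$-form, i.e. a multiple of $\frac{d}{dt}$ of a Chern--Simons-type secondary invariant of the Levi-Civita connection. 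The factor $\dim G$ appears because the operator is the untwisted signature operator tensored with the trivial (flat) bundle $M\times\g$, so the trace over the Lie algebra indices simply multiplies the scalar answer by $\dim\g=\dim G$.

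First I would set up the reference: for the untwisted odd signature operator on a closed oriented 3-manifold, the $\rho$-invariant / metric dependence of $\eta$ is governed by $\eta(g_1)-\eta(g_0) = \int_{M\times[0,1]} L\text{-form}$, and differentiating in $t$ expresses $\frac{d}{dt}\eta(g_t)$ as $\int_M$ of a Pontryagin transgression form. Next I would identify this transgression form with $\frac{d}{dt}$ of the gravitational Chern--Simons 3-form $\operatorname{tr} s_f^*cs(\Theta_{LC})$ of the Levi-Civita connection: this uses that in dimension 3 the relevant term of the $L$-genus is $-\frac{1}{24}p_1$ (up to normalization), whose Chern--Simons primitive is exactly (a multiple of) $S^{grav}_{CS}$, and that the $p_1$-transgression is framing-independent as a form-level identity (the framing $f$ enters only as a choice of global section needed to write the integral, and changing $f$ shifts $S^{grav}_{CS}$ by a locally constant integer multiple of $2\pi$, which has zero $t$-derivative). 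Then I would simply track the constants: combining $\frac{d}{dt}\big(\frac{\ii\pi}{4}\psi(A_0,g_t)\big)$ with $\frac{d}{dt}\big(\frac{\ii\dim G}{24}\cdot\frac{1}{2\pi}S^{grav}_{CS}(g_t,f)\big)$, the two local densities are equal and opposite, so the sum has vanishing derivative; since the family $g_t$ was arbitrary and the space of Riemannian metrics is connected, the combination \eqref{eq:invariant_phase} is independent of $g$.

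The main obstacle is pinning down the precise normalization constant in the APS variation formula so that the coefficients $\frac{\pi}{4}$ and $\frac{\dim G}{24}\cdot\frac{1}{2\pi}$ match exactly — this is where sign and factor conventions (orientation of $M\times[0,1]$, normalization of the bilinear form $\langle\cdot,\cdot\rangle$ versus $\operatorname{tr}$, the $2\pi$'s hidden in $cs(\Theta_{LC})$, and the definition of $\psi$ as a zeta-regularized signature rather than a genuine eta invariant) all have to be reconciled. I expect this bookkeeping, rather than any conceptual difficulty, to be the real work; the structural statement follows immediately once the two transgression densities are seen to be proportional. I would also remark that framing-independence at the level of the \emph{variation} is automatic, but the actual \emph{value} of \eqref{eq:invariant_phase} does depend on $f$ (shifting by the integer multiples mentioned above), which is the residual ``framing anomaly'' — only the metric dependence is removed, not the framing dependence, consistent with the discussion promised in Subsection~\ref{sec:framing_lit}.
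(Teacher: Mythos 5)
Your proposal takes exactly the route the paper itself invokes: the Atiyah--Patodi--Singer variation formula for the twisted odd signature operator, with flatness of $A_0$ reducing the local variation of $\psi(A_0,g_t)$ to $\dim G$ times the Pontryagin ($p_1$/$L$-form) transgression, which is cancelled by the variation of $S^{grav}_{CS}(g_t,f)$ --- the paper gives no more detail than this, simply citing Witten and the APS theorem as beyond its scope. The one piece you leave open, pinning down the normalization so that the coefficients $\frac{\pi}{4}$ and $\frac{\dim G}{24}\cdot\frac{1}{2\pi}$ match, is precisely the content the paper also omits, so your sketch is at the same level of completeness and is conceptually correct.
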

The proof uses the Atiyah-Patodi-Singer index theorem as in \cite{Witten1989} and goes well beyond the scope of these notes\footnote{A thorough account of the APS theorem, which also explains \eqref{eq:invariant_phase}, is Melrose \cite{Melrose1993} - almost 400 pages!}.
We conclude that if we rescale the free partition function by a phase factor, 
\begin{equation}
Z_{free}' = \exp\left(\frac{\ii\dim G}{24}\frac{1}{2\pi}S_{CS}^{grav}(g,f)\right)Z_{free}
\end{equation}
then $Z_{free}'$ is gauge invariant but depends on the framing. 
\subsubsection{Dependence on the framing}
The dependence on the framing is controllable. First of all, $S^{grav}_{CS}(g,f)$ depends only on the homotopy class of $f$. Homotopy classes of framings are slightly subtle: If we fix one framing $f$, then any other framing is related to it by a map $\gamma\colon M \to SO(3)$, these maps are distinguished by their degree $\deg\gamma$, which is an integer, and an element  $c(\gamma)\in H^1(M, \Z_2)$.\footnote{The degree can be computed as $\deg \gamma = \int_M \gamma^*\omega$, where $\omega$ is a normalized volume form on $SO(3)$, $c(\gamma)$ is the pullback $\gamma^*\alpha$, where $\alpha \in \Omega^1(M)$ is a generator of $H^1(SO(3),Z_2)$ (i.e. $\alpha$ is closed and for a generator $\tau$ of $\pi_1(SO(3))=\Z_2$, $\int_{\tau}\alpha = 1$.)}If we modify the framing by a map $\gamma\colon M \to SO(3)$ of degree $n$ with $c(\gamma) = 0$, then the gravitational Chern-Simons invariant changes by 
$$ \frac{1}{2\pi}S_{CS}^{grav}(M, f\cdot\gamma) =\frac{1}{2\pi}S_{CS}^{grav}(M, f) + 2\pi n$$ and  thus a change of framing changes the rescaled partition function by a phase factor of 
\begin{equation}
Z'_{free}(f \cdot \gamma) = Z_{free}'(f)\exp\left(\frac{2\pi \ii n \dim G}{24}\right).
\end{equation}
\subsection{Gauge dependence of $Z_{pert}$}
The appearance of the gravitational Chern-Simons action is a foreshadowing of what happens at higher order corrections. We present here only a brief version, and again refer to the literature for technical details. To analyze the time derivative of the higher order correction we first need to know what happens to the propagator and the residual fields. 
\begin{lem}
Let $g_t$ be a smooth family of Riemannian metrics and denote $\eta_t$ the propagator defined by $g_t$, $\chi^{i}_{(k),t}$ a basis of harmonic $k$-forms for $g_t$. Then there are $xi^i_{(k)} \in \Omega^{k-1}(M,\g)$,$\lambda \in  \Omega^1(M\times M,\g \otimes \g)$ such that 
\begin{align}
\restr{\frac{d}{dt}}{t=0}\chi^{i}_{(k),t} &= d_{A_0}\xi^i_{(k)} \label{eq:change_harm}\\ 
\restr{\frac{d}{dt}}{t=0}\eta_{t} &= d_{A_0}\lambda + \sum_{i,k}\pi_1^*\xi_{(k)}^i\pi_2^*\chi^i_{(k)} + \pi_1^*\chi_{(k)}^i\pi_2^*\xi^i_{(k)} +  \label{eq:change_prop} \\ 
\end{align}
\end{lem}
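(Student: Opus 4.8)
The plan is to obtain both identities by differentiating at $t=0$ the relations that \emph{characterize} the harmonic forms and the propagator, and then splitting the resulting expressions into a $d_{A_0}$-exact part and a ``residual-field'' part.

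\emph{The harmonic forms.} Here the crucial move is the choice of the family $\chi^i_{(k),t}$: take it so that the twisted de~Rham class $[\chi^i_{(k),t}]\in H^k_{A_0}(M,\g)$ is \emph{independent} of $t$. This is the natural choice, as it trivializes the family of residual spaces $\calY_t=\Pi\mathrm{Harm}^\bullet_{A_0}(M,g_t)$ over the fixed cohomology $H^\bullet_{A_0}(M,\g)$; the same applies to the Poincar\'e-dual basis, using that the pairing $\int_M\langle\cdot,\cdot\rangle$ on $\mathrm{Harm}^\bullet_{A_0}$ descends to a metric-independent pairing on cohomology. For such a family, $\chi^i_{(k),t}-\chi^i_{(k),0}$ is $d_{A_0}$-exact for every $t$, since two harmonic representatives of the same class differ by a $d_{A_0}$-exact form; because the exact forms constitute a fixed linear subspace, the $t$-derivative lands in it, which gives $\restr{\frac{d}{dt}}{t=0}\chi^i_{(k),t}=d_{A_0}\xi^i_{(k)}$ with $\xi^i_{(k)}\in\Omega^{k-1}(M,\g)$ the derivative of a smooth family of primitives. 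This is \eqref{eq:change_harm}.

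\emph{The propagator.} Recall that $\eta_t$ is the topological integral kernel of $K_{A_0,t}=d^{*_t}_{A_0}\circ(\Delta_{A_0,t}+P_{A_0,t})^{-1}$, and that all operators here form smooth families of elliptic operators (the kernel of $\Delta_{A_0,t}$ has locally constant dimension $\dim H^\bullet_{A_0}(M,\g)$, so $P_{A_0,t}$ is smooth in $t$); by Axelrod--Singer the extension of $\eta_t$ to $C_2(M)$ is smooth and depends smoothly on $t$. Two complementary computations then deliver \eqref{eq:change_prop}. First, differentiating the identity $d_{A_0}\eta_t=\sum_i\pm\,\pi_1^*\chi_{i,t}\,\pi_2^*\chi^i_t$ of Proposition~\ref{prop:propprops}(1), using $d_{A_0}\chi_i=d_{A_0}\chi^i=0$ together with \eqref{eq:change_harm}, shows that $d_{A_0}\dot\eta$ equals $d_{A_0}$ applied to the residual sum $\sum_i\pm(\pi_1^*\xi_i\,\pi_2^*\chi^i+\pi_1^*\chi_i\,\pi_2^*\xi^i)$ (with degrees matched as in Lemma~\ref{lem:diff_prop}); hence $\dot\eta$ differs from that sum by a $d_{A_0}$-closed $\g\otimes\g$-valued $2$-form on $C_2(M)$. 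Second, to identify that closed remainder as $d_{A_0}\lambda$ with $\lambda$ of the stated type, I would differentiate $K_{A_0,t}$ directly: with $G_t=(\Delta_{A_0,t}+P_{A_0,t})^{-1}$ one has $\dot K=\dot{d^{*}}_{A_0}G_0+d^{*_0}_{A_0}\dot G$, $\dot G=-G_0(\dot\Delta_{A_0}+\dot P_{A_0})G_0$, $\dot\Delta_{A_0}=d_{A_0}\dot{d^{*}}_{A_0}+\dot{d^{*}}_{A_0}d_{A_0}$; using repeatedly the parametrix identity $d_{A_0}K_{A_0}+K_{A_0}d_{A_0}=\mathrm{id}-P_{A_0}$ and the fact that $d_{A_0}$ commutes with $\Delta_{A_0,0}$, $P_{A_0,0}$ and $G_0$, one reorganizes $\dot K$ into a piece of the form $d_{A_0}\circ(\cdot)$, a piece of the form $(\cdot)\circ d_{A_0}$, and $\dot P_{A_0}$-terms. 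At the level of topological kernels the first two pieces assemble into $d_{A_0}\lambda$, and the $\dot P_{A_0}$-terms reproduce the residual sum once $\dot P_{A_0}$ is expanded through $\dot\chi_i,\dot\chi^i$ and \eqref{eq:change_harm} is invoked.

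\emph{Main obstacle.} The delicate part is this last step: making the ``left'' and ``right'' $d_{A_0}$ contributions to $\dot\eta$ combine into a genuine total differential $d_{A_0}\lambda$ on (a blow-up of) $M\times M$, which forces one to track the mild singularity of $\eta_t$ along the diagonal — the variation of the global angular form of $g_t$ — so that $\lambda$ extends across the diagonal, and which also requires a mutually compatible choice of the harmonic bases and their duals (the Hodge-orthogonal projector $P_{A_0,t}$ is metric dependent, and one must check that the residual terms it produces are exactly those of \eqref{eq:change_prop}). Equivalently, one needs enough control of the twisted cohomology of $C_2(M)$ to promote the closed remainder of the first computation to an exact one; the explicit differentiation of $K_{A_0,t}$ is the concrete device that achieves this.
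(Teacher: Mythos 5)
For \eqref{eq:change_harm} your argument is the same as the paper's: the family $\chi^i_{(k),t}$ is taken with fixed twisted cohomology class, so its $t$-derivative lies in the (fixed) subspace of $d_{A_0}$-exact forms, yielding $\xi^i_{(k)}$; the paper states exactly this in one line, and your added remark about choosing smooth primitives is a harmless refinement. For \eqref{eq:change_prop}, however, the paper does not prove the statement at all — it defers to \cite{Cattaneo2008} — and your proposal, while following the natural route that the reference also takes (differentiate the defining relations of $\eta_t$ and of $K_{A_0,t}=d^{*_t}_{A_0}(\Delta_{A_0,t}+P_{A_0,t})^{-1}$), stops short of the actual content of the lemma. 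Differentiating $d_{A_0}\eta_t=\sum_i\pm\,\pi_1^*\chi_{i,t}\,\pi_2^*\chi^i_t$ only shows that $\restr{\frac{d}{dt}}{t=0}\eta_t$ minus the residual sum is $d_{A_0}$-\emph{closed}; the assertion to be proved is precisely that this remainder is \emph{exact}, with a primitive $\lambda$ that is a genuine smooth one-form (extending over the blow-up of the diagonal). You acknowledge this as the ``main obstacle'' rather than resolving it.

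The second computation you sketch — expanding $\dot K = \dot{d^{*}}_{A_0}G_0 + d^{*_0}_{A_0}\dot G$ with $\dot G=-G_0(\dot\Delta_{A_0}+\dot P_{A_0})G_0$ — is indeed the device that can close this gap, but as written it is not executed: you do not show how the pieces of the form $(\cdot)\circ d_{A_0}$ become, at the level of topological kernels, a $d_{A_0}^{(y)}$ acting on a smooth kernel (this requires an integration by parts/transposition and control of boundary contributions along the diagonal, i.e.\ of the variation of the angular form), nor that the $\dot P_{A_0}$-terms produce exactly the residual sum of \eqref{eq:change_prop} with the stated bases $\chi^i_{(k)}$ and their Hodge duals, nor that the resulting $\lambda$ is regular enough to live in $\Omega^1$ of (the compactification of) $M\times M$ with values in $\g\otimes\g$. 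So the strategy is sound and almost certainly the one of the cited reference, but the proof of \eqref{eq:change_prop} is incomplete at its crucial step: the construction of $\lambda$ is asserted, not carried out.
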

\begin{proof}
The first equation follows immediately from the fact that the cohomology class of $\chi^i_{(k),t}$ is independent of $t$. For the second statement, see \cite{Cattaneo2008}. 
\end{proof}
\subsubsection{Stokes' theorem, again}
This allows us to apply Stokes' theorem in a similar fashion to the above. Denote by $Z^t_{pert}$ the perturbative part of the partition function defined using the metric $g_t$, then we have 
\begin{align*}\restr{\frac{d}{dt}}{t=0}Z^t_{pert} &= \restr{\frac{d}{dt}}{t=0}\sum_{\Gamma}\int_{C_{\Gamma}(M)}\tilde{\omega}_{\Gamma} = \sum_{\Gamma}\int_{C_{\Gamma}(M)}\restr{\frac{d}{dt}}{t=0}\tilde{\omega}_{\Gamma}  \\
&= \sum_{\Gamma}\sum_{e \in E(\Gamma)}\int_{C_{\Gamma}(M)}\tilde{\omega^e}_{\Gamma} + \sum_{l \in L(\Gamma)}\int_{C_{\Gamma}(M)}\tilde{\omega^l}_{\Gamma} \end{align*}
where now we denote $\omega^e$ the form obtained by placing $\dot{\eta}$ at the edge $e$ and $\omega^l$ the form obtained by placing $\dot{a} = \sum_{i,k}z_{i}^{(k)}d\xi^i_{(k)} =: d\mathsf{b}$ at the leaf $l$. In this sum, we integrate all terms of the form $d\lambda$ or $d\xi$ by parts. Up to boundary terms, this leaves us with terms where $d_{A_0}$ is applied to propagators. An argument similar to Proposition \ref{prop:DeltaBVaction} shows that these terms sum up to $\Delta X$, where $X$ is given by 
\begin{equation}
X = \sum_{\Gamma^e_m}\int_{C_{\Gamma}(M)}(\omega^e_{\Gamma^e_m})' + \sum_{\Gamma^l_m}\int_{C_\Gamma(M)}(\omega^l_{\Gamma^l_m})'.
\end{equation}
Here $\Gamma^e_m$ ($\Gamma^l_m$) runs over graphs with a marked edge (leaf), $\Gamma$ denotes the graph with the marking forgotten and $\omega^e_{\Gamma^e_m}$ ($\omega^e_{\Gamma^e_m} $) denotes the form obtained from the usual Feynman rules but putting $\lambda$ (resp. $\mathsf{b}$) at the marked edge (resp. leaf). The only slight difference are the terms of the form $\chi \xi$ in $\eta$: these arise when $\Delta_\calY$ eats an $\sfa$-$\mathsf{b}$ pair of residual fields (instead of the usual $\sfa$-$\sfa$-pair). 
\subsubsection{The boundary terms}
For the boundary terms, one performs an analysis similar to the above. Again, summing over all graphs boundary faces corresponding to $|S| = 2$ cancel out by the IHX relation. For the ``hidden faces'', the only difference is the degree count (since $\lambda$ has degree 1) in the case where an entire connected component of a graph collapses that contains no residual fields (all vertices of the edge subgraph $\Gamma_S$ on $S\subset V(\Gamma)$ are trivalent) and such that the marked edge is contained in $\Gamma_S$. In these cases the degree count now says that  we obtain a 3-form at the point of collapse. One can compute the integral of this 3-form and show that is given by a numeric coefficient $\phi(\Gamma,\g)$, which depends only on the graph $\Gamma$ and the Lie algebra $\g$ times the derivative of the gravitational Chern-Simons action, for any framing $f$: 
\begin{equation}
\int_{\de_S\Gamma}(\omega^e_{\Gamma})'' =\left( \int_{C_{\Gamma-\Gamma_S}(M)}\phi(\Gamma_S,\g)\right)\restr{\frac{d}{dt}}{t=0}S_{CS}^{grav}(g,f)
\end{equation}
\subsubsection{Rescaling of the entire partition function}
Summing over all graphs, we see that we obtain 
\begin{equation}
\restr{\frac{d}{dt}}{t=0}Z_{pert} = \Delta X + Z_{pert}\times\sum_{\Gamma \text{connected}}\phi(\Gamma,\g)\restr{\frac{d}{dt}}{t=0}S_{CS}^{grav}(g,f). 
\end{equation}
We have thus arrived at the following result. 
\begin{thm}
There exists a power series $\phi(\hbar)$ of the form 
\begin{equation}
\phi_\g(\hbar) = \phi_0 + \sum_{\Gamma}\frac{(-\ii\hbar)^\chi(\Gamma)}{|\mathrm{Aut}(\Gamma)}\phi(\Gamma,\g)
\end{equation}
such that for every framing $f$, the \emph{BV cohomology class} of the rescaled partition function 
\begin{equation}
(Z^{A_0}_{CS})'(\sfa,g) = Z^{A_0}_{CS}(\sfa,g)e^{\ii\phi(\Gamma,\g)S_{CS}^{grav}(g,f)}
\end{equation}
does not depend on the metric $g$. Moreover $\phi_0 = \frac{2\pi\dim G}{24}$. 
\end{thm}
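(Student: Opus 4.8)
The plan is to reduce the claim to an infinitesimal statement in the metric and then assemble it from the two metric-variation computations carried out above — one for $Z_{free}$, one for $Z_{pert}$. Since the space $\mathrm{Met}(M)$ of Riemannian metrics on $M$ is convex, hence connected, it suffices to prove that for every smooth one-parameter family $g_t$ the derivative $\restr{\frac{d}{dt}}{t=0}(Z^{A_0}_{CS})'(\sfa,g_t)$ is $\Delta_\calY$-exact; then the BV cohomology class $[(Z^{A_0}_{CS})']\in H_{\Delta_\calY}^\bullet(\mathcal{O}(\calY))$ is locally constant in $g$, hence constant. I would start from the factorization
\[
(Z^{A_0}_{CS})'(\sfa,g) = Z_{back}\cdot Z_{free}(g)\cdot Z_{pert}(\sfa,g)\cdot e^{\ii\phi_\g(\hbar)\,S_{CS}^{grav}(g,f)},
\]
where $Z_{back}=e^{\frac{\ii}{\hbar}S_{CS}[A_0]}$ is independent of $g$, $Z_{free}(g)=\tau(M,A_0)^{1/2}e^{\frac{\ii\pi}{4}\psi(A_0,g)}$, and $Z_{pert}(\sfa,g)=e^{\frac{\ii}{\hbar}S_{eff}}=\sum_\Gamma\int_{C_\Gamma(M)}\tilde\omega_\Gamma$, and treat the three $g$-dependent pieces one at a time (note that $e^{\ii\phi_\g(\hbar)S_{CS}^{grav}}$ means the exponential of the \emph{series} $\phi_\g(\hbar)$ times $S_{CS}^{grav}$, correcting the obvious typo in the statement).

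The factor $\tau(M,A_0)^{1/2}$ is metric-independent: by Cheeger--Müller it equals the Reidemeister torsion of the corresponding representation of $\pi_1(M)$, a topological invariant. For the eta-invariant factor, the proposition that $\frac{\ii\pi}{4}\psi(A_0,g)+\frac{\ii\dim G}{24}\frac{1}{2\pi}S_{CS}^{grav}(g,f)$ is metric-independent gives $\restr{\frac{d}{dt}}{t=0}\big(\frac{\ii\pi}{4}\psi(A_0,g_t)\big)=-\restr{\frac{d}{dt}}{t=0}\big(\frac{\ii\dim G}{24}\frac{1}{2\pi}S_{CS}^{grav}(g_t,f)\big)$; this both pins down the constant term $\phi_0$ of $\phi_\g(\hbar)$ — so that $\ii\phi_0\,S_{CS}^{grav}$ reproduces $\frac{\ii\dim G}{24}\frac{1}{2\pi}S_{CS}^{grav}$ in the chosen normalization, which after tracking the factors of $2\pi$ hidden in $\hbar$ and in $S_{CS}^{grav}$ yields $\phi_0=\frac{2\pi\dim G}{24}$ — and makes the variation of $Z_{free}$ cancel against the $\phi_0$-part of $\restr{\frac{d}{dt}}{t=0}e^{\ii\phi_\g(\hbar)S_{CS}^{grav}}$. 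Finally, for $Z_{pert}$ I would invoke the variation formula established above,
\[
\restr{\frac{d}{dt}}{t=0}Z_{pert} = \Delta_\calY X + Z_{pert}\cdot\Big(\sum_{\Gamma\ \mathrm{connected}}\tfrac{(-\ii\hbar)^{\chi(\Gamma)}}{|\mathrm{Aut}(\Gamma)|}\phi(\Gamma,\g)\Big)\cdot\restr{\frac{d}{dt}}{t=0}S_{CS}^{grav}(g_t,f),
\]
which itself comes from Stokes' theorem on the compactified graph configuration spaces $C_\Gamma(M)$: the $d\lambda$- and $d\mathsf{b}$-contributions integrate by parts into $\Delta_\calY X$, the principal ($|S|=2$) boundary faces cancel after summing over graphs by the IHX/Jacobi identity, and the only surviving hidden faces are collapses of trivalent connected subgraphs $\Gamma_S$ containing the marked edge, whose fiber integrals assemble into $\phi(\Gamma_S,\g)\,\frac{d}{dt}S_{CS}^{grav}(g,f)$. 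Choosing $\phi_\g(\hbar)=\phi_0+\sum_{\Gamma\ \mathrm{connected}}\frac{(-\ii\hbar)^{\chi(\Gamma)}}{|\mathrm{Aut}(\Gamma)|}\phi(\Gamma,\g)$, the $Z_{pert}$ contribution to $\restr{\frac{d}{dt}}{t=0}S_{CS}^{grav}$ is exactly cancelled by the remaining part of the gravitational phase.

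Collecting the three computations, $\restr{\frac{d}{dt}}{t=0}(Z^{A_0}_{CS})'$ equals $Z_{back}\,Z_{free}\,e^{\ii\phi_\g(\hbar)S_{CS}^{grav}}\cdot\Delta_\calY X$; since those prefactors do not depend on the residual field $\sfa$ and $\Delta_\calY$ differentiates only in $\sfa$, this equals $\Delta_\calY\big(Z_{back}\,Z_{free}\,e^{\ii\phi_\g(\hbar)S_{CS}^{grav}}\,X\big)$, hence is $\Delta_\calY$-exact, which proves metric-independence of the class. The argument holds for every framing $f$ because $\restr{\frac{d}{dt}}{t=0}S_{CS}^{grav}(g_t,f)$ is independent of $f$ (changing the framing alters $S_{CS}^{grav}$ by a $t$-independent multiple of $2\pi$), so all the cancellations are insensitive to the choice of $f$. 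The one genuinely hard step — the technical core of the preceding subsections rather than of this packaging — is the hidden-face analysis: showing that the fiber integral of the $3$-form produced at the collapse point of a trivalent connected subgraph is precisely the universal coefficient $\phi(\Gamma_S,\g)$ times $\frac{d}{dt}S_{CS}^{grav}(g,f)$, which is exactly where the Levi-Civita connection and its Chern-Simons form enter and produce the framing anomaly.
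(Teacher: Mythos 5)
Your proposal is correct and follows essentially the same route as the paper: the theorem there is precisely the assembly of the preceding computations (Cheeger--M\"uller for the torsion, the APS proposition making $\frac{\ii\pi}{4}\psi(A_0,g)+\frac{\ii\dim G}{24}\frac{1}{2\pi}S^{grav}_{CS}(g,f)$ metric-independent, and the Stokes/IHX/hidden-face analysis giving $\restr{\frac{d}{dt}}{t=0}Z_{pert}=\Delta_\calY X+Z_{pert}\sum_{\Gamma\,\mathrm{connected}}\phi(\Gamma,\g)\restr{\frac{d}{dt}}{t=0}S^{grav}_{CS}$), which you combine exactly as intended, with $\phi_0$ fixed by the free part and the higher coefficients by the connected-graph anomaly terms. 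Your extra packaging steps (reduction to one-parameter families via connectedness of the space of metrics, pulling the $\sfa$-independent prefactors through $\Delta_\calY$, and the remark on framing-independence of $\frac{d}{dt}S^{grav}_{CS}$) only make explicit what the paper leaves implicit.
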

\subsection{The framing anomaly in the literaure}\label{sec:framing_lit}
The framing anomaly in Chern-Simons theory was observed in various places, starting with Witten \cite{Witten1989}. In the perturbative setting, this anomaly was discussed \cite{Axelrod1991a},\cite{Axelrod1994} and later in \cite{Bott1998},\cite{Bott1999}, \cite{Cattaneo1999},\cite{Cattaneo2008}. In the Kontsevich-Kuperberg-Thurston-Lescop approach \cite{Kontsevich1994},\cite{Kuperberg1999},\cite{Lescop2004},\cite{Lescop2004a} a framing is used to define the propagator, and the resulting perturbative series depends on it.\\
In contrast, the Reshetikhin-Turaev invariants \cite{Reshetikhin1991} do not exhibit any dependence on a framing. This initially was the source of some confusion, since there is no canonical choice of framing for a 3-manifold that could have been implicit in the construction of the Reshetikhin-Turaev invariant. A possible explanation was offered by Atiyah, who showed in \cite{Atiyah1990} that 3-manifolds admit a canonical \emph{2-framing}. He argued that Chern-Simons invariants should more naturally be considered as invariants of 2-framed manifolds (and that the Reshetikhin-Turaev invariants correspond to the Chern-Simons invariants in the canonical 2-framing). However, for perturbative Chern-Simons invariants the role of 2-framings has yet to be understood precisely. 
%\begin{rem}
%This dependence of the framing 
%\end{rem}
\printbibliography
%\bibliography{ChernSimons}
%\bibliographystyle{hplain}
\end{document}